\DeclareRobustCommand{\gobblefour}[5]{}
\newcommand*{\SkipTocEntry}{\addtocontents{toc}{\gobblefour}}
\newtheorem{theorem}{Theorem}[section]
\newtheorem{lemma}[theorem]{Lemma}
\newtheorem{proposition}[theorem]{Proposition}
\newtheorem{corollary}[theorem]{Corollary}
\theoremstyle{definition}
\newtheorem{remark}[theorem]{Remark}
\theoremstyle{definition}
\newtheorem{definition}[theorem]{Definition}
\newtheorem{conjecture}[theorem]{Conjecture}
\newtheorem{question}[theorem]{Question}
\theoremstyle{definition}
\newtheorem{problem}[theorem]{Problem}
\theoremstyle{definition}
\newtheorem{example}[theorem]{Example}
\def\Acal{\mathcal{A}}\def\Fcal{\mathcal{F}}\def\Ical{\mathcal{I}}\def\Mcal{\mathcal{M}}\def\Xcal{\mathcal{X}}
\def\one{{\mathbbm{1}}}
\def\R{\mathbb{R}}
\def\P{\mathbb{P}}
\def\<{{\langle}}
\def\>{{\rangle}}
\def\RP{{\R P}}
\def\det{{ \operatorname{det}}}
\def\diag{{ \operatorname{diag}}}
\def\proj{ \operatorname{proj}}
\def\wt{\operatorname{wt}}
\def\RR{{\mathbb R}}
\def\RP{{\RR\mathbb P}}
\def\GL{\operatorname{GL}}
\def\SL{\operatorname{SL}}
\def\Mat{\operatorname{Mat}}
\def\Gr{\operatorname{Gr}}
\def\Grtnn{\Gr_{\ge 0}}
\def\Grtp{\Gr_{>0}}
\def\alt{\operatorname{alt}}
\def\medpa{\tau}
\def\xing{{\operatorname{xing}}}
\def\n{N}
\def\pc{\Pi^{>0}}
\def\Gmed{{G^\times}}
\def\Gbip{{G^{\square}}}
\def\Gdub{\widehat{G}^\square}
\def\Edub{\widehat{E}^\square}
\def\Vdub{\widehat{V}^\square}
\def\bbip{d}
\def\vscl{0.3}
\def\vsclG{0.3}
\def\scsc{0.6}
\def\tikzscl{1}
\def\bscl{0.6}
\def\sclbx{1.4}
\newcommand\edgepl[2]{\draw[line width=1pt,blue] (#1) -- (#2);}
\newcommand\edgemed[2]{\draw[line width=1.5pt,brown] (#1) -- (#2);}
\newcommand\edge[2]{\draw[line width=1.5pt] (#1) -- (#2);}
\newcommand\edgedir[2]{\draw[-{Stealth[scale=1.0]},line width=1.5pt] (#1) -- (#2);}
\newcommand\edgeop[2]{\draw[dashed, line width=1.5pt,opacity=0.4] (#1) -- (#2);}
\newcommand\disk[1]{\draw[dashed, opacity=0.3] (#1) circle (1);}
\newcommand\vertx[2]{\node[draw,circle,fill=black,scale=\vsclG] (#1) at (#2) {};}
\newcommand\vwh[2]{\node[draw,circle,fill=white,scale=\vscl] (#1) at (#2) {};}
\newcommand\vbl[2]{\node[draw,circle,fill=black!50,scale=\vscl] (#1) at (#2) {};}
\newcommand\edgeplop[2]{\draw[line width=0.3pt,black,opacity=0.5] (#1) -- (#2);}
\newcommand\edgeplmatch[2]{\draw[line width=2pt,blue] (#1) -- (#2);}
\def\side{0.3}
\newcommand\drawbbip{
    \node[anchor=30,scale=\bscl] (d1) at (210:1) {$\bbip_4$};
    \node[anchor=-30,scale=\bscl] (d2) at (-210:1) {$\bbip_3$};
    \node[anchor=210,scale=\bscl] (d3) at (30:1) {$\bbip_2$};
    \node[anchor=-210,scale=\bscl] (d3) at (-30:1) {$\bbip_1$};
}
    \newcommand\drawgraph{
    \coordinate (zero) at (0,0);
    \disk{0,0}
    \edgeop{-1,0}{1,0}
    \vwh{A}{\side,\side}
    \vwh{C}{-\side,-\side}
    \vbl{B}{\side,-\side}
    \vbl{D}{-\side,\side}
    \vbl{AA}{30:1}
    \vwh{BB}{-30:1}
    \vbl{CC}{-150:1}
    \vwh{DD}{150:1}
    \edgeplop{A}{AA}
    \edgeplop{B}{BB}
    \edgeplop{C}{CC}
    \edgeplop{D}{DD}
    \edgeplop{A}{B}
    \edgeplop{C}{B}
    \edgeplop{C}{D}
    \edgeplop{A}{D}
    \drawbbip
    
    \node[anchor=south,scale=\scsc] (stop) at (0,\side) {$c_e$};
    \node[anchor=north,scale=\scsc] (sbot) at (0,-\side) {$c_e$};
    \node[anchor=west,scale=\scsc] (cright) at (\side,0) {$s_e$};
    \node[anchor=east,scale=\scsc] (cleft) at (-\side,0) {$s_e$};
    }
\def\Prob{\mathbf{P}}
\def\Zpart{Z}
\def\Space{\Xcal}
\def\Closure{\overline{\Space}}
\def\OG{\operatorname{OG}}
\def\OGtnn{\OG_{\ge0}}
\def\OGtp{\OG_{>0}}
\def\doublemap{\phi}
\newcommand\double[1]{\widetilde{#1}}
\newcommand\Matsymx[1]{\Mat^{\operatorname{sym}}_{#1}(\R,1)}
\def\Matsym{\Matsymx{n}}
\def\insideprod{q}
\def\P{P}
\begin{document}
\numberwithin{equation}{section}

\title{Ising model and the positive orthogonal Grassmannian}
\author{Pavel Galashin}
\address{Department of Mathematics, Massachusetts Institute of Technology,
Cambridge, MA 02139, USA}
\email{{\href{mailto:galashin@mit.edu}{galashin@mit.edu}}}

\author{Pavlo Pylyavskyy}
\address{Department of Mathematics, University of Minnesota,
Minneapolis, MN 55414, USA}
\email{{\href{mailto:ppylyavs@umn.edu}{ppylyavs@umn.edu}}}
\thanks{P. P. was partially supported by NSF grants DMS-1148634 and DMS-1351590.}
\date{\today}

\subjclass[2010]{
  Primary:
  82B20. 
  Secondary:
14M15, 
15B48. 
}

\keywords{Planar Ising model, total positivity, totally nonnegative Grassmannian, orthogonal Grassmannian, Kramers--Wannier duality, ABJM scattering amplitudes, Griffiths' inequalities, electrical networks}

\begin{abstract}
We completely describe by inequalities the set of boundary correlation matrices of planar Ising networks embedded in a disk. Specifically, we build on a recent result of M.~Lis to give a simple bijection between such correlation matrices and points in the totally nonnegative part of the orthogonal Grassmannian, which has been introduced in 2013 in the study of the scattering amplitudes of ABJM theory. We also show that the edge parameters of the Ising model for reduced networks can be uniquely recovered from boundary correlations, solving the inverse problem. Under our correspondence, the Kramers--Wannier high/low temperature duality transforms into the cyclic symmetry of the Grassmannian, and using this cyclic symmetry, we prove that the spaces under consideration are homeomorphic to closed balls. 
\end{abstract}

\maketitle

\setcounter{tocdepth}{2}
\tableofcontents

\newgeometry{margin=1in}
\section{Introduction}\label{sec:intro}

The Ising model, introduced by Lenz in~1920 as a model for ferromagnetism and solved by Ising~\cite{Ising} in dimension $1$, plays a central role in statistical mechanics and conformal field theory. One of the main features of this model is that it undergoes a phase transition in dimensions larger than $1$. In particular, the critical temperature $\frac12\log(\sqrt2+1)$ for the case of the two-dimensional square lattice has been computed by Kramers and Wannier~\cite{KrWa}, who found a duality transformation exchanging subcritical and supercritical temperatures. The free energy of the model was computed by Onsager~\cite{Onsager} and Yang~\cite{Yang}, and since then it became a subject of active mathematical and physical research. Conformal invariance of the scaling limit was conjectured in~\cite{BPZ1,BPZ2} in relation to conformal field theory, and proven more recently as a part of a series of groundbreaking results by Smirnov, Chelkak, Hongler, Izyurov, and others~\cite{Smirnov,CHI,CS,HS,CDCHKS}.

Among the most important quantities associated with the Ising model are two-point and multi-point correlation functions. In particular, their limit at criticality exists and is conformally invariant~\cite{CHI}. It was shown in~\cite{Griffiths} and later generalized in~\cite{KS} that these correlation functions satisfy natural inequalities, and in particular, an important question of characterizing correlation functions coming from the Ising model was raised in the appendix of~\cite{KS}. 

A starting point for our results was recent insightful work of Lis~\cite{Lis}, where he discovered a deep connection between the planar Ising model and total positivity, and used it to prove new inequalities on boundary two-point correlation functions in the planar case. He relied on the results of Postnikov~\cite{Pos} and Talaska~\cite{Talaska} on the \emph{totally nonnegative Grassmannian} $\Grtnn(k,n)$, which is the subset of the Grassmannian $\Gr(k,n)$ of $k$-dimensional subspaces of $\R^n$ where all Pl\"ucker coordinates are nonnegative. The space $\Grtnn(k,n)$ is a special case of the totally positive part of a partial flag variety introduced by Lusztig~\cite{Lus2,Lus98} as an application of his theory of canonical bases~\cite{LusztigBases}. The totally nonnegative Grassmannian was studied from combinatorial point of view in~\cite{Pos}, and since then it has attracted lots of attention due to its unexpected connections to various areas such as cluster algebras and the physics of scattering amplitudes~\cite{FZ,abcgpt,AHT,Scott_06}.

Despite the enormous amount of research on the planar Ising model, some basic questions seem to have remained unanswered. Let us denote by $\Space_n\subset \Mat_n(\R)$ the space of all boundary correlation matrices of planar Ising networks with $n$ boundary nodes embedded in a disk. This is a subspace of the space $\Mat_n(\R)$ of $n\times n$ matrices with real entries. Every matrix in $\Space_n$ is symmetric and has diagonal entries equal to $1$, but $\Space_n$ is neither a closed nor an open subset of the space of such matrices. Let $\Closure_n$ denote the closure of $\Space_n$ inside $\Mat_n(\R)$, i.e., $\Closure_n$ is the space of boundary correlation matrices of a slightly more general class of planar Ising networks, as discussed in Section~\ref{sec:ising_to_OG}. Two fundamental questions about $\Closure_n$ that we answer in this paper (see Theorem~\ref{thm:main}) are:
\begin{itemize}
\item Describe $\Closure_n$ by equalities and inequalities inside $\Mat_n(\R)$.
\item Describe the topology of $\Closure_n$.
\end{itemize}
Using a construction similar to the one in~\cite{Lis}, we give a simple embedding $\doublemap$ of the space $\Closure_n$ into a  subset of $\Grtnn(n,2n)$ which turns out to be precisely the \emph{totally nonnegative orthogonal Grassmannian}, introduced in~\cite{HW,HWX} in the study of ABJM scattering amplitudes. This gives a solution to the first question, and then we show that $\Closure_n$ is homeomorphic to an $n\choose2$-dimensional closed ball using the techniques developed in~\cite{GKL}, where an analogous result (conjectured by Postnikov~\cite{Pos}) was shown for $\Grtnn(k,n)$.

We then apply our construction to study some other aspects of the planar Ising model. For instance, we recognize (Theorem~\ref{thm:planar_dual}) the Kramers--Wannier duality~\cite{KrWa} as the well studied cyclic shift operation on $\Grtnn(n,2n)$, and explain (Remark~\ref{rmk:very_close}) the connection between the planar Ising model at critical temperature and the unique cyclically symmetric point inside $\Grtnn(n,2n)$, studied in~\cite{GKL,KarpCS}. We also express (Theorem~\ref{thm:generalized_Griffiths}) generalized Griffiths' inequalities of~\cite{Griffiths,KS} as manifestly positive linear combinations of the Pl\"ucker coordinates of our embedding. We explain in Corollary~\ref{cor:dimers} how the known formula for Pl\"ucker coordinates in terms of the dimer model recovers one of  Dub\'edat's \emph{bosonization identities}~\cite{Dubedat}. Finally, we solve the \emph{inverse problem} in Section~\ref{sec:inverse_problem_intro}: given a boundary correlation matrix $M\in \Mat_n(\R)$ of the Ising model on a planar graph $G$ embedded in a disk, we show that if $G$ is \emph{reduced} then the edge weights of the Ising model are uniquely and explicitly determined by $M$.

In many aspects, our results for the planar Ising model are analogous to known results for planar electrical networks, see e.g.~\cite{CGV,CIM,Lam,KenyonCDM}. However, the precise relationship between the two models remains completely mysterious to us. See Section~\ref{sec:conjectures} for a discussion of this and other open problems motivated by our approach.

This paper is organized as follows. We state our main result (Theorem~\ref{thm:main}) in Section~\ref{sec:main_results}, and then list several applications of our construction in Section~\ref{sec:applications}. We give some background on the totally nonnegative Grassmannian in Section~\ref{sec:tnn_OG}, and study the totally nonnegative orthogonal Grassmannian  in Section~\ref{sec:tnn_OG_2}. After that, we prove our main results. In Section~\ref{sec:ising_to_OG}, we show that the formula for boundary correlations in terms of the dimer model indeed yields the same result as the embedding $\doublemap$ from Section~\ref{sec:main_results}. In Section~\ref{sec:ball}, we prove that $\Closure_n$ is homeomorphic to a ball and discuss the cyclic symmetry of this space. We explain how to express generalized Griffiths' inequalities as positive sums of Pl\"ucker coordinates in Section~\ref{sec:griffiths}, and list several conjectures in Section~\ref{sec:conjectures}.

\SkipTocEntry\section*{Acknowledgments}
We thank Marcin Lis, David Speyer, Thomas Lam, Dmitry Chelkak, and George Lusztig for their valuable comments on the first version of this manuscript. We are also grateful to the anonymous referees for helpful suggestions.

\section{Main results}\label{sec:main_results}
\subsection{The Ising model}
A \emph{planar Ising network} is a pair $N=(G,J)$ where $G=(V,E)$ is a planar graph embedded in a disk and $J:E\to \R_{>0}$ is a function assigning positive real numbers to the edges of $G$.
We always label the vertices of $G$ on the boundary of the disk by $b_1,\dots,b_n\in V$ in counterclockwise order. Given a planar Ising network $N=(G,J)$, the \emph{Ising model} on $N$ (with \emph{no external field} and \emph{free boundary conditions}) is a probability measure on the space $\{-1,1\}^V$ of \emph{spin configurations} on the vertices of $G$. Given a spin configuration $\sigma:V\to\{-1,1\}$, its probability is given by
\begin{equation}\label{eq:dfn:Prob}
\Prob(\sigma):=\frac1{\Zpart} \prod_{\{u,v\}\in E}\exp \left( J_{\{u,v\}} \sigma_u\sigma_v\right),
\end{equation}
where $\Zpart$ is the \emph{partition function}:
\begin{equation}\label{eq:dfn:Zpart}
\Zpart:=\sum_{\sigma\in\{-1,1\}^V}  \prod_{\{u,v\}\in E}\exp \left( J_{\{u,v\}} \sigma_u\sigma_v\right).
\end{equation}
Our main focus will be \emph{boundary two-point correlation functions}. Let $[n]:=\{1,2,\dots,n\}$. Given $i,j\in[n]$, we define the corresponding correlation function by
\begin{equation}\label{eq:dfn:Corr}
\<\sigma_i\sigma_j\>:=\sum_{\sigma\in\{-1,1\}^V}\Prob(\sigma)\sigma_{b_i}\sigma_{b_j}.
\end{equation}
Clearly, we have $\<\sigma_i\sigma_j\>=\<\sigma_j\sigma_i\>$, and if $i=j$ then the correlation function $\<\sigma_i\sigma_i\>$ is equal to $1$. We denote by $\Matsym\subset\Mat_n(\R)$ the space of all $n\times n$ symmetric real matrices with ones on the diagonal. Thus we obtain a matrix $M=M(G,J):=(m_{i,j})\in\Matsym$ given by $m_{i,j}:=\<\sigma_i\sigma_j\>$. Let us denote 
\[\Space_n:=\{M(G,J)\mid (G,J)\text{ is a planar Ising network with $n$ boundary vertices}\}.\]
Denote by $\Closure_n$ the closure of $\Space_n$ in the space $\Mat_n(\R)$ of $n\times n$ real matrices. (In other words, $\Closure_n$ can be defined as the space of all boundary correlation matrices $M(G,J)$ where $J$ is allowed to take values in $[0,\infty]$, or equivalently where $G$ is obtained from a planar graph embedded in a disk by contracting some edges that may connect boundary vertices, as we discuss in Section~\ref{sec:ising_to_OG}.) We will see later (Proposition~\ref{prop:Closure_cells}) that $\Closure_n$ admits a natural stratification into cells indexed by \emph{matchings on $[2n]$}, that is, by perfect matchings of the complete graph $K_{2n}$ (also called \emph{medial pairings}). For $n=3$, all matchings on $[2n]$ are shown in Figure~\ref{fig:P_3}.

\subsection{The orthogonal Grassmannian}
The \emph{Grassmannian} $\Gr(k,n)$ is the space of $k$-dimensional linear subspaces of $\R^n$. We always view an element $X\in\Gr(k,n)$ as the row span of a real $k\times n$ matrix of rank $k$, thus we think of $\Gr(k,n)$ as the space of full rank $k\times n$ matrices modulo row operations. Let us denote by ${[n]\choose k}$ the set of all $k$-element subsets of $[n]$. Given a set $I\in{[n]\choose k}$ and a point $X\in\Gr(k,n)$, the corresponding \emph{Pl\"ucker coordinate} $\Delta_I(X)$ is defined to be the determinant of the $k\times k$  submatrix of $X$ with column set $I$. (Such determinants are also called \emph{maximal minors} of $X$.) Pl\"ucker coordinates are defined up to a simultaneous rescaling, giving rise to the \emph{Pl\"ucker embedding} of $\Gr(k,n)$ into the $\left({n\choose k}-1\right)$-dimensional real projective space, see e.g.~\cite[Section~2.4]{RedBook}.

Define the \emph{totally nonnegative Grassmannian} $\Grtnn(k,n)\subset \Gr(k,n)$ as follows:
\[\Grtnn(k,n):=\left\{X\in\Gr(k,n)\mid \Delta_I(X)\geq0 \text{ for all $I\in{[n]\choose k}$}\right\}.\]
\begin{definition}\label{dfn:OG}
  The \emph{orthogonal Grassmannian} $\OG(n,2n)\subset \Gr(n,2n)$ is defined by
  \[\OG(n,2n):=\left\{X\in\Gr(n,2n)\mid \Delta_I(X)=\Delta_{[2n]\setminus I}(X) \text{ for all $I\in{[2n]\choose n}$}\right\}.\]
  Its \emph{totally nonnegative part} $\OGtnn(n,2n)\subset \Grtnn(n,2n)$ is the intersection
  \[\OGtnn(n,2n):=\OG(n,2n)\cap\Grtnn(n,2n).\]
\end{definition}
The space $\OGtnn(n,2n)$ has been first considered in~\cite{HW} in the context of the scattering amplitudes of ABJM theory. Postnikov defined a stratification of $\Grtnn(k,n)$ into \emph{positroid cells}, which induces a stratification of $\OGtnn(n,2n)$. As it was observed in~\cite{HW,HWX}, the strata of $\OGtnn(n,2n)$ are also naturally labeled by matchings on $[2n]$. We prove this in Section~\ref{sec:tnn_OG_2}.

\subsection{An embedding}\label{sec:embedding}

 Given a matrix $M=(m_{i,j})\in\Matsym$, one can construct an element $\doublemap(M)\in\OG(n,2n)$ using the following rules. We will describe an $n\times 2n$ matrix $\double M=(\double m_{i,j})$, so that for all $i,j\in[n]$, each of $\double m_{i,2j-1}$ and $\double m_{i,2j}$ is equal to either  $m_{i,j}$ or $-m_{i,j}$, as in Figure~\ref{fig:double}. Explicitly, for $i=j$ we put $\double m_{i,2i-1}=\double m_{i,2i}=m_{i,i}=1$, and for $i\neq j$ we set
\begin{equation}\label{eq:double_signs}
\double m_{i,2j-1}=-\double m_{i,2j}=(-1)^{i+j+\one(i<j)}m_{i,j},
\end{equation}
where $\one(i<j)$ denotes $1$ if $i<j$ and $0$ otherwise. 
\begin{figure}
\scalebox{0.86}{
  $   \displaystyle M=\begin{pmatrix}
    1 & m_{12} & m_{13} & m_{14} \\
m_{12} & 1 & m_{23} & m_{24} \\
m_{13} & m_{23} & 1 & m_{34} \\
m_{14} & m_{24} & m_{34} & 1
  \end{pmatrix}\quad \mapsto\quad \double M=
\begin{pmatrix}
1 & 1 & m_{12} & -m_{12} & -m_{13} & m_{13} & m_{14} & -m_{14} \\
-m_{12} & m_{12} & 1 & 1 & m_{23} & -m_{23} & -m_{24} & m_{24} \\
m_{13} & -m_{13} & -m_{23} & m_{23} & 1 & 1 & m_{34} & -m_{34} \\
-m_{14} & m_{14} & m_{24} & -m_{24} & -m_{34} & m_{34} & 1 & 1
\end{pmatrix} $}
  \caption{\label{fig:double}An example of the map $M\mapsto \double M$ for $n=4$.}
\end{figure}

\begin{remark}\label{rmk:full_rank}
For each $i\in[n]$, the sum of columns $2i-1$ and $2i$ of $\double M$ is equal to $2 e_i$, where $e_i$ is the $i$-th standard basis vector in $\R^n$. Thus the matrix $\double M$ has full rank, and we denote by $\doublemap(M)\in \Gr(n,2n)$ its row span.
\end{remark}

One can check that in fact $\doublemap(M)$ belongs to $\OG(n,2n)$, see Corollary~\ref{cor:doublemap_subset_OG}. We have thus constructed a map $\doublemap: \Matsym\to\OG(n,2n)$. Since boundary correlation matrices of planar Ising networks belong to the space $\Matsym$, $\doublemap$ restricts to a map $\doublemap:\Closure_n\to\OG(n,2n)$. We are ready to state our main result.

\begin{theorem}\label{thm:main}
The restriction $\doublemap:\Closure_n\to\OG(n,2n)$ is a stratification-preserving homeomorphism between $\Closure_n$ and $\OGtnn(n,2n)$. Moreover, both spaces are homeomorphic to an $n\choose 2$-dimensional closed ball.
\end{theorem}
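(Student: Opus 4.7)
The plan is to split the theorem into three stages: first prove $\doublemap(\Closure_n) \subseteq \OGtnn(n,2n)$, then upgrade this to a stratification-preserving bijection, and finally handle the topological ball claim. For the containment, I would use the dimer-model interpretation of the Plücker coordinates of $\doublemap(M)$ that is developed in Section~\ref{sec:ising_to_OG} (essentially Corollary~\ref{cor:dimers}): given an Ising network $(G,J)$, one builds an auxiliary bipartite graph $\Gdub$ with nonnegative edge weights so that $\Delta_I(\doublemap(M(G,J)))$ equals the partition function of the dimers on $\Gdub$ with boundary pattern $I$. Nonnegativity is then automatic, and Corollary~\ref{cor:doublemap_subset_OG} already lands $\doublemap(M)$ inside $\OG(n,2n)$; together these give $\doublemap(M)\in\OGtnn(n,2n)$.

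For the bijection, I would match the two natural stratifications. Proposition~\ref{prop:Closure_cells} stratifies $\Closure_n$ by medial pairings $\medpa$ on $[2n]$, and Section~\ref{sec:tnn_OG_2} provides the analogous positroid-cell decomposition of $\OGtnn(n,2n)$ indexed by the same set. One then checks that $\doublemap$ sends the $\medpa$-stratum into the $\medpa$-stratum by reading off the support of the nonvanishing Plücker coordinates from the dimer formula. Injectivity on each stratum is exactly the inverse problem solved in Section~\ref{sec:inverse_problem_intro}: for a reduced planar Ising network the edge weights $J$ are determined by the boundary correlation matrix $M$, so $M$ can be reconstructed from $\doublemap(M)$. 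Surjectivity onto each stratum reduces to a dimension count once one matches the Lusztig--Rietsch--type parametrization of the positroid cell with the explicit product parametrization of Ising networks carrying the medial pairing $\medpa$. Continuity of $\doublemap$ is immediate from~\eqref{eq:double_signs}. Continuity of $\doublemap^{-1}$ then follows from a compactness argument: $\Closure_n$ is compact, $\OGtnn(n,2n)$ is Hausdorff, and a continuous bijection between such spaces is automatically a homeomorphism.

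The last stage — the homeomorphism with an $\binom{n}{2}$-dimensional closed ball — is the main obstacle and is the business of Section~\ref{sec:ball}. My plan is to transplant the strategy used in~\cite{GKL} for $\Grtnn(k,n)$. The cyclic shift on $\Gr(n,2n)$ preserves $\OGtnn(n,2n)$ by Theorem~\ref{thm:planar_dual} (this is the Kramers--Wannier duality in Grassmannian language), and its square is a contractive one-parameter flow whose unique fixed point is the cyclically symmetric element of Remark~\ref{rmk:very_close}. The steps are: (i) verify that $\OGtnn(n,2n)$ is a compact CW complex of the correct dimension, using the positroid stratification; (ii) check that the cyclic flow pushes everything toward the symmetric point, so that $\OGtnn(n,2n)$ is contractible and its interior is a manifold; (iii) invoke the GKL criterion that a compact space equipped with a contractive flow whose complement of the fixed point is a topological manifold is a closed ball. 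The technical crux is (iii): one must show that the boundary (lower-dimensional cells) forms a sphere, which in~\cite{GKL} requires fine control over the Lusztig parametrization and the behavior of the flow at the boundary. Adapting this to the orthogonal setting demands checking that these ingredients restrict well to $\OGtnn(n,2n)$, and this is where I expect the proof to be most delicate. Once (i)--(iii) are in place, combining them with the stratified homeomorphism from the previous two stages yields Theorem~\ref{thm:main}.
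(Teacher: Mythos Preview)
Your plan has the right skeleton but contains two genuine gaps and one mischaracterization.

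First, you invoke Proposition~\ref{prop:Closure_cells} to stratify $\Closure_n$, but in the paper that proposition is \emph{derived from} Theorem~\ref{thm:main} (see the end of Section~\ref{sec:ball}). A priori it is not clear that the pieces $\Space_\medpa$ are disjoint (two networks with different medial pairings could in principle yield the same correlation matrix) nor that they exhaust $\Closure_n$. The paper avoids this circularity by never proving the stratified homeomorphism cell-by-cell from the $\Closure_n$ side; instead it shows directly (Theorem~\ref{thm:X=X'}) that $\doublemap(M(G,J))=\Meast(\Gbip,\wt)$, and then proves in Theorem~\ref{thm:Gmed_parametrization} that $\Jmed\mapsto\Meast(\Gbip,\wt)$ is a homeomorphism from $\R_{>0}^{\Vint}$ onto $\pc_\medpa\cap\OGtnn(n,2n)$. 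Proposition~\ref{prop:Closure_cells} is then a corollary, not an input.

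Second, your surjectivity argument (``reduces to a dimension count'') is not a proof: a smooth injective map between manifolds of equal dimension is an open embedding, but this does not by itself force the image to be the whole target cell. The paper's argument is constructive: given $X\in\pc_\medpa\cap\OGtnn(n,2n)$, one uses Lemma~\ref{lemma:exists_crossing} to find a removable bridge, recovers the weight of that bridge via Theorem~\ref{thm:bridge_removal}, and then shows (this is the key computation in the proof of Theorem~\ref{thm:Gmed_parametrization}) that the orthogonality condition $\Delta_I=\Delta_{[2n]\setminus I}$ forces the two bridge weights $s,c$ at a medial vertex to satisfy $s^2+c^2=1$. This is what produces a genuine $\Jmed$, and the inverse is visibly continuous.

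Finally, your description of the GKL step is off. The shift $S$ is discrete; the flow used is $\exp\bigl(t(S+S^T)\bigr)$, and the criterion (Lemma~2.3 of \cite{GKL}) does \emph{not} require showing that the boundary is a sphere. What must be checked is only that $\OGtp(n,2n)$ is a smooth embedded submanifold of some $\R^N$ (automatic, since $\OG(n,2n)$ is smooth) and that the flow sends $\OGtnn(n,2n)$ into $\OGtp(n,2n)$ for $t>0$. The latter reduces to the Lie-algebra identity $(S+S^T)D=-D(S+S^T)^T$, i.e.\ $S+S^T$ lies in the Lie algebra of $O(n,n)$, so $\exp(t(S+S^T))$ preserves $\OG(n,2n)$; combined with Lemma~\ref{lemma:expts_Grtp} this gives Lemma~\ref{lemma:expttau}. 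No separate analysis of the boundary cells is needed.
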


We prove the second part of Theorem~\ref{thm:main} in Section~\ref{sec:ball}, where we also deduce its first part from Theorems~\ref{thm:Gmed_parametrization} and~\ref{thm:X=X'}.

\begin{remark}
  The second sentence of  Theorem~\ref{thm:main} is an application of the machinery developed in~\cite{GKL}. The fact that the image $\doublemap(\Closure_n)$ is a subset of $\Grtnn(n,2n)$ can be deduced in a straightforward fashion from the work of Lis~\cite{Lis}, see Section~\ref{sec:alternating_flows}.
\end{remark}

\begin{example}
  We illustrate Theorem~\ref{thm:main} in the case $n=2$. Let $\sigma_{12}:=\<\sigma_1\sigma_2\>$, then the boundary correlation matrix $M$ has the form
\begin{equation}\label{eq:M_n=2}
  M=\begin{pmatrix}
      1 & \sigma_{12}\\
      \sigma_{12} & 1
    \end{pmatrix}.
\end{equation}
By definition, $\sigma_{12}\leq 1$, and we also have $\sigma_{12}\geq0$ by one of the \emph{Griffiths' inequalities}~\cite[Theorem~1]{Griffiths}. In fact, if $G$ has a single edge connecting the vertices $b_1$ and $b_2$ then it is easy to check that depending on $J_{\{b_1,b_2\}}$, $\sigma_{12}$ can be any number strictly between $0$ and $1$. If we remove the edge $\{b_1,b_2\}$ from $G$, we get $\sigma_{12}=0$. Thus $\Space_n$ consists of all matrices $M$ of the form~\eqref{eq:M_n=2} for $0\leq \sigma_{12}< 1$. If we contract the edge $\{b_1,b_2\}$, we get $\sigma_{12}=1$. The resulting graph will no longer be embedded in a disk, because the boundary vertices $b_1$ and $b_2$ will get identified. This is an example of a \emph{generalized planar Ising network} that we introduce in Section~\ref{sec:ising_to_OG}. We see that the closure $\Closure_n$ of $\Space_n$ consists of all matrices $M$ of the form~\eqref{eq:M_n=2} for $0\leq \sigma_{12}\leq 1$, and is stratified into three cells $\{\sigma_{12}=0\}$, $\{0<\sigma_{12}<1\}$, and $\{\sigma_{12}=1\}$. These three cells correspond to three possible matchings on $\{1,2,3,4\}$, namely, $\{\{1,2\},\{3,4\}\}$, $\{\{1,3\},\{2,4\}\}$, and $\{\{1,4\},\{2,3\}\}$, respectively.

We have
\begin{equation}\label{eq:double_M_example}
\double M=\begin{pmatrix}
    1 & 1 & \sigma_{12} & -\sigma_{12}\\
    -\sigma_{12} & \sigma_{12} & 1 & 1
  \end{pmatrix},
\end{equation}
and $\doublemap(M)\in \Gr(n,2n)$ is the row span of $\double M$. The maximal minors of $\double M$ are
\[\Delta_{12}(\double M)=\Delta_{34}(\double M)=2\sigma_{12},\quad \Delta_{14}(\double M)=\Delta_{23}(\double M)=1-\sigma_{12}^2,\]
\[\Delta_{13}(\double M)=\Delta_{24}(\double M)=1+\sigma_{12}^2.\]
It follows that $\doublemap(M)$ belongs to $\OG(n,2n)$ for all $\sigma_{12}\in\R$, and moreover, we get $\doublemap(M)\in \OGtnn(n,2n)$ precisely when $0\leq \sigma_{12}\leq 1$. Note that $\double M$ is a matrix but $\doublemap(M)$ is an element of the Grassmannian, and thus the Pl\"ucker coordinates of $\doublemap(M)$ are only defined up to rescaling. Nevertheless, we can recover $\sigma_{12}$ from these minors as follows:
\begin{equation}\label{eq:sigma_12_in_terms_of_minors}
\sigma_{12}=\frac{\Delta_{12}(\doublemap(M))}{\Delta_{13}(\doublemap(M))+\Delta_{14}(\doublemap(M))}.
\end{equation}
Thus we see that for $n=2$, the map $\doublemap$ is indeed a homeomorphism, and both spaces $\Closure_n$ and $\OGtnn(n,2n)$ are homeomorphic to $[0,1]$, which is an ${n\choose 2}=1$-dimensional closed ball.
\end{example}

\section{Consequences of the main construction}\label{sec:applications}
In this section, we give further results on the relationship between the Ising model and the orthogonal Grassmannian.

\newcommand\OddEven[1]{\mathcal{E}_n(#1)}
\newcommand\OddEvenPrime[1]{\mathcal{E}'_n(#1)}
\subsection{Reconstructing correlations from minors}
Our first goal is, given an element $X\in\OGtnn(n,2n)$, to find explicitly a matrix $M=(m_{i,j})\in\Matsym$ such that $X$ is the row span of $\double M$. For the case $n=2$, this was done in~\eqref{eq:sigma_12_in_terms_of_minors}. In order to deal with the general case, we give the following important definition.

\begin{definition}\label{dfn:OddEven}
Given a subset $S\subset[n]$, we denote by $\OddEven S\subset {[2n]\choose n}$ the collection of $n$-element subsets $I$ of $[2n]$ such that for each $i\in [n]$, the intersection $I\cap\{2i-1,2i\}$ has even size if and only if $i\in S$.
\end{definition}

The following result, proved in Section~\ref{sec:ising_to_OG}, is a simple consequence of Remark~\ref{rmk:full_rank}.
\begin{lemma}\label{lemma:from_minors_to_correlations}
  Let $M=(m_{i,j})\in\Matsym$ be a matrix. Then for each $i,j\in[n]$, we have
  \begin{equation}\label{eq:from_minors_to_correlations}
m_{i,j}=\frac{\sum_{I\in \OddEven{\{i,j\}}}\Delta_I(\doublemap(M))}{\sum_{I\in \OddEven{\emptyset}}\Delta_I(\doublemap(M))}=2^{-n}\sum_{I\in \OddEven{\{i,j\}}}\Delta_I(\double M).
  \end{equation}
\end{lemma}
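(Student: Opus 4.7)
The plan is to compute both sums $\sum_{I \in \OddEven{\emptyset}} \Delta_I(\double M)$ and $\sum_{I \in \OddEven{\{i,j\}}} \Delta_I(\double M)$ directly using multilinearity of the determinant, and then to divide (or normalize by $2^{-n}$) to recover $m_{i,j}$. The key input from Remark~\ref{rmk:full_rank} is that the columns $C_1, \ldots, C_{2n}$ of $\double M$ pair up: $C_{2i-1} + C_{2i} = 2e_i$. Setting $d_i := (C_{2i-1} - C_{2i})/2$, I rewrite each pair as $C_{2i-1} = e_i + d_i$, $C_{2i} = e_i - d_i$, and read off from~\eqref{eq:double_signs} that $(d_i)_i = 0$ and $(d_i)_k = (-1)^{k+i+\one(k<i)} m_{k,i}$ for $k \neq i$.

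For the denominator, each $I \in \OddEven{\emptyset}$ is determined by a sign vector $\eta \in \{\pm 1\}^n$ recording which column of each pair is selected, so the corresponding $n \times n$ submatrix has columns $e_1 + \eta_1 d_1, \ldots, e_n + \eta_n d_n$. Multilinear expansion decomposes each such determinant into a sum over $T \subseteq [n]$ of $\bigl(\prod_{r \in T} \eta_r\bigr) \det(v_1, \ldots, v_n)$ with $v_r = d_r$ for $r \in T$ and $v_r = e_r$ otherwise. Summing over $\eta$ annihilates every term with $T \neq \emptyset$, leaving $2^n \det(e_1, \ldots, e_n) = 2^n$.

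For the numerator, assuming $i < j$, I split $\OddEven{\{i,j\}}$ into the symmetric cases where $\{2i-1, 2i\} \subset I$ or $\{2j-1, 2j\} \subset I$. In the first case, summing over the free signs $\eta_p$ for $p \notin \{i,j\}$ produces a factor of $2^{n-2}$ times a single determinant whose columns are the vectors $e_p$ for $p \notin \{i,j\}$ together with the distinguished pair $(e_i + d_i, e_i - d_i)$. Expanding this pair multilinearly (two of the four resulting terms vanish due to a repeated column, and the remaining two coalesce by antisymmetry) yields $-2 \det(\ldots, e_i, d_i, \ldots)$, and cofactor expansion along the $d_i$-column kills every row except $k = j$ (the $e_k$-column for other $k$ forces its minor to have a zero row), producing $(-1)^{i+j+1}(d_i)_j = -m_{i,j}$. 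The first case thus contributes $2^{n-2} \cdot (-2) \cdot (-m_{i,j}) = 2^{n-1} m_{i,j}$, and the second case gives the same by symmetry, totalling $2^n m_{i,j}$. Dividing the numerator by the denominator yields the first equality in~\eqref{eq:from_minors_to_correlations}, and the second follows since the computation was carried out against the natural representative $\double M$ of $\doublemap(M)$; the case $i = j$ is trivial since $m_{i,i} = 1$ by convention.

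The main obstacle is the sign bookkeeping in the final step: the cofactor sign $(-1)^{i+j+1}$, the sign $(-1)^{i+j}$ appearing in $(d_i)_j$, and the $-2$ from collapsing the distinguished pair must all conspire to produce $+m_{i,j}$ rather than $-m_{i,j}$. This cancellation is ultimately a consequence of the sign choices built into~\eqref{eq:double_signs}, which appear to have been engineered precisely so that this works out cleanly.
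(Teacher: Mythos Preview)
Your proof is correct and follows essentially the same approach as the paper's. Both arguments exploit Remark~\ref{rmk:full_rank} via multilinearity of the determinant: the paper starts from the single $n\times n$ matrix $A$ with columns $e_1,\dots,e_{i-1},(\double M)_{2i-1},(\double M)_{2i},e_{i+1},\dots,\widehat{e_j},\dots,e_n$, computes $\det A=2m_{i,j}$ directly, and then expands each $e_k=\tfrac12\bigl((\double M)_{2k-1}+(\double M)_{2k}\bigr)$ to obtain the sum over $I\in\OddEven{\{i,j\}}$ with $2i-1,2i\in I$; you run the same identity in reverse, summing over the free signs $\eta_p$ to collapse the minor sum down to that same determinant. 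Your introduction of $d_i=(C_{2i-1}-C_{2i})/2$ and the explicit cofactor sign check are a bit more detailed than the paper's one-line remark that ``the sign in~\eqref{eq:double_signs} is chosen so that we would have $\det A=2m_{i,j}$,'' but the content is identical.
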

We stress again that unlike $\double M$, the maximal minors of $\doublemap(M)$ are defined up to a common scalar, so it only makes sense to talk about their ratios. However, for the specific matrix $\double M$, we have
\begin{equation}\label{eq:sum_of_minors_2^n}
\sum_{I\in \OddEven{\emptyset}}\Delta_I(\double M)=2^n,
\end{equation}
by the multilinearity of the determinant, see Remark~\ref{rmk:full_rank}. 
 Thus~\eqref{eq:sum_of_minors_2^n} explains why the two expressions for $m_{i,j}$ given in~\eqref{eq:from_minors_to_correlations} are actually equal.

For example, for $n=2$, \eqref{eq:sum_of_minors_2^n} becomes
\[\Delta_{13}(\double M)+\Delta_{14}(\double M)+\Delta_{23}(\double M)+\Delta_{24}(\double M)=4,\]
and for $i=1$ and $j=2$, Lemma~\ref{lemma:from_minors_to_correlations} gives another expression for $\sigma_{12}$:
\[\sigma_{12}=\frac{\Delta_{12}(\doublemap(M))+\Delta_{34}(\doublemap(M))}{\Delta_{13}(\doublemap(M))+\Delta_{14}(\doublemap(M))+\Delta_{23}(\doublemap(M))+\Delta_{24}(\doublemap(M))},\]
which is easily seen to be equivalent to~\eqref{eq:sigma_12_in_terms_of_minors}.

\subsection{Cyclic symmetry and the Kramers--Wannier duality}\label{sec:cyclic_intro}
A nice application of Theorem~\ref{thm:main} is a cyclic symmetry of the space $\Closure_n$, which comes from the cyclic symmetry of $\OGtnn(n,2n)$. It turns out that the cyclic shift operation on $\OGtnn(n,2n)$ corresponds to the  Kramers-Wannier duality~\cite{KrWa} that switches between the \emph{high and low temperature expansions} for the Ising model.

Let $k\leq \n$, and consider a linear operator $S:\R^\n\to \R^\n$ mapping a row vector $v=(v_1,\dots,v_\n)\in\R^\n$ to $v\cdot S=(v_2,v_3,\dots,v_\n,(-1)^{k-1}v_1)$. As a matrix, $S$ is given by $S_{i+1,i}=1$ for $i\in[\n-1]$, and $S_{1,\n}=(-1)^{k-1}$. A simple observation is that multiplying a $k\times \n$ matrix $A$ with nonnegative maximal minors by $S$ on the right yields another $k\times \n$ matrix with nonnegative maximal minors. Since multiplication on the right commutes with the left $\GL_k(\R)$-action, we get a cyclic shift operator on $\Grtnn(k,\n)$ mapping $X\in\Grtnn(k,\n)$ to $X\cdot S\in\Grtnn(k,\n)$. It is clear from the definitions that for $\Grtnn(n,2n)$, this action restricts to a cyclic shift action on $\OGtnn(n,2n)$. For example, if $X\in\OGtnn(2,4)$ is the row span of the matrix $\double M$ given in~\eqref{eq:double_M_example} then $X\cdot S$ is represented by 
\begin{equation}\label{eq:XcdotS_example}
\double M\cdot S=\begin{pmatrix}
    1 & \sigma_{12} & -\sigma_{12} & -1\\
  \sigma_{12} & 1 & 1 &    \sigma_{12}
\end{pmatrix}.
\end{equation}
One can check that the row span $X\cdot S$ of this matrix again belongs to $\OGtnn(2,4)$. By Theorem~\ref{thm:main}, there must exist a matrix $M'=\begin{pmatrix}
1 & \sigma'_{12}\\
\sigma'_{12} & 1
\end{pmatrix}$ such that $\doublemap(M')=X\cdot S$ in $\OGtnn(2,4)$ (i.e., such that $\double{M'}$ is obtained from the matrix in~\eqref{eq:XcdotS_example} by row operations). The value of $\sigma'_{12}$ can be found from the minors of $X\cdot S$ using~\eqref{eq:sigma_12_in_terms_of_minors}:
\[\sigma'_{12}=\frac{\Delta_{12}(X\cdot S)}{\Delta_{13}(X\cdot S)+\Delta_{14}(X\cdot S)}=\frac{1-\sigma_{12}^2}{1+\sigma_{12}^2+2\sigma_{12}}=\frac{1-\sigma_{12}}{1+\sigma_{12}}.\]
Thus the cyclic shift operation on $\OGtnn(n,2n)$ yields an automorphism of $\Closure_n$ which has order $2n$ for $n>2$ and order $n$ for $n=1,2$. For $n=2$, it sends $\sigma_{12}$ to $\frac{1-\sigma_{12}}{1+\sigma_{12}}$.

\def\Ndual{N^\ast}
\def\Mdual{M^\ast}
\def\Gdual{G^\ast}
\def\Vdual{V^\ast}
\def\Jdual{J^\ast}
\def\Edual{E^\ast}
\def\bdual{b^\ast}
\def\edual{{e^\ast}}

Let us now recall the construction of the duality of~\cite{KrWa}.
\begin{definition}\label{dfn:dual}
  Let $N=(G,J)$ be a connected\footnote{This definition can be easily extended to all (not necessarily connected) generalized planar Ising networks.} planar Ising network. The \emph{dual planar Ising network} $\Ndual:=(\Gdual,\Jdual)$ is defined as follows. The graph $\Gdual=(\Vdual,\Edual)$ is the planar dual graph of $G$, with boundary vertices $\bdual_1,\dots,\bdual_n$ placed counterclockwise on the boundary of the disk so that $\bdual_i$ is between $b_i$ and $b_{i+1}$. For $e\in E$, we denote by $\edual$ the edge of $\Gdual$ that crosses $e\in E$, and thus we have $\Edual=\{\edual\mid e\in E\}$. The edge parameters $\Jdual_\edual\in\R_{>0}$ are defined uniquely by the condition that
\begin{equation}\label{eq:duality}
\sinh(2\Jdual_\edual)=\frac1{\sinh(2J_e)}
\end{equation}
for all $e\in E$.
\end{definition}
For example, if $G$ is the graph in Figure~\ref{fig:Gbip_big} (left) then its dual $\Gdual$ is shown in Figure~\ref{fig:Gbip_big} (middle). Note that we have $\sinh(2t)=1/\sinh(2t)$ if and only if $t=\frac12\log(\sqrt2+1)$ is the \emph{critical temperature} of the Ising model on the square lattice. We also remark that applying the duality twice yields the same planar Ising network except that its boundary vertex labels are cyclically shifted: $(\bdual_i)^\ast=b_{i+1}$. We prove the following result in Section~\ref{sec:ball}.

\begin{theorem}\label{thm:planar_dual}
  Let $N=(G,J)$ be a connected planar Ising network with dual planar Ising network $\Ndual:=(\Gdual,\Jdual)$. Then the correlation matrices $M:=M(G,J)$ and $\Mdual:=M(\Gdual,\Jdual)$ are related by the cyclic shift on $\OGtnn(n,2n)$:
  \[\doublemap(M)\cdot S=\doublemap(\Mdual).\]
\end{theorem}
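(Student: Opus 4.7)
The plan is to reduce Theorem~\ref{thm:planar_dual} to a combinatorial identity on the bipartite double graph $\Gbip$ associated with a planar Ising network $N=(G,J)$, which will be constructed in Section~\ref{sec:ising_to_OG}. There, around each edge $e\in E$ of $G$ one places a ``medial square'' of two black and two white vertices with weights $c_e:=\cosh(J_e)$ and $s_e:=\sinh(J_e)$, and shows that every Pl\"ucker coordinate $\Delta_I(\doublemap(M))$ is a weighted dimer partition function on $\Gbip$ indexed by a boundary condition $I\in\binom{[2n]}{n}$. Granting this description, the theorem reduces to matching the dimer model on $\Gbip$ with the dimer model on $(\Gdual)^\square$ up to a relabeling of the $2n$ boundary vertices by the cyclic shift $S$.

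The key geometric observation is that the medial graph of $G$ inside the disk is invariant under planar duality; only the $2$-coloring of faces is switched. Hence $(\Gdual)^\square$ is naturally identified with $\Gbip$ by swapping the two vertex colors at each medial square (a $90^\circ$ rotation) and relabeling boundary vertices by a cyclic shift of length $1$, the latter because $\bdual_i$ is placed between $b_i$ and $b_{i+1}$. Under this identification the weights $(c_{\edual},s_{\edual})$ of $\Gdual$ occupy the positions previously held by $(s_e,c_e)$ in $\Gbip$. The accompanying weight check is that the Kramers--Wannier relation~\eqref{eq:duality} is equivalent, after multiplication of each medial square by an edge-dependent scalar $\lambda_e$, to the identity
\[
\frac{c_{\edual}}{s_{\edual}} \;=\; \frac{s_e}{c_e},
\]
i.e.\ $\tanh(\Jdual_\edual) = e^{-2J_e}$, an elementary consequence of $\sinh(2\Jdual_\edual)=1/\sinh(2J_e)$. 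Since the overall scalar $\prod_e\lambda_e$ is independent of $I$, it disappears projectively, leaving $\Delta_I(\doublemap(\Mdual)) = \Delta_{S^{-1}(I)}(\doublemap(M))$ for every $I$, which is exactly the statement $\doublemap(\Mdual)=\doublemap(M)\cdot S$ in $\OGtnn(n,2n)$.

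The main technical obstacle I anticipate is a careful reconciliation of sign conventions. The embedding $\doublemap$ introduces the signs $(-1)^{i+j+\one(i<j)}$ from~\eqref{eq:double_signs}, and the cyclic shift operator $S$ carries an extra sign $(-1)^{n-1}$ on its last column; the alternating-flow/dimer sign conventions of Section~\ref{sec:ising_to_OG} must be compatible with the $90^\circ$ rotation that implements the color swap, and the boundary relabeling must be correctly tracked through all these signs. Once the conventions are aligned, the remaining computation is the elementary identity displayed above. As an alternative one could bypass the dimer description entirely by combining Lemma~\ref{lemma:from_minors_to_correlations} with the classical Kramers--Wannier equalities between high- and low-temperature boundary correlations on $N$ and $\Ndual$, but the dimer route is preferable because it makes the cyclic shift transparently visible as a rotation of $\Gbip$.
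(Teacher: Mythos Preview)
Your approach is essentially the same as the paper's: both arguments pass through the weighted plabic graph $\Gbip$ (equivalently the dimer model), observe that planar duality $G\mapsto G^\ast$ leaves the medial graph unchanged, swaps the two colors of $\Gbip$, and cyclically relabels the boundary vertices by one step, and then check that Kramers--Wannier duality~\eqref{eq:duality} matches the weights.

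Two points of calibration. First, the paper's weights are $s_e=\sech(2J_e)$ and $c_e=\tanh(2J_e)$, not $\sinh(J_e)$ and $\cosh(J_e)$; with that choice one has $s_e^2+c_e^2=1$ and the duality relation $\sinh(2J_e)\sinh(2J^\ast_{e^\ast})=1$ becomes \emph{exactly} $s_{e^\ast}=c_e$, $c_{e^\ast}=s_e$, so no edge-dependent rescaling $\lambda_e$ is needed. Your version works too, since the rescaling is projective, but the cleaner normalization removes that step. Second, the sign bookkeeping you flag as the ``main technical obstacle'' never has to be faced directly: by Theorem~\ref{thm:X=X'} one already has $\doublemap(M)=\Meast(\Gbip,\wt)$, so the signs in~\eqref{eq:double_signs} and in the cyclic shift $S$ are absorbed into the identification of $\doublemap$ with the dimer boundary measurement map. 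The entire proof then reduces to the one-line observation $\Delta_I(\Meast(\Gbip,\wt))=\Delta_{I'}(\Meast((G^\ast)^\square,\wt^\ast))$ with $I'=\{i+1:i\in I\}$.
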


According to~\cite[Theorem~1.1]{GKL}, the space $\Grtnn(k,\n)$ is homeomorphic to a closed ball. The main ingredient of the proof of this result is the cyclic symmetry of $\Grtnn(k,\n)$. In Section~\ref{sec:ball}, we use the above cyclic symmetry of $\OGtnn(n,2n)$ in a similar way to show that it is a closed ball, which by the first part of Theorem~\ref{thm:main} implies that the space $\Closure_n$ of boundary correlation matrices is homeomorphic to a closed ball as well.

The following fact can be found in~\cite[Lemma~3.1]{GKL} or~\cite[Eq.~(1.4)]{KarpCS}, where it is stated more generally for all $\Grtnn(k,\n)$.
\begin{proposition}\label{prop:X_0}
  There exists a unique point $X_0\in\Grtnn(n,2n)$ that is \emph{cyclically symmetric}, i.e., satisfies $X_0\cdot S=X_0$. Its Pl\"ucker coordinates $\Delta_I(X_0)$ for $I\in{[2n]\choose n}$ are given by
  \begin{equation}\label{eq:sines}
\Delta_I(X_0)=\prod_{i,j\in I: i<j}\sin \left(\frac{(j-i)\pi}{2n}\right).
  \end{equation}
\end{proposition}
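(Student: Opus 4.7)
My approach rests on the spectral theory of the cyclic shift $S$. Since $v\cdot S^{2n}=(-1)^{n-1}v$ for every $v\in\R^{2n}$, the operator $S$ is diagonalizable over $\C$ with $2n$ distinct eigenvalues equally spaced on the unit circle, namely the roots of $\lambda^{2n}=(-1)^{n-1}$. The condition $X_0\cdot S = X_0$ in the Grassmannian says exactly that the row space of (any representative of) $X_0$ is an $S$-stable subspace of $\R^{2n}$, and any $n$-dimensional real $S$-invariant subspace corresponds bijectively to a subset of these $2n$ eigenvalues that is closed under complex conjugation and has total $\R$-dimension $n$. In particular, there are only finitely many cyclically symmetric points in $\Gr(n,2n)$. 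The eigenrows of $S$ are Vandermonde-like: $v(\lambda)=(1,\lambda,\lambda^2,\ldots,\lambda^{2n-1})$.

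\textbf{Existence and the formula.} I would single out the selection consisting of the $n$ eigenvalues $\mu_1,\ldots,\mu_n$ lying in the closed upper half plane (with a real eigenvalue included when the parity of $n$ forces an odd number of real choices), and take $X_0$ to be the corresponding $S$-invariant subspace. A convenient matrix representative over $\C$ is the Vandermonde matrix $\widetilde X_0$ with rows $v(\mu_1),\ldots,v(\mu_n)$, whose maximal minors are the Schur-Vandermonde determinants $\det[\mu_k^{j-1}]_{k\in[n],\,j\in I}$. Passing from $\widetilde X_0$ to a real representative of $X_0$ (by pairing each $v(\mu_k)$ with $v(\bar\mu_k)$ and taking real and imaginary parts) multiplies the minors by the determinant of an explicit change-of-basis matrix. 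Invoking the classical identity
\[
|\mu_a-\mu_b|=2\left|\sin\!\left(\frac{\arg\mu_a-\arg\mu_b}{2}\right)\right|\quad(|\mu_a|=|\mu_b|=1)
\]
turns every Vandermonde factor into a sine, and the differences of arguments are exactly the multiples of $\pi/(2n)$ dictated by the spacing of the eigenvalues. After collecting, the minors become $C\cdot\prod_{i<j\in I}\sin((j-i)\pi/(2n))$ for a single positive constant $C$ independent of $I$, which (being a projective ambiguity) matches the stated formula.

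\textbf{Uniqueness.} For any other conjugation-closed selection of $n$ eigenvalues, the same computation goes through, but now some of the arguments $\arg\mu_a-\arg\mu_b$ fall outside the range where the sine identity is sign-preserving. Tracking these signs, at least one resulting Plücker coordinate becomes negative; hence only the canonical selection gives a point of $\Grtnn(n,2n)$, proving both existence and uniqueness. Positivity of every sine in the formula additionally shows $X_0\in\Grtp(n,2n)$.

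\textbf{Main obstacle.} The bulk of the work is the sign bookkeeping in the change of basis from the complex Vandermonde $\widetilde X_0$ to a real representative of $X_0$: one must verify that the minors of the real matrix are obtained from those of $\widetilde X_0$ by multiplication by a real nonzero scalar of a definite sign, uniformly in $I$. Once this is handled, cyclic invariance of the resulting formula is automatic from the identity $\sin((2n-k)\pi/(2n))=\sin(k\pi/(2n))$, which takes care of the wrap-around terms when $2n\in I$ as in the argument of Section~\ref{sec:cyclic_intro}.
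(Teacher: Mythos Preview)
The paper does not prove this proposition at all; it simply cites \cite[Lemma~3.1]{GKL} and \cite[Eq.~(1.4)]{KarpCS}. Your spectral/Vandermonde strategy is exactly the one used in those references, so you are on the right track, but there is a genuine error in your choice of eigenvalues.

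You propose selecting the $n$ eigenvalues in the closed upper half plane. For $n$ even this set is \emph{not} closed under complex conjugation, so the complex span of the corresponding Vandermonde rows is not the complexification of any real subspace. Concretely, for $n=2$ the eigenvalues are $e^{\pm i\pi/4},e^{\pm 3i\pi/4}$; your selection $\{e^{i\pi/4},e^{3i\pi/4}\}$ gives a Vandermonde whose Pl\"ucker ratios include $\Delta_{12}/\Delta_{13}=1/(\mu_1+\mu_2)=1/(i\sqrt2)$, which is not real. Your proposed fix, ``pairing each $v(\mu_k)$ with $v(\bar\mu_k)$ and taking real and imaginary parts,'' is not a change of basis: $v(\bar\mu_k)$ does not lie in the row span of $\widetilde X_0$, so you are silently replacing the subspace by a different one.

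The correct selection (as in the cited papers) is the $n$ eigenvalues with arguments symmetric about $0$, namely $z_r=\exp\!\big(i\pi(2r-n-1)/(2n)\big)$ for $r=1,\dots,n$. This set \emph{is} conjugation-closed, the Vandermonde rows span a genuine real $S$-invariant subspace, and the generalized Vandermonde minors factor directly (via $z_a-z_b=2i\,e^{i(\theta_a+\theta_b)/2}\sin\frac{\theta_a-\theta_b}{2}$ together with $\sum_r\theta_r=0$) into a common phase times $\prod_{i<j\in I}\sin((j-i)\pi/(2n))$. With this correction the rest of your plan (including the uniqueness argument by checking signs for the other conjugation-closed selections) goes through.
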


It will follow from our proof of Theorem~\ref{thm:main} that this point $X_0$ actually belongs to $\OGtnn(n,2n)$, and thus corresponds to some special planar Ising network with $n$ boundary vertices. For instance, for $n=2$ this is the Ising network $N$ with one edge $e$ such that $J_e=\frac12\log(\sqrt2+1)$. This planar Ising network is \emph{self-dual}, i.e., satisfies $N=\Ndual$. However, it is easy to see that for $n=3$ there are no self-dual planar Ising networks. Nevertheless, as our next result shows, for each $n$, there exists a (usually not unique) planar Ising network $N=(G,J)$ with $n$ boundary vertices such that the boundary correlation matrices of $N$ and $\Ndual$ coincide: $M(G,J)=M(\Gdual,\Jdual)$.
\begin{proposition}\label{prop:cyclic_fixed_pt}
For each $n\geq1$, there exists a unique boundary correlation matrix $M_0\in\Closure$ of some planar Ising network such that the element $\doublemap(M_0)\in\OGtnn(n,2n)$ is \emph{cyclically symmetric}, i.e., satisfies $\doublemap(M_0)\cdot S=\doublemap(M_0)$. For any planar Ising network $N=(G,J)$ satisfying $M(G,J)=M_0$, we have $M(G,J)=M(\Gdual,\Jdual)$.
\end{proposition}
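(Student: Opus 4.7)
The plan is to reduce the proposition to three already-established ingredients: the fact (noted in the paragraph preceding the statement) that the cyclically symmetric point $X_0$ of Proposition~\ref{prop:X_0} lies in $\OGtnn(n,2n)$; the homeomorphism $\doublemap\colon\Closure_n\to\OGtnn(n,2n)$ of Theorem~\ref{thm:main}; and the intertwining of Kramers--Wannier duality with the cyclic shift given by Theorem~\ref{thm:planar_dual}. Everything else should be a short transport of structure.

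First I would establish existence and uniqueness of $M_0$. Since $\doublemap$ is a bijection between $\Closure_n$ and $\OGtnn(n,2n)$, any matrix $M\in\Closure_n$ with $\doublemap(M)\cdot S=\doublemap(M)$ corresponds to a fixed point of the cyclic shift inside $\OGtnn(n,2n)\subset\Grtnn(n,2n)$. By Proposition~\ref{prop:X_0}, there is a unique such fixed point $X_0$ in $\Grtnn(n,2n)$, and it lies in $\OGtnn(n,2n)$ by the remark that precedes the proposition. Therefore $M_0:=\doublemap^{-1}(X_0)$ is the unique element of $\Closure_n$ with $\doublemap(M_0)$ cyclically symmetric, and because $\Closure_n$ is defined as the closure of the space of boundary correlation matrices of planar Ising networks, $M_0$ is automatically realized as such a (possibly generalized) boundary correlation matrix.

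Next I would deduce the self-duality assertion. Let $N=(G,J)$ be any planar Ising network with $M(G,J)=M_0$, and let $\Ndual=(\Gdual,\Jdual)$ be its dual. By Theorem~\ref{thm:planar_dual},
\[
\doublemap\bigl(M(\Gdual,\Jdual)\bigr)=\doublemap(M_0)\cdot S=X_0\cdot S=X_0=\doublemap(M_0).
\]
Injectivity of $\doublemap$ (Theorem~\ref{thm:main}) then forces $M(\Gdual,\Jdual)=M_0=M(G,J)$, which is the claimed equality.

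No step requires a substantive new calculation; the only non-formal input is the containment $X_0\in\OGtnn(n,2n)$, which is not proved inside this argument but is supplied by the proof of Theorem~\ref{thm:main} (and in any case can be verified directly from the product formula~\eqref{eq:sines} by checking that the complement involution $I\mapsto [2n]\setminus I$ preserves the right-hand side, using $\sin((2n-k)\pi/(2n))=\sin(k\pi/(2n))$). The anticipated main obstacle is therefore purely bookkeeping: making sure that when $M_0$ corresponds to a generalized planar Ising network (allowing $J$ to take values in $[0,\infty]$ or contracted edges), the duality of Definition~\ref{dfn:dual} and Theorem~\ref{thm:planar_dual} are applied in the extended form indicated in the footnote to that definition, so that the above chain of equalities is legitimate.
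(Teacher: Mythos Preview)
Your proof is correct and follows essentially the same route as the paper: invoke Proposition~\ref{prop:X_0} for the unique cyclically symmetric $X_0\in\Grtnn(n,2n)$, note (via Remark~\ref{rmk:critical}, equivalently the discussion you cite) that $X_0\in\OGtnn(n,2n)$, use the bijection $\doublemap$ of Theorem~\ref{thm:main} to define $M_0$, and then apply Theorem~\ref{thm:planar_dual} together with injectivity of $\doublemap$ to obtain $M(G,J)=M(\Gdual,\Jdual)$. Your bookkeeping worry about generalized networks is in fact unnecessary here, since the formula~\eqref{eq:sines} shows all Pl\"ucker coordinates of $X_0$ are strictly positive, so $M_0$ lies in the top cell and is realized by an honest planar Ising network.
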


See Section~\ref{sec:ball} for the proof.

\begin{remark}\label{rmk:very_close}
Consider a planar Ising network $N=(G,J)$ such that $G$ is the intersection of the square lattice of small side length $\delta$ with a disk, and let $J_e=\frac12\log(\sqrt2+1)$ be critical for all $e\in E$. The dual network $\Ndual=(\Gdual,\Jdual)$ is ``very close'' to $N$ in the sense that it is obtained by shifting $N$ by $(\delta/2,\delta/2)$ and making some adjustments near the boundary of the disk. Thus one could argue that the boundary correlations of $N$ are ``very close'' to being cyclically symmetric, in which case we can find them explicitly from~\eqref{lemma:from_minors_to_correlations} and~\eqref{eq:sines}. The notion of being ``very close'' is asymptotic and thus is left beyond the scope of this paper. We plan to apply this approach to studying the universality of the scaling limit as $\delta\to0$ in future work. 
\end{remark}

\subsection{Reduction to the dimer model}\label{sec:dimer_model_intro}
Lemma~\ref{lemma:from_minors_to_correlations} shows that each two-point correlation function is a ratio of two sums of minors of an element of $\OGtnn(n,2n)$. In the next section, we apply a well known result that each minor of an element of $\Grtnn(k,\n)$ is equal to a weighted sum of matchings in a certain planar bipartite graph.

\def\sech{\operatorname{sech}}
\def\tanh{\operatorname{tanh}}
Suppose that we are given a planar Ising network $N=(G,J)$. We introduce two functions $s,c:E\to (0,1)$ satisfying $s_e^2+c_e^2=1$ for each $e\in E$, as follows. Given $e\in E$, we set
\begin{equation}\label{eq:sinh_tanh}
s_e:=\sech(2J_e)=\frac{2}{\exp(2J_e)+\exp(-2J_e)};\quad c_e:=\tanh(2J_e)=\frac{\exp(2J_e)-\exp(-2J_e)}{\exp(2J_e)+\exp(-2J_e)}.
\end{equation}

    \begin{figure}
      
\renewcommand{\arraystretch}{1.6}

\begin{tabular}{ccc}
\scalebox{\sclbx}{
  \begin{tikzpicture}[baseline=(zero.base),scale=\tikzscl]
    \coordinate (zero) at (0,0);
    \disk{0,0}
    \edge{-1,0}{1,0}
    \node[anchor=south,scale=\scsc] (etop) at (0,0) {$e$};
    \vertx{V1}{-1,0}
    \vertx{V2}{1,0}
    \node[anchor=east,scale=\bscl] (b1) at (-1,0) {$b_2$};
    \node[anchor=west,scale=\bscl] (b2) at (1,0) {$b_1$};
  \end{tikzpicture}
  }
&
\scalebox{\sclbx}{
  \begin{tikzpicture}[baseline=(zero.base),scale=\tikzscl]
    \drawgraph
    \edgepl{A}{AA}
    \edgepl{B}{BB}
    \edgepl{C}{CC}
    \edgepl{D}{DD}
    \edgepl{A}{B}
    \edgepl{C}{B}
    \edgepl{C}{D}
    \edgepl{A}{D}
  \end{tikzpicture}}
&
\scalebox{\sclbx}{
  \begin{tikzpicture}[baseline=(zero.base),scale=\tikzscl]
    \coordinate (zero) at (0,0);
    \disk{0,0}
    \edgeop{-1,0}{1,0}
    \node[anchor=30,scale=\bscl] (d1) at (210:1) {$\bbip_4$};
    \node[anchor=-30,scale=\bscl] (d2) at (-210:1) {$\bbip_3$};
    \node[anchor=210,scale=\bscl] (d3) at (30:1) {$\bbip_2$};
    \node[anchor=-210,scale=\bscl] (d3) at (-30:1) {$\bbip_1$};

    \vertx{AA}{30:1}
    \vertx{BB}{-30:1}
    \vertx{CC}{-150:1}
    \vertx{DD}{150:1}
    \vertx{zero}{0,0}
    \edgemed{AA}{zero}
    \edgemed{BB}{zero}
    \edgemed{CC}{zero}
    \edgemed{DD}{zero}
  \end{tikzpicture}
}
\\

$G$
&
$\Gbip$
&
$\Gmed$
\\

\end{tabular}

  \caption{\label{fig:Gbip_small}Transforming a graph $G$ (left) with one edge $e$ connecting two boundary vertices $b_1$ and $b_2$ into a bipartite graph $\Gbip$ (middle) with eight edges. Four of those edges are incident to the boundary and have weight $1$, and the rest have weights $s_e$, $c_e$, $s_e$, $c_e$, as shown in the figure. The corresponding medial graph $\Gmed$ from Section~\ref{sec:inverse_problem_intro} is shown on the right.}
\end{figure}
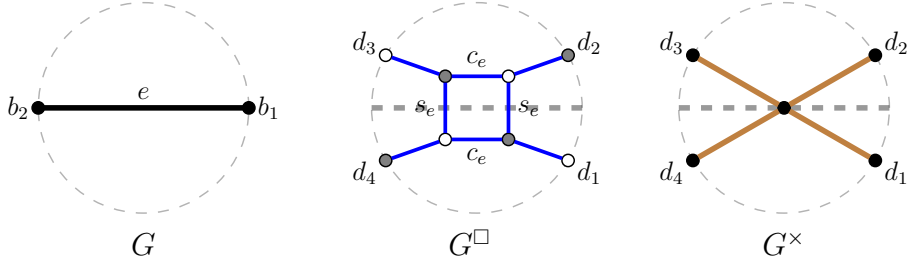

Next, we transform $G$ into a weighted planar bipartite graph (a \emph{plabic graph} in the sense of~\cite{Pos}) $\Gbip$ embedded in a disk, as in Figures~\ref{fig:Gbip_small} and~\ref{fig:Gbip_big}: we replace each edge $e\in E$ of $G$ by a bipartite square as in Figure~\ref{fig:Gbip_small} (middle), and connect two such squares if the corresponding edges of $G$ share both a vertex and a face of $G$. Additionally, we connect each of the  $2n$ boundary vertices of $\Gbip$, which we label $d_1,\dots,d_{2n}$ in counterclockwise order, to a unique vertex of $\Gbip$ in the interior of the disk in an obvious way (as in Figure~\ref{fig:Gbip_big}). Thus $d_i$ is a white (resp., black) vertex if $i$ is odd (resp., even). See Definitions~\ref{dfn:medial_to_plabic} and~\ref{dfn:Ising_to_medial} for a precise description of the rules for constructing $\Gbip$ from $G$. We call $\Gbip$ \emph{the plabic graph associated with $G$}.

\def\match{\mathcal{A}}
Let us now describe the \emph{boundary measurement map} of~\cite{Pos,Talaska}, as explained in~\cite{LamCDM}.
\begin{definition}\label{dfn:almost_perfect}
An \emph{almost perfect matching} of $\Gbip$ is a collection $\match$ of edges of $\Gbip$ such that every vertex of $\Gbip$ is incident to at most one edge in $\match$, and every non-boundary vertex of $\Gbip$ is incident to exactly one edge in $\match$. The \emph{boundary} of $\match$ is a subset $\partial(\match)\subset[2n]$ which consists of all odd indices $i$ such that $d_i$ is not incident to an edge of $\match$ together with all even indices $i$ such that $d_i$ is incident to an edge of $\match$. We define the \emph{weight} $\wt(\match)$ of $\match$ to be the product of weights of all edges in $\match$.
\end{definition}

It is not hard to see that $\partial(\match)$ has size $n$ for any almost perfect matching $\match$ of $\Gbip$. We are prepared to give a formula for the boundary correlation functions which is very similar to Kenyon and Wilson's \emph{grove measurement} formula~\cite{KW}. See Section~\ref{sec:ball} for the proof.

\begin{theorem}\label{thm:bound_measurements}
  Let $N=(G,J)$ be a planar Ising network and $M=M(G,J)$ be its boundary correlation matrix. Consider the element  $\doublemap(M)\in\OGtnn(n,2n)$, and let $\Gbip$ be the weighted planar bipartite graph described above. Then up to a common rescaling,  for every $I\in{[2n]\choose n}$ we have
  \begin{equation}\label{eq:dimer_model}
\Delta_I(\doublemap(M))=\sum_{\match:\partial(\match)=I}\wt(\match),
  \end{equation}
  where the sum is over almost perfect matchings $\match$ of $\Gbip$ with boundary $I$.
\end{theorem}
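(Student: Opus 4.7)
The plan is to combine the Postnikov--Talaska--Lam formula for the boundary measurement map of plabic graphs with a weight-preserving bijection between almost perfect matchings of $\Gbip$ and classical ``Ising contour'' configurations on $G$. Applying the former yields a natural element $X(\Gbip) \in \Grtnn(n, 2n)$ whose Plücker coordinates are precisely the matching sums on the right-hand side of~\eqref{eq:dimer_model}; the remainder of the proof then identifies $X(\Gbip)$ with $\doublemap(M)$ by comparing two-point correlation functions via Lemma~\ref{lemma:from_minors_to_correlations}.

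More concretely, viewing $\Gbip$ as a weighted plabic graph with boundary $d_1, \dots, d_{2n}$ and interior edge weights $s_e, c_e$ (and all boundary edges of weight $1$), the Postnikov--Talaska formula (in the form used by Lam) produces a point $X(\Gbip) \in \Grtnn(n,2n)$ with
\[\Delta_I(X(\Gbip)) = \sum_{\match : \partial(\match) = I} \wt(\match) \quad \text{for all } I \in \binom{[2n]}{n},\]
up to a common positive scalar. To complete the proof I must show $X(\Gbip) = \doublemap(M)$ in $\Gr(n,2n)$. Using the symmetric arrangement of $s_e$ and $c_e$ in each bipartite square, one sees that $X(\Gbip) \in \OGtnn(n,2n)$, so by Theorem~\ref{thm:main} there is a unique $M' \in \Closure_n$ with $\doublemap(M') = X(\Gbip)$. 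By Lemma~\ref{lemma:from_minors_to_correlations}, the equality $M' = M$ reduces to the combinatorial identity
\[\<\sigma_i \sigma_j\> = \frac{\sum_{I \in \OddEven{\{i,j\}}} \sum_{\match : \partial(\match) = I} \wt(\match)}{\sum_{I \in \OddEven{\emptyset}} \sum_{\match : \partial(\match) = I} \wt(\match)}.\]
I would prove this identity by exhibiting a weight-preserving bijection between almost perfect matchings $\match$ of $\Gbip$ with $\partial(\match) \in \OddEven{S}$ and edge subsets $F \subset E(G)$ in which every vertex of $G$ has even degree except those in $\{b_i : i \in S\}$. The bijection is local: at each bipartite square associated to an edge $e \in E$, the possible matching patterns split into cases encoding ``$e \in F$'' versus ``$e \notin F$''. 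The relations $s_e = \sech(2J_e)$, $c_e = \tanh(2J_e)$ are engineered so that $\wt(\match)$, up to a global constant depending only on $G$ and $J$, reproduces the classical contour weight $\prod_{e \in F} \tanh(2J_e) \prod_{e \notin F} \sech(2J_e)$, and the high-temperature expansion of the Ising model then yields the required ratio.

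The main obstacle is precisely this last combinatorial step: assembling the gadget-by-gadget matching-to-contour bijection into a globally consistent weight-preserving bijection, and verifying that the parity condition defining $\OddEven{S}$ matches the odd-degree condition on contours, taking into account the boundary convention in which $d_i$ is white for $i$ odd and black for $i$ even. This is essentially a version of Dub\'edat's bosonization identity (cf.\ Corollary~\ref{cor:dimers}); the existence of such a bijection is well-established in spirit in the statistical-mechanics literature, but spelling it out for the specific graph $\Gbip$ constructed in Section~\ref{sec:dimer_model_intro} requires a careful, if routine, case analysis at each bipartite square and at each boundary vertex. Once this local-to-global bijection is in place, Lemma~\ref{lemma:from_minors_to_correlations} closes the argument.
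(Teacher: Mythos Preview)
Your overall reduction is exactly the paper's: define $X':=\Meast(\Gbip,\wt)$ via matching sums (Theorem~\ref{thm:Meas}), show $X'\in\OGtnn(n,2n)$ (Proposition~\ref{prop:matchings_complements}), produce $M'$ with $\doublemap(M')=X'$, and then verify $m'_{i,j}=\<\sigma_i\sigma_j\>$ via~\eqref{eq:m'_i,j:matchings}. Two points deserve correction, one minor and one substantive.

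\textbf{Circularity.} You invoke Theorem~\ref{thm:main} to obtain $M'$, but in this paper Theorem~\ref{thm:main} is \emph{deduced from} Theorem~\ref{thm:X=X'}, which is the statement you are proving. The paper avoids this by using Lemma~\ref{lemma:doulbemap_image} instead: it shows directly (no Ising input) that $\doublemap(\Matsym)\supset\OGtnn(n,2n)$, which is all you need. This is an easy fix, but as written your argument is circular.

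\textbf{The bijection does not exist as described.} Your proposed ``weight-preserving bijection'' from almost perfect matchings with $\partial(\match)\in\OddEven{S}$ to single-Ising contours $F$ with odd vertices $S$, carrying weight $\prod_{e\in F}c_e\prod_{e\notin F}s_e$, fails already for the one-edge network of Figure~\ref{fig:Gbip_small}. There the matchings with $\partial(\match)\in\OddEven{\emptyset}$ have total weight $\Delta_{13}+\Delta_{14}+\Delta_{23}+\Delta_{24}=2+2s_e$, while the only contour with empty boundary is $F=\emptyset$ with your weight $s_e$; these are not proportional by a constant independent of $J_e$. More conceptually, the high-temperature expansion is in $\tanh(J_e)$, not $c_e=\tanh(2J_e)$, and Dub\'edat's identity (Eqs.~\eqref{eq:Dub1}--\eqref{eq:Dub2}) shows that $\sum_{\match}\wt(\match)$ equals $\Zpart$ times a constant $h$ that itself involves the partition function $\<1\>_{\Gammad}$ of the \emph{dual} Ising model---so the dimer sum encodes a double (or Ising $\times$ dual-Ising) object, not a single contour expansion. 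The paper closes this step either by quoting Dub\'edat directly (Section~\ref{sec:Dubedat}) or by the many-to-one map $\theta$ to Lis's minimal spinned flows (Section~\ref{sec:alternating_flows}, Figure~\ref{fig:theta}); neither is the naive contour bijection you sketch, and the ``routine case analysis'' you allude to will not produce one.
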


\begin{figure}
  \def\tikzscl{1}
  \def\sclbx{1.4}

  \begin{tabular}{cccc}
 \scalebox{\sclbx}{
   \begin{tikzpicture}[baseline=(zero.base),scale=\tikzscl]
    \drawgraph
    \edgeplmatch{A}{AA}
    \edgeplmatch{D}{DD}
    \edgeplmatch{B}{C}
  \end{tikzpicture}
}    & 
 \scalebox{\sclbx}{
  \begin{tikzpicture}[baseline=(zero.base),scale=\tikzscl]
    \drawgraph
    \edgeplmatch{A}{B}
    \edgeplmatch{D}{C}
  \end{tikzpicture}
}    & 
 \scalebox{\sclbx}{
  \begin{tikzpicture}[baseline=(zero.base),scale=\tikzscl]
    \drawgraph
    \edgeplmatch{A}{D}
    \edgeplmatch{C}{B}
  \end{tikzpicture}
}    & 
 \scalebox{\sclbx}{
  \begin{tikzpicture}[baseline=(zero.base),scale=\tikzscl]
    \drawgraph
    \edgeplmatch{C}{CC}
    \edgeplmatch{D}{DD}
    \edgeplmatch{A}{B}
  \end{tikzpicture}
}    \\
    $\partial(\match)=\{1,2\}$ &     $\partial(\match)=\{1,3\}$ &     $\partial(\match)=\{1,3\}$ &     $\partial(\match)=\{1,4\}$\\
    $\wt(\match)=c_e$ &      $\wt(\match)=s_e^2$ &      $\wt(\match)=c_e^2$ &      $\wt(\match)=s_e$ 
  \end{tabular}
  
  \caption{\label{fig:Gbip_matchings}Some almost perfect matchings of $\Gbip$, together with their boundaries and weights.}
\end{figure}
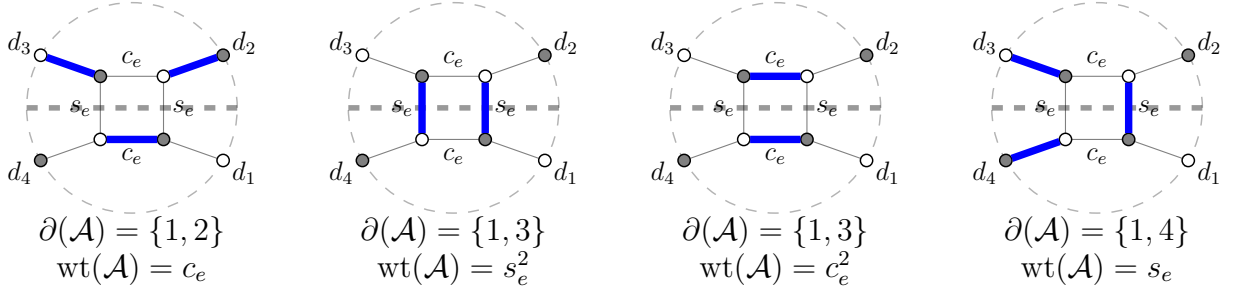

For example, consider the graph $\Gbip$ in Figure~\ref{fig:Gbip_small} (middle). There is a single almost perfect matching of $\Gbip$ with boundary $\{1,2\}$, shown in Figure~\ref{fig:Gbip_matchings} (left). Similarly, there are two almost perfect matchings of $\Gbip$ with boundary $\{1,3\}$ and one almost perfect matching with boundary $\{1,4\}$, also shown in Figure~\ref{fig:Gbip_matchings}. Therefore by Theorem~\ref{thm:bound_measurements} we get
\[\Delta_{12}(\doublemap(M))=c_e,\quad \Delta_{13}(\doublemap(M))=c_e^2+s_e^2=1,\quad \Delta_{14}(\doublemap(M))=s_e,\]
up to a common rescaling. By~\eqref{eq:sigma_12_in_terms_of_minors}, we should have
\begin{equation}\label{eq:sigma_12_s_c}
\sigma_{12}=\frac{\Delta_{12}(\doublemap(M))}{\Delta_{13}(\doublemap(M))+\Delta_{14}(\doublemap(M))}=\frac{c_e}{1+s_e},
\end{equation}
where $s_e=\sech(2J_e)$ and $c_e=\tanh(2J_e)$ are expressed in terms of $J_e$ as in~\eqref{eq:sinh_tanh}. Simplifying the expressions, we get
\begin{equation}\label{eq:sigma_12_tanh}
\sigma_{12}=\frac{\exp(J_e)-\exp(-J_e)}{\exp(J_e)+\exp(-J_e)}.
\end{equation}
On the other hand, by the definition of the Ising model, the partition function is equal to $Z=2(\exp(J_e)+\exp(-J_e))$ and thus the correlation $\<\sigma_1\sigma_2\>$ is
\[\<\sigma_1\sigma_2\>=\frac2Z \left(\exp(J_e)-\exp(-J_e)\right)=\frac{\exp(J_e)-\exp(-J_e)}{\exp(J_e)+\exp(-J_e)},\]
in agreement with Theorem~\ref{thm:bound_measurements}.

Lemma~\ref{lemma:from_minors_to_correlations} and Theorem~\ref{thm:bound_measurements} together recover the following elegant formula of~\cite{Dubedat} for expressing boundary correlations in terms of almost perfect matchings.\footnote{We thank Marcin Lis for bringing the paper~\cite{Dubedat} to our attention. We also thank one of the referees for explaining to us that Corollary~\ref{cor:dimers} is a simple consequence of the results of~\cite{Dubedat}. See Section~\ref{sec:Dubedat} for further discussion.}

\begin{corollary}\label{cor:dimers}
  Let $N=(G,J)$ be a planar Ising network. Then for all $i\neq j\in[n]$, the corresponding boundary correlation function is given by
  \begin{equation}\label{eq:from_dimers_to_correlations}
    \<\sigma_i\sigma_j\>=\frac{\sum_{\match:\partial(\match)\in\OddEven{\{i,j\}}}\wt(\match)}
    {\sum_{\match:\partial(\match)\in\OddEven{\emptyset}}\wt(\match)},
  \end{equation}
  where the sums in the numerator and the denominator are over almost perfect matchings $\match$ of $\Gbip$.
\end{corollary}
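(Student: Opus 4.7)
The plan is to chain together the two previous results, Lemma~\ref{lemma:from_minors_to_correlations} and Theorem~\ref{thm:bound_measurements}, both of which have been stated just above in the excerpt. Since $\<\sigma_i\sigma_j\>=m_{i,j}$ by the definition of the boundary correlation matrix $M$, the corollary really is nothing more than a matter of assembling these two statements and swapping the order of summation.

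First, I would apply Lemma~\ref{lemma:from_minors_to_correlations}, which gives
\[
\<\sigma_i\sigma_j\>=m_{i,j}=\frac{\sum_{I\in\OddEven{\{i,j\}}}\Delta_I(\doublemap(M))}{\sum_{I\in\OddEven{\emptyset}}\Delta_I(\doublemap(M))},
\]
expressing each two-point correlation as a ratio of two sums of maximal minors of $\doublemap(M)\in\OGtnn(n,2n)$. The denominator is nonzero because (as already remarked in the excerpt) with the specific representing matrix $\double M$ the corresponding sum equals $2^n>0$, and since $\doublemap(M)$ is a genuine point of the Grassmannian its Plücker coordinates are not all zero.

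Next, I would plug in Theorem~\ref{thm:bound_measurements}, which says that for every $I\in\binom{[2n]}{n}$,
\[
\Delta_I(\doublemap(M))=c\sum_{\match:\partial(\match)=I}\wt(\match)
\]
for a common scalar $c>0$ that is independent of $I$ (this is the ``common rescaling'' in the statement; the Plücker coordinates of an element of the Grassmannian are only defined projectively). Substituting this into both the numerator and the denominator of the previous display, the scalar $c$ cancels, and the double sum
\[
\sum_{I\in\OddEven{S}}\sum_{\match:\partial(\match)=I}\wt(\match)
\]
simplifies by exchange of summation to
\[
\sum_{\match:\,\partial(\match)\in\OddEven{S}}\wt(\match),
\]
valid for $S=\{i,j\}$ on top and $S=\emptyset$ on bottom. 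This is exactly~\eqref{eq:from_dimers_to_correlations}.

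There is no real obstacle here; the whole content of Corollary~\ref{cor:dimers} sits in the two ingredients being combined, and the only thing one has to be careful about is that the projective ambiguity in Theorem~\ref{thm:bound_measurements} is harmless because the same unknown constant appears in both the numerator and the denominator, so it drops out of the ratio. The only mildly substantive check is nonvanishing of the denominator, which follows from the explicit normalization~\eqref{eq:sum_of_minors_2^n} for $\double M$: since all $\wt(\match)>0$ and the sum~\eqref{eq:sum_of_minors_2^n} equals $2^n$, the denominator in~\eqref{eq:from_dimers_to_correlations} is strictly positive.
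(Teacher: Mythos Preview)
Your proof is correct and follows exactly the approach indicated in the paper, which states just before the corollary that ``Lemma~\ref{lemma:from_minors_to_correlations} and Theorem~\ref{thm:bound_measurements} together recover the following elegant formula''; the paper gives no further details, so your argument simply spells out this two-step combination.
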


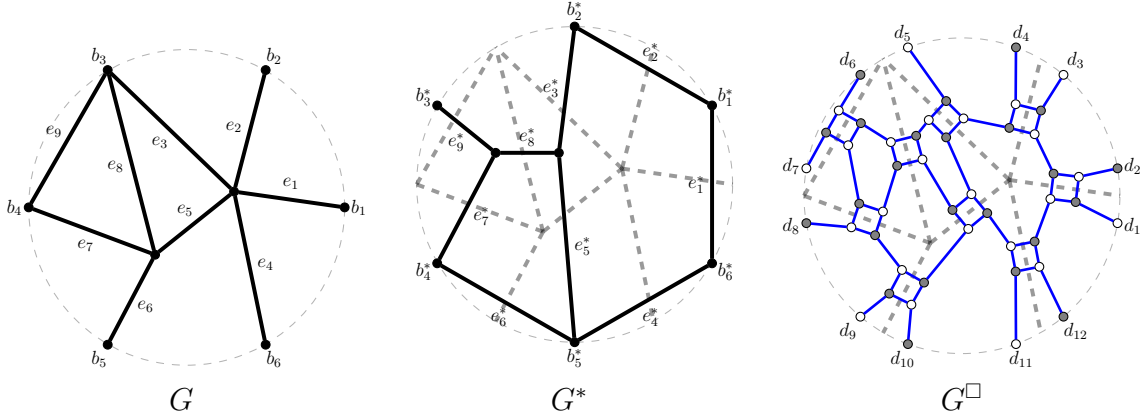
\begin{figure}

\begin{tabular}{ccc}
\scalebox{1.0}{
\begin{tikzpicture}[scale=2.1]
\coordinate (zero) at (0:0.00);
\disk{0,0}
\vertx{V0}{1.00,0.00}
\node[anchor=180, scale=\bscl] (b0) at (0:1) {$b_1$};
\vertx{V1}{0.50,0.87}
\node[anchor=240, scale=\bscl] (b1) at (60:1) {$b_2$};
\vertx{V2}{-0.50,0.87}
\node[anchor=300, scale=\bscl] (b2) at (120:1) {$b_3$};
\vertx{V3}{-1.00,0.00}
\node[anchor=360, scale=\bscl] (b3) at (180:1) {$b_4$};
\vertx{V4}{-0.50,-0.87}
\node[anchor=420, scale=\bscl] (b4) at (240:1) {$b_5$};
\vertx{V5}{0.50,-0.87}
\node[anchor=480, scale=\bscl] (b5) at (300:1) {$b_6$};
\vertx{V6}{0.30,0.10}
\vertx{V7}{-0.20,-0.30}
\edge{V6}{V0}
\node[scale=\scsc] (E0) at (0.66,0.15) {$e_{1}$};
\edge{V6}{V1}
\node[scale=\scsc] (E1) at (0.30,0.51) {$e_{2}$};
\edge{V6}{V2}
\node[scale=\scsc] (E2) at (-0.17,0.41) {$e_{3}$};
\edge{V6}{V5}
\node[scale=\scsc] (E3) at (0.50,-0.36) {$e_{4}$};
\edge{V7}{V6}
\node[scale=\scsc] (E4) at (-0.01,-0.02) {$e_{5}$};
\edge{V7}{V4}
\node[scale=\scsc] (E5) at (-0.26,-0.63) {$e_{6}$};
\edge{V7}{V3}
\node[scale=\scsc] (E6) at (-0.64,-0.24) {$e_{7}$};
\edge{V7}{V2}
\node[scale=\scsc] (E7) at (-0.45,0.26) {$e_{8}$};
\edge{V3}{V2}
\node[scale=\scsc] (E8) at (-0.84,0.48) {$e_{9}$};
\end{tikzpicture}}
&
\scalebox{1.0}{
\begin{tikzpicture}[scale=2.1]
\coordinate (zero) at (0:0.00);
\disk{0,0}
\coordinate (V0) at (1.00,0.00);
\coordinate (V1) at (0.50,0.87);
\coordinate (V2) at (-0.50,0.87);
\coordinate (V3) at (-1.00,0.00);
\coordinate (V4) at (-0.50,-0.87);
\coordinate (V5) at (0.50,-0.87);
\coordinate (V6) at (0.30,0.10);
\coordinate (V7) at (-0.20,-0.30);
\edgeop{V6}{V0}
\edgeop{V6}{V1}
\edgeop{V6}{V2}
\edgeop{V6}{V5}
\edgeop{V7}{V6}
\edgeop{V7}{V4}
\edgeop{V7}{V3}
\edgeop{V7}{V2}
\edgeop{V3}{V2}
\vertx{VD0}{0.87,0.50}
\node[anchor=210, scale=\bscl] (b0) at (30:1) {$b_1^\ast$};
\vertx{VD1}{0.00,1.00}
\node[anchor=270, scale=\bscl] (b1) at (90:1) {$b_2^\ast$};
\vertx{VD2}{-0.87,0.50}
\node[anchor=330, scale=\bscl] (b2) at (150:1) {$b_3^\ast$};
\vertx{VD3}{-0.87,-0.50}
\node[anchor=390, scale=\bscl] (b3) at (210:1) {$b_4^\ast$};
\vertx{VD4}{-0.00,-1.00}
\node[anchor=450, scale=\bscl] (b4) at (270:1) {$b_5^\ast$};
\vertx{VD5}{0.87,-0.50}
\node[anchor=510, scale=\bscl] (b5) at (330:1) {$b_6^\ast$};
\vertx{VD6}{-0.10,0.20}
\vertx{VD7}{-0.50,0.20}
\edge{VD1}{VD0}
\node[scale=\scsc,anchor=center] (ED0) at (0.48,0.84) {$e_2^\ast$};
\edge{VD5}{VD0}
\node[scale=\scsc,anchor=center] (ED1) at (0.77,-0.00) {$e_1^\ast$};
\edge{VD6}{VD1}
\node[scale=\scsc,anchor=center] (ED2) at (-0.15,0.61) {$e_3^\ast$};
\edge{VD7}{VD6}
\node[scale=\scsc,anchor=center] (ED3) at (-0.30,0.30) {$e_8^\ast$};
\edge{VD7}{VD2}
\node[scale=\scsc,anchor=center] (ED4) at (-0.75,0.27) {$e_9^\ast$};
\edge{VD7}{VD3}
\node[scale=\scsc,anchor=center] (ED5) at (-0.59,-0.20) {$e_7^\ast$};
\edge{VD4}{VD3}
\node[scale=\scsc,anchor=center] (ED6) at (-0.48,-0.84) {$e_6^\ast$};
\edge{VD6}{VD4}
\node[scale=\scsc,anchor=center] (ED7) at (0.05,-0.39) {$e_5^\ast$};
\edge{VD5}{VD4}
\node[scale=\scsc,anchor=center] (ED8) at (0.48,-0.84) {$e_4^\ast$};
\end{tikzpicture}}
&
\scalebox{1.0}{

\begin{tikzpicture}[scale=2.1]
\coordinate (zero) at (0:0.00);
\disk{0,0}
\coordinate (V0) at (1.00,0.00);
\coordinate (V1) at (0.50,0.87);
\coordinate (V2) at (-0.50,0.87);
\coordinate (V3) at (-1.00,0.00);
\coordinate (V4) at (-0.50,-0.87);
\coordinate (V5) at (0.50,-0.87);
\coordinate (V6) at (0.30,0.10);
\coordinate (V7) at (-0.20,-0.30);
\edgeop{V6}{V0}
\edgeop{V6}{V1}
\edgeop{V6}{V2}
\edgeop{V6}{V5}
\edgeop{V7}{V6}
\edgeop{V7}{V4}
\edgeop{V7}{V3}
\edgeop{V7}{V2}
\edgeop{V3}{V2}
\vwh{VWH6EE0x0}{0.74,0.12}
\vbl{VBL6EE0x0}{0.72,-0.04}
\vwh{VWH6EE0x6}{0.56,-0.02}
\vbl{VBL6EE0x6}{0.58,0.14}
\edgepl{VWH6EE0x0}{VBL6EE0x0}
\edgepl{VBL6EE0x0}{VWH6EE0x6}
\edgepl{VWH6EE0x6}{VBL6EE0x6}
\edgepl{VBL6EE0x6}{VWH6EE0x0}
\vwh{VWH6EE1x1}{0.34,0.58}
\vbl{VBL6EE1x1}{0.50,0.54}
\vwh{VWH6EE1x6}{0.46,0.39}
\vbl{VBL6EE1x6}{0.30,0.43}
\edgepl{VWH6EE1x1}{VBL6EE1x1}
\edgepl{VBL6EE1x1}{VWH6EE1x6}
\edgepl{VWH6EE1x6}{VBL6EE1x6}
\edgepl{VBL6EE1x6}{VWH6EE1x1}
\vwh{VWH6EE2x2}{-0.21,0.48}
\vbl{VBL6EE2x2}{-0.10,0.60}
\vwh{VWH6EE2x6}{0.01,0.49}
\vbl{VBL6EE2x6}{-0.10,0.37}
\edgepl{VWH6EE2x2}{VBL6EE2x2}
\edgepl{VBL6EE2x2}{VWH6EE2x6}
\edgepl{VWH6EE2x6}{VBL6EE2x6}
\edgepl{VBL6EE2x6}{VWH6EE2x2}
\vwh{VWH6EE5x5}{0.49,-0.45}
\vbl{VBL6EE5x5}{0.34,-0.48}
\vwh{VWH6EE5x6}{0.31,-0.32}
\vbl{VBL6EE5x6}{0.46,-0.29}
\edgepl{VWH6EE5x5}{VBL6EE5x5}
\edgepl{VBL6EE5x5}{VWH6EE5x6}
\edgepl{VWH6EE5x6}{VBL6EE5x6}
\edgepl{VBL6EE5x6}{VWH6EE5x5}
\vwh{VWH7EE6x6}{0.06,0.01}
\vbl{VBL7EE6x6}{0.16,-0.11}
\vwh{VWH7EE6x7}{0.04,-0.21}
\vbl{VBL7EE6x7}{-0.06,-0.09}
\edgepl{VWH7EE6x6}{VBL7EE6x6}
\edgepl{VBL7EE6x6}{VWH7EE6x7}
\edgepl{VWH7EE6x7}{VBL7EE6x7}
\edgepl{VBL7EE6x7}{VWH7EE6x6}
\vwh{VWH7EE4x4}{-0.32,-0.69}
\vbl{VBL7EE4x4}{-0.46,-0.62}
\vwh{VWH7EE4x7}{-0.38,-0.47}
\vbl{VBL7EE4x7}{-0.24,-0.55}
\edgepl{VWH7EE4x4}{VBL7EE4x4}
\edgepl{VBL7EE4x4}{VWH7EE4x7}
\edgepl{VWH7EE4x7}{VBL7EE4x7}
\edgepl{VBL7EE4x7}{VWH7EE4x4}
\vwh{VWH7EE3x3}{-0.70,-0.20}
\vbl{VBL7EE3x3}{-0.65,-0.05}
\vwh{VWH7EE3x7}{-0.50,-0.10}
\vbl{VBL7EE3x7}{-0.55,-0.25}
\edgepl{VWH7EE3x3}{VBL7EE3x3}
\edgepl{VBL7EE3x3}{VWH7EE3x7}
\edgepl{VWH7EE3x7}{VBL7EE3x7}
\edgepl{VBL7EE3x7}{VWH7EE3x3}
\vwh{VWH7EE2x2}{-0.45,0.34}
\vbl{VBL7EE2x2}{-0.29,0.38}
\vwh{VWH7EE2x7}{-0.25,0.23}
\vbl{VBL7EE2x7}{-0.41,0.19}
\edgepl{VWH7EE2x2}{VBL7EE2x2}
\edgepl{VBL7EE2x2}{VWH7EE2x7}
\edgepl{VWH7EE2x7}{VBL7EE2x7}
\edgepl{VBL7EE2x7}{VWH7EE2x2}
\vwh{VWH3EE2x2}{-0.78,0.54}
\vbl{VBL3EE2x2}{-0.64,0.46}
\vwh{VWH3EE2x3}{-0.72,0.32}
\vbl{VBL3EE2x3}{-0.86,0.40}
\edgepl{VWH3EE2x2}{VBL3EE2x2}
\edgepl{VBL3EE2x2}{VWH3EE2x3}
\edgepl{VWH3EE2x3}{VBL3EE2x3}
\edgepl{VBL3EE2x3}{VWH3EE2x2}
\vwh{VWH0EE0x0}{-10:1}
\vbl{VBL0EE0x0}{10:1}
\node[anchor=170,scale=\scsc] (D1) at (-10:1) {$\bbip_{1}$};
\node[anchor=190,scale=\scsc] (D2) at (10:1) {$\bbip_{2}$};
\vwh{VWH1EE1x1}{50:1}
\vbl{VBL1EE1x1}{70:1}
\node[anchor=230,scale=\scsc] (D3) at (50:1) {$\bbip_{3}$};
\node[anchor=250,scale=\scsc] (D4) at (70:1) {$\bbip_{4}$};
\vwh{VWH2EE2x2}{110:1}
\vbl{VBL2EE2x2}{130:1}
\node[anchor=290,scale=\scsc] (D5) at (110:1) {$\bbip_{5}$};
\node[anchor=310,scale=\scsc] (D6) at (130:1) {$\bbip_{6}$};
\vwh{VWH3EE3x3}{170:1}
\vbl{VBL3EE3x3}{190:1}
\node[anchor=350,scale=\scsc] (D7) at (170:1) {$\bbip_{7}$};
\node[anchor=370,scale=\scsc] (D8) at (190:1) {$\bbip_{8}$};
\vwh{VWH4EE4x4}{230:1}
\vbl{VBL4EE4x4}{250:1}
\node[anchor=410,scale=\scsc] (D9) at (230:1) {$\bbip_{9}$};
\node[anchor=430,scale=\scsc] (D10) at (250:1) {$\bbip_{10}$};
\vwh{VWH5EE5x5}{290:1}
\vbl{VBL5EE5x5}{310:1}
\node[anchor=470,scale=\scsc] (D11) at (290:1) {$\bbip_{11}$};
\node[anchor=490,scale=\scsc] (D12) at (310:1) {$\bbip_{12}$};
\edgepl{VWH6EE0x0}{VBL0EE0x0}
\edgepl{VWH0EE0x0}{VBL6EE0x0}
\edgepl{VWH1EE1x1}{VBL6EE1x1}
\edgepl{VWH6EE1x1}{VBL1EE1x1}
\edgepl{VWH7EE2x2}{VBL3EE2x2}
\edgepl{VWH6EE2x2}{VBL7EE2x2}
\edgepl{VWH2EE2x2}{VBL6EE2x2}
\edgepl{VWH3EE2x2}{VBL2EE2x2}
\edgepl{VWH3EE2x3}{VBL7EE3x3}
\edgepl{VWH3EE3x3}{VBL3EE2x3}
\edgepl{VWH7EE3x3}{VBL3EE3x3}
\edgepl{VWH4EE4x4}{VBL7EE4x4}
\edgepl{VWH7EE4x4}{VBL4EE4x4}
\edgepl{VWH5EE5x5}{VBL6EE5x5}
\edgepl{VWH6EE5x5}{VBL5EE5x5}
\edgepl{VWH6EE5x6}{VBL7EE6x6}
\edgepl{VWH6EE0x6}{VBL6EE5x6}
\edgepl{VWH6EE1x6}{VBL6EE0x6}
\edgepl{VWH6EE2x6}{VBL6EE1x6}
\edgepl{VWH7EE6x6}{VBL6EE2x6}
\edgepl{VWH7EE6x7}{VBL7EE4x7}
\edgepl{VWH7EE2x7}{VBL7EE6x7}
\edgepl{VWH7EE3x7}{VBL7EE2x7}
\edgepl{VWH7EE4x7}{VBL7EE3x7}
\end{tikzpicture}
}
\\

$G$
&
$G^\ast$
&
$\Gbip$
\\

\end{tabular}

\caption{\label{fig:Gbip_big} A planar graph $G$ embedded in a disk (left), its dual $\Gdual$ (middle), and the corresponding plabic graph $\Gbip$ (right).}
\end{figure}


\subsection{Generalized Griffiths' inequalities}
As we have already noted, the fact that we have $J_e>0$ for all edges $e$ implies that all two-point correlation functions $\<\sigma_u\sigma_v\>$ are nonnegative~\cite{Griffiths}. Equation~\eqref{eq:from_minors_to_correlations} shows that $\<\sigma_u\sigma_v\>$ is a positive linear combination of the minors of $\double M$, and thus its nonnegativity follows from Theorem~\ref{thm:main}. More generally, for every subset $A\subset [n]$, define
\[\<\sigma_A\>:=\<\prod_{i\in A}\sigma_{b_i}\>=\sum_{\sigma\in\{-1,1\}^V}\Prob(\sigma)\prod_{i\in A}\sigma_{b_i}\]
  to be the expectation of the product of the spins in $A$. The following \emph{generalized Griffiths' inequalities} were proved in~\cite{KS}.
\begin{proposition}[{\cite{KS}}]\label{prop:generalized_Griffiths}
  For every $A\subset [n]$, we have
  \[\<\sigma_A\>\geq0.\]
  For every $A,B\subset[n]$, we have
  \[\<\sigma_A\sigma_B\>-\<\sigma_A\>\<\sigma_B\>\geq0.\]
\end{proposition}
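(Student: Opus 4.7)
The plan is to derive both inequalities from the total non-negativity $\doublemap(M)\in\OGtnn(n,2n)$ provided by Theorem~\ref{thm:main}, by realizing the quantities on the left-hand sides as non-negative combinations of Pl\"ucker coordinates of $\doublemap(M)$; this is precisely the strategy announced for Theorem~\ref{thm:generalized_Griffiths} in the introduction.

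For $\<\sigma_A\>\geq 0$, the odd case is immediate: the Ising measure~\eqref{eq:dfn:Prob} is invariant under the global spin flip $\sigma\mapsto-\sigma$, whereas $\sigma_A:=\prod_{i\in A}\sigma_{b_i}$ changes sign, so $\<\sigma_A\>=0$; consistently, $\OddEven{A}$ is empty for parity reasons. For even $|A|$, I would establish the natural generalization of Lemma~\ref{lemma:from_minors_to_correlations},
\begin{equation*}
\<\sigma_A\>=2^{-n}\sum_{I\in\OddEven{A}}\Delta_I(\double M),
\end{equation*}
which specializes to~\eqref{eq:sum_of_minors_2^n} at $A=\emptyset$ and to Lemma~\ref{lemma:from_minors_to_correlations} at $|A|=2$. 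Two complementary routes are available: (a)~combine the Groeneveld-Boel-Kasteleyn Pfaffian identity $\<\sigma_A\>=\mathrm{Pf}\bigl((m_{i,j})_{i,j\in A}\bigr)$ for planar boundary spin correlations with a column-pair manipulation of $\double M$ showing $\sum_{I\in\OddEven{A}}\Delta_I(\double M)=2^n\,\mathrm{Pf}\bigl((m_{i,j})_{i,j\in A}\bigr)$; or (b)~use Theorem~\ref{thm:bound_measurements} to expand each $\Delta_I(\doublemap(M))$ as a weighted sum of almost perfect matchings of $\Gbip$ and match the result to the dimer representation of $\<\sigma_A\>$ underlying Dub\'edat's bosonization (Corollary~\ref{cor:dimers}). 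Once the identity is established, non-negativity of $\<\sigma_A\>$ is immediate from $\doublemap(M)\in\OGtnn(n,2n)$.

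For the second inequality, I would invoke Ginibre's duplicate-variable trick. Taking an independent copy $\sigma'$ of the configuration and setting $\tau_u:=\sigma_u\sigma'_u\in\{-1,1\}$, the identity $\sigma_C(1-\tau_C)=\sigma_C-\sigma'_C$ (with $\tau_C:=\prod_{i\in C}\tau_{b_i}$) yields, after symmetrization under $\sigma\leftrightarrow\sigma'$,
\begin{equation*}
2\bigl(\<\sigma_A\sigma_B\>-\<\sigma_A\>\<\sigma_B\>\bigr)=\bigl\langle\sigma_{A\triangle B}\,(1-\tau_A)(1-\tau_B)\bigr\rangle_{\Prob\otimes\Prob},
\end{equation*}
where the factor $(1-\tau_A)(1-\tau_B)$ is manifestly non-negative. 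In the coordinates $(\sigma,\tau)$ the doubled measure factorizes across edges with weight $\exp(J_e\sigma_u\sigma_v(1+\tau_u\tau_v))$, so conditionally on $\tau$ the law of $\sigma$ is an Ising model on $G$ with coupling $2J_e\geq 0$ on edges where $\tau_u\tau_v=1$ and no coupling elsewhere. The first inequality applied to this conditional model forces the conditional expectation of $\sigma_{A\triangle B}$ given $\tau$ to be non-negative for every $\tau$, and averaging against the non-negative weight $(1-\tau_A)(1-\tau_B)$ completes the proof.

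The main obstacle is the generalized identity for $\<\sigma_A\>$ in the even case: reconciling the signs in~\eqref{eq:double_signs} with those of the Pfaffian expansion of $(m_{i,j})_{i,j\in A}$ requires a delicate computation, which I expect to streamline via the symmetric/anti-symmetric decomposition of $\double M$ under the involution swapping columns $2i-1\leftrightarrow 2i$. Everything else is structural and follows standard patterns from the planar Ising literature.
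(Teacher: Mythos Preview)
Your proposal is correct. Note first that the paper states Proposition~\ref{prop:generalized_Griffiths} as a known result of~\cite{KS} and does not prove it directly; its ``own'' proof is via Theorem~\ref{thm:generalized_Griffiths}, whose identities~\eqref{eq:easy_Griffiths} and~\eqref{eq:Griffiths} exhibit both quantities as non-negative sums of minors of $\double M$, with the proof in Section~\ref{sec:griffiths} going through the Pfaffian formula $\<\sigma_A\>=\Pf_A(M)$ of~\cite{GBK} (Proposition~\ref{prop:Pf}) and a careful combinatorial sign analysis (Theorem~\ref{thm:pfaffians_Griffiths}).

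For $\<\sigma_A\>\geq 0$ your route~(a) is exactly what the paper does. For the second inequality you take a genuinely different path: the Ginibre duplication argument you sketch is essentially the original Kelly--Sherman proof, and it is valid in far greater generality (non-planar graphs, interior spins). Its cost is that it yields only the inequality, not the structural identity. The paper instead proves the stronger statement~\eqref{eq:Griffiths}, writing $\<\sigma_A\sigma_B\>-\<\sigma_A\>\<\sigma_B\>$ as the explicit sum $2^{-n+1}\sum_{I\in\OddEven{A\oplus B}\cap\Sumparity{B}{\epsilon}}\Delta_I(\double M)$; the argument is a term-by-term comparison of Pfaffian expansions, organized by equivalence classes of matchings on $[2n']$ under ``flipping'' at indices of $A\cap B$ and tracked via an auxiliary graph $\Gamma_\pi$. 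Your approach is shorter and self-contained once~\eqref{eq:easy_Griffiths} is in hand; the paper's approach is longer but realizes the Griffiths gap itself as a manifestly positive coordinate expression on $\OGtnn(n,2n)$, which is its advertised contribution.
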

\newcommand{\symdiff}[2]{#1\oplus #2}
\newcommand{\Sumparity}[2]{\mathcal{D}^{{#2}}(#1)}
\def\doubleA{\widetilde{A}}
Here $\<\sigma_A\sigma_B\>=\<\prod_{i\in \symdiff A B}\sigma_{b_i}\>$, where $\symdiff A B=(A\setminus B)\cup (B\setminus A)$ denotes the symmetric difference of $A$ and $B$. We note that the inequalities of Proposition~\ref{prop:generalized_Griffiths} hold more generally for not necessarily planar graphs $G$ and arbitrary subsets $A$ and $B$ of the vertex set of $G$.

The goal of our next result is to explain how both inequalities in Proposition~\ref{prop:generalized_Griffiths} also arise as positive linear combinations of the minors of the matrix $\double M$, where $M=M(G,J)$ is the boundary correlation matrix.
\begin{definition}\label{dfn:double_sumparity}
For $A\subset [n]$, we define $\doubleA:=\{2i-1\mid i\in A\}\cup \{2i\mid i\in A\}$, and for $\epsilon\in\{0,1\}$, we let $\Sumparity{A}{\epsilon}\subset{[2n]\choose n}$ be the set of all $I\in {[2n]\choose n}$ such that the sum of elements of $I\cap \doubleA$ is equal to $\epsilon$ modulo $2$.
\end{definition}
Recall also the notation $\OddEven{S}$ from Definition~\ref{dfn:OddEven}. We prove the following result in Section~\ref{sec:griffiths}.
\begin{theorem}\label{thm:generalized_Griffiths}
  For every $A\subset [n]$, we have
  \begin{equation}\label{eq:easy_Griffiths}
\<\sigma_A\>=2^{-n}\sum_{I\in \OddEven{A}}\Delta_I(\double M).
  \end{equation}
  For every $A,B\subset[n]$, there exists $\epsilon\in\{0,1\}$, given explicitly in~\eqref{eq:epsilon_Griffiths}, such that
  \begin{equation}\label{eq:Griffiths}
\<\sigma_A\sigma_B\>-\<\sigma_A\>\<\sigma_B\>=2^{-n+1}\sum_{I\in \OddEven{\symdiff{A}{B}}\cap \Sumparity{B}{\epsilon}}\Delta_I(\double M).
  \end{equation}
\end{theorem}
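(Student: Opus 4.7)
The plan is to prove \eqref{eq:easy_Griffiths} first and then deduce \eqref{eq:Griffiths} from it via a quadratic Pfaffian-type identity coming from the orthogonal Pl\"ucker relations on $\OG(n,2n)$.

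For \eqref{eq:easy_Griffiths}, I would first dispose of the odd case: when $|A|$ is odd, the left-hand side vanishes by the $\mathbb{Z}/2$ spin-flip symmetry of the Ising model, while the right-hand side vanishes because $\OddEven{A}$ is empty (the constraint $|I|=n$ with $|I\cap\{2i-1,2i\}|$ even exactly on $i\in A$ forces $|A|$ to be even). For $|A|$ even, I expand $\sum_{I\in\OddEven{A}}\Delta_I(\double M)$ by multilinearity in the columns: for each $i\notin A$, summing the two choices from the pair $\{2i-1,2i\}$ replaces them by the single column $\double M^{(2i-1)}+\double M^{(2i)}=2e_i$, by Remark~\ref{rmk:full_rank}. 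A Laplace expansion along these standard-basis columns peels off rows $[n]\setminus A$ at a cost of $2^{n-|A|}$ and a tracked permutation sign, and what remains is a sum over subsets $A_+\subset A$ with $|A_+|=|A|/2$ of $|A|\times|A|$ minors of $\double M$ whose rows are $A$ and whose column pairs are indexed by $A_+$. After a sign check, this equals, up to an overall sign depending only on $A$, the analogous sum for the smaller matrix $\double{M_A}$ built from the principal submatrix $M_A=(m_{a,a'})_{a,a'\in A}$. Applying the half-sum/half-difference change of basis that replaces each pair $\double M^{(2i-1)},\double M^{(2i)}$ by $e_i$ and by the column $v_i$ with zero on the diagonal and signed off-diagonal entries, the sum over $A_+$ assembles by a further Laplace expansion into precisely the Pfaffian expansion of the antisymmetrized matrix $\widehat M_A$ with $(\widehat M_A)_{p,q}=m_{a_p,a_q}$ for $p<q$. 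The classical Pfaffian formula $\<\sigma_A\>=\mathrm{Pf}(\widehat M_A)$ for boundary correlations of planar Ising networks in a disk (a consequence of Kasteleyn's method applied to the Fisher-type graph $\Gbip$; also deducible from Theorem~\ref{thm:bound_measurements} combined with the high-temperature expansion and a Fisher-type bijection between Ising edge configurations on $G$ and almost perfect matchings of $\Gbip$ with prescribed boundary) then completes the proof, with the overall constant $2^{-n}$ fixed by \eqref{eq:sum_of_minors_2^n}.

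For \eqref{eq:Griffiths}, the identity $\sigma_A\sigma_B=\sigma_{\symdiff{A}{B}}$ (since $\sigma_i^2=1$) combined with \eqref{eq:easy_Griffiths} gives $\<\sigma_A\sigma_B\>=2^{-n}\sum_{I\in\OddEven{\symdiff{A}{B}}}\Delta_I(\double M)$. I partition this sum as $\OddEven{\symdiff{A}{B}}=\bigsqcup_{\delta\in\{0,1\}}\bigl(\OddEven{\symdiff{A}{B}}\cap\Sumparity{B}{\delta}\bigr)$. It then suffices to identify the product $\<\sigma_A\>\<\sigma_B\>=\mathrm{Pf}(\widehat M_A)\mathrm{Pf}(\widehat M_B)$ as $2^{-n}$ times the alternating sum of these two halves, with the sign controlled by the $\epsilon$ of \eqref{eq:epsilon_Griffiths}. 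This is a Pfaffian product identity (a Schur-type expansion for the product of two Pfaffians of principal submatrices, equivalently a quadratic orthogonal Pl\"ucker relation on $\OG(n,2n)$): it writes $\mathrm{Pf}(\widehat M_A)\mathrm{Pf}(\widehat M_B)$ as a signed sum of $\Delta_I$ over $I\in\OddEven{\symdiff{A}{B}}$, where the sign depends only on the parity of $\sum_{i\in I\cap\widetilde{B}}i$ modulo a global shift. The $\epsilon$ of \eqref{eq:epsilon_Griffiths} is precisely this global shift, chosen so that every minor surviving the subtraction $\<\sigma_A\sigma_B\>-\<\sigma_A\>\<\sigma_B\>$ enters with a $+$ sign, giving the manifest positivity of \eqref{eq:Griffiths}.

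The main obstacle is the sign bookkeeping throughout. In the proof of \eqref{eq:easy_Griffiths} one must control the permutation sign from the Laplace expansion, the global sign relating $\double M$ restricted to rows and column pairs in $A$ with $\double{M_A}$, and the $(-2)^{|A|/2}$ Jacobian of the half-sum/half-difference change of basis; in \eqref{eq:Griffiths} one must verify that the Pfaffian product identity, when unfolded as a linear combination of minors, produces a sign that depends only on the parity class $\Sumparity{B}{\delta}$ of $I$, shifted by the prescribed $\epsilon$. The whole point of the theorem is that both right-hand sides are manifestly nonnegative sums of Pl\"ucker coordinates---hence automatically nonnegative by Theorem~\ref{thm:main}---and this positivity only becomes visible once all signs are accounted for precisely.
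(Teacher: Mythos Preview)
Your outline is sound and ultimately arrives at the same reduction as the paper. The paper also uses the Pfaffian formula $\langle\sigma_A\rangle=\operatorname{Pf}_A(M)$ (Proposition~\ref{prop:Pf}) and the minor formula of Proposition~\ref{prop:minors_pf}, and then everything hinges on the polynomial identity
\[
\operatorname{Pf}_{A\oplus B}(M)-\operatorname{Pf}_A(M)\operatorname{Pf}_B(M)=2^{-n+1}\sum_{I\in \mathcal{E}_n(A\oplus B)\cap \mathcal{D}^{\epsilon}(B)}\Delta_I(\widetilde M),
\]
which is exactly Theorem~\ref{thm:pfaffians_Griffiths}. One minor difference is the order: the paper derives \eqref{eq:easy_Griffiths} as the special case of \eqref{eq:Griffiths} with disjoint $A,B$ and $|B|=1$, whereas you prove \eqref{eq:easy_Griffiths} directly (by the natural extension of the argument for Lemma~\ref{lemma:from_minors_to_correlations}) and then use it to rewrite $\langle\sigma_A\sigma_B\rangle$. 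Either order works.

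The substantive point is your handling of the key step. You package the identity above as ``a Schur-type expansion for the product of two Pfaffians'' or ``a quadratic orthogonal Pl\"ucker relation,'' but this is not an off-the-shelf fact one can simply invoke; proving that the sign of each minor in the expansion of $\operatorname{Pf}_A(M)\operatorname{Pf}_B(M)$ depends \emph{only} on the parity class $\mathcal{D}^{\delta}(B)$ of $I$ (and identifying the shift $\epsilon$) is the entire content of the theorem. The paper establishes it by a direct combinatorial comparison of the coefficients of each monomial $m_\pi$ on both sides, treating the cases $A\cap B=\emptyset$ and $A\cap B\neq\emptyset$ separately; in the latter case one groups matchings on $[2n']$ into equivalence classes under flips at the doubled indices and tracks how the parity $\epsilon_{\pi'}$ changes under a flip, using the crossing-parity Lemma~\ref{lemma:xing_parity}. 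Your proposal correctly flags the sign bookkeeping as the main obstacle, but you should be aware that this bookkeeping \emph{is} the proof---there is no shortcut via a named Pfaffian identity that bypasses it.
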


Thus the inequalities of Proposition~\ref{prop:generalized_Griffiths} become manifestly true when expressed in terms of minors of $\double M$, which are nonnegative by Theorem~\ref{thm:main}.

For example, when $n=2$ and $A=B=\{1,2\}$, we have $\epsilon=1$ by~\eqref{eq:epsilon_Griffiths}, and thus~\eqref{eq:Griffiths} becomes
\[1-\sigma_{12}^2=\frac{\Delta_{14}(\double M)+\Delta_{23}(\double M)}2.\]

\begin{figure}

\begin{tabular}{ccc}
\scalebox{1.0}{
\begin{tikzpicture}[scale=2.1]
\coordinate (zero) at (0:0.00);
\disk{0,0}
\vertx{V0}{1.00,0.00}
\node[anchor=180, scale=\bscl] (b0) at (0:1) {$b_1$};
\vertx{V1}{0.50,0.87}
\node[anchor=240, scale=\bscl] (b1) at (60:1) {$b_2$};
\vertx{V2}{-0.50,0.87}
\node[anchor=300, scale=\bscl] (b2) at (120:1) {$b_3$};
\vertx{V3}{-1.00,0.00}
\node[anchor=360, scale=\bscl] (b3) at (180:1) {$b_4$};
\vertx{V4}{-0.50,-0.87}
\node[anchor=420, scale=\bscl] (b4) at (240:1) {$b_5$};
\vertx{V5}{0.50,-0.87}
\node[anchor=480, scale=\bscl] (b5) at (300:1) {$b_6$};
\vertx{V6}{0.30,0.10}
\vertx{V7}{-0.20,-0.30}
\edge{V6}{V0}
\node[scale=\scsc] (E0) at (0.66,0.13) {$e_{1}$};
\edge{V6}{V1}
\node[scale=\scsc] (E1) at (0.31,0.51) {$e_{2}$};
\edge{V6}{V2}
\node[scale=\scsc] (E2) at (-0.19,0.39) {$e_{3}$};
\edge{V6}{V5}
\node[scale=\scsc] (E3) at (0.51,-0.36) {$e_{4}$};
\edge{V7}{V6}
\node[scale=\scsc] (E4) at (0.00,-0.04) {$e_{5}$};
\edge{V7}{V4}
\node[scale=\scsc] (E5) at (-0.29,-0.62) {$e_{6}$};
\edge{V7}{V3}
\node[scale=\scsc] (E6) at (-0.63,-0.24) {$e_{7}$};
\edge{V7}{V2}
\node[scale=\scsc] (E7) at (-0.48,0.25) {$e_{8}$};
\edge{V3}{V2}
\node[scale=\scsc] (E8) at (-0.85,0.49) {$e_{9}$};
\end{tikzpicture}}
&
\scalebox{1.0}{
\begin{tikzpicture}[scale=2.1]
\coordinate (zero) at (0:0.00);
\disk{0,0}
\coordinate (V0) at (1.00,0.00);
\coordinate (V1) at (0.50,0.87);
\coordinate (V2) at (-0.50,0.87);
\coordinate (V3) at (-1.00,0.00);
\coordinate (V4) at (-0.50,-0.87);
\coordinate (V5) at (0.50,-0.87);
\coordinate (V6) at (0.30,0.10);
\coordinate (V7) at (-0.20,-0.30);
\edgeop{V6}{V0}
\edgeop{V6}{V1}
\edgeop{V6}{V2}
\edgeop{V6}{V5}
\edgeop{V7}{V6}
\edgeop{V7}{V4}
\edgeop{V7}{V3}
\edgeop{V7}{V2}
\edgeop{V3}{V2}
\vertx{VMED6EE0}{0.65,0.05}
\vertx{VMED6EE1}{0.40,0.48}
\vertx{VMED6EE2}{-0.10,0.48}
\vertx{VMED6EE5}{0.40,-0.38}
\vertx{VMED7EE6}{0.05,-0.10}
\vertx{VMED7EE4}{-0.35,-0.58}
\vertx{VMED7EE3}{-0.60,-0.15}
\vertx{VMED7EE2}{-0.35,0.28}
\vertx{VMED3EE2}{-0.75,0.43}
\vertx{VMED0EE-1}{-10:1}
\vertx{VMED0EE0}{10:1}
\node[anchor=170,scale=\scsc] (D1) at (-10:1) {$\bbip_{1}$};
\node[anchor=190,scale=\scsc] (D2) at (10:1) {$\bbip_{2}$};
\vertx{VMED1EE-2}{50:1}
\vertx{VMED1EE1}{70:1}
\node[anchor=230,scale=\scsc] (D3) at (50:1) {$\bbip_{3}$};
\node[anchor=250,scale=\scsc] (D4) at (70:1) {$\bbip_{4}$};
\vertx{VMED2EE-3}{110:1}
\vertx{VMED2EE2}{130:1}
\node[anchor=290,scale=\scsc] (D5) at (110:1) {$\bbip_{5}$};
\node[anchor=310,scale=\scsc] (D6) at (130:1) {$\bbip_{6}$};
\vertx{VMED3EE-4}{170:1}
\vertx{VMED3EE3}{190:1}
\node[anchor=350,scale=\scsc] (D7) at (170:1) {$\bbip_{7}$};
\node[anchor=370,scale=\scsc] (D8) at (190:1) {$\bbip_{8}$};
\vertx{VMED4EE-5}{230:1}
\vertx{VMED4EE4}{250:1}
\node[anchor=410,scale=\scsc] (D9) at (230:1) {$\bbip_{9}$};
\node[anchor=430,scale=\scsc] (D10) at (250:1) {$\bbip_{10}$};
\vertx{VMED5EE-6}{290:1}
\vertx{VMED5EE5}{310:1}
\node[anchor=470,scale=\scsc] (D11) at (290:1) {$\bbip_{11}$};
\node[anchor=490,scale=\scsc] (D12) at (310:1) {$\bbip_{12}$};
\edgemed{VMED6EE0}{VMED0EE0}
\edgemed{VMED0EE-1}{VMED6EE0}
\edgemed{VMED1EE-2}{VMED6EE1}
\edgemed{VMED6EE1}{VMED1EE1}
\edgemed{VMED7EE2}{VMED3EE2}
\edgemed{VMED6EE2}{VMED7EE2}
\edgemed{VMED2EE-3}{VMED6EE2}
\edgemed{VMED3EE2}{VMED2EE2}
\edgemed{VMED3EE2}{VMED7EE3}
\edgemed{VMED3EE-4}{VMED3EE2}
\edgemed{VMED7EE3}{VMED3EE3}
\edgemed{VMED4EE-5}{VMED7EE4}
\edgemed{VMED7EE4}{VMED4EE4}
\edgemed{VMED5EE-6}{VMED6EE5}
\edgemed{VMED6EE5}{VMED5EE5}
\edgemed{VMED6EE5}{VMED7EE6}
\edgemed{VMED6EE0}{VMED6EE5}
\edgemed{VMED6EE1}{VMED6EE0}
\edgemed{VMED6EE2}{VMED6EE1}
\edgemed{VMED7EE6}{VMED6EE2}
\edgemed{VMED7EE6}{VMED7EE4}
\edgemed{VMED7EE2}{VMED7EE6}
\edgemed{VMED7EE3}{VMED7EE2}
\edgemed{VMED7EE4}{VMED7EE3}
\end{tikzpicture}}
&
\scalebox{1.0}{
\begin{tikzpicture}[scale=2.1]
\coordinate (zero) at (0:0.00);
\disk{0,0}
\coordinate (V0) at (1.00,0.00);
\coordinate (V1) at (0.50,0.87);
\coordinate (V2) at (-0.50,0.87);
\coordinate (V3) at (-1.00,0.00);
\coordinate (V4) at (-0.50,-0.87);
\coordinate (V5) at (0.50,-0.87);
\coordinate (V6) at (0.30,0.10);
\coordinate (V7) at (-0.20,-0.30);
\edgeop{V6}{V0}
\edgeop{V6}{V1}
\edgeop{V6}{V2}
\edgeop{V6}{V5}
\edgeop{V7}{V6}
\edgeop{V7}{V4}
\edgeop{V7}{V3}
\edgeop{V7}{V2}
\edgeop{V3}{V2}
\coordinate (VMED6EE0) at (0.65,0.05);
\coordinate (VMED6EE1) at (0.40,0.48);
\coordinate (VMED6EE2) at (-0.10,0.48);
\coordinate (VMED6EE5) at (0.40,-0.38);
\coordinate (VMED7EE6) at (0.05,-0.10);
\coordinate (VMED7EE4) at (-0.35,-0.58);
\coordinate (VMED7EE3) at (-0.60,-0.15);
\coordinate (VMED7EE2) at (-0.35,0.28);
\coordinate (VMED3EE2) at (-0.75,0.43);
\coordinate (VMED0EE-1) at (-10:1);
\coordinate (VMED0EE0) at (10:1);
\node[anchor=170,scale=\scsc] (D1) at (-10:1) {$\bbip_{1}$};
\node[anchor=190,scale=\scsc] (D2) at (10:1) {$\bbip_{2}$};
\coordinate (VMED1EE-2) at (50:1);
\coordinate (VMED1EE1) at (70:1);
\node[anchor=230,scale=\scsc] (D3) at (50:1) {$\bbip_{3}$};
\node[anchor=250,scale=\scsc] (D4) at (70:1) {$\bbip_{4}$};
\coordinate (VMED2EE-3) at (110:1);
\coordinate (VMED2EE2) at (130:1);
\node[anchor=290,scale=\scsc] (D5) at (110:1) {$\bbip_{5}$};
\node[anchor=310,scale=\scsc] (D6) at (130:1) {$\bbip_{6}$};
\coordinate (VMED3EE-4) at (170:1);
\coordinate (VMED3EE3) at (190:1);
\node[anchor=350,scale=\scsc] (D7) at (170:1) {$\bbip_{7}$};
\node[anchor=370,scale=\scsc] (D8) at (190:1) {$\bbip_{8}$};
\coordinate (VMED4EE-5) at (230:1);
\coordinate (VMED4EE4) at (250:1);
\node[anchor=410,scale=\scsc] (D9) at (230:1) {$\bbip_{9}$};
\node[anchor=430,scale=\scsc] (D10) at (250:1) {$\bbip_{10}$};
\coordinate (VMED5EE-6) at (290:1);
\coordinate (VMED5EE5) at (310:1);
\node[anchor=470,scale=\scsc] (D11) at (290:1) {$\bbip_{11}$};
\node[anchor=490,scale=\scsc] (D12) at (310:1) {$\bbip_{12}$};
\draw[color=red,line width=1pt] (VMED0EE0) to[in=20,out=180] (VMED6EE0);
\draw[color=red,line width=1pt] (VMED6EE0) to[in=59,out=200] (VMED6EE5);
\draw[color=red,line width=1pt] (VMED6EE5) to[in=480,out=239] (VMED5EE-6);
\draw[color=blue,line width=1pt] (VMED0EE-1) to[in=326,out=180] (VMED6EE0);
\draw[color=blue,line width=1pt] (VMED6EE0) to[in=300,out=506] (VMED6EE1);
\draw[color=blue,line width=1pt] (VMED6EE1) to[in=240,out=480] (VMED1EE1);
\draw[color=green!50!black,line width=1pt] (VMED1EE-2) to[in=49,out=240] (VMED6EE1);
\draw[color=green!50!black,line width=1pt] (VMED6EE1) to[in=360,out=229] (VMED6EE2);
\draw[color=green!50!black,line width=1pt] (VMED6EE2) to[in=38,out=540] (VMED7EE2);
\draw[color=green!50!black,line width=1pt] (VMED7EE2) to[in=59,out=218] (VMED7EE3);
\draw[color=green!50!black,line width=1pt] (VMED7EE3) to[in=360,out=239] (VMED3EE3);
\draw[color=brown,line width=1pt] (VMED2EE2) to[in=90,out=300] (VMED3EE2);
\draw[color=brown,line width=1pt] (VMED3EE2) to[in=104,out=270] (VMED7EE3);
\draw[color=brown,line width=1pt] (VMED7EE3) to[in=119,out=284] (VMED7EE4);
\draw[color=brown,line width=1pt] (VMED7EE4) to[in=420,out=299] (VMED4EE4);
\draw[color=black,line width=1pt] (VMED2EE-3) to[in=117,out=300] (VMED6EE2);
\draw[color=black,line width=1pt] (VMED6EE2) to[in=104,out=297] (VMED7EE6);
\draw[color=black,line width=1pt] (VMED7EE6) to[in=50,out=284] (VMED7EE4);
\draw[color=black,line width=1pt] (VMED7EE4) to[in=420,out=230] (VMED4EE-5);
\draw[color=violet,line width=1pt] (VMED3EE-4) to[in=180,out=360] (VMED3EE2);
\draw[color=violet,line width=1pt] (VMED3EE2) to[in=159,out=360] (VMED7EE2);
\draw[color=violet,line width=1pt] (VMED7EE2) to[in=136,out=339] (VMED7EE6);
\draw[color=violet,line width=1pt] (VMED7EE6) to[in=141,out=316] (VMED6EE5);
\draw[color=violet,line width=1pt] (VMED6EE5) to[in=480,out=321] (VMED5EE5);
\end{tikzpicture}}
\\

$G$
&
$\Gmed$
&
$\medpa_G$
\\

\end{tabular}

  \caption{\label{fig:Gmed} A planar graph $G$ embedded in a disk (left), the corresponding medial graph $\Gmed$ (middle) and its medial strands (right). }
\end{figure}
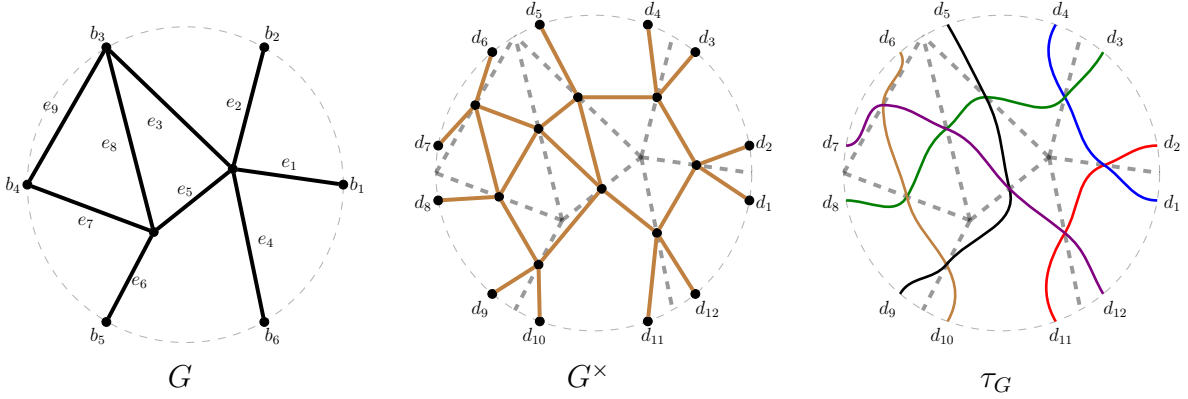

\subsection{Inverse problem}\label{sec:inverse_problem_intro}
In this section, we concentrate on answering the following question.
\begin{question}\label{question:inverse}
Given a planar Ising network $N=(G,J)$, is it possible to reconstruct $J$ from the matrix $M(G,J)$? In other words, is it true that the function $J:E\to \R_{>0}$ is uniquely determined by $G$ and $M(G,J)$?
\end{question} 
Of course, the answer to this question is negative if, for example, $G$ has more than $n\choose 2$ edges. In order to fix this, we introduce \emph{medial graphs}. Namely, given a planar graph $G$ embedded in a disk, the \emph{medial graph} $\Gmed$ associated with $G$ is a planar graph obtained from $G$ as in Figure~\ref{fig:Gbip_small} (right) and Figure~\ref{fig:Gmed} (middle). It has $2n$ boundary vertices $d_1,\dots,d_{2n}$, each of degree $1$, and $|E|$ interior vertices, each of degree $4$. See Section~\ref{sec:ising_to_OG} for a precise description.

Since each interior vertex of $\Gmed$ has degree $4$, we define a \emph{medial strand} in $\Gmed$ to be a path that starts at a boundary vertex $d_i$ of $\Gmed$, follows the only edge of $\Gmed$ incident to it, and then goes ``straight''  at each interior vertex of degree $4$, until it reaches another boundary vertex $d_j$ of $\Gmed$. (More precisely, a medial strand is determined by the condition that whenever two of its edges share a vertex, they do not share a face.) Clearly there are $n$ medial strands in $\Gmed$, and each of them connects $d_i$ to $d_j$ for some $i,j\in [2n]$, giving rise to a matching $\medpa_G$ on $[2n]$ called the  \emph{medial pairing associated with $G$}:
\[\medpa_G:=\left\{\{i,j\}\subset [2n]\mid \text{a medial strand in $\Gmed$ starts at $d_i$ and ends at $d_j$} \right\}.\]
Thus $\medpa_G$ is a partition of $[2n]$ into $n$ sets, each of size $2$. For example, for the graph $G$ in Figure~\ref{fig:Gmed} (left), there is a medial strand that starts at vertex $d_3$, then follows the midpoints of edges $e_2$, $e_3$, $e_8$, $e_7$, and then terminates at vertex $d_8$. Thus the medial pairing $\medpa_G$ of $G$ contains a pair $\{3,8\}$. Following the other five medial strands, we find
\begin{equation}\label{eq:medial_example}
\medpa_G=\left\{ \{1,4\},\{2,11\},\{3,8\},\{5,9\},\{6,10\},\{7,12\}  \right\},
\end{equation}
see Figure~\ref{fig:Gmed} (right).

\begin{definition}\label{dfn:xing}
For $i<j,i'<j'\in[2n]$, we say that pairs $\{i,j\}$ and $\{i',j'\}$ \emph{form a crossing} if either $i<i'<j<j'$ or $i'<i<j'<j$. For a matching $\medpa$ of $[2n]$, we let $\xing(\medpa)$ denote the number of pairs in $\medpa$ that form a crossing.
\end{definition}
For example, if $\medpa_G$ is given in~\eqref{eq:medial_example} then $\xing(\medpa_G)=9$. We are now ready to state an important definition, introduced in~\cite{CIM}.

\begin{definition}
We say that $G$ is \emph{reduced} if its number $|E|$ of edges equals $\xing(\medpa_G)$.
\end{definition}
For example, the graph $G$ in Figure~\ref{fig:Gmed} (left) is reduced since it has $|E|=9=\xing(\medpa_G)$ edges.

We will see later in Proposition~\ref{prop:Closure_cells} that if we fix $G$ and let $J$ vary, the space of matrices $M(G,J)$ obtained in such a way is an open ball of dimension $\xing(\medpa_G)$. Since $J$ varies over $\R_{>0}^E$, we see that if $G$ is not reduced (in which case clearly $|E|>\xing(\medpa_G)$) then the answer to Question~\ref{question:inverse} is negative. On the other hand, if $G$ is reduced then the answer turns out to be always positive, as we show in Section~\ref{sec:ball}.

\begin{theorem}\label{thm:reduced_injective}
  Let $N=(G,J)$ be a planar Ising network such that $G$ is reduced. Then the map $J\mapsto M(G,J)$ is injective, i.e., for each matrix $M\in\Matsym$, there is at most one function $J:E\to \R_{>0}$ such that $M=M(G,J)$.
\end{theorem}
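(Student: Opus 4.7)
The plan is to lift the injectivity question through the embedding $\doublemap$ and then apply the Postnikov--Talaska parametrization of positroid cells by edge weights of reduced plabic graphs.

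By Theorem~\ref{thm:main}, the map $\doublemap$ is injective on $\Closure_n$, so it is equivalent to show that the composition $\Phi\colon J\mapsto \doublemap(M(G,J))$ is injective on $\R_{>0}^E$. By Theorem~\ref{thm:bound_measurements}, the Pl\"ucker coordinates of $\Phi(J)$ are (up to common rescaling) weighted sums of almost perfect matchings of the plabic graph $\Gbip$, with edge weights built from $s_e=\sech(2J_e)$ and $c_e=\tanh(2J_e)$. Since $J_e\mapsto(s_e,c_e)$ is a diffeomorphism of $\R_{>0}$ onto the open quarter-circle $\{(s,c):s,c>0,\ s^2+c^2=1\}$, it suffices to recover each pair $(s_e,c_e)$ from $\Phi(J)$.

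By the stratification of Proposition~\ref{prop:Closure_cells}, the image $\Phi(\R_{>0}^E)$ lies in the positroid cell of $\OGtnn(n,2n)$ indexed by the medial pairing $\medpa_G$, and this cell has dimension $\xing(\medpa_G)=|E|$ precisely because $G$ is reduced, matching $\dim\R_{>0}^E=|E|$. The key combinatorial input is that $\Gbip$ is itself a reduced plabic graph in Postnikov's sense whenever $G$ is reduced; one verifies this by checking that the strands of $\Gbip$ trace out the same permutation as the medial strands of $\Gmed$, with no closed strand cycles or other obstructions to reducedness. Granted this, Postnikov--Talaska identifies the positroid cell with the quotient of the edge-weight space of $\Gbip$ by gauge transformations at interior vertices.

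The last step is to extract $(s_e,c_e)$ from the gauge-equivalence class of $\Gbip$-weights. For each $e\in E$, the associated ``Ising square'' of $\Gbip$ has four edges with weights $s_e,c_e,s_e,c_e$ alternating around the square, and a direct computation shows that the face weight of the central face of that square equals the gauge-invariant quantity $s_e^2/c_e^2$. Together with the constraint $s_e^2+c_e^2=1$, this pins down $(s_e,c_e)$, and hence $J_e$, uniquely. The main obstacle is making the last paragraph effective: one must verify that $\Gbip$ is reduced whenever $G$ is, and express the face weight of each Ising square as an explicit rational function of the Pl\"ucker coordinates of $\Phi(J)$ (equivalently, of sums of weights of almost perfect matchings with specified boundary). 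Both tasks reduce to standard face-weight formulas for reduced plabic graphs together with combinatorial bookkeeping about medial strands, producing at the end an explicit reconstruction formula for $J$ from $M$ as announced in the introduction.
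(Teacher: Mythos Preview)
Your approach is correct and essentially identical to the paper's. The paper packages the argument by first proving Theorem~\ref{thm:Gmed_parametrization}\eqref{item:Gmed_param:homeo} (that $\Jmed\mapsto\Meast(\Gbip,\wt)$ is a homeomorphism onto the orthogonal positroid cell), and then Theorem~\ref{thm:reduced_injective} is a one-line consequence of this together with Theorem~\ref{thm:main}; you are re-deriving the injectivity half of Theorem~\ref{thm:Gmed_parametrization}\eqref{item:Gmed_param:homeo} inline. The only genuine variation is in the gauge step: the paper shows directly that a gauge transformation on the four vertices of an Ising square which again yields weights of the form $(s',c',s',c')$ with $(s')^2+(c')^2=1$ must be trivial, whereas you instead extract the gauge-invariant face weight $s_e^2/c_e^2$ of that square and solve together with $s_e^2+c_e^2=1$. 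Both arguments are valid and equally short; your face-weight phrasing has the mild advantage of immediately suggesting the explicit reconstruction formula, while the paper obtains an explicit reconstruction by the bridge-removal recursion of Theorems~\ref{thm:inverse} and~\ref{thm:duality}. The two facts you flag as ``obstacles'' are in fact already available in the paper: that $\Gbip$ is reduced when $G$ is reduced is Lemma~\ref{lemma:reduced_reduced}, and that face weights of a reduced plabic graph are determined by $\Meast(\Gbip,\wt)$ is the content of Theorem~\ref{thm:Meast}.
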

Theorem~\ref{thm:main} combined with our results in Section~\ref{sec:tnn_OG_2}  gives a simple explicit way to test whether for a given reduced graph $G$ and a matrix $M\in\Matsym$ there exists a function $J:E\to \R_{>0}$ such that $M=M(G,J)$.

In order to describe a recursive way of reconstructing $J$ from $M(G,J)$, we introduce operations of \emph{adjoining a boundary spike} and \emph{adjoining a boundary edge} to $G$. An identical construction in the case of electrical networks has been considered in~\cite{CIM}.
\begin{definition}
Let $N'=(G',J')$ be a planar Ising network, where $G'=(V',E')$ has $n$ boundary vertices. Given $k\in[n]$, we say that another planar Ising network $N=(G,J)$ with $n$ boundary vertices is obtained from $N'$ by \emph{adjoining a boundary spike at $k$} if the vertex $b_k$ in $G$ is incident to a single edge $e$ and contracting this edge in $G$ yields $G'$. Similarly, we say that $N$ is obtained from $N'$ by \emph{adjoining a boundary edge} between $k$ and $k+1$ if $G$ contains an edge $e$ connecting $b_k$ and $b_{k+1}$, and removing this edge from $G$ yields $G'$. In both cases, we additionally require that the restriction of $J:E\to \R_{>0}$ to $E'=E\setminus\{e\}$ coincides with $J':E'\to\R_{>0}$.
\end{definition}
When adjoining boundary edges, we allow for $k=n$, in which case we set $k+1:=1$. We denote $t:=J_e$, and our first goal will be to reconstruct $t$ from the matrix $M(G,J)$.

\def\minimum{\operatorname{min}}
\def\maximum{\operatorname{max}}
\def\Imin{I^{\minimum}}
\def\Imax{I^{\maximum}}
\begin{definition}\label{dfn:prec_i}
  Let $i\in [2n]$. Consider a total order $\prec_{i}$ on $[2n]$ given by
  \[i\prec_i i+1\prec_i\dots\prec_i 2n\prec_i 1\prec_i\dots\prec_i i-1,\]
  where the indices are taken modulo $2n$. For a planar Ising network $N=(G,J)$ and $i\in[2n]$, define subsets $\Imin_i(G),\Imax_i(G)\in {[2n]\choose n}$ whose disjoint union is $[2n]$ as follows. For each unordered pair $\{a,b\}$ of $\medpa_G$, we may assume that $a\prec_i b$, and then we let $a\in \Imin_i(G)$ and $b\in \Imax_i(G)$. In particular, we always have $i\in \Imin_i(G)$ and $i-1\in\Imax_i(G)$. 
\end{definition}

For example, recall that if $G$ is the graph from Figure~\ref{fig:Gmed} (left) then $\medpa_G$ is given by~\eqref{eq:medial_example}. For $i=7$ and $i=12$, we have
\begin{equation}\label{eq:Imax_Imin_example}
  \begin{split}
    \Imin_7(G)&=\{7,8,9,10,11,1\},\quad \Imax_7(G)=\{6,5,4,3,2,12\},\\
    \Imin_{12}(G)&=\{12,1,2,3,5,6\},\quad \Imax_{12}(G)=\{11,10,9,8,7,5\}.
  \end{split}
\end{equation}

\def\kk{{\tilde{k}}}
We prove our next result in Section~\ref{sec:ball}.
\begin{theorem}\label{thm:inverse}\leavevmode
  Let $G$ be a reduced planar graph embedded in a disk.
  \begin{itemize}
  \item Suppose that $N=(G,J)$ is obtained from $N'=(G',J')$ by adjoining a boundary spike $e$ at $k\in[n]$. Let $M:=M(G,J)$, $\kk:=2k-1$, and $t:=J_e$. Then for $I:=\Imin_{\kk+1}(G)$, we have
    \[s_e=\sech(2t)=\frac{\Delta_{I}(\doublemap(M))}{\Delta_{I\cup\{\kk\}\setminus\{\kk+1\}}(\doublemap(M))}.\]
  \item Suppose that $N=(G,J)$ is obtained from $N'=(G',J')$ by adjoining a boundary edge $e$ between $k\in[n]$ and $k+1$. Let $M:=M(G,J)$, $\kk:=2k$, and $t:=J_e$. Then for $I:=\Imax_{\kk+1}(G)$, we have
    \[c_e=\tanh(2t)=\frac{\Delta_{I}(\doublemap(M))}{\Delta_{I\cup\{\kk+1\}\setminus\{\kk\}}(\doublemap(M))}.\]
  \end{itemize}
\end{theorem}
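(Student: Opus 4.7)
The plan is to apply Theorem~\ref{thm:bound_measurements} to express each Pl\"ucker coordinate in the formula as a weighted sum of almost perfect matchings of $\Gbip$, and to compare the sums for $I$ and $I' := I \cup \{\tilde k\} \setminus \{\tilde k+1\}$ by a local analysis near the four-cycle contributed by $e$. The edge $e$ contributes a square to $\Gbip$ whose edges carry weights $s_e, c_e, s_e, c_e$ as in Figure~\ref{fig:Gbip_small}; two of its vertices are joined by weight-one edges to the boundary vertices $d_{\tilde k}$ and $d_{\tilde k+1}$, while the other two attach to the remainder of $\Gbip$ by weight-one ``interior'' edges. Switching from $I$ to $I'$ toggles exactly whether $d_{\tilde k}$ and $d_{\tilde k+1}$ are both covered or both uncovered by the matching, and one checks that in the spike case the interior-facing edge of the square has weight $s_e$, while in the boundary-edge case it has weight $c_e$.

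Enumerating the valid matching configurations on the square subject to each of the two boundary conditions (using bipartiteness to rule out invalid patterns and combining complementary configurations via $s_e^2 + c_e^2 = 1$), one obtains
\[
\Delta_I(\doublemap(M)) = w\, X + Y, \qquad \Delta_{I'}(\doublemap(M)) = X + w\, Y,
\]
where $w = s_e$ in the spike case and $w = c_e$ in the boundary-edge case. Here $X$ is the sum of weights of almost perfect matchings of $\Gbip$ with the $e$-square removed in which \emph{neither} of the two interior attachment edges is present, while $Y$ is the analogous sum in which \emph{both} are present. If one can show that $Y = 0$, the ratio collapses to $\Delta_I/\Delta_{I'} = wX/X = w$, producing $s_e$ or $c_e$ as required.

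The crux of the proof is therefore the vanishing $Y = 0$ under the assumptions that $G$ is reduced and $I$ is the extremal index set $\Imin_{\tilde k + 1}(G)$ (resp.\ $\Imax_{\tilde k + 1}(G)$). I would establish this by identifying $I$ as the source set (resp.\ sink set) of a canonical perfect orientation of $\Gbip$ compatible with the medial pairing $\medpa_G$; such an orientation exists because $G$ is reduced, using the positroid-cell machinery developed in Section~\ref{sec:tnn_OG_2} and in~\cite{Pos}. Under this orientation, both interior attachment edges of the $e$-square inherit the \emph{same} inward orientation (toward the square), so any almost perfect matching that uses both of them would create an oriented cycle passing through one of the adjacent squares, contradicting acyclicity of the orientation. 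The hard part is pinning down the local orientation precisely and tracking the interaction with the extremal boundary condition; everything else is a routine local matching enumeration combined with the Pythagorean identity $s_e^2 + c_e^2 = 1$.
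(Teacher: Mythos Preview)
The paper's proof is a single line: it invokes Theorem~\ref{thm:bridge_removal} (the known bridge--removal formula for reduced plabic graphs) together with the identification $\doublemap(M)=\Meast(\Gbip,\wt)$ coming from Theorem~\ref{thm:X=X'} (equivalently Theorem~\ref{thm:main}). Adjoining a boundary spike (resp.\ boundary edge) at $k$ creates a removable bridge in $\Gbip$ between $d_{\tilde k}$ and $d_{\tilde k+1}$, whose middle edge has weight exactly $s_e$ (resp.\ $c_e$) and whose two boundary edges have weight $1$; Remark~\ref{rmk:Imin_Imax} identifies $\Imin_{\tilde k+1}(G)$ (resp.\ $\Imax_{\tilde k+1}(G)$) with the corresponding Grassmann--necklace element of $\Mcal_{\Gbip}$, and Theorem~\ref{thm:bridge_removal} outputs the ratio directly.

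Your route is different: you unpack $\Delta_I$ and $\Delta_{I'}$ as matching sums via Theorem~\ref{thm:bound_measurements} and analyze the $e$-square locally. Your decomposition $\Delta_I = wX + Y$, $\Delta_{I'} = X + wY$ (with $w=s_e$ or $c_e$) is correct, and the Pythagorean simplification $s_e^2+c_e^2=1$ is exactly what collapses the three configurations on the ``uncovered'' side to a single $X$. So everything hinges on $Y=0$.

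Here is the gap. Your argument for $Y=0$ says that in a canonical acyclic perfect orientation with source set $I$, both interior attachments point toward the square, and that a matching using both attachments would ``create an oriented cycle \ldots\ contradicting acyclicity''. But almost perfect matchings are in bijection with \emph{all} perfect orientations with that source set, not just acyclic ones; a matching whose associated orientation has a cycle is perfectly legitimate, so there is no contradiction. What you actually need is that a \emph{reduced} plabic graph has a \emph{unique} almost perfect matching with boundary equal to a Grassmann--necklace element $\Imin_{\tilde k+1}(\Mcal)$ (equivalently, that $I\notin \Mcal_{\Gbip'}$ once the bridge is removed, so $\Delta_I(X')=0$, which combined with Lemma~\ref{lemma:bridge_removal} yields $\Delta_I(X)=w\,\Delta_{I'}(X)$). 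That fact is true, but proving it is precisely the content of Theorem~\ref{thm:bridge_removal} as established in \cite[Proposition~7.10]{LamCDM}. In other words, once you fill the gap correctly, your argument becomes a re-derivation of Theorem~\ref{thm:bridge_removal} in this special case rather than an alternative to it; the paper simply cites the result and moves on.
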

For example, the graph $G$ in Figure~\ref{fig:Gmed} (left) can be obtained from another reduced graph by adjoining a boundary spike $e_4$ at $k=6$, so we have $\kk=11$ in the first part of Theorem~\ref{thm:inverse}. Since $I=\Imin_{12}(G)=\{1,2,3,5,6,12\}$ by~\eqref{eq:Imax_Imin_example}, we have
\[s_{e_4}=\frac{\Delta_{\{1,2,3,5,6,12\}}(\doublemap(M))}{\Delta_{\{1,2,3,5,6,11\}}(\doublemap(M))}.\]
Similarly, $G$ can be obtained from another reduced graph by adjoining a boundary edge $e_9$ between $k=3$ and $k+1=4$, so we have $\kk=6$ in the second part of Theorem~\ref{thm:inverse}. Since $I=\Imax_{7}(G)=\{2,3,4,5,6,12\}$ by~\eqref{eq:Imax_Imin_example}, we have
\[c_{e_9}=\frac{\Delta_{\{2,3,4,5,6,12\}}(\doublemap(M))}{\Delta_{\{2,3,4,5,7,12\}}(\doublemap(M))}.\]

Since both functions $\sech,\tanh:(0,\infty)\to(0,1)$ are strictly monotone, it follows from Theorem~\ref{thm:inverse} that we can reconstruct $t=J_e$ uniquely from $M(G,J)$ whenever $e$ is either a boundary spike or a boundary edge of $G$. This constitutes the first step of our reconstruction algorithm, in view of the following result.

\begin{proposition}\label{prop:removable_spikes_edges}
Suppose that $G$ is a connected reduced planar graph embedded in a disk, having at least one edge. Then $G$ is obtained from another reduced graph $G'$ by adjoining either a boundary spike or a boundary edge.
\end{proposition}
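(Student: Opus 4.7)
The plan is to exhibit an edge $e$ of $G$, either a boundary spike or a boundary edge, whose removal (contraction or deletion, respectively) yields a reduced graph $G'$ with one fewer edge. The key tool is the medial graph $\Gmed$, whose interior vertices correspond bijectively to edges of $G$, and whose medial strands determine the medial pairing $\medpa_G$. By reducedness and the assumption $|E|\geq 1$, we have $\xing(\medpa_G)=|E|\geq 1$ and any two medial strands cross at most once.

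First, I would prove a combinatorial lemma: there exist cyclically adjacent boundary vertices $d_\kk,d_{\kk+1}$ of $\Gmed$ whose medial strands share a common first interior vertex $v$. Viewing the medial strands combinatorially as a chord diagram on $[2n]$ in which each pair of chords crosses at most once, I would look for an innermost ``boundary triangle'': a crossing $v$ of two chords together with a boundary arc on one side, such that the resulting wedge $W$ contains no other crossings in its interior and no other chord endpoints on its arc. For existence, minimize the weight $c(W)+e(W)$ over all such wedges, where $c(W)$ counts crossings strictly inside $W$ and $e(W)$ counts extra chord endpoints on the arc of $W$. If a minimizer $W_0$ has $c(W_0)>0$, then passing to a sub-wedge at an interior crossing strictly decreases the weight, contradicting minimality. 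If $c(W_0)=0$ but $e(W_0)>0$, then every extra endpoint $m$ on the arc is forced to be paired to another endpoint $m'$ also on the arc (otherwise the chord $\{m,m'\}$ would cross a bounding strand arc of $W_0$ and create an interior crossing); iterating on a maximally nested such pair and invoking reducedness produces a strictly smaller wedge at a nearby crossing, again contradicting minimality. Hence the minimizer satisfies $c(W_0)=e(W_0)=0$: its arc connects adjacent $d_\kk,d_{\kk+1}$, and $v$ is the first interior vertex encountered by both emanating strands.

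Second, the interior vertex $v$ corresponds to a specific edge $e$ of $G$, and its type is determined by the parity of $\kk$. If $\kk=2k-1$ is odd, then $d_\kk$ and $d_{\kk+1}$ flank the boundary vertex $b_k$; the fact that no other interior vertex lies between them and $v$ forces $e$ to be the unique edge incident to $b_k$, so $e$ is a boundary spike at $b_k$ when its other endpoint is interior. If $\kk=2k$ is even, then $d_\kk$ and $d_{\kk+1}$ lie on the boundary arc between $b_k$ and $b_{k+1}$, and the same reasoning forces $e$ to directly connect $b_k$ and $b_{k+1}$, so $e$ is a boundary edge; the mixed case in which $e$ connects two adjacent degree-one boundary vertices is also handled in the boundary-edge branch. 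Set $G':=G/e$ in the spike case and $G':=G-e$ in the boundary edge case; by construction $G$ is obtained from $G'$ by adjoining a boundary spike or a boundary edge.

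Third, to verify that $G'$ is reduced, observe that passing from $G$ to $G'$ removes the interior vertex $v$ of $\Gmed$ and locally uncrosses the two strands meeting there. Since any two strands of $\Gmed$ cross at most once by reducedness of $G$, the unique crossing between these two strands is precisely $v$; all other crossings (of these strands with other strands, or among other pairs) are unaffected. Therefore $\xing(\medpa_{G'})=\xing(\medpa_G)-1=|E|-1=|E(G')|$, so $G'$ is reduced. The main obstacle in this plan is the descent argument in Step 1, particularly formalizing the case $c(W_0)=0,\,e(W_0)>0$ so that a nested uncrossed chord always leads to a strictly smaller wedge; a secondary technical point is distinguishing the boundary-spike and boundary-edge cases when $e$ happens to connect two adjacent boundary vertices of degree one.
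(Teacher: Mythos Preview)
Your approach matches the paper's: your Step 1 is exactly what the paper states as Lemma~5.8 (left there as an exercise), and the paper then deduces the proposition from it in one line. Your minimal-wedge descent is a natural proof of that lemma, and Step 3 is correct.

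Two comments. First, your worried-about case $c(W_0)=0$, $e(W_0)>0$ is in fact impossible and needs no further descent: since $G$ is connected with at least one edge, no boundary vertex of $G$ is isolated, so the medial strand from any extra endpoint $m$ on the arc must pass through at least one interior vertex of $G^\times$; as you already argued, that strand cannot leave $W_0$ without creating an interior crossing, so this crossing itself lies inside $W_0$, contradicting $c(W_0)=0$. Second, in Step 2 you missed a case: when $\tilde{k}$ is odd, the other endpoint of $e$ can be a \emph{non-adjacent} boundary vertex $b_l$. For instance, take $n=4$ with edges $b_1b_3,\, b_2b_3,\, b_3b_4$; this is reduced and connected, and the minimal wedge at $d_1,d_2$ gives $e=b_1b_3$, which is neither a boundary edge nor covered by your ``mixed case''. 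Under the paper's definition, ``adjoining a boundary spike at $k$'' only requires $\deg_G(b_k)=1$ and $G'=G/e$, so this still counts and there is nothing to fix; if you instead insist that $G'$ have $n$ distinct boundary points, you would need the extra observation that in this situation some other minimal wedge (in the example, at $d_4,d_5$) always yields a genuine boundary edge.
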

We note that the graph $G'$ above need not be connected. Also, if $G$ itself is not connected then it is clearly enough to solve the inverse problem for each connected component of $G$ separately, and thus we may assume that $G$ is connected. See Lemma~\ref{lemma:exists_crossing} for a generalization of Proposition~\ref{prop:removable_spikes_edges}.

Proposition~\ref{prop:removable_spikes_edges} says that given a reduced graph $G$ and a matrix $M(G,J)$, we can reconstruct $J_e$ for at least one edge $e$ of $G$. A natural thing to do now would be to contract $e$ if it is a boundary spike and remove $e$ if it is a boundary edge, obtaining the reduced graph $G'$. Our next goal is to explain the relationship between the matrices $M(G,J)$ and $M(G',J')$ in the case when $N=(G,J)$ is obtained from $N'=(G',J')$ by adjoining either a boundary spike or a boundary edge.

We note that these two operations look like they have a very different effect on the boundary correlation matrix. For example, adjoining a boundary spike at $k$ only changes the correlation $\<\sigma_i\sigma_j\>$ when either $i$ or $j$ is equal to $k$, but adjoining a boundary edge between $k$ and $k+1$ in general changes all entries of the boundary correlation matrix. Surprisingly, these two operations have exactly the same form when written in terms of the matrix $\double M$, as we now explain. (In fact, it is clear that applying the duality from Section~\ref{sec:cyclic_intro} switches the roles of boundary spikes and boundary edges.)

Suppose that $N=(G,J)$ is obtained from $N'=(G',J')$ by adjoining a boundary spike $e$ at $k\in[n]$ (resp., a boundary edge $e$ between $k$ and $k+1$). Define $\kk:=2k-1$ (resp., $\kk:=2k$), $t:=J_e$,  $s_e:=\sech(2t)$, $c_e:=\tanh(2t)$, as in Theorem~\ref{thm:inverse}, and let $g=g_{\kk}(t)$ be a $2n\times 2n$ matrix which coincides with the identity matrix except that it contains a $2\times 2$ block $R_\kk$  in rows and columns indexed by $\kk$ and $\kk+1$. When $\kk$ is odd, we set $R_\kk:=\begin{pmatrix}
  1/c_e & s_e/c_e\\
  s_e/c_e & 1/c_e
\end{pmatrix}$ and when $\kk$ is even, we set $R_\kk:=\begin{pmatrix}
  1/s_e & c_e/s_e\\
  c_e/s_e & 1/s_e
\end{pmatrix}$. In the case where we have $\kk=2n$, i.e., when we are adding a boundary edge between $k=n$ and $k+1=1$, the relevant entries of $g$ are $g_{2n,2n}=g_{1,1}=1/s_e$ and $g_{1,2n}=g_{2n,1}=(-1)^{n-1}c_e/s_e$. This sign twist is related to the cyclic symmetry of $\Grtnn(n,2n)$ as we explained in Section~\ref{sec:cyclic_intro}. For example, for $n=2$ and $\kk=1,4$, we have
\[g_1(t)=\begin{pmatrix}
1/c_e & s_e/c_e & 0 & 0\\
s_e/c_e & 1/c_e & 0 & 0\\
0 & 0 & 1 & 0\\
0 & 0 & 0 & 1\\
\end{pmatrix},
\quad
g_4(t)=\begin{pmatrix}
1/s_e & 0 & 0 & -c_e/s_e\\
0 & 1 & 0 & 0\\
0 & 0 & 1 & 0\\
-c_e/s_e & 0 & 0 & 1/s_e
\end{pmatrix}.
\]

Recall that given an element $X\in\Gr(n,2n)$ and a $2n\times 2n$ invertible real matrix $g$, an element $X\cdot g\in\Gr(n,2n)$ is well defined as the row span of $A\cdot g$ where $A$ is any $n\times 2n$ matrix whose row span is $X$.

\begin{theorem}\label{thm:duality}
  Suppose that $N=(G,J)$ is obtained from $N'=(G',J')$ by adjoining a boundary spike $e$ at $k\in[n]$ (resp., a boundary edge $e$ between $k$ and $k+1$). Let $M=M(G,J)$, $M'=M(G',J')$, and $g_{\kk}(t)$ be as above. Then we have
  \[\doublemap(M)=\doublemap(M')\cdot g_{\kk}(t).\]
\end{theorem}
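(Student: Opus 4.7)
My plan is to prove the two cases — boundary spike and boundary edge — separately: the spike case by a direct matrix computation, and the edge case by reducing it to the spike case via the Kramers--Wannier duality of Theorem~\ref{thm:planar_dual}. For the spike case, let $v$ denote the vertex of $G$ to which the new spike $b_k$ is attached by $e$, so that contracting $e$ identifies $v$ with $b_k$ and returns $G'$. The spin $\sigma_{b_k}$ appears in the Boltzmann weight only through the factor $\exp(t\sigma_{b_k}\sigma_v)$, and summing over $\sigma_{b_k}\in\{-1,1\}$ immediately yields
\[
m_{k,k}=1,\quad m_{i,j}=m'_{i,j}\text{ for }i,j\neq k,\quad m_{k,j}=\tanh(t)\,m'_{k,j}\text{ for }j\neq k.
\]
Combining these identities with the relations $(1-s_e)/c_e=\tanh(t)$ and $(1+s_e)/c_e=1/\tanh(t)$, which follow from $s_e=\sech(2t)$, $c_e=\tanh(2t)$, and $s_e^2+c_e^2=1$, a direct column-by-column computation shows that the $i$-th row of $\double{M'}\cdot g_{2k-1}(t)$ coincides with the $i$-th row of $\double M$ for every $i\neq k$, while its $k$-th row equals $1/\tanh(t)$ times the $k$-th row of $\double M$. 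Row rescaling does not affect the row span, so $\doublemap(M)=\doublemap(M')\cdot g_{2k-1}(t)$.

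For the edge case, I will observe that adjoining a boundary edge $e$ between $b_k$ and $b_{k+1}$ in $G$ creates a new face bounded by $e$ and the boundary arc from $b_k$ to $b_{k+1}$; in the planar dual $G^*$, this new face becomes a new interior vertex joined by $e^*$ to the boundary vertex $b_k^*$. Thus $N^*$ is obtained from $(N')^*$ by adjoining a boundary spike $e^*$ at $b_k^*$, whose dual weight $J_{e^*}$ satisfies $\sinh(2J_{e^*})=1/\sinh(2J_e)$, which exchanges $s_{e^*}=c_e$ and $c_{e^*}=s_e$. Applying the already established spike case to the dual networks and combining with Theorem~\ref{thm:planar_dual} (so that $\doublemap(M^*)=\doublemap(M)\cdot S$ and analogously for $M'$) yields
\[
\doublemap(M)\cdot S=\doublemap(M')\cdot S\cdot g_{2k-1}(J_{e^*}),
\]
so the edge case reduces to the purely matrix-theoretic identity $S\,g_{2k-1}(J_{e^*})\,S^{-1}=g_{2k}(J_e)$.

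This last identity is a direct check: conjugation by the cyclic shift $S$ moves the nontrivial $2\times 2$ block from rows and columns $2k-1,2k$ to rows and columns $2k,2k+1\pmod{2n}$, while the dual substitution $(s_{e^*},c_{e^*})=(c_e,s_e)$ transforms the odd-indexed block $\tfrac{1}{c_{e^*}}\left(\begin{smallmatrix}1&s_{e^*}\\s_{e^*}&1\end{smallmatrix}\right)$ into the even-indexed block $\tfrac{1}{s_e}\left(\begin{smallmatrix}1&c_e\\c_e&1\end{smallmatrix}\right)$ appearing in $g_{2k}(J_e)$. The one situation requiring care is the wrap-around case $k=n$, where the block straddles columns $2n$ and $1$: here the entry $S_{1,2n}=(-1)^{n-1}$ appears twice in the conjugation and produces exactly the sign twist $(-1)^{n-1}c_e/s_e$ built into the definition of $g_{2n}(J_e)$. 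I expect this wrap-around bookkeeping, together with the minor reduction to the connected setting required by Theorem~\ref{thm:planar_dual}, to be the only technical (but still routine) points; the rest is a straightforward manipulation.
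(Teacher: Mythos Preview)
Your proof is correct, but it takes a genuinely different route from the paper's.

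The paper argues entirely through the plabic graph: using the identification $\doublemap(M)=\Meast(\Gbip,\wt)$ from Theorem~\ref{thm:X=X'}, it observes that adjoining a boundary spike (or edge) to $G'$ amounts to adding a pair of bridges to $(G')^\square$, with an intermediate rescaling of two boundary edge weights by $c_e$. By Lemma~\ref{lemma:bridge_removal} the two bridges contribute factors $x_{\tilde k}(s_e)$ and $y_{\tilde k}(s_e)$, and the rescaling contributes a diagonal matrix $D_{\tilde k}(c_e)$; the single $2\times 2$ computation $x_{\tilde k}(s_e)\,D_{\tilde k}(c_e)\,y_{\tilde k}(s_e)=g_{\tilde k}(t)$ then finishes both cases uniformly.

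By contrast, your spike case is a completely elementary correlation-matrix computation, never touching $\Gbip$ or $\Meast$: you read off $m_{k,j}=\tanh(t)\,m'_{k,j}$ directly from the Boltzmann weights and then check column by column that $\double{M'}\cdot g_{2k-1}(t)$ agrees with $\double M$ up to a row rescaling. This is more direct and exposes clearly where the hyperbolic identities $(1\pm s_e)/c_e=\tanh(t)^{\mp1}$ enter. Your edge case, via Theorem~\ref{thm:planar_dual} and the conjugation identity $S\,g_{2k-1}(J_{e^\ast})\,S^{-1}=g_{2k}(J_e)$, is also correct (including the wrap-around sign at $k=n$), though note that Theorem~\ref{thm:planar_dual} is itself proved through the plabic picture, so you do not fully bypass that machinery---you only use it more indirectly. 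The paper's approach buys a uniform treatment of the two cases and the structural factorization $g=x\cdot D\cdot y$; yours buys a self-contained and transparent spike argument.
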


Theorems~\ref{thm:inverse} and~\ref{thm:duality} give the following inductive algorithm for reconstructing the function $J:E\to \R_{>0}$ for a given reduced graph $G=(V,E)$ from the matrix $M=M(G,J)$. The problem is trivial when $G$ has no edges. Otherwise by Proposition~\ref{prop:removable_spikes_edges}, there is either a boundary spike or a boundary edge $e$ in $G$. The matrix $M$ gives an element $\doublemap(M)\in\OGtnn(n,2n)$, from which we compute either $s_e$ or $c_e$ using Theorem~\ref{thm:inverse} and thus find $t=J_e$. After that, we contract $e$ in $G$ if it is a boundary spike and remove it if it is a boundary edge, and also modify the matrix $M$ accordingly: we let
\[X':=\doublemap(M)\cdot(g_{\kk}(t))^{-1}\in\OGtnn(n,2n),\]
where $(g_{\kk}(t))^{-1}$ can be found using
\[\begin{pmatrix}
  1/c_e & s_e/c_e\\
  s_e/c_e & 1/c_e
\end{pmatrix}^{-1}=\begin{pmatrix}
  1/c_e & -s_e/c_e\\
  -s_e/c_e & 1/c_e
\end{pmatrix},\quad \begin{pmatrix}
  1/s_e & c_e/s_e\\
  c_e/s_e & 1/s_e
\end{pmatrix}^{-1}=\begin{pmatrix}
  1/s_e & -c_e/s_e\\
  -c_e/s_e & 1/s_e
\end{pmatrix}.\]
By Lemma~\ref{lemma:from_minors_to_correlations}, we have $X'=\doublemap(M')\in\OGtnn(n,2n)$ for a unique matrix $M'\in\Matsym$. We then express the entries of $M'$ in terms of the Pl\"ucker coordinates of $X'$ using~\eqref{eq:from_minors_to_correlations}. It follows that this $n\times n$ matrix $M'$ is equal to $M(G',J')$, so we set $G:=G'$ and proceed recursively until $G$ has no edges left, splitting $G'$ into connected components if necessary. This finishes a constructive proof of Theorem~\ref{thm:inverse}. Alternatively, we deduce Theorem~\ref{thm:inverse} from Theorem~\ref{thm:main} at the end of Section~\ref{sec:ball}.

Another question similar to Question~\ref{question:inverse} is the following.
\begin{question}\label{question:tests}
Given an $n\times n$ matrix $M\in\Matsym$, does there exist a planar Ising network $N=(G,J)$ such that $M=M(G,J)$?
\end{question}
The answer to this question is provided by Theorem~\ref{thm:main}: the answer is ``yes'' if and only if all minors of the matrix $\double M$ are nonnegative. There are exponentially many minors to check, that is, $2n\choose n$, and in general one needs to check all of them to ensure that $\double M$ is totally nonnegative. However, checking whether $\doublemap(M)\in\OG(n,2n)$ belongs to $\OGtp(n,2n):=\OG(n,2n)\cap \Grtp(n,2n)$ (defined in~\eqref{eq:Grtp}), as opposed to $\OGtnn(n,2n)$, can be done in polynomial time. More precisely, one needs to check only $n^2+1$ minors of $\double M$, as it follows from the results of~\cite{Pos}. These minors are algebraically independent as functions on $\Gr(n,2n)$, but when restricted to $\OG(n,2n)$, this is no longer the case. Thus if all of them are positive then it follows that $\doublemap(M)\in\OGtp(n,2n)$, but in general one could check less minors and arrive at the same conclusion. See Section~\ref{sec:conjectures} for further discussion.

\section{Background on the totally nonnegative Grassmannian}\label{sec:tnn_OG}
In this section, we give a brief background on the totally nonnegative Grassmannian of Postnikov~\cite{Pos}. Most of the results in this section can be found in either~\cite{Pos} or~\cite{LamCDM}.

Recall that the totally nonnegative Grassmannian $\Grtnn(k,\n)$ is the subset of the real Grassmannian $\Gr(k,\n)$ where all Pl\"ucker coordinates are nonnegative. Given a point $X\in\Grtnn(k,\n)$, define the \emph{matroid} $\Mcal_X\subset{[\n]\choose k}$ of $X$ by
\[\Mcal_X:=\left\{J\in{[\n]\choose k}\mid \Delta_J(X)>0\right\}.\]
Given a collection $\Mcal\subset {[\n]\choose k}$, define the \emph{positroid cell} $\pc_\Mcal\subset\Grtnn(k,\n)$ by
\[\pc_\Mcal:=\{X\in\Grtnn(k,\n)\mid \Mcal_X=\Mcal\}.\]

For example, one can take $\Mcal={[\n]\choose k}$, in which case the positroid cell $\pc_\Mcal$ coincides with the \emph{totally positive Grassmannian} $\Grtp(k,n)$:
\begin{equation}\label{eq:Grtp}
\Grtp(k,\n):=\left\{X\in\Gr(k,\n)\mid \Delta_I(X)>0 \text{ for all $I\in{[\n]\choose k}$} \right\}.
\end{equation}

A collection $\Mcal\subset {[\n]\choose k}$ is called a \emph{positroid} if $\pc_\Mcal$ is nonempty. Positroids are special kinds of matroids, and have a very nice structure which we now explain.

Recall that for $i\in[\n]$, the total order $\prec_i$ on $[\n]$ is given by $i\prec_i i+1\prec_i\dots\prec_i \n\prec_i 1\prec_i \dots\prec_i i-1.$
\begin{definition}\label{dfn:Imin_Imax_Mcal}
For two sets $I,J\in {[\n]\choose k}$, we write $I\preceq_i J$ if $I=\{i_1\prec_i\dots\prec_i i_k\}$, $J=\{j_1\prec_i\dots\prec_i j_k\}$, and $i_s\preceq_i j_s$ for $1\leq s\leq k$. It turns out that if $\Mcal$ is a positroid then for each $i$ it has a unique $\preceq_i$-minimal element which we denote $\Imin_i(\Mcal)$. Thus $\Imin_i(\Mcal)$ satisfies $\Imin_i(\Mcal)\preceq_i J$ for all $J\in \Mcal$. Similarly, we let $\Imax_i(\Mcal)$ be the unique $\preceq_i$-maximal element of $\Mcal$.
\end{definition}

\begin{definition}\label{dfn:Gr_neck}
A sequence $\Ical:=(I_1,\dots,I_\n)$ of $k$-element subsets of $[\n]$ is called a \emph{Grassmann necklace} if for each $i\in [\n]$ there exists $j_i\in[\n]$ such that $I_{i+1}=I_i\setminus \{i\}\cup\{j_i\}$.
\end{definition}
Here (and everywhere in this section) the index $i+1$ is taken modulo $\n$.

\def\pidec{\pi}
\def\piinv{\pi^{-1}}
There is a simple bijection between positroids and Grassmann necklaces, which sends a positroid $\Mcal$ to the sequence $\Ical(\Mcal):=(\Imin_1(\Mcal), \Imin_2(\Mcal),\dots, \Imin_\n(\Mcal))$, which is a Grassmann necklace for each positroid $\Mcal$. Each Grassmann necklace $\Ical$ is encoded by an associated \emph{decorated permutation} $\pidec_\Ical:[\n]\to[\n]$ which sends $i\in[\n]$ to the index $j_i$ from Definition~\ref{dfn:Gr_neck}. (When $i$ is a fixed point of $\pidec_\Ical$, i.e., $\pidec_\Ical(i)=i$, there is an extra bit of data in $\pidec_\Ical$ recording whether $i\in I_i$ or $i\notin I_i$, but this will not be important for our exposition.)
The map $\Ical\mapsto \pidec_\Ical$ is a bijection between Grassmann necklaces and decorated permutations.

\begin{remark}\label{rmk:positroid_pidec}
Under the above correspondence, a positroid $\Mcal$ gives rise to a decorated permutation $\pidec_\Mcal$ such that for $i\in[\n]$,  $\pidec_\Mcal(i)$ is equal to the unique element of the set $\Imin_{i+1}(\Mcal)\setminus\Imin_i(\Mcal)$, if it is nonempty, and is equal to $i$ otherwise. It is not hard to see that $\piinv_\Mcal(i)$ is the unique element of the set $\Imax_i(\Mcal)\setminus \Imax_{i+1}(\Mcal)$.
\end{remark}

See~\cite[Section~16]{Pos} for a detailed description of these objects and bijections between them.

\def\Ebip{{E^\square}}
\def\Vbip{{V^\square}}
A \emph{plabic graph} is a planar bipartite graph $\Gbip=(\Vbip,\Ebip)$ embedded in a disk such that it has $\n$ boundary vertices $d_1,\dots,d_\n$, each of degree $1$. (Postnikov considers more general plabic graphs where vertices of the same color are allowed to be connected by an edge, but for our purposes it is sufficient to work with bipartite graphs.) Recall that the notion of an almost perfect matching is given in Definition~\ref{dfn:almost_perfect}. Given an almost perfect matching $\match$ of $G$, we define its \emph{boundary} $\partial(\match)\subset[\n]$ to be the set
\begin{equation*}
  \begin{split}
    \partial(\match)=&\{i\in[\n]\mid \text{$d_i$ is black and is not incident to an  edge of $\match$}\}\cup\\
                     &\{i\in[\n]\mid \text{$d_i$ is white and is incident to an  edge of $\match$}\}.
  \end{split}
\end{equation*}

It turns out that for every $\Gbip$ there exists an integer $0\leq k\leq \n$ such that every almost perfect matching $\match$ of $\Gbip$ satisfies $|\partial(\match)|=k$. The number $k$ is given explicitly in terms of the number of black and white vertices of $\Gbip$, see~\cite[Eq.~(9)]{LamCDM}.

\begin{definition}\label{dfn:pidec}
Each plabic graph $\Gbip$ gives rise to a decorated permutation $\pidec_\Gbip$, as follows. A \emph{strand} in $\Gbip$ is a path that turns maximally right (resp., maximally left) at each black (resp., white) vertex. Thus every edge of $\Gbip$ belongs to precisely two strands traversing it in opposite directions. If a strand that starts at $b_i$ ends at $b_j$ for some $i,j\in[\n]$ then we put $\pidec_\Gbip(i):=j$, which defines a decorated permutation $\pidec_\Gbip:[n]\to [n]$. (For each $i$ such that $\pidec_\Gbip(i)=i$, $\pidec_\Gbip$ also contains the information whether $i$ was black or white in $\Gbip$.)
\end{definition}
Since decorated permutations are in bijection with Grassmann necklaces and positroids, each plabic graph $\Gbip$ gives rise to a Grassmann necklace $\Ical_\Gbip$ and a positroid $\Mcal_\Gbip$.

\def\Meas{\operatorname{Meas}}
\def\Meast{{\overline \Meas}}
A \emph{weighted plabic graph} is a pair $(\Gbip,\wt)$ where $\Gbip$ is a plabic graph and $\wt:\Ebip\to \R_{>0}$ is a \emph{weight function} assigning positive real numbers to the edges of $\Gbip$. For an almost perfect matching $\match$ of $\Gbip$, recall that $\wt(\match)$ is the product of weights of edges of $\match$. We can consider a collection $\Meas(\Gbip,\wt):=(\Delta_I(\Gbip,\wt))_{I\in {[\n]\choose k}}\in\RP^{{\n\choose k}-1}$ of polynomials given for $I\in {[\n]\choose k}$ by
\begin{equation}\label{eq:Delta_I_matchings}
\Delta_I(\Gbip,\wt):=\sum_{\match:\partial(\match)=I} \wt(\match),
\end{equation}
where the sum is over all almost perfect matchings $\match$ of $\Gbip$ with boundary $I$. It turns out that $(\Delta_I(\Gbip,\wt))_{I\in {[\n]\choose k}}$ is the collection of Pl\"ucker coordinates of some point $X\in\Grtnn(k,\n)$. The following result is implicit in~\cite{PSW}.

\begin{theorem}[{\cite[Corollary~7.14]{LamCDM}}]\label{thm:Meas}
  Given a weighted plabic graph $(\Gbip,\wt)$, there exists a unique point $X\in\Gr(k,\n)$ such that
  \[\Meas(\Gbip,\wt)=(\Delta_I(X))_{I\in {[\n]\choose k}}\]
  as elements of the projective space $\RP^{{\n\choose k}-1}$. The point $X$ belongs to $\Grtnn(k,\n)$ and in fact to the positroid cell $\pc_{\Mcal_\Gbip}$, where $\Mcal_\Gbip$ is the positroid whose decorated permutation is $\pidec_\Gbip$. Every point $X\in \pc_{\Mcal_\Gbip}$ arises in this way from some weight function $\wt:\Ebip\to\R_{>0}$.
\end{theorem}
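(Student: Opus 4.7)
The plan is to realize $X$ via Postnikov's boundary measurement map, built from a perfect orientation of $\Gbip$. First, I would show that $\Gbip$ admits a \emph{perfect orientation} $\Ocal$: an orientation such that every interior white vertex has a unique outgoing edge and every interior black vertex a unique incoming edge. Equivalently, $\Ocal$ corresponds to a choice of almost perfect matching of $\Gbip$ (the non-oriented edges), so Hall's theorem guarantees existence whenever $\Gbip$ admits one. Its set of boundary sources $I_\Ocal \subset [\n]$ has a fixed cardinality $k$ determined by the counts of black and white vertices. Fix such an $\Ocal$ and form the $k \times \n$ boundary measurement matrix $A = A(\Ocal, \wt)$ whose rows are indexed by $I_\Ocal$ and whose $(i,j)$-entry is the signed sum $\sum_{P \colon d_i \to d_j} (-1)^{\operatorname{wind}(P)} \wt(P)$ over directed paths $P$ in $\Ocal$. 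Set $X := \operatorname{rowspan}(A) \in \Gr(k,\n)$; uniqueness in the theorem then follows from injectivity of the Pl\"ucker embedding.

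To identify the Pl\"ucker coordinates of $X$ with the matching sums $\Delta_J(\Gbip, \wt)$, I would apply the Lindström--Gessel--Viennot lemma: $\Delta_J(A)$ expands as a signed sum over families $(P_1,\dots,P_k)$ of non-crossing directed paths from $I_\Ocal$ to $J$. The key combinatorial step is the standard weight-preserving bijection between non-crossing path families in $\Ocal$ and almost perfect matchings of $\Gbip$ with boundary $J$, obtained by taking the symmetric difference of the path edges with the reference matching encoding $\Ocal$. After absorbing the common scalar $\prod_{e \in \Ebip} \wt(e)$, all signs align to $+1$ and one obtains $\Delta_J(X) = \Delta_J(\Gbip, \wt)$ projectively, for every $J \in \binom{[\n]}{k}$. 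This simultaneously proves the claimed equality $\Meas(\Gbip, \wt) = (\Delta_I(X))$, nonnegativity of all Pl\"ucker coordinates, and (as a byproduct) independence of the row span $X$ from the choice of $\Ocal$.

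For the positroid identification, I would use that for each $i \in [\n]$ there is a perfect orientation of $\Gbip$ whose source set is $\Imin_i(\Mcal_\Gbip)$ (read off from the strands of $\Gbip$), producing a nonvanishing minor at that index; conversely, LGV gives $\Delta_J(\Gbip,\wt)=0$ whenever $J$ is not in the matroid $\Mcal_\Gbip$ cut out by $\pidec_\Gbip$. Since a positroid is determined by its Grassmann necklace $(\Imin_i(\Mcal))_{i \in [\n]}$, these facts pin down $\Mcal_X = \Mcal_\Gbip$, hence $X \in \pc_{\Mcal_\Gbip}$. Surjectivity onto $\pc_{\Mcal_\Gbip}$ is handled by induction on the number of faces using the gauge equivalences that preserve $\Meas$: rescaling of edge weights at interior vertices, square moves, parallel-edge merges, and bivalent vertex removals. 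Any reduced plabic graph for $\pidec_\Gbip$ has a face-weight parameter space whose dimension matches $\dim \pc_{\Mcal_\Gbip}$, and the measurement map is then a continuous bijection between open cells of equal dimension.

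The main obstacle is the LGV-to-matching identification with correct sign cancellation: maximal minors come with winding signs, while $\Delta_J(\Gbip, \wt)$ is a manifestly positive sum. The cleanest route is the gauge construction of \cite{Pos}, where edge weights are normalized so that every non-crossing path family between given boundary endpoints contributes with the same $+1$ sign. A subsidiary obstacle is verifying that the square move and other local reductions act transitively on reduced plabic graphs for a fixed decorated permutation, which is what ultimately makes the surjectivity dimension count rigorous; this too is supplied by \cite{Pos} and the exposition in \cite{LamCDM}.
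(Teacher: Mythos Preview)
The paper does not prove this theorem at all: it is stated in Section~\ref{sec:tnn_OG} as background, attributed to~\cite{LamCDM} (and noted as implicit in~\cite{PSW}), with no proof given. So there is nothing to compare your proposal against in this paper.

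That said, your outline is broadly the standard route taken in~\cite{Pos,Talaska,PSW,LamCDM}, but one step is stated too loosely to be correct as written. The identification of $\Delta_J(A)$ with the matching sum is \emph{not} a straightforward LGV argument over ``non-crossing directed path families'': a perfect orientation of a plabic graph is typically not acyclic, so the boundary measurement entries are genuinely infinite signed sums (or formal power series), and the relevant combinatorial objects on the path side are \emph{flows} (vertex-disjoint families of paths together with closed cycles), not merely non-crossing path systems. The bijection that matches these flows with almost perfect matchings and makes all signs come out $+1$ is the content of Talaska's theorem~\cite{Talaska} and its reformulation in~\cite{PSW}; invoking plain LGV here would leave a real gap. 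Your surjectivity sketch is also a bit off: the clean argument does not use a dimension count plus ``continuous bijection between open cells'' (which by itself does not give a homeomorphism), but rather an explicit inverse built from minor ratios, or equivalently the bridge-decomposition/Deodhar parametrization in~\cite{LamCDM}.
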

\def\Gauge{\operatorname{Gauge}}
The map $\Meas(\Gbip,\cdot):\R_{>0}^\Ebip\to \Grtnn(k,\n)$ sending $\wt \mapsto X$ is not usually injective. To see this, observe that each interior vertex of $\Gbip$ is incident to precisely one edge of each almost perfect matching $\match$. Thus rescaling the weights of all edges incident to a single interior vertex (i.e. applying a \emph{gauge transformation}) does not change the value of $\Meas$. We denote by $\R_{>0}^\Ebip/\Gauge$ the space of gauge-equivalence classes of functions $\wt:\Ebip\to\R_{>0}$, so that $\wt$ and $\wt'$ are the same in $\R_{>0}^\Ebip/\Gauge$ if and only if $\wt'$ can be obtained from $\wt$ by a sequence of gauge transformations. It is not hard to see that $\R_{>0}^\Ebip/\Gauge$ is homeomorphic to an open ball of dimension $F(\Gbip)-1$, where $F(\Gbip)$ denotes the number of faces of $\Gbip$.

Thus by Theorem~\ref{thm:Meas}, $\Meas$ gives rise to a map
\[\Meast: \R_{>0}^\Ebip/\Gauge\to \pc_{\Mcal_\Gbip}\subset \Grtnn(k,\n)\]
which turns out to be injective for some plabic graphs $\Gbip$. More precisely, let us say that $\Gbip$ is \emph{reduced} if all of the following conditions are satisfied:
\begin{itemize}
\item no strand in $\Gbip$ intersects itself;
\item there are no closed strands in $\Gbip$;
\item no two strands in $\Gbip$ have a \emph{bad double crossing}.
\end{itemize}
Here two strands are said to form a \emph{bad double crossing} if there are two vertices $u,v\in \Vbip$ such that both strands first pass through $u$ and then through $v$. The following result can be found in~\cite{Pos,LamCDM}.

\begin{theorem}\label{thm:Meast}
  For each positroid $\Mcal$, there exists a reduced plabic graph $\Gbip$ such that $\Mcal=\Mcal_\Gbip$. Given a reduced plabic graph $\Gbip$, the map $\Meast:\R_{>0}^\Ebip/\Gauge\to \pc_{\Mcal_\Gbip}$ is a homeomorphism. Thus the positroid cell $\pc_{\Mcal_\Gbip}$ is homeomorphic to $\R^{F(\Gbip)-1}$. In addition, we have
  \begin{equation}\label{eq:disjoint_Gr}
\Grtnn(k,\n)=\bigsqcup_{\Mcal}\pc_\Mcal,
  \end{equation}
  where the union is over all positroids $\Mcal\subset {[\n]\choose k}$.
\end{theorem}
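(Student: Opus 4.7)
The plan is to follow the three parts in order, using the bridge/Le-diagram machinery of Postnikov to simultaneously construct reduced plabic graphs and parametrize positroid cells, and then invoking matroid-theoretic facts for the cell decomposition.

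For the first statement, I would give a constructive algorithm producing a reduced plabic graph $\Gbip$ from an arbitrary decorated permutation $\pidec$ on $[\n]$. The key combinatorial input is the \emph{bridge lemma}: if $\pidec$ is not the identity, then there exists $i\in[\n]$ with $\pidec^{-1}(i)\neq i$, $\pidec^{-1}(i+1)\neq i+1$, and an inversion-type condition $\pidec^{-1}(i+1)\prec_i\pidec^{-1}(i)$ placing $(i,i+1)$ as a ``removable bridge''. Removing such a bridge yields a decorated permutation with one fewer anti-exceedance, so induction produces a plabic graph; one then attaches a bridge (a pair of trivalent vertices with a shared edge connecting the boundary strands leaving $d_i$ and $d_{i+1}$) at each step. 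Reducedness is preserved at each stage because the operation creates no self-intersecting strand and no bad double crossing.

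For the second statement, I would argue surjectivity and injectivity of $\Meast$ separately. Surjectivity is proved inductively along the same bridge decomposition: given $X \in \pc_{\Mcal_\Gbip}$, each bridge addition corresponds to right-multiplication by a one-parameter matrix $x_i(t)$, and one can choose $t>0$ so that the residual point lies in the smaller positroid cell indexed by the permutation with the bridge removed; induction supplies weights for the smaller graph, and the bridge parameter $t$ supplies the weights of the new edges, giving a preimage under $\Meas$. For injectivity modulo gauge, the central fact is that a reduced plabic graph has exactly $F(\Gbip) - 1 = \dim \pc_{\Mcal_\Gbip}$ independent parameters after quotienting by the gauge group; combined with surjectivity onto a space of the same dimension, one checks that the fibers of $\Meas$ are exactly the gauge orbits. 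This last point can be made explicit by showing that face weights (ratios of monomials in edge weights around each face) are determined by Pl\"ucker ratios via Talaska-type formulas, and that any two weight functions with equal face weights differ by a gauge transformation; this uses connectedness of $\Gbip$ and the tree structure obtained by choosing a spanning tree modulo one face. Continuity in both directions follows from the explicit rational formulas.

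For the third statement (the disjoint cell decomposition), disjointness is immediate from the definition $\pc_\Mcal := \{X : \Mcal_X = \Mcal\}$: each $X \in \Grtnn(k,\n)$ defines a unique matroid of supports, so the cells $\pc_\Mcal$ are pairwise disjoint, and their union is all of $\Grtnn(k,\n)$. What requires proof is that every such $\Mcal_X$ is in fact a positroid (otherwise $\pc_\Mcal$ would be empty by definition). This follows from the fact that $\Mcal_X$ for $X \in \Grtnn(k,\n)$ satisfies the cyclic Grassmann necklace axioms: the $\preceq_i$-minimal element of $\Mcal_X$ exists and the collection $(\Imin_i(\Mcal_X))_{i\in[\n]}$ forms a Grassmann necklace, which can be verified directly using positivity of Pl\"ucker coordinates and the three-term Pl\"ucker relations. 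Combined with the explicit construction in parts (1) and (2), which shows that every positroid actually arises (its cell is nonempty), one obtains the stratification.

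The hardest step is the injectivity of $\Meast$ modulo gauge, because one must rule out accidental cancellations among almost perfect matchings that would cause two non-gauge-equivalent weightings to yield the same projective collection of Pl\"ucker coordinates. This is precisely the place where the three conditions in the definition of \emph{reduced} (no self-intersecting strand, no closed strand, no bad double crossing) are needed; without them, local moves such as parallel-edge reduction or bigon removal would produce explicit non-gauge symmetries of $\Meas$.
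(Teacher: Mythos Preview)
The paper does not prove this theorem. It is stated in Section~\ref{sec:tnn_OG} as a background result from the literature, with the sentence ``The following result can be found in~\cite{Pos,LamCDM}'' and no further argument. So there is no proof in the paper to compare your proposal against.

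That said, your sketch is a reasonable outline of how the cited references establish the result. A few remarks on accuracy:
\begin{itemize}
\item In your first step, the bridge lemma must be stated for \emph{decorated} permutations, allowing for loops and coloops; the phrase ``if $\pidec$ is not the identity'' is not quite right, since the base cases of the induction are the permutations with only fixed points (loops or coloops), not the identity alone. Also, the invariant that decreases is the length (number of alignments/inversions in the affine sense), not the number of anti-exceedances.
\item Your injectivity argument is the correct one: face weights are gauge invariants and are determined by ratios of Pl\"ucker coordinates, and a reduced graph has exactly $F(\Gbip)-1$ independent face weights. The dimension match with $\dim\pc_{\Mcal_\Gbip}$ is then the key point; in the references this equality is proved combinatorially (via Le-diagrams or bridge counting), and you should cite or sketch that separately rather than deduce injectivity from surjectivity plus a dimension count, since a priori the map could fail to be injective even between spaces of equal dimension.
\item For the decomposition~\eqref{eq:disjoint_Gr}, your argument is essentially correct; the nontrivial content is that $\Mcal_X$ is a positroid for $X\in\Grtnn(k,\n)$, and the Grassmann-necklace verification via three-term Pl\"ucker relations is the standard route.
\end{itemize}
With these adjustments your outline matches the approach taken in the references the paper cites.
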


The last ingredient from the theory of plabic graphs that we will need is \emph{BCFW bridges}, introduced in~\cite{abcgpt,BCFW}. Our exposition will follow~\cite[Section~7]{LamCDM}.

Recall that each boundary vertex of a plabic graph is incident to a unique edge.
\begin{definition}
  Given $i\in[\n]$, we say that a plabic graph $\Gbip$ has a \emph{removable bridge between $i$ and $i+1$} if there exists a path of length $3$ between $\bbip_i$ and $\bbip_{i+1}$ in $\Gbip$. (In particular, these vertices have to be of different color).
\end{definition}
Here we again allow $i=\n$ and $i+1=1$. We refer to the middle edge of this path of length $3$ as a \emph{bridge between $i$ and $i+1$}. There are two types of bridges, since $i$ can be incident either to a white or to a black interior vertex. It turns out that the weight of the bridge can always be recovered from the minors of the corresponding element of the Grassmannian. The following result can be found in~\cite[Proposition~7.10]{LamCDM} and~\cite[Proposition~3.10]{Lam}, and is the main ingredient of the proof of Theorem~\ref{thm:inverse}.
\begin{theorem}\label{thm:bridge_removal}
  Let $(\Gbip,\wt)$ be a weighted reduced plabic graph, and suppose that it has a removable bridge between $i$ and $i+1$. Assume that the weights of the edges incident to $\bbip_i$ and $\bbip_{i+1}$ are both equal to $1$ (which can always be achieved using gauge transformations). Let $e\in\Ebip$ be the bridge between $i$ and $i+1$, and denote $X:=\Meast(\Gbip,\wt)\in\Grtnn(k,\n)$.
  \begin{itemize}
  \item If $i$ is white then for $I:=\Imin_{i+1}(\Mcal_\Gbip)$, we have
    \[\wt(e)=\frac{\Delta_I(X)}{\Delta_{I\cup\{i\}\setminus \{i+1\}}(X)}.\]
  \item If $i$ is black then for $I:=\Imax_{i+1}(\Mcal_\Gbip)$, we have
    \[\wt(e)=\frac{\Delta_I(X)}{\Delta_{I\cup\{i+1\}\setminus \{i\}}(X)}.\]
  \end{itemize}
\end{theorem}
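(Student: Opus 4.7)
My plan is to prove the identity by constructing an explicit weight-preserving bijection between the almost perfect matchings of $\Gbip$ counted by $\Delta_I(X)$ and those counted by $\Delta_{I \cup \{i\} \setminus \{i+1\}}(X)$ via the formula~\eqref{eq:Delta_I_matchings}. Write the path of length $3$ as $\bbip_i - u - v - \bbip_{i+1}$, with boundary edges $e_i := \{\bbip_i, u\}$ and $e_{i+1} := \{v, \bbip_{i+1}\}$ (both of weight $1$ by assumption) and bridge $e := \{u,v\}$. Since $\Gbip$ is bipartite, $u$ and $v$ have opposite colors; the white case of the theorem corresponds to $\bbip_i$ being white, so that $u$ is black, $v$ is white, and $\bbip_{i+1}$ is black.

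In the white case, the rule for $\partial(\match)$ implies that any almost perfect matching $\match$ satisfying $i \notin \partial(\match)$ and $i+1 \in \partial(\match)$ must leave both $\bbip_i$ and $\bbip_{i+1}$ unmatched, so $e_i, e_{i+1} \notin \match$. The interior vertices $u$ and $v$ are then covered either by the bridge $e$ itself or by other edges; I would argue that the latter possibility does not occur when $I = \Imin_{i+1}(\Mcal_\Gbip)$. By Remark~\ref{rmk:positroid_pidec}, $\pidec_\Gbip(i)$ is the unique element of $\Imin_{i+1}(\Mcal_\Gbip) \setminus \Imin_i(\Mcal_\Gbip)$, and reducedness of $\Gbip$ together with Definition~\ref{dfn:pidec} shows that the strand starting at $\bbip_i$ must traverse $e$ and exit at $\bbip_{i+1}$, so $\pidec_\Gbip(i) = i+1$. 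This strand identification forces every matching contributing to $\Delta_I(X)$ to use the bridge $e$; conversely, for $I' := I \cup \{i\} \setminus \{i+1\}$, the analogous argument forces every contributing matching to use both $e_i$ and $e_{i+1}$ (and not $e$). Swapping $\{e\}$ with $\{e_i, e_{i+1\}}$ while keeping the rest of the matching intact is the desired bijection, and since $\wt(e_i) = \wt(e_{i+1}) = 1$, the weight changes exactly by the factor $\wt(e)$. Summing and applying~\eqref{eq:Delta_I_matchings} yields $\Delta_I(X) = \wt(e) \cdot \Delta_{I'}(X)$, proving the white case.

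The black case is entirely symmetric, replacing $\Imin$ with $\Imax$ and using $\piinv_\Gbip$ via the second half of Remark~\ref{rmk:positroid_pidec}. The main obstacle is the ``strand identification'' step: one must verify that reducedness of $\Gbip$, combined with the bridge structure, genuinely rules out matchings of the ``mixed'' type in which $u$ and $v$ are each covered by edges outside the bridge region. This is handled by contradiction --- any such matching would allow an alternating-path rerouting that produces either a self-intersecting strand, a closed strand, or a bad double crossing in $\Gbip$, each violating reducedness. Once this technical core is in place, the remainder of the argument is a straightforward book-keeping exercise using the matching expansion of Pl\"ucker coordinates.
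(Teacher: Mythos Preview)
The paper does not prove this statement; it is quoted without proof from \cite[Proposition~7.10]{LamCDM} and \cite[Proposition~3.10]{Lam}, so there is no in-paper argument to compare against and the question is simply whether your argument stands on its own.

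Your bijection $\{e\}\leftrightarrow\{e_i,e_{i+1}\}$ between matchings with boundary $I$ and matchings with boundary $I':=I\cup\{i\}\setminus\{i+1\}$ is correct and does give $\Delta_I(X)=\wt(e)\,\Delta_{I'}(X)$, \emph{provided} every matching with boundary $I$ contains the bridge $e$. Your justification of that proviso, however, rests on a false claim: you assert that reducedness forces the strand starting at $\bbip_i$ to traverse $e$ and end at $\bbip_{i+1}$, so that $\pidec_\Gbip(i)=i+1$. This already fails for the reduced plabic graph of the top cell of $\Gr(2,4)$: take interior vertices $a,b,c,d$ (black, white, black, white) forming a $4$-cycle, with $\bbip_j$ pendant to the $j$-th vertex ($\bbip_1,\bbip_3$ white, $\bbip_2,\bbip_4$ black). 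There is a bridge $\bbip_1-a-b-\bbip_2$ with $\bbip_1$ white, yet the strand from $\bbip_1$ turns right at $a$ toward $d$, not toward $b$, and one computes $\pidec_\Gbip(1)=3\neq 2$. What \emph{is} true in general is that the two strands \emph{ending} at $\bbip_i$ and $\bbip_{i+1}$ traverse $e$; this yields $\pidec_{\Gbip'}=s_i\circ\pidec_\Gbip$ upon bridge removal, which is a different (and weaker) statement than the one you use.

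The genuine content is therefore to show that $I=\Imin_{i+1}(\Mcal_\Gbip)\notin\Mcal_{\Gbip'}$, where $\Gbip'=\Gbip\setminus\{e\}$ (equivalently, via Lemma~\ref{lemma:bridge_removal}, that $\Delta_I(X')=0$). Your proposed fix --- an ``alternating-path rerouting'' producing a self-intersecting strand, a closed strand, or a bad double crossing --- conflates almost perfect matchings with strands and does not work: a hypothetical matching of $\Gbip'$ with boundary $I$ does not in any direct way manufacture a forbidden strand configuration in $\Gbip$. A correct argument proceeds either by comparing Grassmann necklaces (using $\pidec_{\Gbip'}=s_i\circ\pidec_\Gbip$ to check that $\Imin_{i+1}(\Mcal_{\Gbip'})$ is strictly $\prec_{i+1}$-larger than $I$), or by invoking the standard fact that in a reduced plabic graph each necklace element $\Imin_j$ is the boundary of a \emph{unique} almost perfect matching and verifying that this particular matching contains $e$.
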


We will also need to explain how removing a bridge changes the corresponding element of the Grassmannian. For $i\in[\n-1]$ and $t\in\R$, define $x_i(t)\in\Mat_{\n}(\R)$ to be a $\n\times \n$ matrix with ones on the diagonal and a single nonzero off-diagonal entry in row $i$ and column $i+1$ equal to $t$. We also define $x_{\n}(t)$ to be the matrix with ones on the diagonal and the entry in row $\n$, column $1$ equal to $(-1)^{k-1}t$. We define $y_{i}(t)$ to be the matrix transpose of $x_i(t)$ for $i\in[\n]$.
\begin{lemma}[{\cite[Lemma~7.6]{LamCDM}}]\label{lemma:bridge_removal}
  Let $(\Gbip,\wt)$ be a weighted plabic graph, and suppose that it has a removable bridge between $i$ and $i+1$. Assume that the weights of the edges incident to $\bbip_i$ and $\bbip_{i+1}$ are both equal to $1$. Let $e\in\Ebip$ be the bridge between $i$ and $i+1$ with weight $\wt(e)=t$, and denote $X:=\Meast(\Gbip,\wt)\in\Grtnn(k,\n)$. Let $(\Gbip',\wt')$ be obtained from $(\Gbip,\wt)$ by removing $e$, and define $X':=\Meast(\Gbip',\wt)$. Then for all $I\in{[\n]\choose k}$ we have the following.
  \begin{itemize}
  \item If $i$ is white then $X'=X\cdot x_i(-t)$, and
    \[\Delta_I(X')=
      \begin{cases}
        \Delta_I(X)-t\Delta_{I\setminus\{i+1\}\cup \{i\}}(X), &\text{if $i+1\in I$ but $i\notin I$;}\\
        \Delta_I(X), &\text{otherwise.}\\
      \end{cases}\]
  \item If $i$ is black then $X'=X\cdot y_i(-t)$, and
    \[\Delta_I(X')=
      \begin{cases}
        \Delta_I(X)-t\Delta_{I\setminus\{i\}\cup \{i+1\}}(X), &\text{if $i\in I$ but $i+1\notin I$;}\\
        \Delta_I(X), &\text{otherwise.}\\
      \end{cases}\]
  \end{itemize}
\end{lemma}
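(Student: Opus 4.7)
The plan is to establish the Pl\"ucker coordinate identities in the lemma by a direct matching bijection, and then read off the statement $X'=X\cdot x_i(-t)$ (resp.\ $X'=X\cdot y_i(-t)$) at the Grassmannian level via Theorem~\ref{thm:Meas}. Suppose first that $i$ is white; write $v$ for the black interior neighbor of $\bbip_i$ and $w$ for the white interior neighbor of $\bbip_{i+1}$, so that $e=\{v,w\}$ and the two ``boundary edges'' $\{\bbip_i,v\}$ and $\{\bbip_{i+1},w\}$ have weight $1$ by hypothesis.

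Fix $I\in{[\n]\choose k}$ and partition the almost perfect matchings $\match$ of $\Gbip$ with $\partial(\match)=I$ according to whether $e\in\match$. Those with $e\notin\match$ are exactly the almost perfect matchings of $\Gbip'$ with boundary $I$ (interior vertex sets are unchanged), so they contribute $\Delta_I(\Gbip',\wt')$. If $e\in\match$, then $v$ and $w$ are matched through $e$ and $\bbip_i,\bbip_{i+1}$ are unmatched; since $\bbip_i$ is white and $\bbip_{i+1}$ is black, this forces $i\notin I$ and $i+1\in I$. For any such $\match$, define
\[\match'':=(\match\setminus\{e\})\cup\{\{\bbip_i,v\},\{\bbip_{i+1},w\}\}.\]
The map $\match\mapsto \match''$ is a bijection onto the set of almost perfect matchings of $\Gbip$ in which both $\bbip_i$ and $\bbip_{i+1}$ are matched (the inverse removes the two boundary edges and inserts $e$), and $\wt(\match)=t\cdot\wt(\match'')$ since the two replaced edges have weight $1$ while $\wt(e)=t$.

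Because $\bbip_i$ and $\bbip_{i+1}$ each have a unique neighbor in $\Gbip$, ``both matched'' is equivalent to $\partial(\match'')=I':=I\cup\{i\}\setminus\{i+1\}$. Summing the two cases yields
\[\Delta_I(\Gbip',\wt')=\Delta_I(\Gbip,\wt)-t\cdot\Delta_{I'}(\Gbip,\wt)\]
whenever $i+1\in I$ and $i\notin I$, and in all remaining cases every matching with $\partial=I$ automatically omits $e$ so $\Delta_I(\Gbip',\wt')=\Delta_I(\Gbip,\wt)$. These are precisely the identities produced by right multiplication by $x_i(-t)$, which subtracts $t$ times column $i$ from column $i+1$. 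Since a point of $\Gr(k,\n)$ is determined by its Pl\"ucker coordinates, this forces $X'=X\cdot x_i(-t)$.

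The black case is entirely symmetric: the bijection swaps the roles of $i$ and $i+1$, and the ``bridge-using'' matchings now have $i\in I$, $i+1\notin I$, matching the effect of $y_i(-t)$. The main obstacle I would expect is bookkeeping in the case $i=\n$, where the bridge straddles the cut between $\bbip_\n$ and $\bbip_1$: the matching bijection is unchanged, but the sign $(-1)^{k-1}$ hardcoded into the definitions of $x_\n$ and $y_\n$ must be reconciled with the sign that arises when one reorders columns to present the relevant minor (e.g.\ $\Delta_{I\cup\{\n\}\setminus\{1\}}$) in canonical increasing-index form. A short check shows the two signs of $(-1)^{k-1}$ cancel, making the cyclic case consistent with the general formula.
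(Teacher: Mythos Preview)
The paper does not supply its own proof of this lemma: it is quoted verbatim as \cite[Lemma~7.6]{LamCDM} and used as a black box later (e.g.\ in the proof of Theorem~\ref{thm:Gmed_parametrization} and Theorem~\ref{thm:duality}). So there is nothing in the paper to compare against; the question is only whether your argument stands on its own.

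It does. Your matching bijection is the standard proof: partition the almost perfect matchings of $\Gbip$ with boundary $I$ according to whether they use the bridge $e$, identify the ``$e\notin\match$'' part with matchings of $\Gbip'$, and trade $e$ for the two weight-$1$ boundary edges in the ``$e\in\match$'' part to land on boundary $I\cup\{i\}\setminus\{i+1\}$ with an extra factor of $t$. The observation that $e\in\match$ forces $i\notin I$, $i{+}1\in I$ (via the parity convention for $\partial$) is exactly what makes the ``otherwise'' case trivial. One expository wrinkle: when you say the map $\match\mapsto\match''$ is a bijection ``onto the set of almost perfect matchings of $\Gbip$ in which both $\bbip_i$ and $\bbip_{i+1}$ are matched'', you have silently kept the constraint that the rest of $\partial(\match'')$ agrees with $I$; the next sentence then conflates this with $\partial(\match'')=I'$. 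The content is right---since $\bbip_i,\bbip_{i+1}$ are degree~$1$, any $\match''$ with $\partial(\match'')=I'$ must contain both boundary edges and hence not $e$, so the target really is all matchings of $\Gbip$ with boundary $I'$---but it would be cleaner to state the bijection directly as
\[
\{\match:\partial(\match)=I,\ e\in\match\}\ \longleftrightarrow\ \{\match'':\partial(\match'')=I\cup\{i\}\setminus\{i+1\}\}.
\]
Your remark about the $i=\n$ case is correct: the combinatorial bijection is identical, and the $(-1)^{k-1}$ built into $x_\n,y_\n$ exactly compensates the sign incurred when cyclically reordering the column indices in the relevant minor.
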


\section{The totally nonnegative orthogonal Grassmannian}\label{sec:tnn_OG_2}
In this section, we discuss how the stratification of $\Gr(n,2n)$ induces a stratification of the totally nonnegative orthogonal Grassmannian $\OGtnn(n,2n)$. We remark that some of the statements below have appeared in~\cite{HW,HWX}, but mostly without proofs.

Recall from Definition~\ref{dfn:OG} that the orthogonal Grassmannian $\OG(n,2n)\subset\Gr(n,2n)$ is the set of $X\in\Gr(n,2n)$ such that $\Delta_I(X)=\Delta_{[2n]\setminus I}(X)$ for all $n$-element sets $I\subset[2n]$. In the literature, the term ``orthogonal Grassmannian'' usually refers to the set of subspaces where a certain non-degenerate symmetric bilinear form vanishes. Over the complex numbers, there is only one such bilinear form up to isomorphism, but over the real numbers, one needs to choose a signature. Following~\cite{HW}, define a non-degenerate symmetric bilinear form $\eta:\R^{2n}\times \R^{2n}\to \R$ by $\eta(u,v):=u_1v_1-u_2v_2+\dots+u_{2n-1}v_{2n-1}-u_{2n}v_{2n}.$ Let us also introduce another subset $\OG_-(n,2n)\subset\Gr(n,2n)$ consisting of all $X\in\Gr(n,2n)$ such that $\Delta_I(X)=-\Delta_{[2n]\setminus I}(X)$ for all $n$-element sets $I\subset[2n]$.\footnote{We thank David Speyer for suggesting to consider both $\OG(n,2n)$ and $\OG_-(n,2n)$.} We justify our terminology as follows.
\begin{proposition}\label{prop:eta}
  For a subspace $X\in\Gr(n,2n)$, the following are equivalent:
  \begin{itemize}
  \item $X\in \OG(n,2n)\sqcup \OG_-(n,2n)$;
  \item for any two vectors $u,v\in X\subset\R^{2n}$, we have $\eta(u,v)=0$.
  \end{itemize}
\end{proposition}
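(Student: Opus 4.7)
The plan is to reformulate isotropy of $X$ with respect to $\eta$ as an equality of subspaces in $\Gr(n,2n)$, and then to translate that equality into a Pl\"ucker-coordinate condition.

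First I would write $\eta(u,v) = u^T J v$ for the involution $J := \diag(1,-1,1,-1,\dots,1,-1)$, so that the $\eta$-orthogonal complement of any subspace $Y$ equals $J(Y^\perp)$, where $Y^\perp$ denotes the standard orthogonal complement. Since $\dim X = n = \tfrac12\dim\R^{2n}$, the condition $\eta|_X \equiv 0$ is equivalent to $X \subseteq J(X^\perp)$, which by dimension forces the equality $X = J(X^\perp)$ inside $\Gr(n,2n)$. Two points of $\Gr(n,2n)$ coincide iff their Pl\"ucker coordinates agree up to one common nonzero scalar $\lambda \in \R^\times$.

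Next I would compute $\Delta_I(J(X^\perp))$ in terms of $\Delta_\bullet(X)$. Because $J$ is diagonal with $J_{ii} = (-1)^{i+1}$, for any subspace $Y$ one has $\Delta_I(JY) = (-1)^{|I\cap E|}\,\Delta_I(Y)$, where $E := \{2,4,\dots,2n\}$. The classical Grassmann duality gives $\Delta_I(X^\perp) = \epsilon_I\,\Delta_{[2n]\setminus I}(X)$ up to a common scalar, where $\epsilon_I := \sgn(i_1,\dots,i_n,j_1,\dots,j_n)$ for $I = \{i_1 < \cdots < i_n\}$ and $[2n]\setminus I = \{j_1 < \cdots < j_n\}$. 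The main technical point, which I expect to be the chief obstacle, is the sign lemma that $(-1)^{|I\cap E|}\epsilon_I$ is independent of $I$. I would prove it by counting inversions: $\inv(i_1,\dots,i_n,j_1,\dots,j_n) = \bigl(\sum_{i \in I} i\bigr) - n - \binom{n}{2}$, and $\sum_{i \in I} i \equiv n - |I \cap E| \pmod 2$ because only odd elements contribute to parity. Together these yield $\epsilon_I = (-1)^{|I\cap E| + \binom{n}{2}}$, hence $(-1)^{|I \cap E|}\epsilon_I = (-1)^{\binom{n}{2}}$, a constant.

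With this in hand, the equation $X = J(X^\perp)$ reduces to $\Delta_I(X) = \mu\,\Delta_{[2n]\setminus I}(X)$ for all $I \in \binom{[2n]}{n}$, where $\mu := \lambda\,(-1)^{\binom{n}{2}}$. Applying the same relation to $I^c$ yields $\mu^2 = 1$, so $\mu \in \{+1,-1\}$, which is precisely the defining condition for $X \in \OG(n,2n) \sqcup \OG_-(n,2n)$. The converse implication follows by running the chain of equivalences backwards.
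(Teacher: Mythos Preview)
Your argument is correct and follows the same route as the paper: both identify $\eta$-isotropy with the equality $X = J(X^\perp)$ (in the paper's notation, $X=\alt(X^\perp)$) and then invoke the Pl\"ucker-coordinate relation between $X$ and $\alt(X^\perp)$, concluding $c^2=1$. The only difference is that the paper cites this relation as a known fact (\cite{Hochster}, \cite{Karp}) without tracking the sign, whereas you carry out the inversion count explicitly to verify that $(-1)^{|I\cap E|}\epsilon_I$ is independent of $I$; your computation is correct and makes the proof self-contained.
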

\begin{proof}
  Given an $k\times \n$ matrix $A=(a_{i,j})$, define another $k\times \n$ matrix $\alt(A):=((-1)^{j}a_{i,j})$. Taking row spans and setting $k:=n$, $\n:=2n$, we get a map $\alt: \Gr(n,2n)\to\Gr(n,2n)$. It is a classical result (see e.g.~\cite[Section~7]{Hochster} or~\cite[Lemma~1.11]{Karp}) that for $X\in\Gr(n,2n)$ and $I\in {[2n]\choose n}$, we have $\Delta_{[2n]\setminus I}(\alt(X^\perp))=c\Delta_I(X),$ where $\perp$ denotes the orthogonal complement of $X\subset\R^{2n}$ with respect to the standard scalar product $\<\cdot,\cdot\>$ on $\R^{2n}$ and $c\in\R$ is some nonzero constant. Note that $\eta(u,v)=\<\alt(u), v\>$ for $u,v\in\R^{2n}$, which shows that $\eta$ vanishes on $X$ if and only if $\Delta_I(X)=c\Delta_{[2n]\setminus I}(X)$ for all $I\in{[2n]\choose n}$. Applying this equality twice, we get $\Delta_I(X)=c\Delta_{[2n]\setminus I}(X)=c^2\Delta_I(X)$, and thus $c=\pm1$. We are done with the proof.
\end{proof}

\begin{remark}\label{rmk:lusztig}
Lusztig~\cite{Lus2} has defined the totally nonnegative part $(G/P)_{\geq0}$ of any partial flag variety $G/P$ inside a split reductive algebraic group $G$ over $\R$. Rietsch showed that the space $\Grtnn(k,n)$ is a special case of $(G/P)_{\geq0}$, see~\cite[Remark~3.8]{LamCDM}. For a specific choice of $G=O(n,n)$ (i.e. the \emph{split orthogonal group}, which corresponds to the Dynkin diagram of type $D_n$) and a maximal parabolic subgroup $P\cong \SL_n(\R)$ (corresponding to the Dynkin diagram of type $A_{n-1}$, obtained from $D_n$ by removing a leaf adjacent to a degree $3$ vertex), $G/P$ becomes equal to $\OG(n,2n)$. If we had $(G/P)_{\geq0}=\OGtnn(n,2n)$ then the fact that $\OGtnn(n,2n)$ is a closed ball would follow from the results of~\cite{GKL2}. However, the relationship between Lusztig's $(G/P)_{\geq0}$ and $\OGtnn(n,2n)$ remains unclear to us. For instance, the cell decomposition of $(G/P)_{\geq0}$ conjectured by Lusztig and proved by Rietsch~\cite{Rietsch_thesis,Rietsch2} appears to have a different number of cells than the cell decomposition of $\OGtnn(n,2n)$ indexed by matchings on $[2n]$ that we consider in this paper.
\end{remark}

\begin{remark}
A different relation between the Ising model and spin representations of the orthogonal group can be found in~\cite{Kaufman, SMJ, Palmer}.
\end{remark}

\begin{remark}
The generators $g_i(t)$ from Section~\ref{sec:inverse_problem_intro} belong to $O(n,n)$, and moreover, they are \emph{hyperbolic rotation matrices}, since for $t\in\R_{>0}$ and $c:=\tanh(2t)$, $s:=\sech(2t)$, there exists a unique $r(t)\in\R$ such that $\begin{pmatrix}
1/c & s/c\\
s/c & 1/c
\end{pmatrix}=\begin{pmatrix}
\cosh(r(t)) & \sinh(r(t))\\
\sinh(r(t)) & \cosh(r(t))
\end{pmatrix}.$ It would thus be interesting to find an analog of the theory of~\cite{LPLie} for the orthogonal group rather than the symplectic group.
\end{remark}

\begin{remark}
A more standard choice of coordinates for $\OG(n,2n)$ is to consider the set $\OG'(n,2n)$ of all $X\in \Gr(n,2n)$ where another symmetric bilinear form, $\eta'(u,v):=u_1v_{n+1}+u_2v_{n+2}+\dots+u_nv_{2n}$, vanishes. Consider a $2n\times 2n$ matrix $J$ with the only nonzero entries given by $J_{2j-1,j}=J_{2j,j}=J_{2j-1,j+n}=1/2,$ $J_{2j,j+n}=-1/2,$ for all $j\in[n]$. Then the map $X\mapsto X\cdot J$ gives a bijection between $\OG(n,2n)\sqcup \OG_-(n,2n)$ and $\OG'(n,2n)$. Moreover, for $M\in\Matsym$, the matrix $M\cdot J$ has the form $[I_n|M']$ for a skew-symmetric matrix $M'$ given by $m'_{i,j}=(-1)^{i+j+1}m_{i,j}$ for $i\neq j$ and $m'_{i,j}=0$ for $i=j$. A standard way to work with  $\OG'(n,2n)$ is to consider \emph{spinor coordinates}, which are essentially \emph{Pfaffians} of the matrix $M'$ above, see e.g.~\cite[Section~5]{HScube}. It was shown in~\cite{GBK} that these Pfaffians are multi-point boundary correlation functions for the Ising model, as we explain in Proposition~\ref{prop:Pf}. We thank David Speyer for this remark.
\end{remark}

Proposition~\ref{prop:eta} allows one to deduce that the image of the map $\doublemap$ is contained inside the orthogonal Grassmannian.
\begin{corollary}\label{cor:doublemap_subset_OG}
We have $\doublemap(\Matsym)\subset \OG(n,2n)$.
\end{corollary}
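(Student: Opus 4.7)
My plan is to combine Proposition~\ref{prop:eta} (which reduces membership in $\OG(n,2n)\sqcup \OG_-(n,2n)$ to the vanishing of the bilinear form $\eta$) with a direct check that $\eta$ vanishes on $\doublemap(M)$, and then to rule out the $\OG_-(n,2n)$ component by a continuity argument.

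The first step is to verify that $\eta(u_i,u_j)=0$ for any two rows $u_i,u_j$ of $\double M$. Expanding the form,
\[
\eta(u_i,u_j) \;=\; \sum_{k=1}^{n} \bigl(\double m_{i,2k-1}\,\double m_{j,2k-1} - \double m_{i,2k}\,\double m_{j,2k}\bigr).
\]
For every $k\notin\{i,j\}$ the relation $\double m_{i,2k-1}=-\double m_{i,2k}$ and its counterpart for $j$ force the $k$-th summand to vanish. When $i=j$, the remaining term at $k=i$ contributes $1^2-1^2=0$. When $i\ne j$, only $k=i$ and $k=j$ survive, contributing $2\double m_{j,2i-1}+2\double m_{i,2j-1}$; by~\eqref{eq:double_signs} this equals
\[
2(-1)^{i+j}\bigl[(-1)^{\mathbf{1}(j<i)}+(-1)^{\mathbf{1}(i<j)}\bigr]m_{i,j}=0,
\]
since $\mathbf{1}(i<j)+\mathbf{1}(j<i)=1$ and $m_{i,j}=m_{j,i}$. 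By Proposition~\ref{prop:eta} this gives $\doublemap(M)\in\OG(n,2n)\sqcup\OG_-(n,2n)$ for every $M\in\Matsym$.

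To finish, I would use connectedness. The map $\doublemap$ is continuous (its Pl\"ucker coordinates are polynomials in the entries $m_{i,j}$), and $\Matsym$ is connected as a real affine space. The two pieces $\OG(n,2n)$ and $\OG_-(n,2n)$ are disjoint, for any $X\in\Gr(n,2n)$ has some nonzero maximal minor, ruling out simultaneous $\Delta_I=\Delta_{[2n]\setminus I}$ and $\Delta_I=-\Delta_{[2n]\setminus I}$. Both are closed in $\Gr(n,2n)$, hence are disjoint closed subsets of their union, so the connected image $\doublemap(\Matsym)$ lies entirely in one of them. Testing at $M=I_n$ gives a matrix $\double{I_n}$ whose $i$-th row has ones in positions $2i-1,2i$ and zeroes elsewhere; then $\Delta_{\{1,3,\dots,2n-1\}}(\double{I_n})=1=\Delta_{\{2,4,\dots,2n\}}(\double{I_n})$, so $\doublemap(I_n)\in\OG(n,2n)$, and hence $\doublemap(\Matsym)\subset\OG(n,2n)$.

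The only delicate point is the sign bookkeeping in the $k=i,k=j$ terms of the first step, and the sign convention in~\eqref{eq:double_signs} is precisely tailored so those cross-terms cancel; the rest is formal.
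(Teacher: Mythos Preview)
Your proof is correct and follows essentially the same approach as the paper's: verify that $\eta$ vanishes on the rows of $\double M$ to land in $\OG(n,2n)\sqcup\OG_-(n,2n)$ via Proposition~\ref{prop:eta}, then use connectedness of $\Matsym$ together with the identity matrix as a test point to rule out $\OG_-(n,2n)$. The paper leaves the vanishing of $\eta$ as ``obvious from the definition'' and the disconnectedness of the two components implicit, whereas you spell out both computations; otherwise the arguments coincide.
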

\begin{proof}
  Let $M\in\Matsym$. It is obvious from the definition of $\double M$ that if $u,v\in\R^{2n}$ are any two rows of $\double M$ then we have $\eta(u,v)=0$, and thus by Proposition~\ref{prop:eta} we get $\doublemap(\Matsym)\subset\OG(n,2n)\sqcup \OG_-(n,2n)$. But note that $\OG(n,2n)$ and $\OG_-(n,2n)$ are not connected to each other inside $\OG(n,2n)\sqcup \OG_-(n,2n)$, however, $\Matsym\cong\R^{n\choose2}$ is connected. Thus $\doublemap(\Matsym)$ is connected, and clearly the image of the identity matrix $I_n\in\Matsym$ belongs to $\OG(n,2n)$ and not to $\OG_-(n,2n)$. The result follows.
\end{proof}

\begin{proposition}\label{prop:involution}
Let $X\in \OGtnn(n,2n)$, and let $\Mcal:=\Mcal_X$ be the positroid of $X$ with decorated permutation $\pidec_\Mcal$. Then $\pidec_\Mcal$ is a \emph{fixed-point free involution}: if $\pidec_\Mcal(i)=j$ then $i\neq j$ and $\pidec_\Mcal(j)=i$.
\end{proposition}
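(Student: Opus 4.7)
The plan is to leverage the complement-symmetry of the matroid $\Mcal$ forced by the orthogonality condition. Since $X\in\OGtnn(n,2n)$ satisfies both $\Delta_I(X)\geq 0$ and $\Delta_I(X)=\Delta_{[2n]\setminus I}(X)$ for every $I\in\binom{[2n]}{n}$, we immediately get
\[I\in \Mcal\iff [2n]\setminus I\in \Mcal.\]
This is the single structural input we will exploit.

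The first substantive step will be to translate this set-level symmetry into a symmetry of the Grassmann necklace. Concretely, I would prove that the complementation map $I\mapsto \bar I:=[2n]\setminus I$ is order-reversing for each Gale order $\preceq_i$ on $\binom{[2n]}{n}$. The cleanest way is via rank functions: for $k\in[2n]$, let $r_J^i(k):=|\{j\in J:j\preceq_i k\}|$; then $J\preceq_i J'$ is equivalent to $r_J^i(k)\geq r_{J'}^i(k)$ for all $k$, and since $r_{\bar J}^i(k)+r_J^i(k)$ depends only on $k$, complementation flips all these inequalities. Combined with complement-invariance of $\Mcal$, this yields
\[\Imax_i(\Mcal)=[2n]\setminus \Imin_i(\Mcal)\quad\text{for every }i\in[2n].\]
Hence
\[\Imax_i(\Mcal)\setminus \Imax_{i+1}(\Mcal)=\Imin_{i+1}(\Mcal)\setminus \Imin_i(\Mcal),\]
and by Remark~\ref{rmk:positroid_pidec} this means $\pidec_\Mcal(i)=\piinv_\Mcal(i)$ for every $i$. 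Applying $\pidec_\Mcal$ to both sides gives $\pidec_\Mcal^2=\Id$, i.e.\ $\pidec_\Mcal$ is an involution.

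For fixed-point-freeness, I would use the standard identification that a fixed point of $\pidec_\Mcal$ at $i$ occurs precisely when $\Imin_{i+1}(\Mcal)=\Imin_i(\Mcal)$, equivalently when $i$ is a loop ($i\notin I$ for all $I\in\Mcal$) or a coloop ($i\in I$ for all $I\in\Mcal$) of the matroid. But complement-invariance of $\Mcal$ precludes both: $\Mcal$ is nonempty (as $X\in\Gr(n,2n)$ has some nonzero Pl\"ucker coordinate), so for any fixed $J\in\Mcal$ we also have $[2n]\setminus J\in\Mcal$, and $i$ belongs to exactly one of these two sets. Hence no index is a loop or coloop, and $\pidec_\Mcal$ has no fixed points.

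The main obstacle is the order-reversing lemma: one has to be slightly careful with the cyclic order $\prec_i$, and the rank-function characterization of $\preceq_i$ is probably cleanest, though an alternative would be to quote a known fact that $\preceq_i$ is the Bruhat/Gale order on $k$-subsets of $[2n]$ and that complementation is its canonical anti-automorphism. Everything else is bookkeeping with the Grassmann-necklace/decorated-permutation dictionary from Section~\ref{sec:tnn_OG}.
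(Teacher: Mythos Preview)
Your proposal is correct and follows essentially the same route as the paper: both arguments hinge on the identity $\Imax_i(\Mcal)=[2n]\setminus\Imin_i(\Mcal)$ coming from complement-invariance of $\Mcal$, then use Remark~\ref{rmk:positroid_pidec} to conclude $\pidec_\Mcal=\piinv_\Mcal$, and rule out loops/coloops via the same ``$I\in\Mcal\Rightarrow[2n]\setminus I\in\Mcal$'' observation. The only difference is that the paper asserts the key identity as ``clear from Definition~\ref{dfn:Imin_Imax_Mcal}'' without spelling out the order-reversing property of complementation, whereas you give the explicit rank-function argument; your version is a welcome clarification of that step but not a different approach.
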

\begin{proof}
  It is clear from Definition~\ref{dfn:Imin_Imax_Mcal} that we have $\Imin_i(\Mcal)=[2n]\setminus \Imax_i(\Mcal)$, because $X\in\OG(n,2n)$. Suppose now that $\pidec_\Mcal(i)=j$ and that $i\neq j$. By Remark~\ref{rmk:positroid_pidec}, $\piinv_\Mcal(i)$ is the unique element of the set
  \[\Imax_i(\Mcal)\setminus \Imax_{i+1}(\Mcal)=([2n]\setminus \Imin_i(\Mcal))\setminus ([2n]\setminus \Imin_{i+1}(\Mcal))=\Imin_{i+1}(\Mcal)\setminus \Imin_i(\Mcal),\]
  which is equal to $\{\pidec_\Mcal(i)\}=\{j\}$. 
  Thus $\piinv_\Mcal(i)=j$, equivalently, $\pidec_\Mcal(j)=i$, so $\pidec_\Mcal$ is an involution. It remains to show that it is \emph{fixed-point free}, i.e., that $\pidec_\Mcal(i)\neq i$ for all $i\in[2n]$. We can have $\pidec_\Mcal(i)=i$ if either $i$ is a \emph{loop} or a \emph{coloop} of $\Mcal$, that is, if either $i\in I$ for all $I\in\Mcal$ or $i\notin I$ for all $I\in \Mcal$, respectively. Choose some $I\in\Mcal$. Then $[2n]\setminus I$ also belongs to $\Mcal$, which shows that $i$ is neither a loop nor a coloop of $\Mcal$. We are done with the proof.
\end{proof}

\begin{remark}\label{rmk:Imin_Imax}
Recall that given a matching $\medpa$ on $[2n]$, Definition~\ref{dfn:prec_i} gives two disjoint sets $\Imin_i(\medpa)$ and $\Imax_i(\medpa)$ for each $i\in[2n]$. It is easy to check that if $\pidec$ is the fixed-point free involution corresponding to $\medpa$ then $\Imin_i(\Mcal_\pidec)=\Imin_i(\medpa)$ and $\Imax_i(\Mcal_\pidec)=\Imax_i(\medpa)$.
\end{remark}

In Section~\ref{sec:inverse_problem_intro}, we described how to transform a planar graph $G$ embedded in a disk into a \emph{medial graph} $\Gmed$, and then how to obtain a medial pairing $\medpa_G$ from $\Gmed$. Not all matchings can be obtained in this way, for example, when $n=2$, the matching $\{\{1,4\}, \{2,3\}\}$ is not a medial pairing of any graph $G$. It will thus be more convenient for us to work with medial graphs rather than matchings. In Section~\ref{sec:ising_to_OG}, we introduce \emph{generalized planar Ising networks} which correspond to \emph{all} matchings on $[2n]$.

\def\int{\operatorname{int}}
\def\Emed{{E^\times}}
\def\Vmed{{V^\times}}
\def\Vint{{V^\times_{\int}}}
\def\Nmed{{N^\times}}
\def\Jmed{J^\times}

\begin{definition}
A \emph{medial graph} is a planar graph $\Gmed=(\Vmed,\Emed)$ embedded in a disk, such that it has $2n$ boundary vertices $\bbip_1,\bbip_2,\dots,\bbip_{2n}\in\Vmed$ in counterclockwise order, each of degree $1$, and such that every other vertex of $\Gmed$ has degree $4$.
\end{definition}

The non-boundary vertices (the ones that have degree $4$) are called \emph{interior vertices} of $\Gmed$, and we let $\Vint:=\Vmed\setminus\{\bbip_1,\dots,\bbip_{2n}\}$ denote the set of such vertices. Each medial graph $\Gmed$ gives rise to a \emph{medial pairing} $\medpa_\Gmed$, as in Section~\ref{sec:inverse_problem_intro}. We say that a medial graph $\Gmed$ is \emph{reduced} if the number of its interior vertices equals $\xing(\medpa_\Gmed)$. Equivalently, $\Gmed$ is reduced if every edge of $\Gmed$ belongs to some medial strand connecting two boundary vertices, no medial strand intersects itself, and no two medial strands intersect more than once.

\begin{lemma}\label{lemma:Gmed_exists}
For every matching $\medpa$ on $[2n]$, there exists a reduced medial graph $\Gmed$ satisfying $\medpa_\Gmed=\medpa$.
\end{lemma}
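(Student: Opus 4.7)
The plan is to build $\Gmed$ by drawing each pair of $\medpa$ as a curve inside the disk, putting an interior $4$-valent vertex at every crossing. Concretely, place the boundary vertices $\bbip_1,\dots,\bbip_{2n}$ at the vertices of a regular $2n$-gon inscribed in the disk. For each pair $\{i,j\}\in\medpa$, draw the straight-line chord $\gamma_{ij}$ connecting $\bbip_i$ and $\bbip_j$. A standard fact about chords of a convex polygon is that two such chords $\gamma_{ij}$ and $\gamma_{i'j'}$ meet in the open disk if and only if the pairs $\{i,j\},\{i',j'\}$ form a crossing in the sense of Definition~\ref{dfn:xing}, in which case they meet transversally in exactly one point. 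Thus the total number of intersection points of the chords (counted with multiplicity) is exactly $\xing(\medpa)$.

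A straight-line configuration may however have several chords through a single point, producing vertices of degree larger than $4$. To fix this, apply a small generic $C^{\infty}$-perturbation to the interiors of the chords, keeping their endpoints fixed on the boundary. Since each pairwise intersection is transverse, the perturbation preserves the existence and count of the pairwise intersections, while making all intersection points distinct. After this perturbation the curves $\gamma_{ij}$ are smooth, each has no self-intersection (straight chords have none, and a small perturbation cannot create one between close-by smooth arcs with distinct endpoints), any two cross in exactly one point iff their pairs form a crossing, and no three pass through a common point.

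Now define $\Gmed=(\Vmed,\Emed)$: the boundary vertex set is $\{\bbip_1,\dots,\bbip_{2n}\}$; the interior vertex set $\Vint$ consists of the intersection points of the $\gamma_{ij}$'s; the edges of $\Gmed$ are the arcs obtained by cutting each curve $\gamma_{ij}$ at its interior vertices (i.e., the maximal subarcs containing no interior vertex in their interior). By construction each $\bbip_k$ has degree $1$, each interior vertex has degree $4$ (two curves cross transversally there), and $|\Vint|=\xing(\medpa)$. So $\Gmed$ is a medial graph in the sense of the definition preceding the lemma.

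Finally, the medial strands of $\Gmed$ are traced by the rule of going ``straight'' at every degree-$4$ vertex; since at each interior vertex the two crossing curves $\gamma_{ij}$ and $\gamma_{i'j'}$ cross transversally and their incident edges alternate around the vertex, ``going straight'' corresponds exactly to continuing along the same curve $\gamma_{ij}$. Hence the medial strands of $\Gmed$ are precisely the curves $\gamma_{ij}$, and each strand connects $\bbip_i$ to $\bbip_j$ for $\{i,j\}\in\medpa$. This yields $\medpa_\Gmed=\medpa$, and the equality $|\Vint|=\xing(\medpa)$ shows that $\Gmed$ is reduced. The only genuinely delicate step is the perturbation argument that simultaneously achieves transverse pairwise intersections, no triple intersections, and no self-intersections; this is a routine transversality/general position statement for smooth arcs in the disk, but it is the one place where care is required.
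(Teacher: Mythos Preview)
Your argument is correct and follows essentially the same approach as the paper: draw straight chords between the matched boundary points, perturb slightly to a pseudoline arrangement in general position, and place interior vertices at the crossings. The paper's proof is terser (it also mentions an alternative inductive construction via the poset $\P_n$), but the geometric construction you describe is exactly the one given there.
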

\begin{proof}
For each pair $\{i,j\}\in\medpa$, connect $\bbip_i$ with $\bbip_j$ by a straight line segment. Then perturb each line segment slightly so that every point inside the disk would belong to at most two segments, obtaining a \emph{pseudoline arrangement}. Let $\Gmed$ be obtained from this pseudoline arrangement by putting an interior vertex at each intersection point. It is clear that $\Gmed$ is a reduced medial graph whose medial pairing is $\medpa$. Alternatively, $\Gmed$ can be constructed by induction on $\xing(\medpa)$ in an obvious way using the poset $\P_n$ from Definition~\ref{dfn:P_n}.
\end{proof}

Let us say that a \emph{medial network} $\Nmed=(\Gmed,\Jmed)$ is a medial graph $\Gmed$ together with a function $\Jmed:\Vint\to \R_{>0}$. Thus if $N=(G,J)$ is a planar Ising network then the edges of $G$ correspond to the interior vertices of the corresponding medial graph $\Gmed$ and thus the Ising network $N$ gives rise to a medial network $\Nmed=(\Gmed,\Jmed)$, as described in Sections~\ref{sec:inverse_problem_intro} and~\ref{sec:ising_to_OG}. In the remainder of this section, we will work with medial networks rather than with planar Ising networks.

Every medial graph gives rise to a plabic graph. In order to describe this correspondence, we first introduce a canonical way to orient each medial graph, as described in~\cite{HWX}.

\begin{proposition}\label{prop:orient}
  Let $\Gmed$ be a medial graph. Then there exists a unique orientation of the edges of $\Gmed$ such that:
  \begin{enumerate}
  \item for $i\in[2n]$, $\bbip_i$ is a source if and only if $i$ is odd;
  \item each interior vertex $v\in\Vint$ of $\Gmed$ is incident to two incoming and two outgoing arrows so that their directions alternate around $v$.
  \end{enumerate}
\end{proposition}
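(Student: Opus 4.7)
The plan is to establish a bijection between orientations of $\Gmed$ satisfying (1)--(2) and 2-colorings of the faces of $\Gmed$ inside the disk, with the face adjacent to the arc from $\bbip_i$ to $\bbip_{i+1}$ colored black if and only if $i$ is odd.

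I would first show that the faces of $\Gmed$ inside the disk admit a proper 2-coloring. Fix a reference face $F_0$, and declare the color of any face $F$ to be the parity of the number of edges of $\Gmed$ crossed by a generic path from $F_0$ to $F$. This is well-defined: by simple-connectedness of the disk, any closed loop in the interior is contractible, and the parity of its intersections with edges of $\Gmed$ is invariant under regular homotopy (changing by $\pm 2$ only when the loop sweeps across an edge), so every closed loop crosses $\Gmed$ an even number of times. Adjacent faces differ by a single edge crossing, hence receive opposite colors. Normalize the coloring so that the face adjacent to the arc $\bbip_1\bbip_2$ is black; since the unique edge at each $\bbip_i$ separates the two faces touching the arcs $\bbip_{i-1}\bbip_i$ and $\bbip_i\bbip_{i+1}$, iterating around the boundary forces the arc-face $\bbip_i\bbip_{i+1}$ to be black exactly when $i$ is odd.

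Next, I would define the orientation of $\Gmed$ by the rule that each edge is oriented with its black face on the right. At an interior vertex $v$, the four surrounding faces alternate in color around $v$; a short local check shows that this translates into an alternating in-out-in-out pattern of orientations on the four incident edges, giving (2). At a boundary vertex $\bbip_i$, the counterclockwise direction along the boundary corresponds to the right-hand side when looking from $\bbip_i$ into the disk, so the arc toward $\bbip_{i+1}$ lies on the right of the unique edge when oriented from $\bbip_i$ inward; the black-on-right rule then orients the edge outward for $i$ odd (making $\bbip_i$ a source) and inward for $i$ even (making $\bbip_i$ a sink), giving (1).

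For uniqueness, I would reverse the construction: given any orientation satisfying (1)--(2), declare a face black if it lies on the right of any one of its bounding edges. The alternating condition at each interior vertex $v$ ensures that whenever two edges of a common face meet at $v$, they place that face on the same side, so the coloring is well-defined; the boundary condition then forces the color of the face adjacent to the arc $\bbip_1\bbip_2$ and hence fixes the entire coloring and the orientation. The main conceptual ingredient is the homotopy argument in the first step, which relies crucially on the simply-connectedness of the disk; a minor subtlety is that uniqueness requires every connected component of $\Gmed$ to contain a boundary vertex, which we may assume without loss of generality in the setting of the paper.
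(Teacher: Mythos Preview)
Your proof is correct and follows essentially the same approach as the paper: both arguments construct the orientation from a bipartite 2-coloring of the faces via the rule ``black on the right'' (equivalently, ``orient black faces clockwise''). You supply more detail than the paper on why the 2-coloring exists and on uniqueness, and your global crossing-parity coloring handles the disconnected case a bit more cleanly than the paper's per-component count; you also correctly flag the subtlety about components without boundary vertices, which the paper's terse proof leaves implicit.
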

\begin{proof}
If $\Gmed$ is connected then it is easy to see that there are just two orientations satisfying the second condition, since we can color the faces of $\Gmed$ in a bipartite way and then orient all black faces clockwise and all white faces counterclockwise, or vice versa. One of these two orientations will satisfy the first condition. If $\Gmed$ has $C$ connected components then there are $2^C$ orientations of $G$ satisfying the second condition, but there will still be one of them that satisfies the first condition, because the number of vertices of each connected component is even.
\end{proof}

\newcommand\ewh[1]{#1^\circ}
\newcommand\ebl[1]{#1^\bullet}

\begin{definition}\label{dfn:medial_to_plabic}
Given a medial network $\Nmed=(\Gmed,\Jmed)$, the associated weighted plabic graph $(\Gbip,\wt)$ is constructed as follows. First, orient the edges of $\Gmed$ as in Proposition~\ref{prop:orient}, and then for each oriented edge $e$ of $\Gmed$, put a white vertex $\ewh e$ of $\Gbip$ close to the source of $e$ and a black vertex $\ebl e$ of $\Gbip$ close to the target of $e$. If the source (resp., the target) of $e$ is a boundary vertex $\bbip_i$ then we set $\ewh e:=\bbip_i$ (resp., $\ebl e:=\bbip_i$). Now, for each edge $e\in\Ebip$ of $\Gbip$, connect $\ebl e$ and $\ewh e$ by an edge of $\Gbip$, and set its weight to $1$: $\wt(\{\ebl e,\ewh e\}):=1$. Additionally for every interior vertex $v\in\Vint$ of $\Gbip$ incident to edges $e_1,e_2,e_3,e_4\in\Ebip$ in counterclockwise order so that $v$ is the target of $e_1$ and $e_3$ and the source of $e_2$ and $e_4$, add four edges $\{\ebl{e_1},\ewh{e_2}\}$, $\{\ewh{e_2},\ebl{e_3}\}$, $\{\ebl{e_3},\ewh{e_4}\}$, $\{\ewh{e_4},\ebl{e_1}\}$ to $\Gbip$. The weights of these edges are given by
\begin{equation}\label{eq:wt_ebl_ewh}
\wt(\{\ebl{e_1},\ewh{e_2}\})=\wt(\{\ebl{e_3},\ewh{e_4}\}):=s_v,\quad \wt(\{\ewh{e_2},\ebl{e_3}\})=\wt(\{\ewh{e_4},\ebl{e_1}\}):=c_v,
\end{equation}
where $s_v$ and $c_v$ are given by~\eqref{eq:sinh_tanh}, that is,
\[
  s_v:=\sech(2\Jmed_v)=\frac{2}{\exp(2\Jmed_v)+\exp(-2\Jmed_v)};\quad c_v:=\tanh(2\Jmed_v)=\frac{\exp(2\Jmed_v)-\exp(-2\Jmed_v)}{\exp(2\Jmed_v)+\exp(-2\Jmed_v)}.\]
This defines a weighted plabic graph $(\Gbip,\wt)$ associated to the medial network $\Nmed=(\Gmed,\Jmed)$.
\end{definition}

Recall that for a medial graph $\Gmed$, the corresponding medial pairing $\medpa_\Gmed=\{\{i_1,j_1\},\dots,\{i_n,j_n\}\}$ is a matching on $[2n]$. We define a permutation $\pidec_\Gmed:[2n]\to[2n]$ by setting $\pidec_\Gmed(i_k):=j_k$ and $\pidec_\Gmed(j_k):=i_k$ for all $k\in[2n]$. Thus $\pidec_\Gmed$ is a fixed-point free involution.

\begin{lemma}\label{lemma:reduced_reduced}
A medial graph $\Gmed$ is reduced if and only if the corresponding plabic graph $\Gbip$ is reduced. We have $\pidec_\Gmed=\pidec_\Gbip$.
\end{lemma}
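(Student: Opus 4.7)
The plan is to prove both assertions by a single local correspondence: at each interior vertex $v$ of $\Gmed$, the medial strands passing through $v$ match up with $\Gbip$-strands (in the sense of Definition~\ref{dfn:pidec}) passing through the 4-cycle that Definition~\ref{dfn:medial_to_plabic} attaches to $v$. Globally, this will yield a bijection between medial strands of $\Gmed$ and boundary-to-boundary strands of $\Gbip$ that preserves endpoints and the sequence of interior $\Gmed$-vertices traversed, from which $\pidec_\Gmed=\pidec_\Gbip$ is immediate, and the three conditions defining a reduced $\Gbip$ will translate term-by-term into the three conditions defining a reduced $\Gmed$.

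The core is a local computation at a fixed $v \in \Vint$ with incident edges $e_1, e_2, e_3, e_4$ in counterclockwise order around $v$; by Proposition~\ref{prop:orient} we may assume $v$ is the target of $e_1, e_3$ and the source of $e_2, e_4$. The corresponding 4-cycle in $\Gbip$ has vertices $\ebl{e_1}, \ewh{e_2}, \ebl{e_3}, \ewh{e_4}$ together with four outward weight-$1$ straight edges. I would place $v$ at the origin with $e_k$ along the ray at angle $(k-1)\pi/2$, which makes the cyclic edge order at each cycle vertex explicit. Tracing a strand that enters the cycle via the straight edge on $e_1$: at $\ebl{e_1}$ (CCW order $\ewh{e_1}, \ewh{e_2}, \ewh{e_4}$) the maximally-right turn sends it to $\ewh{e_4}$; at $\ewh{e_4}$ (CCW order $\ebl{e_1}, \ebl{e_3}, \ebl{e_4}$) the maximally-left turn sends it to $\ebl{e_3}$; at $\ebl{e_3}$ (CCW order $\ewh{e_2}, \ewh{e_3}, \ewh{e_4}$) the maximally-right turn sends it out along the straight edge to $\ewh{e_3}$. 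Hence the $\Gbip$-strand entering via $e_1$ exits via $e_3$, mirroring the ``go straight across $v$'' rule for medial strands. The analogous computations for entries along $e_2, e_3, e_4$ establish the rule ``enter along $e_k$, exit along $e_{k+2 \bmod 4}$'' in all cases.

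Chaining these local matches gives the promised global bijection, and hence $\pidec_\Gmed = \pidec_\Gbip$. For the reducedness equivalence, the bijection shows that a closed medial strand of $\Gmed$ corresponds to a closed $\Gbip$-strand and vice versa, while an interior edge of $\Gmed$ lying on no boundary-to-boundary medial strand forces a closed strand passing through it; thus ``every edge of $\Gmed$ lies on a boundary-to-boundary strand'' and ``$\Gbip$ has no closed strand'' are equivalent. Similarly, a self-intersecting medial strand (revisiting some $v$) corresponds exactly to a self-intersecting $\Gbip$-strand (revisiting the associated 4-cycle). For double crossings, the two local strands through the 4-cycle at $v$ share exactly two cycle vertices in opposite orders, so a single $\Gmed$-crossing at $v$ never produces a bad double crossing in $\Gbip$; one then checks that conversely two medial strands crossing at two distinct interior vertices of $\Gmed$ must visit some pair of shared cycle vertices in a consistent order, which is a bad double crossing in $\Gbip$.

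The main obstacle will be the planarity bookkeeping in the local analysis: reading off the cyclic edge order at each of the four cycle vertices from the CCW order of $e_1, e_2, e_3, e_4$ around $v$ together with the source/target designations from Proposition~\ref{prop:orient}, so that the ``maximally right'' and ``maximally left'' rules select the correct outgoing edge at each step. Once the prototype case is nailed down, the remaining three entry directions at the 4-cycle and the global extension follow by the rotational symmetry of the construction.
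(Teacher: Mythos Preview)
Your approach is correct and is precisely what the paper has in mind: its proof is a single sentence, ``This is straightforward to check from the definitions, since the medial strands correspond to the strands in $\Gbip$ from Definition~\ref{dfn:pidec}.'' You are simply supplying the local computation behind that sentence, and your trace through the $4$-cycle (entering along $e_1$, passing $\ebl{e_1}\to\ewh{e_4}\to\ebl{e_3}$, exiting along $e_3$) is accurate, as is the observation that the two strands through a single $4$-cycle share their common vertices in opposite orders and hence do not form a bad double crossing.
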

\begin{proof}
This is straightforward to check from the definitions, since the medial strands correspond to the strands in $\Gbip$ from Definition~\ref{dfn:pidec}.
\end{proof}

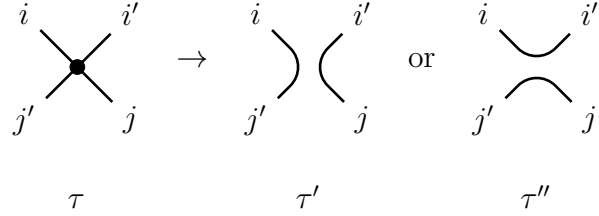
\begin{figure}
  \def\lw{1pt}
  \def\rc{10}
  \begin{tabular}{ccccc}
  \begin{tikzpicture}[baseline=(zero.base)]
    \node[draw,circle,fill=black,scale=0.5] (zero) at (0,0) {};
  \node  (A) at (45:1) {$i'$};
  \node  (B) at (135:1) {$i$};
  \node  (C) at (-135:1) {$j'$};
  \node  (D) at (-45:1) {$j$};
  \draw[line width=\lw] (A)--(C);
  \draw[line width=\lw] (B)--(D);
\end{tikzpicture}
    & $\to$ & 
  \begin{tikzpicture}[baseline=(zero.base)]
    \coordinate (zero) at (0,0);
  \node  (A) at (45:1) {$i'$};
  \node  (B) at (135:1) {$i$};
  \node  (C) at (-135:1) {$j'$};
  \node  (D) at (-45:1) {$j$};
  \draw[line width=\lw,rounded corners=\rc] (A) -- (zero) -- (D);
  \draw[line width=\lw,rounded corners=\rc] (C) -- (zero) -- (B);
\end{tikzpicture}
    & or &
  \begin{tikzpicture}[baseline=(zero.base)]
    \coordinate (zero) at (0,0);
  \node  (A) at (45:1) {$i'$};
  \node  (B) at (135:1) {$i$};
  \node  (C) at (-135:1) {$j'$};
  \node  (D) at (-45:1) {$j$};
  \draw[line width=\lw,rounded corners=\rc] (A) -- (zero) -- (B);
  \draw[line width=\lw,rounded corners=\rc] (C) -- (zero) -- (D);
\end{tikzpicture}\\
    &&&&\\
    $\medpa$ &  & $\medpa'$ & & $\medpa''$\\
  \end{tabular}
  \caption{\label{fig:uncrossing}Uncrossing the pairs $\{i,j\}$ and $\{i',j'\}$.}
\end{figure}
\def\proj{{S^\times}}
\begin{proposition}\label{prop:matchings_complements}
Given a medial network $\Nmed=(\Gmed,\Jmed)$, let $(\Gbip,\wt)$ be the corresponding weighted plabic graph. Then $\Meast(\Gbip,\wt)$ yields an element of $\pc_\Mcal\cap \OGtnn(n,2n)$, where $\Mcal=\Mcal_{\pidec_\Gmed}$ is the positroid corresponding to the fixed-point free involution $\pidec_\Gmed$.
\end{proposition}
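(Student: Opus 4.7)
The plan is to prove Proposition~\ref{prop:matchings_complements} by treating separately the two assertions that $X := \Meast(\Gbip, \wt)$ lies in the positroid cell $\pc_\Mcal$ and that $X$ lies in $\OG(n,2n)$. The first assertion is immediate from Theorem~\ref{thm:Meas}, which places $X$ in $\pc_{\Mcal_\Gbip} \subset \Grtnn(n,2n)$, combined with Lemma~\ref{lemma:reduced_reduced}, which gives $\pidec_\Gbip = \pidec_\Gmed$ and hence $\Mcal_\Gbip = \Mcal$.

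For the orthogonality condition $\Delta_I(X) = \Delta_{[2n]\setminus I}(X)$, the dimer formula $\Delta_I(X) = \sum_{\match:\partial(\match) = I} \wt(\match)$ of Theorem~\ref{thm:Meas} reduces the problem to exhibiting a weight-preserving involution on the set of external-edge profiles of almost perfect matchings whose effect on boundaries is complementation in $[2n]$.

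Call an edge of $\Gbip$ \emph{external} if it has weight $1$, so that all remaining edges lie in one of the $4$-cycles $Q_v$ (for $v$ an interior vertex of $\Gmed$), with weights $s_v, c_v, s_v, c_v$ alternating around $Q_v$; let $E_{\mathrm{ext}}$ denote the set of external edges. For a matching $\match$ and each $v$, set $S_v = S_v(\match) \subseteq \{1,2,3,4\}$ to be the subset of vertices of $Q_v$ whose incident external edge lies in $\match$. A parity count forces $|S_v|$ to be even. Once the external edge set $E^{\mathrm{ext}}(\match) := \match \cap E_{\mathrm{ext}}$ is fixed, the sum of $\wt(\match)$ over all compatible matchings factorizes as $\prod_v W_v(S_v)$, where a direct enumeration gives $W_v(S_v) = s_v^2 + c_v^2 = 1$ when $|S_v| = 0$ (summing the two internal configurations that use the opposite $s_v$- or $c_v$-edges), $W_v(S_v) = 1$ when $|S_v| = 4$, $W_v(S_v) \in \{s_v, c_v\}$ when $S_v$ is a pair of adjacent vertices of $Q_v$ (corresponding to using the unique internal edge between the remaining two vertices), and $W_v(S_v) = 0$ when $S_v$ is a pair of opposite vertices. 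The involution $E^{\mathrm{ext}} \mapsto E_{\mathrm{ext}} \setminus E^{\mathrm{ext}}$ complements $S_v$ at every square simultaneously. A case-by-case check---invoking the identity $s_v^2 + c_v^2 = 1$ in the $|S_v| \in \{0,4\}$ exchange, and using the alternation of weights around $Q_v$ to see that each adjacent pair maps to another adjacent pair whose associated square-edge carries the \emph{same} weight---verifies that $W_v$ is invariant. Simultaneously, toggling every external edge incident to a boundary vertex flips the matched status of each $d_i$, hence sends $\partial(\match)$ to $[2n] \setminus \partial(\match)$. Summing over external edge sets therefore yields $\Delta_I(X) = \Delta_{[2n]\setminus I}(X)$, so $X \in \OG(n,2n)$; combined with $X \in \Grtnn(n,2n)$ from Theorem~\ref{thm:Meas}, this gives $X \in \OGtnn(n,2n)$.

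The main obstacle is the local bookkeeping at each square $Q_v$: one must correctly track which pair of opposite edges has weight $s_v$ versus $c_v$, and verify that when $|S_v| = 2$ is adjacent, the unique forced internal edge between the remaining two vertices carries a weight that is preserved under the complementation $S_v \mapsto \{1,2,3,4\}\setminus S_v$. Once this alternating assignment is correctly recorded, the entire argument collapses onto the single identity $s_v^2 + c_v^2 = 1$ governing the $|S_v| = 0 \leftrightarrow |S_v| = 4$ exchange.
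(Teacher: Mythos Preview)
Your proof is correct and follows essentially the same approach as the paper's. The paper also invokes Theorem~\ref{thm:Meas} and Lemma~\ref{lemma:reduced_reduced} for the positroid part, and for the orthogonality part groups almost perfect matchings by which weight-$1$ edges they contain (the paper encodes this as $\proj(\match)\subset\Emed$, exactly your external-edge profile), observes that the residual weighted sum factors as a product over interior vertices of $\Gmed$ with local factor in $\{0,1,s_v,c_v,s_v^2+c_v^2\}$, and checks invariance under complementation using $s_v^2+c_v^2=1$.
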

\begin{proof}

By Lemma~\ref{lemma:reduced_reduced}, we have $\pidec_\Gmed=\pidec_\Gbip$. Let $X\in\Grtnn(n,2n)$ be the point given by $\Meast(\Gbip,\wt)$. By Theorem~\ref{thm:Meas}, $X$ belongs to $\pc_\Mcal$, and it remains to show that $X$ belongs to $\OGtnn(n,2n)$. Given an almost perfect matching $\match$ of $\Gbip$, let us define $\proj(\match)\subset \Emed$ to be the set of edges $e$ of $\Gmed$ such that the edge $\{\ebl e,\ewh e\}$ of $\Gbip$ belongs to $\match$. We claim that for all sets $R\subset\Emed$ of edges of $\Gmed$, we have
  \begin{equation}\label{eq:sum_over_matchings_Emed}
\sum_{\proj(\match)=R}\wt(\match)= \sum_{\proj(\match)=\Emed\setminus R}\wt(\match),
  \end{equation}
  where the sums are over almost perfect matchings of $\Gbip$. The left hand side of~\eqref{eq:sum_over_matchings_Emed} is equal to the product over all interior vertices $v\in\Vint$ of $\Gmed$ of $\insideprod(v)$, where $\insideprod(v)$ is equal to either $c_e$, $s_e$, $c_e^2+s_e^2$, $1$, or $0$, depending on which of the four edges of $\Gmed$ adjacent to $v$ belong to $R$. It is clear from Figure~\ref{fig:Gbip_matchings} that replacing $R$ with its complement does not affect this product. (The only non-trivial change is replacing $c_e^2+s_e^2$ with $1$, but recall that we have $c_e^2+s_e^2=1$ by construction.) This proves~\eqref{eq:sum_over_matchings_Emed}, and clearly if two almost perfect matchings $\match, \match'$ of $\Gbip$ satisfy $\proj(\match')=\Emed\setminus \proj(\match)$ then they also satisfy $\partial(\match')=[2n]\setminus\partial(\match)$, finishing the proof.
\end{proof}

For a matching $\medpa$ on $[2n]$, let $\Mcal_\medpa$ be the positroid corresponding to the fixed-point free involution $\pidec:[2n]\to [2n]$ associated with $\medpa$, and denote by $\pc_\medpa:=\pc_{\Mcal_\medpa}$. Following~\cite{Lam}, denote by $\P_n$ the partially ordered set (\emph{poset}) of all matchings $\medpa$ on $[2n]$. (It is easy to see that $\P_n$ has $\frac{(2n)!}{n!2^n}$ elements.) The covering relations of $\P_n$ are described as follows. Given a matching $\medpa$ on $[2n]$, suppose that the pairs $\{i,j\},\{i',j'\}\in\medpa$ form a crossing, as in Definition~\ref{dfn:xing}. Introduce two matchings
\begin{equation}\label{eq:two_uncrossings}
\begin{split}
\medpa'&:=\medpa\setminus\{\{i,j\},\{i',j'\}\}\cup\{\{i,j'\},\{i',j\}\};\\
\medpa''&:=\medpa\setminus\{\{i,j\},\{i',j'\}\}\cup\{\{i,i'\},\{j,j'\}\}.
\end{split}
\end{equation}

\begin{definition}\label{dfn:P_n}
  We say that $\medpa'$ and $\medpa''$ are obtained from $\medpa$ by \emph{uncrossing the pairs $\{i,j\}$ and $\{i',j'\}$} (see Figure~\ref{fig:uncrossing}). In addition, if $\xing(\medpa')+1=\xing(\medpa)$ (resp., $\xing(\medpa'')+1=\xing(\medpa)$), we write $\medpa'\lessdot \medpa$ (resp., $\medpa''\lessdot\medpa$), and let $\P_n$ be the poset whose order relation $\leq$ is the transitive closure of $\lessdot$.
\end{definition}
\begin{remark}\label{rmk:uncrossing}
  Equivalently, as explained in~\cite[Section~4.5]{Lam}, given a medial graph $\Gmed$ with medial pairing $\medpa$, we have $\medpa'\lessdot \medpa$ if and only if  ``uncrossing'' the unique vertex $v\in\Vint$ of $\Gmed$ that belongs to the intersection of medial strands connecting $\bbip_i$ to $\bbip_j$ and $\bbip_{i'}$ to $\bbip_{j'}$ yields a \emph{reduced} medial graph with medial pairing $\medpa'$. Here \emph{uncrossing} an interior vertex of a medial graph means replacing its neighborhood in one of the two ways shown in Figure~\ref{fig:uncrossing}.
\end{remark}

\begin{figure}

\def\nodescl{0.4}
\def\sclbx{0.8}
\def\tikzscl{0.6}
\newcommand\matchinggg[9]{
\scalebox{\sclbx}{
\begin{tikzpicture}[scale=\tikzscl]
\draw[line width=1.0,opacity=0.3,dashed] (0,0) circle (1);
\node[draw,circle,fill=white,scale=\nodescl] (1) at (0:1) {$1$};
\node[draw,circle,fill=white,scale=\nodescl] (2) at (60:1) {$2$};
\node[draw,circle,fill=white,scale=\nodescl] (3) at (120:1) {$3$};
\node[draw,circle,fill=white,scale=\nodescl] (4) at (180:1) {$4$};
\node[draw,circle,fill=white,scale=\nodescl] (5) at (-120:1) {$5$};
\node[draw,circle,fill=white,scale=\nodescl] (6) at (-60:1) {$6$};
\draw[line width=1.0] (#1) to[bend right=#7] (#2);
\draw[line width=1.0] (#3) to[bend right=#8] (#4);
\draw[line width=1.0] (#5) to[bend right=#9] (#6);
\end{tikzpicture}}
}

\newcommand\matchingg[7]{
\matchinggg{#1}{#2}{#3}{#4}{#5}{#6}{#7}{#7}{#7}
}
\newcommand\matchingb[6]{
\matchinggg{#1}{#2}{#3}{#4}{#5}{#6}{\bnd}00
}
\newcommand\matching[6]{
\matchingg{#1}{#2}{#3}{#4}{#5}{#6}{0}
}
\renewcommand\edge[4]{
\draw[line width=0.5] (#1#3.north)--(#2#4.south);
}

\def\levl{2.2}
\def\levll{4.4}
\def\levlll{6.6}
\def\bnd{20}
\begin{tikzpicture}
\node (1A) at (-4,0) {\matchingg 165432{\bnd}};
\node (1B) at (-2,0) {\matchingg 123456{-\bnd}};
\node (1C) at (0,0) {\matchinggg 143265{0}{\bnd}{\bnd}};
\node (1D) at (2,0) {\matchinggg 213654{\bnd}0{\bnd}};
\node (1E) at (4,0) {\matchinggg 162543{\bnd}0{\bnd}};
\node (2A) at (-5,\levl) {\matchingb 162435};
\node (2B) at (-3,\levl) {\matchingb 651324};
\node (2C) at (-1,\levl) {\matchingb 541326};
\node (2D) at (1,\levl) {\matchingb 431526};
\node (2E) at (3,\levl) {\matchingb 321546};
\node (2F) at (5,\levl) {\matchingb 213546};
\node (3A) at (-2,\levll) {\matching 132546};
\node (3B) at (0,\levll) {\matching 152436};
\node (3C) at (2,\levll) {\matching 142635};
\node (4A) at (0,\levlll) {\matchingg 145236{30}};
\edge 12AA
\edge 12AC
\edge 12AE
\edge 12BB
\edge 12BD
\edge 12BF
\edge 12CB
\edge 12CE
\edge 12DC
\edge 12DF
\edge 12EA
\edge 12ED
\edge 23AB
\edge 23AC
\edge 23BA
\edge 23BB
\edge 23CA
\edge 23CC
\edge 23DB
\edge 23DC
\edge 23EA
\edge 23EB
\edge 23FA
\edge 23FC
\edge 34AA
\edge 34BA
\edge 34CA
\end{tikzpicture}

  \caption{\label{fig:P_3}The Hasse diagram of the poset $\P_3$.}
\end{figure}

By~\cite[Lemma~4.13]{Lam}, the poset $\P_n$ is graded with grading given by $\xing(\medpa)$, and by~\cite{HK,LamEuler}, $\P_n$ is a shellable Eulerian poset. See Figure~\ref{fig:P_3} for the case $n=3$.

We are now ready to state the main result of this section.

\begin{theorem}\label{thm:Gmed_parametrization}\leavevmode
\begin{enumerate}[(i)]
\item\label{item:Gmed_param:homeo}     Given a reduced medial graph $\Gmed$, let $\Gbip$ be the corresponding plabic graph with positroid $\Mcal:=\Mcal_\Gbip$. Then the map $\Jmed\mapsto \Meast(\Gbip,\wt)$ is a homeomorphism between $\R_{>0}^{\Emed}$ and $\pc_{\Mcal}\cap \OGtnn(n,2n)$.
\item\label{item:Gmed_param:disjoint} The set $\OGtnn(n,2n)$ is a disjoint union of cells
  \begin{equation}\label{eq:disjoint_OG}
\OGtnn(n,2n)=\bigsqcup_{\medpa\in\P_n}\left(\pc_\medpa\cap \OGtnn(n,2n)\right),
  \end{equation}
  and each cell $\pc_\medpa\cap \OGtnn(n,2n)$ is homeomorphic to  $\R^{\xing(\medpa)}$.
\item\label{item:Gmed_param:closures} For $\medpa\in\P_n$, the closure of the cell $\pc_\medpa\cap \OGtnn(n,2n)$ in $\Gr(n,2n)$ equals
  \begin{equation}\label{eq:closures}
\overline{\left(\pc_\medpa\cap \OGtnn(n,2n)\right)}= \bigsqcup_{\sigma\in\P_n:\sigma\leq\medpa}\left(\pc_\sigma\cap \OGtnn(n,2n)\right).
  \end{equation}
\end{enumerate}
\end{theorem}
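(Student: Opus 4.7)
The plan proceeds in three stages matching the three parts of the theorem, with the surjectivity in Part~(i) being the main technical obstacle.

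\textbf{Stage 1 (Part (i), setup and injectivity of $\Phi$).} By Lemma~\ref{lemma:reduced_reduced}, $\Gbip$ is a reduced plabic graph with $\pidec_\Gbip=\pidec_\Gmed$, so its positroid equals $\Mcal=\Mcal_{\medpa_\Gmed}$. Composing Theorem~\ref{thm:Meast} with Proposition~\ref{prop:matchings_complements}, the assignment $\Phi\colon \Jmed\mapsto \Meast(\Gbip,\wt(\Jmed))$ is a well-defined continuous map $\R_{>0}^{\Emed}\to \pc_\Mcal\cap\OGtnn(n,2n)$. For injectivity, observe that around each interior vertex $v\in\Vint$ of $\Gmed$, the plabic graph $\Gbip$ contains a $4$-cycle carrying the alternating weights $s_v,c_v,s_v,c_v$; the alternating product $(s_v/c_v)^2$ of these weights around the cycle is gauge-invariant (since any rescaling of an interior vertex of $\Gbip$ multiplies two adjacent edges by a common factor that cancels in the alternating product). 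Combined with the defining identity $s_v^2+c_v^2=1$ this recovers $s_v,c_v\in(0,1)$ uniquely, and hence $\Jmed_v$ by the strict monotonicity of $\sech$ and $\tanh$. So $\Phi$ is a continuous injection; upgrading to a homeomorphism onto its image uses that $\Meast$ is itself a homeomorphism modulo gauge (Theorem~\ref{thm:Meast}).

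\textbf{Stage 2 (Part (i), surjectivity -- the main obstacle).} To show every $X\in\pc_\Mcal\cap\OGtnn(n,2n)$ lies in the image of $\Phi$, I would induct on $\xing(\medpa)$. The base case $\xing(\medpa)=0$ (noncrossing matchings) reduces to a direct check that $\pc_\Mcal\cap\OGtnn(n,2n)$ is a single point, using the explicit form of the matroid and the relations $\Delta_I=\Delta_{[2n]\setminus I}$ combined with the Pl\"ucker relations (the $n=2$ noncrossing case worked out in Section~\ref{sec:embedding} is the prototype). The inductive step is the main difficulty. A natural attempt is to invoke BCFW bridge removal (Theorem~\ref{thm:bridge_removal} and Lemma~\ref{lemma:bridge_removal}) to reduce to a smaller plabic graph. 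The subtlety is that a single bridge removal $X\mapsto X\cdot x_i(-t)$ or $X\mapsto X\cdot y_i(-t)$ does \emph{not} in general preserve $\OG(n,2n)$, because the matrices $x_i(t)$ and $y_i(t)$ are not in the split orthogonal group $O(n,n)$ and because it sends a fixed-point free involution decorated permutation to one with fixed points. Instead, one must remove bridges in appropriate pairs whose product realizes an inverse hyperbolic rotation $g_{\kk}(t)^{-1}$ of Section~\ref{sec:inverse_problem_intro}: this matrix preserves the form $\eta$ (a short verification using $s_v^2+c_v^2=1$), hence lies in $O(n,n)$ and preserves $\OG(n,2n)$. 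Identifying this paired bridge removal with the single uncrossing operation on $\Gmed$ from Remark~\ref{rmk:uncrossing} that yields $\medpa'\lessdot\medpa$ is the key technical step; concretely, the two bridges to be removed are the ones incident to the boundary that together ``peel off'' one interior vertex of $\Gmed$. Using Theorem~\ref{thm:bridge_removal} to read the corresponding $t$ from the Pl\"ucker coordinates of $X$ and setting $X':=X\cdot g_{\kk}(t)^{-1}\in\pc_{\Mcal'}\cap\OGtnn(n,2n)$, the inductive hypothesis provides $\Jmed'$ on $\Gmed'$ with $\Phi_{\Gmed'}(\Jmed')=X'$, and extending $\Jmed'$ by the value of $\Jmed_v$ encoded by $t$ gives $\Jmed$ with $\Phi(\Jmed)=X$.

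\textbf{Stage 3 (Parts (ii) and (iii)).} Part~(ii) is then immediate: by Proposition~\ref{prop:involution}, the positroid of any $X\in\OGtnn(n,2n)$ has decorated permutation equal to a fixed-point free involution, hence corresponds to a unique matching $\medpa$; combined with Lemma~\ref{lemma:Gmed_exists} and the stratification~\eqref{eq:disjoint_Gr}, this gives the disjoint decomposition~\eqref{eq:disjoint_OG}, and by Part~(i) each cell $\pc_\medpa\cap\OGtnn(n,2n)$ is homeomorphic to $\R_{>0}^{\xing(\medpa)}\cong\R^{\xing(\medpa)}$. For Part~(iii), I would use that the closure of $\pc_\Mcal$ in $\Gr(n,2n)$ decomposes as $\bigsqcup_{\Mcal'\leq\Mcal}\pc_{\Mcal'}$ in the standard positroid order, and then observe that intersecting with $\OGtnn(n,2n)$ restricts this order to fixed-point free involution positroids, where by the same paired-bridge / uncrossing dictionary of Remark~\ref{rmk:uncrossing} the induced order on matchings coincides with $\P_n$. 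This yields~\eqref{eq:closures}.
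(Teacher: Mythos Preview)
Your overall architecture matches the paper's proof closely: injectivity via the gauge-invariant ratio at each square, surjectivity by peeling off a boundary crossing and inducting, and Parts~(ii)--(iii) via Proposition~\ref{prop:involution} and the known closure order restricted to fixed-point free involutions. The injectivity argument and Parts~(ii)--(iii) are essentially identical to the paper's.

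There is, however, a genuine gap in Stage~2. You read $s$ from the first bridge via Theorem~\ref{thm:bridge_removal}, set $t$ so that $s_e=\sech(2t)=s$, and then assert $X':=X\cdot g_{\kk}(t)^{-1}\in\pc_{\Mcal'}\cap\OGtnn(n,2n)$. Your verification that $g_{\kk}(t)\in O(n,n)$ gives $X'\in\OG(n,2n)$ (modulo a connectedness argument to rule out $\OG_-(n,2n)$), but it says nothing about total nonnegativity. Factoring $g_{\kk}(t)^{-1}=x_{\kk}(-s)\cdot D_{\kk}(c)\cdot y_{\kk}(-s)$, the first two factors preserve $\Grtnn(n,2n)$ by standard bridge removal and positive rescaling, but the third factor $y_{\kk}(-s)$ preserves $\Grtnn(n,2n)$ only if $s$ is the \emph{actual} weight of the second (inner) bridge after rescaling. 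A priori that inner bridge weight is some $s'$ depending on the arbitrary gauge-equivalence class of $\wt$, and you have not shown $s'=s$. Equally, you need $s<1$ even to define $t$, and this is not immediate from Theorem~\ref{thm:bridge_removal} alone.

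This is exactly the computation the paper carries out and which you have labeled ``the key technical step'' without performing: writing the two bridge weights as ratios of Pl\"ucker coordinates $h_I,h_J$ and using the orthogonality relations $h_I(w)=h_J(u)$, $h_I(u)=h_J(w)$ (coming from $\Delta_I(X)=\Delta_{[2n]\setminus I}(X)$) to force $s'=\dfrac{c^2 s}{1-s^2}$, whence $s'=s$ iff $s^2+c^2=1$. Once this is in place, your inductive step goes through and $X'\in\Grtnn(n,2n)$ follows; without it, the assertion $X'\in\OGtnn(n,2n)$ is unjustified. In short, your outline is correct, but the one step you flag as ``key'' is precisely the nontrivial content of the proof, and it cannot be replaced by the observation that $g_{\kk}(t)\in O(n,n)$.
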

\begin{proof}
 As we have shown in Proposition~\ref{prop:involution}, for every $X\in\OGtnn(n,2n)$, the decorated permutation $\pidec=\pidec_{\Mcal_X}$ associated with the positroid $\Mcal_X$ of $X$ is a fixed-point free involution, and thus~\eqref{eq:disjoint_OG} follows from~\eqref{eq:disjoint_Gr}. The remainder of part~\eqref{item:Gmed_param:disjoint} (that each cell is an open ball) follows from part~\eqref{item:Gmed_param:homeo}, which we prove now. Thus we fix a reduced medial graph $\Gmed$ and the corresponding plabic graph $\Gbip$, reduced by Lemma~\ref{lemma:reduced_reduced}. Let $\psi:\R_{>0}^{\Emed}\to \pc_{\Mcal}\cap \OGtnn(n,2n)$ be the map that sends $\Jmed\mapsto \Meast(\Gbip,\wt)$. We first show that $\psi$ is injective. By Theorem~\ref{thm:Meast}, it suffices to show that the map $\Jmed:\Vint\to\R_{>0}$ can be reconstructed from the corresponding weight function $\wt\in\R_{>0}^\Ebip/\Gauge$. Fix an interior vertex $v\in\Vint$ of $\Gmed$ and consider the corresponding four vertices $v_1,v_2,v_3,v_4\in\Vbip$ of $\Gbip$ on the four edges of $\Gmed$ incident to $v$. Let $e_{12}$, $e_{23}$, $e_{34}$, $e_{14}$ be the four edges of $\Gbip$ forming a square around $v$. We have $\wt(e_{12})=\wt(e_{34})=s_v$ and $\wt(e_{23})=\wt(e_{14})=c_v$, as in~\eqref{eq:wt_ebl_ewh}. Suppose that we have applied a gauge transformation to $\wt$ obtaining another weight function $\wt'$. Thus we have rescaled all edges adjacent to the vertex $v_k$ by some number $t_k\in\R_{>0}$ for $1\leq k\leq 4$. Therefore
  \[\wt'(e_{12})=t_1t_2s_v,\quad \wt'(e_{34})=t_3t_4s_v,\quad \wt'(e_{23})=t_2t_3c_v,\quad \wt'(e_{14})=t_1t_4c_v.\]
  \def\Jmedprime{(\Jmed)'}
  In order for $\wt'$ to come from some other map $\Jmedprime:\Vint\to\R_{>0}$, we must have
  \[t_1t_2s_v=t_3t_4s_e=s_v',\quad t_2t_3c_v=t_1t_4c_e=c_v', \quad (s_v')^2+(c_v')^2=1,\]
  where $s_v'=\sech(2\Jmedprime_v)$ and $c_v'=\tanh(2\Jmedprime_v)$. But the above equations imply that $t_1=t_2=t_3=t_4=1$, and it follows that $\psi$ is injective.

Clearly $\psi$ is continuous, and we now prove that it is surjective, and that its inverse is also continuous. We need the following simple observation, whose proof we leave as an exercise to the reader.
\begin{lemma}\label{lemma:exists_crossing}
 Suppose that $\Gmed$ is a connected medial graph having at least one interior vertex. Then there exists an interior vertex $v\in\Vint$ and an index $i\in[2n]$ such that $v$ is connected in $\Gmed$ to both $\bbip_i$ and $\bbip_{i+1}$ (modulo $2n$).
\end{lemma}
Note that Proposition~\ref{prop:removable_spikes_edges} follows from Lemma~\ref{lemma:exists_crossing} as an immediate corollary.

We now return to the proof of Theorem~\ref{thm:Gmed_parametrization}, part~\eqref{item:Gmed_param:homeo}. Let $\medpa$ be a matching on $[2n]$, $\pidec$ be the corresponding fixed-point free involution, $\Mcal=\Mcal_\pidec$ be the corresponding positroid. Choose a reduced medial graph $\Gmed$ with medial pairing $\medpa_\Gmed=\medpa$ (which exists by Lemma~\ref{lemma:Gmed_exists}), and let $\Gbip$ be the associated plabic graph. If $\Gmed$ is not connected then each of its connected components contains an even number of vertices. Moreover, in this case $\Gbip$ induces the same partition of boundary vertices into connected components, and it is clear from the definition of the map $\Meast$ and Theorem~\ref{thm:Meas} that each minor of $\Meast(\Gbip,\wt)$ is a product of the individual minors for each of the connected components. Thus the problem naturally separates into several independent problems, one for each connected component of $\Gmed$, and in what follows, we assume that $\Gmed$ is connected.

Let $X\in\pc_\Mcal\cap\OGtnn(n,2n)$. By Theorem~\ref{thm:Meast}, there exists a weight function $\wt:\Ebip\to\R_{>0}$ such that $\Meast(\Gbip,\wt)=X$, and our goal is to show that there exists a unique function $\Jmed:\Vint\to\R_{>0}$ such that $\psi(\Jmed):\Ebip\to\R_{>0}$ is obtained from $\wt$ using gauge transformations. Choose $v\in\Vint$ and $i\in[2n]$ as in Lemma~\ref{lemma:exists_crossing}. Thus $\Gbip$ contains a removable bridge between $i$ and $i+1$ (modulo $2n$). Denote by $v_1$ and $v_2$ the vertices of $\Gbip$ adjacent to $\bbip_i$ and $\bbip_{i+1}$ respectively, and denote by $v_3$ and $v_4$ the other two vertices of $\Gbip$ so that $v_1,v_2,v_3,v_4$ surround $v$ in counterclockwise order. Applying gauge transformations to $v_1$ and $v_2$, we may assume that $\wt(\{v_1,\bbip_i\})=\wt(\{v_2,\bbip_{i+1}\})=1$. Let $s:=\wt(\{v_1,v_2\})>0$. Applying gauge transformations to $v_3$ and $v_4$, we may assume that $\wt(\{v_3,v_4\})=s$, and $\wt(\{v_1,v_4\})=\wt(\{v_2,v_3\})=c$ for some $c\in\R_{>0}$. Now, let $I:=\Imin_{i+1}(\Mcal)$ and $J:=[2n]\setminus I=\Imax_{i+1}(\Mcal)$. Thus $i+1\in I$ and $i\in J$. Choose some $n\times 2n$ matrix  representing $X$ and denote by $X_k\in\R^n$ its $k$-th column.  For $u\in\R^n$ and $k\in[2n]$, let $X(k\to u)$ denote the matrix obtained from $X$ by replacing its $k$-th column with $u$. We introduce linear functions $h_I,h_J:\R^n\to \R$ as follows:
\[h_I(u):=\Delta_I(X({i+1}\to u)),\quad h_J(u):=\Delta_J(X(i\to u)).\]
Denote $u:=X_i$ and $w:=X_{i+1}$. Since $X\in\OG(n,2n)$, we get $h_I(w)=h_J(u)$ and $h_I(u)=h_J(w)$. Let $(\Gbip',\wt')$ be obtained from $(\Gbip,\wt)$ by removing the bridge $\{v_1,v_2\}$, and let $X'\in\Grtnn(n,2n):=\Meast(\Gbip',\wt')$. By the first part of Theorem~\ref{thm:bridge_removal}, we have $s=h_I(w)/h_I(u)$. By Lemma~\ref{lemma:bridge_removal}, we have $(X')_i=u$ while  $(X')_{i+1}=w-su$. Now, after removing degree $2$ vertices\footnote{It is well known that adding/removing vertices of degree $2$ does not affect the result of $\Meast$, see~\cite[Section~4.5~(M2)]{LamCDM}.} $v_1$ and $v_2$ from $\Gbip'$ and denoting the resulting graph $\Gbip''$, each of the vertices $\bbip_i$ and $\bbip_{i+1}$ changes color and becomes adjacent to an edge of weight $\wt'(\{\bbip_i,v_4\})=\wt'(\{\bbip_{i+1},v_3\})=c$. Let us define $\wt''$ to be the same as $\wt'$ except that $\wt''(\{\bbip_i,v_4\})=\wt''(\{\bbip_{i+1},v_3\}):=1$, and let $X'':=\Meast(\Gbip'',\wt'')$. It is clear from the definition of $\Meast$ that $(X'')_i=cu$ and $(X'')_{i+1}=\frac1c(w-su)$. Finally, $\Gbip''$ has a removable bridge between $i$ and $i+1$ so by the second part of Theorem~\ref{thm:bridge_removal}, the weight $\wt''(\{v_3,v_4\})$ of this bridge must be equal to
\[\wt''(\{v_3,v_4\})=\frac{\Delta_{J}(X'')}{\Delta_{J\cup\{i+1\}\setminus\{i\}}(X'')}=\frac{h_J(cu)}{h_J \left(\frac1c(w-su)\right)},\]
  which after substituting $s:=h_I(w)/h_I(u)$, $h_J(w):=h_I(u)$, $h_J(u):=h_I(w)$,  and using the linearity of $h_I$, transforms into
  \[\wt''(\{v_3,v_4\})=\frac{c^2 h_I(u)h_I(w)}{h_I(u)^2-h_I(w)^2}.\]
By construction, $\wt''(\{v_3,v_4\})$ is equal to $s$, and thus after substituting $h_I(w):=sh_I(u)$ the above equation becomes
  \[s=\frac{c^2 s h_I(u)^2}{(1-s^2)h_I(u)^2}=\frac{c^2 s}{1-s^2},\]
  which is equivalent to $c^2+s^2=1$. Since we have $s,c>0$, it follows that $0<s,c<1$ and thus there exists a unique $t\in\R_{>0}$ satisfying $s=\sech(2t)$ and $c=\tanh(2t)$. Moreover, it is clear that $t$ depends continuously on the minors of $X$, since the denominators $\Delta_I(X)=\Delta_J(X)$ must be positive. Setting $\Jmed_v:=t$, we uncross (in the sense of Remark~\ref{rmk:uncrossing}) the interior vertex $v$ in $\Gmed$ so that the corresponding graph $\Gbip$ would be obtained by removing the bridges $\{v_1,v_2\}$ and $\{v_3,v_4\}$,  and proceed by induction, finishing the proof of  part~\eqref{item:Gmed_param:homeo} (and therefore of part~\eqref{item:Gmed_param:disjoint} as well).

  It remains to prove~\eqref{eq:closures}. There is a certain partial order (called the \emph{affine Bruhat order}) on the set of decorated permutations such that two decorated permutations $\pi,\sigma$ satisfy $\sigma\leq \pi$ if and only if the closure of the positroid cell $\pc_{\Mcal_{\pi}}$ contains $\pc_{\Mcal_{\sigma}}$. We have
  \[\Pi_{\Mcal_\pi}=\bigsqcup_{\sigma\leq \pi} \pc_{\Mcal_{\sigma}},\]
  see e.g.~\cite[Theorem~8.1]{LamCDM}. Moreover, the restriction of the affine Bruhat order to the set of fixed-point free involutions coincides with the poset $\P_n$ from Definition~\ref{dfn:P_n}. Thus we have
  \[\overline{\left(\pc_\medpa\cap \OGtnn(n,2n)\right)}\subset \bigsqcup_{\sigma\in\P_n:\sigma\leq\medpa}\left(\pc_\sigma\cap \OGtnn(n,2n)\right),\]
  and it remains to prove that the left hand side of~\eqref{eq:closures} contains the right hand side, i.e., that for all pairs $\sigma\leq\medpa$ in $\P_n$, the cell $\pc_\sigma\cap \OGtnn(n,2n)$ is contained inside the closure of $\pc_\medpa\cap \OGtnn(n,2n)$. Clearly it is enough to consider the case $\sigma\lessdot \medpa$. Then $\sigma$ is obtained from $\medpa$ by uncrossing some pairs $\{i,j\}$ and $\{i',j'\}$. Moreover, since $\Gmed$ is reduced, it contains a unique vertex $v\in\Vint$ which belongs to the medial strands connecting $i$ to $j$ and $i'$ to $j'$, and one of the two ways of uncrossing $v$ yields a reduced medial graph with medial pairing $\sigma$, see Remark~\ref{rmk:uncrossing}. But the two ways of uncrossing $v$ correspond to sending $\Jmed_v$ to either $0$ or $\infty$, or equivalently, sending either $s_v\to 1, c_v\to 0$ or $s_v\to 0,c_v\to 1$. By part~\eqref{item:Gmed_param:homeo}, we indeed see that $\pc_\sigma\cap \OGtnn(n,2n)$ is a subset of the closure of $\pc_\medpa\cap \OGtnn(n,2n)$, finishing the proof of Theorem~\ref{thm:Gmed_parametrization}.
\end{proof}

\def\fin{{\operatorname{fin}}}
\def\Efin{E_{\fin}}
\def\Gfin{G_{\fin}}
\def\Jfin{J_{\fin}}
\section{From the Ising model to the orthogonal Grassmannian}\label{sec:ising_to_OG}\label{sec:generalized_planar_Ising}
In this section, we study the relationship between the space $\Closure_n$ and the space $\OGtnn(n,2n)$. 

We start by slightly extending the notion of a planar Ising network so that contracting an edge in such a network would yield another such network. Throughout, we assume that a planar graph embedded in a disk has no \emph{loops} (i.e. edges connecting a vertex to itself) or interior vertices of degree $1$.
\begin{definition}\label{dfn:GPIN}
A \emph{generalized planar Ising network} is a pair $N=(G,J)$ where $G=(V,E)$ is a planar graph embedded in a disk and $J:E\to \R_{>0}\cup\{\infty\}$. We denote $E=\Efin\sqcup E_\infty$, where $E_\infty:=\{e\in E\mid J_e=\infty\}$. The \emph{Ising model} associated to $N$ is a probability measure on the space 
\[\{-1,1\}^{V/E_\infty}:=\{\sigma:V\to \{-1,1\}\mid \sigma_u=\sigma_v\text{ for all $\{u,v\}\in E_\infty$}\}.\]
The definitions of the probability $\Prob(\sigma)$ of a spin configuration $\sigma\in \{-1,1\}^{V/E_\infty}$, the partition function $\Zpart$, and a two-point boundary correlation $\<\sigma_i\sigma_j\>$ are  obtained from the corresponding definitions~\eqref{eq:dfn:Prob}, \eqref{eq:dfn:Zpart}, and~\eqref{eq:dfn:Corr} by replacing $\{-1,1\}^V$ with $\{-1,1\}^{V/E_\infty}$ and $E$ with $\Efin$. As before, we let $M(G,J)=(\<\sigma_i\sigma_j\>)\in\Matsym$ denote the boundary correlation matrix. Thus we have $m_{i,j}=1$ whenever there exists a path connecting $b_i$ to $b_j$ by edges in $E_\infty$.
\end{definition}

\def\vmed{v}
\def\Nmedfin{N^\times_\fin}
\def\Gmedfin{G^\times_\fin}
\def\Jmedfin{J^\times_\fin}
\begin{definition}\label{dfn:Ising_to_medial}
To each generalized planar Ising network $N=(G,J)$ we associate a medial network $\Nmed=(\Gmed,\Jmed)$. First suppose that $E=\Efin$, i.e., that $J$ only takes values in $\R_{>0}$. Then the medial graph $\Gmed=(\Vmed,\Emed)$ is obtained from $G$ as in Figures~\ref{fig:Gbip_small} and~\ref{fig:Gbip_big}. More precisely, the vertex set $\Vmed$ is given by
\[\Vmed=\{\bbip_1,\dots,\bbip_{2n}\}\sqcup\{\vmed_e\mid e\in E\},\]
where the $\bbip_1,\dots,\bbip_{2n}$ are boundary vertices placed counterclockwise on the boundary of the disk so that $b_i$ is between $\bbip_{2i-1}$ and $\bbip_{2i}$, while $\vmed_e$ is the midpoint of the edge $e\in E$ of $G$. The edges of $\Gmed$ are described as follows. If $e,e'\in E$ share both a vertex and a face then we connect $\vmed_e$ to $\vmed_{e'}$ in $\Gmed$. In addition, for each $i\in[n]$, we connect $\bbip_{2i-1}$ (resp., $\bbip_{2i}$) with $\vmed_e$ where $e\in E$ is the first in the clockwise (resp., counterclockwise) order edge of $G$ incident to $b_i$. If $b_i$ is isolated in $G$ then we connect $\bbip_{2i-1}$ to $\bbip_{2i}$ in $\Gmed$. Thus each vertex $\vmed_e\in\Vmed$ has degree $4$, and each boundary vertex $\bbip_i$, $i\in[2n]$, has degree $1$ in $\Gmed$.  Finally, we set $\Jmed_{\vmed_e}:=J_e$.

Suppose now that $E\neq \Efin$, and thus $J$ takes the value of $\infty$ on some edges of $G$. Let $N_\fin=(G,\Jfin)$ be obtained from $N$ by setting $(\Jfin)_e:=1$ for all $e\in E_\infty$ and $(\Jfin)_e:=J_e$ for $e\in \Efin$. Let $\Nmedfin:=(\Gmedfin,\Jmedfin)$ be the medial network associated to $N_\fin$. Then the  medial network $\Nmed=(\Gmed,\Jmed)$ associated to $N$ is obtained from $\Nmedfin$ by ``uncrossing'' (see Remark~\ref{rmk:uncrossing}) the vertices $v_e$ of $\Gmedfin$ for all $e\in E_\infty$. There are two ways to uncross the vertex $v_e$ as in Figure~\ref{fig:uncrossing}, and we choose the one where no edge of the resulting graph $\Gmed$ intersects the corresponding edge $e$ of $G$. This uniquely defines the medial graph $\Gmed$, and we set $\Jmed_{v_e}:=J_e$ for all $e\in \Efin$.
\end{definition}

The notion of a generalized planar Ising network is equivalent to the notion of a \emph{cactus network} introduced in~\cite[Section~4.1]{Lam}, where he also assigns a medial graph to it in the same way as in Definition~\ref{dfn:Ising_to_medial}.

\begin{remark}
Given a planar Ising network $N=(G,J)$, the above procedure assigns a medial network $\Nmed=(\Gmed,\Jmed)$ to it. In Section~\ref{sec:tnn_OG_2}, we assign a weighted plabic graph $(\Gbip,\wt)$ to $\Nmed$. It is trivial to check that the same weighted plabic graph $(\Gbip,\wt)$ gets assigned to $N=(G,J)$ in the construction described in Section~\ref{sec:dimer_model_intro}.
\end{remark}

To each medial graph $\Gmed$ (and thus to each generalized planar Ising network) we have associated a medial pairing $\medpa$ in Section~\ref{sec:inverse_problem_intro}. Let us denote 
\[\Space_\medpa:=\{M(G,J)\mid \text{$N=(G,J)$ is a generalized planar Ising network with medial pairing $\medpa$}\}.\]
The following stratification of $\Closure_n$ will be deduced from Theorem~\ref{thm:main} at the end of Section~\ref{sec:ball}.
\begin{proposition}\label{prop:Closure_cells}
The space $\Closure_n$ decomposes as
\begin{equation}\label{eq:Closure_cells}
  \Closure_n=\bigsqcup_{\medpa\in\P_n} \Space_\medpa,
\end{equation}
and for each $\medpa\in\P_n$, $\Space_\medpa$ is homeomorphic to $\R^{\xing(\medpa)}$, with closure relations given by the poset $\P_n$.
\end{proposition}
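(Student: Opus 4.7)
The plan is to deduce Proposition~\ref{prop:Closure_cells} as a direct consequence of Theorem~\ref{thm:main} combined with Theorem~\ref{thm:Gmed_parametrization}. Theorem~\ref{thm:main} gives a stratification-preserving homeomorphism $\doublemap : \Closure_n \to \OGtnn(n,2n)$, and Theorem~\ref{thm:Gmed_parametrization} provides the cell decomposition $\OGtnn(n,2n) = \bigsqcup_{\medpa \in \P_n} (\pc_\medpa \cap \OGtnn(n,2n))$ together with the homeomorphism $\pc_\medpa \cap \OGtnn(n,2n) \cong \R^{\xing(\medpa)}$ and closure relations governed by $\P_n$. The entire content of the proposition then reduces to checking that for each matching $\medpa \in \P_n$, the map $\doublemap$ restricts to a bijection $\Space_\medpa \leftrightarrow \pc_\medpa \cap \OGtnn(n,2n)$.

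For the forward inclusion $\doublemap(\Space_\medpa) \subset \pc_\medpa \cap \OGtnn(n,2n)$, I would take a generalized planar Ising network $N = (G,J) \in \Space_\medpa$ and trace its correlation matrix through the chain of constructions: Definition~\ref{dfn:Ising_to_medial} assigns a medial network $(\Gmed,\Jmed)$ with $\medpa_\Gmed = \medpa$; Definition~\ref{dfn:medial_to_plabic} produces a weighted plabic graph $(\Gbip,\wt)$, which coincides with the one built in Section~\ref{sec:dimer_model_intro}; Theorem~\ref{thm:bound_measurements} identifies $\doublemap(M(G,J))$ with $\Meast(\Gbip,\wt)$; and Proposition~\ref{prop:matchings_complements} places this element in $\pc_{\Mcal_\medpa} \cap \OGtnn(n,2n) = \pc_\medpa \cap \OGtnn(n,2n)$. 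For the reverse inclusion, given $X \in \pc_\medpa \cap \OGtnn(n,2n)$, Lemma~\ref{lemma:Gmed_exists} supplies a reduced medial graph $\Gmed$ with medial pairing $\medpa$, and Theorem~\ref{thm:Gmed_parametrization}\ref{item:Gmed_param:homeo} yields a unique $\Jmed \in \R_{>0}^{\Vint}$ with $\Meast(\Gbip,\wt) = X$. The step requiring care—the main (minor) obstacle—is recognizing this medial network as arising from a generalized planar Ising network via Definition~\ref{dfn:Ising_to_medial}: pairs $\{2i-1,2i\} \in \medpa$ correspond to an isolated boundary vertex $b_i$ in $G$, pairs $\{2i,2i+1\}$ correspond to boundary edges contracted with $J_e = \infty$, and interior crossings of $\Gmed$ become edges of $G$ with finite positive weights determined by $\Jmed$ via~\eqref{eq:sinh_tanh}. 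Inverting the medial construction in this way produces $N = (G,J) \in \Space_\medpa$ with $\doublemap(M(G,J)) = X$.

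With both inclusions in hand, the disjointness in~\eqref{eq:Closure_cells} follows from the disjointness in~\eqref{eq:disjoint_OG} together with the injectivity of $\doublemap$; the homeomorphism $\Space_\medpa \cong \R^{\xing(\medpa)}$ transfers from Theorem~\ref{thm:Gmed_parametrization}\ref{item:Gmed_param:disjoint} via the restricted homeomorphism $\doublemap\bigr|_{\Space_\medpa}$; and the closure relations in $\Closure_n$ match the poset $\P_n$ through~\eqref{eq:closures}, since $\doublemap$ is a homeomorphism between the two ambient (closed) spaces and thus commutes with taking closures of cells.
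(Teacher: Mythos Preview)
Your approach is essentially the same as the paper's: both deduce the proposition from Theorem~\ref{thm:main} and Theorem~\ref{thm:Gmed_parametrization}, with Theorem~\ref{thm:X=X'} (or its corollary Theorem~\ref{thm:bound_measurements}) supplying the identification $\doublemap(M(G,J)) = \Meast(\Gbip,\wt)$ that pins $\doublemap(\Space_\medpa)$ to the cell $\pc_\medpa\cap\OGtnn(n,2n)$. Two small remarks: for the forward inclusion you should cite Theorem~\ref{thm:X=X'} rather than Theorem~\ref{thm:bound_measurements}, since the latter is stated only for ordinary (not generalized) planar Ising networks; and your sketch of inverting Definition~\ref{dfn:Ising_to_medial} is a bit oversimplified (the cases you list do not exhaust the ways a reduced medial graph arises from a generalized planar Ising network---the general inversion is the cactus-network construction alluded to after Definition~\ref{dfn:Ising_to_medial}), though the claim that every reduced medial graph so arises is correct.
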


Given a (generalized) planar Ising network $N=(G,J)$, we have described two ways to assign an element of $X\in\OGtnn(n,2n)$ to $N$. First, one can take the boundary correlation matrix $M=M(G,J)$, and let $X:=\doublemap(M)$, as we did in Section~\ref{sec:embedding}. Second, one can construct a medial network $\Nmed=(\Gmed,\Jmed)$ as above, transform it into a weighted plabic graph $(\Gbip,\wt)$, and then put $X':=\Meast(\Gbip,\wt)$, as we did in Section~\ref{sec:dimer_model_intro}. Theorem~\ref{thm:Gmed_parametrization}, part~\eqref{item:Gmed_param:homeo} shows that the second map $J\mapsto \Meast(\Gbip,\wt)$ gives a homeomorphism between $\R_{>0}^{E}$ and $\pc_\medpa\cap \OGtnn(n,2n)$, where $\medpa$ is the medial pairing of $\Gmed$. The goal of the rest of this section is to show that the outputs $X=\doublemap(M)$ and $X'=\Meast(\Gbip,\wt)$ of these two maps coincide.

\begin{theorem}\label{thm:X=X'}
  Let $N=(G,J)$ be a (generalized) planar Ising network with boundary correlation matrix $M=M(G,J)$. Define $X:=\doublemap(M)$. Let $\Nmed=(\Gmed,\Jmed)$ and $(\Gbip,\wt)$ be the medial network and the weighted plabic graph corresponding to $N$, and put $X':=\Meast(\Gbip,\wt)$. Then $X=X'$ in $\Gr(n,2n)$.
\end{theorem}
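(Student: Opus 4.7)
The plan is to proceed by induction on the number of finite-weight edges of $G$, after first reducing to the case of ordinary planar Ising networks. Both sides of the desired equality depend continuously on $J$, and the generalized setup of Definition~\ref{dfn:GPIN} is compatible with this limiting behavior: sending $J_e\to\infty$ corresponds on the Ising side to contracting $e$, and on the medial side to uncrossing the vertex $v_e$ (since $s_{v_e}\to 0$, $c_{v_e}\to 1$), by Definition~\ref{dfn:Ising_to_medial} together with the limiting form of~\eqref{eq:Delta_I_matchings}. Hence it suffices to verify $X=X'$ when $J\colon E\to\R_{>0}$.

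The base case $|E|=0$ is a direct calculation. Then $M=I_n$, so row $i$ of $\double M$ equals $e_{2i-1}+e_{2i}$, and the nonzero Pl\"ucker coordinates are $\Delta_I(\double M)=1$ precisely for $I\in\OddEven{\emptyset}$. On the plabic side, $\Gbip$ is a disjoint union of $n$ unit-weight edges $\{\bbip_{2i-1},\bbip_{2i}\}$, whose almost perfect matchings are in weight-preserving bijection with $\OddEven{\emptyset}$. The two Pl\"ucker vectors agree, so $X=X'$.

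For the inductive step, Lemma~\ref{lemma:exists_crossing}, applied to each connected component of $\Gmed$ containing an interior vertex, produces an edge $e$ of $G$ whose midpoint $v_e$ is connected in $\Gmed$ to two cyclically consecutive boundary vertices $\bbip_\kk,\bbip_{\kk+1}$; equivalently, $e$ is a boundary spike (when $\kk=2k-1$ is odd) or a boundary edge (when $\kk=2k$ is even). Let $N'=(G',J')$ be obtained from $N$ by removing $e$ from the edge set, and by induction assume $\doublemap(M(G',J'))=\Meast(\Gbip',\wt')$ in $\Gr(n,2n)$. On the plabic side, $\Gbip$ is obtained from $\Gbip'$ by inserting the weighted square around $v_e$; after applying a suitable gauge transformation at the plabic vertices adjacent to $\bbip_\kk$ and $\bbip_{\kk+1}$, this reduces to a BCFW bridge between $\kk$ and $\kk+1$, and Lemma~\ref{lemma:bridge_removal} then shows that $\Meast(\Gbip,\wt)=\Meast(\Gbip',\wt')\cdot g_\kk(J_e)$, where $g_\kk(J_e)$ is the matrix introduced in Section~\ref{sec:inverse_problem_intro}. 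On the Ising side, I compute the corresponding change in $M$ directly: expanding the Boltzmann weight via $\exp(J_e\sigma_u\sigma_v)=\cosh(J_e)(1+\tanh(J_e)\sigma_u\sigma_v)$ and dividing out by the new partition function yields an explicit rational formula relating $M=M(G,J)$ to $M'=M(G',J')$, modifying only the $k$-th row and column for a boundary spike but all entries for a boundary edge. In both cases a direct calculation using the sign rule~\eqref{eq:double_signs} gives
\[\doublemap(M(G,J))=\doublemap(M(G',J'))\cdot g_\kk(J_e),\]
and combining this with the inductive hypothesis closes the induction.

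The main obstacle is the explicit identification of the Ising-side transfer operation with the matrix $g_\kk(J_e)$, especially the wrap-around case $\kk=2n$, in which the sign $(-1)^{n-1}$ in $g_{2n}$ must line up with the cyclic-shift convention on $\Gr(n,2n)$ described in Section~\ref{sec:cyclic_intro}. A useful checkpoint is that the identity $\doublemap(M)=\doublemap(M')\cdot g_\kk(J_e)$ is precisely the content of Theorem~\ref{thm:duality}, so this inductive strategy proves Theorems~\ref{thm:X=X'} and~\ref{thm:duality} simultaneously.
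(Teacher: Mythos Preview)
Your inductive approach is quite different from the paper's two proofs, both of which establish the dimer formula~\eqref{eq:m'_i,j:matchings} for two-point correlations directly---one via Dub\'edat's bosonization identity, the other via Lis's random alternating flows---rather than by peeling off boundary spikes and edges. Your strategy of proving Theorems~\ref{thm:X=X'} and~\ref{thm:duality} simultaneously by induction is sound in outline, and the plabic half works essentially as you describe (though inserting the weighted square around $v_e$ is really a \emph{pair} of bridges together with a diagonal rescaling, as in the paper's proof of Theorem~\ref{thm:duality}, not a single bridge plus gauge).

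The substantive gap is on the Ising side, in the boundary-edge case. When $N$ is obtained from $N'$ by adjoining a boundary edge $e=\{b_k,b_{k+1}\}$, your high-temperature expansion gives
\[
m_{i,j}\;=\;\frac{m'_{i,j}+\tanh(J_e)\,\langle\sigma_i\sigma_j\sigma_k\sigma_{k+1}\rangle_{G'}}{1+\tanh(J_e)\,m'_{k,k+1}},
\]
which involves a four-point boundary correlation of $G'$. This quantity is \emph{not} determined by the two-point matrix $M'$ alone, so there is no ``explicit rational formula relating $M$ to $M'$'' without an additional ingredient. What is missing is precisely the Pfaffian identity of Proposition~\ref{prop:Pf} (from~\cite{GBK}), which expresses multi-point boundary correlations as signed sums of products of entries of $M'$. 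With that identity in hand, the verification of $\doublemap(M)=\doublemap(M')\cdot g_\kk(J_e)$ does become a genuine (if tedious) row-span computation using~\eqref{eq:double_signs}. By contrast, the spike case really is elementary, since then $m_{k,j}=\tanh(J_e)\,m'_{k,j}$ while all other entries of $M$ coincide with those of $M'$. In short: your argument can be completed, but only by invoking Proposition~\ref{prop:Pf} for the boundary-edge step, and you should say so explicitly.
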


\newcommand\Ge{G^{\emptyset}}
\newcommand\Ga[1]{G^{#1}}
\newcommand\GAB[2]{G^{#1\sqcup #2}}
\newcommand\VAB[2]{V^{#1\sqcup #2}}
\newcommand\EAB[2]{E^{#1\sqcup #2}}
\newcommand\Gab{G^{a,b}}
\newcommand\Vab{V^{a,b}}
\newcommand\Eab{E^{a,b}}
\def\w{\operatorname{w}}

We give two proofs of Theorem~\ref{thm:X=X'}, one using Dub\'edat's results~\cite{Dubedat}, and one using a formula of Lis~\cite{Lis} for boundary correlations in terms of the \emph{random alternating flow model}. Note that it suffices to prove Theorem~\ref{thm:X=X'} only for planar Ising networks, since the corresponding statement for generalized planar Ising networks is obtained by taking the limit $J_e\to\infty$ for all $e\in E_\infty$.

 Before we proceed with the proofs, we need several preliminary results.
\begin{proof}[Proof of Lemma~\ref{lemma:from_minors_to_correlations}]
  By~\eqref{eq:sum_of_minors_2^n}, it is enough to show that $m_{i,j}=2^{-n}\sum_{I\in \OddEven{\{i,j\}}}\Delta_I(\double M).$
  For $k\in[n]$, denote by $e_k\in \R^n$ the $k$-th standard basis vector, and for $k\in[2n]$, denote by $(\double M)_k$ the $k$-th column of $\double M$. Consider an $n\times n$ matrix $A$ with columns
  \[e_1,e_2,\dots,e_{i-1},(\double M)_{2i-1},(\double M)_{2i},e_{i+1},\dots,e_{j-1},e_{j+1},\dots,e_n.\]
  Since by Remark~\ref{rmk:full_rank}, $2e_k=(\double M)_{2k-1}+(\double M)_{2k}$ for all $k\in[n]$, we can expand $\det A$ in terms of minors of $\double M$:  
  \[\det A=2^{-n+2}\sum_{I\in \OddEven{\{i,j\}}: 2i-1,2i\in I} \Delta_I(\double M).\]
  On the other hand, since most of the columns of $A$ are basis vectors, we can compute its determinant directly: $\det A=2m_{i,j},$ where the sign in~\eqref{eq:double_signs} is chosen so that we would have $\det A=2m_{i,j}$ and not $\det A=-2m_{i,j}$. Similarly, we can define an $n\times n$ matrix $B$ with columns
  \[e_1,e_2,\dots,e_{i-1},e_{i+1},\dots,e_{j-1},(\double M)_{2j-1},(\double M)_{2j},e_{j+1},\dots,e_n.\]
  We have $\det B=2m_{i,j}$ as well, and
  \[\det B=2^{-n+2}\sum_{I\in \OddEven{\{i,j\}}: 2j-1,2j\in I} \Delta_I(\double M).\]
  It remains to note that for all $I\in\OddEven{\{i,j\}}$, we have either $2i-1,2i\in I$ or $2j-1,2j\in I$, but not both. Thus
  \[4m_{i,j}=\det A+\det B=2^{-n+2}\sum_{I\in \OddEven{\{i,j\}}}\Delta_I(\double M),\]
  finishing the proof.
\end{proof}

\begin{lemma}\label{lemma:Skandera}
  Let $J:=\{1,3,\dots,2n-1\}$ and $X'\in\OGtnn(n,2n)$. Then for all $I\in{[2n]\choose n}$, we have
  \[\Delta_I(X')\leq \Delta_J(X').\]
\end{lemma}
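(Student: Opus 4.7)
The plan is to pick a good matrix representative for $X'$ and deduce, via the orthogonality encoded in Proposition~\ref{prop:eta}, that the ``non-$J$'' columns form an orthogonal $n\times n$ matrix whose minors are bounded by $1$ by Hadamard's inequality. The key step will be the identification $\Phi D\Phi^T=I-VV^T$, which converts the abstract orthogonality condition $X'\in\OG(n,2n)$ into literal orthogonality of an $n\times n$ matrix.

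The first ingredient is that $\Delta_J(X')$ is always strictly positive on $\OGtnn(n,2n)$. Using Theorem~\ref{thm:Gmed_parametrization}\eqref{item:Gmed_param:homeo}, write $X'=\Meast(\Gbip,\wt)$ for the plabic graph of some medial network. The dimer-model formula~\eqref{eq:Delta_I_matchings} expresses $\Delta_J(X')$ as the sum of $\wt(\match)$ over almost perfect matchings with $\partial(\match)=J$, i.e., those that leave every boundary vertex of $\Gbip$ unmatched. Such matchings correspond to choosing independently, at each interior vertex $v\in\Vint$ of $\Gmed$, one of the two pairs of opposite square edges; summing over these choices gives $\Delta_J(X')=\prod_{v\in\Vint}(s_v^2+c_v^2)=1$ in this canonical representative, so in particular $\Delta_J(X')>0$.

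With $\Delta_J(X')>0$, I would row-reduce to obtain a representative
\[\Phi=\bigl[e_1\mid v_1\mid e_2\mid v_2\mid\cdots\mid e_n\mid v_n\bigr]\]
with $\Delta_J(\Phi)=1$. Setting $V=[v_1\mid\cdots\mid v_n]$, apply Proposition~\ref{prop:eta}: the form $\eta(u,w)=\sum_k(-1)^{k+1}u_kw_k$ vanishes on $X'$, i.e., $\Phi D\Phi^T=0$ for $D=\diag(1,-1,\ldots,1,-1)$. Splitting the sum over odd and even column indices, the contributions from odd columns assemble to $I$ and those from even columns to $-VV^T$, so $\Phi D\Phi^T=I-VV^T=0$, and $V$ is an orthogonal matrix.

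Finally, for any $I\in\binom{[2n]}{n}$, partition $[n]$ according to how $I$ intersects each pair $\{2i-1,2i\}$; expanding $\det\Phi_I$ along the $e$-type columns eliminates the corresponding rows and expresses $|\Delta_I(\Phi)|$ as the absolute value of a square minor of $V$. Hadamard's inequality applied to the orthogonal matrix $V$ (whose rows have unit Euclidean norm) bounds this minor by $1$. Combined with the total nonnegativity $\Delta_I(\Phi)\ge 0$ (which holds because $\Delta_J(\Phi)=1>0$ fixes the sign of the representative), this yields $\Delta_I(X')\le 1=\Delta_J(X')$, as required.
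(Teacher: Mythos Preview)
Your argument is correct and takes a genuinely different route from the paper. The paper invokes Skandera's inequality $\Delta_I(X')\Delta_{[2n]\setminus I}(X')\le\Delta_J(X')\Delta_{[2n]\setminus J}(X')$, valid for all $X'\in\Grtnn(n,2n)$ by \cite[Theorem~6.1]{FP}, and then uses $\Delta_I=\Delta_{[2n]\setminus I}$ on $\OG(n,2n)$ to conclude $\Delta_I^2\le\Delta_J^2$. Your approach is more elementary and self-contained within the paper: once $\Delta_J>0$ is known, row-reducing so that the odd columns form the identity turns the abstract constraint $\eta|_{X'}=0$ into the concrete equation $VV^T=I_n$, after which every $|\Delta_I(\Phi)|$ is a square minor of the orthogonal matrix $V$ and hence bounded by $1$ via Hadamard. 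The trade-off is that you need $\Delta_J>0$ as an input (whereas in the paper it is deduced \emph{from} the lemma), and you obtain it from Theorem~\ref{thm:Gmed_parametrization}; this is logically fine, since that theorem is proved earlier and independently of Section~\ref{sec:ising_to_OG}.

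One small imprecision: your formula $\Delta_J=\prod_{v}(s_v^2+c_v^2)=1$ in the $\Meast$-normalization is not correct when $\Gmed$ has interior faces. For instance, for the top cell at $n=3$ the reduced medial graph has a triangular interior face, and putting the three weight-$1$ connecting edges along its boundary into $\match$ (together with one square edge at each interior vertex) gives an additional almost perfect matching with boundary $J$. This does not harm your proof, since you only use $\Delta_J>0$, and that already follows from exhibiting a single matching with boundary $J$ (e.g.\ taking both $c_v$-edges at every square); but the equality should be weakened to $\Delta_J\ge 1$.
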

\begin{proof}
This follows from Skandera's inequalities~\cite{Skandera} for $\Grtnn(n,2n)$. Namely, by~\cite[Theorem~6.1]{FP}, we have $\Delta_I(X')\Delta_{[2n]\setminus I}(X')\leq \Delta_J(X')\Delta_{[2n]\setminus J}(X')$ for all $X'\in\Grtnn(n,2n)$. In particular, if $X'\in\OGtnn(n,2n)$, this becomes $(\Delta_I(X'))^2\leq (\Delta_J(X'))^2$, which finishes the proof.
\end{proof}
An important consequence of the above lemma is that for $J:=\{1,3,\dots,2n-1\}$ and all $X'\in\OGtnn(n,2n)$, we have $\Delta_J(X')>0$, since we must have $\Delta_I(X')>0$ for some $I\in{[2n]\choose n}$. 

  \begin{lemma}\label{lemma:doulbemap_image}
The image $\doublemap(\Matsym)$ contains $\OGtnn(n,2n)$. Equivalently, for any $X'\in\OGtnn(n,2n)$, there exists a  matrix $M'\in\Matsym$ such that $X'=\doublemap(M')$ as elements of $\Gr(n,2n)$.
\end{lemma}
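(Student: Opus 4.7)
The plan is to produce the matrix $M'$ by extracting it directly from a carefully chosen representative of $X'$, and then to verify that the structure of $\double{M'}$ matches that representative entry-by-entry.

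First I will choose the right representative. Given any full-rank $n \times 2n$ matrix $Y$ with row span $X'$, consider the $n \times n$ matrix $Z$ whose $j$-th column is $\tfrac{1}{2}(Y_{*,2j-1} + Y_{*,2j})$. Expanding by multilinearity of the determinant yields $\det Z = 2^{-n} \sum_{I \in \OddEven{\emptyset}} \Delta_I(Y)$, with all terms nonnegative since $X' \in \Grtnn(n,2n)$ and with positive signs because the chosen columns are taken in left-to-right order. Lemma~\ref{lemma:Skandera} supplies $\Delta_{\{1,3,\dots,2n-1\}}(X') > 0$, so $\det Z > 0$. Replacing $Y$ by $Z^{-1} Y$ (which does not change the row span), I may assume the normalization $Y_{i,2j-1} + Y_{i,2j} = 2\delta_{i,j}$ for all $i,j \in [n]$.

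Next I will apply the orthogonality condition $\eta(Y_i, Y_j) = 0$ from Proposition~\ref{prop:eta} to cut the remaining parameters down. Writing out $\eta(Y_i, Y_j) = \sum_k (Y_{i,2k-1} Y_{j,2k-1} - Y_{i,2k} Y_{j,2k})$, each summand with index $k \notin \{i,j\}$ vanishes because the normalization gives $Y_{i,2k-1} = -Y_{i,2k}$ and $Y_{j,2k-1} = -Y_{j,2k}$. For $i \neq j$, what remains collapses to $2(Y_{i,2j-1} + Y_{j,2i-1})$, forcing the antisymmetry $Y_{i,2j-1} = -Y_{j,2i-1}$. For $i = j$, it reduces to $2(Y_{i,2i-1} - Y_{i,2i})$, which combined with $Y_{i,2i-1} + Y_{i,2i} = 2$ pins down $Y_{i,2i-1} = Y_{i,2i} = 1$.

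Finally I will set $m'_{i,i} := 1$ and $m'_{i,j} := (-1)^{i+j+\one(i<j)} Y_{i,2j-1}$ for $i \neq j$, and verify via the identity $\one(i<j) + \one(j<i) = 1$ combined with the antisymmetry above that $M' = (m'_{i,j})$ is a symmetric matrix in $\Matsym$. Comparison with~\eqref{eq:double_signs} then gives $\double{M'} = Y$ on the nose, so $\doublemap(M') = X'$ in $\Gr(n,2n)$. The main obstacle is the orthogonality computation in the second step: it is precisely the rigidity of the $\OG$ structure that collapses the a priori free parameters of $Y$ (the two diagonal entries per block, and the independent entries $Y_{i,2j-1}$ and $Y_{j,2i-1}$ across blocks) down to the $\binom{n}{2}$-dimensional family of parameters that the image $\doublemap(\Matsym)$ admits. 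Once that structural coincidence is in place, the construction of $M'$ and the final sign check are purely mechanical.
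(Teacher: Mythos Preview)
Your proof is correct, and the normalization step (choosing the representative $Y$ with $Y_{i,2j-1}+Y_{i,2j}=2\delta_{i,j}$, using Lemma~\ref{lemma:Skandera} for invertibility) is exactly what the paper does.

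Where you diverge from the paper is in the second half. To show that the resulting matrix $M'$ is symmetric and that $Y_{i,2i-1}=Y_{i,2i}=1$, the paper invokes the Pl\"ucker definition of $\OG(n,2n)$: it expresses $m'_{i,j}$ and $m'_{j,i}$ as two complementary sums of minors of $C$ (one over $I\in\OddEven{\{i,j\}}$ with $2i-1,2i\in I$, the other with $2j-1,2j\in I$), and then appeals to $\Delta_I(C)=\Delta_{[2n]\setminus I}(C)$ to match them up; the diagonal entries are handled similarly. You instead use the bilinear-form characterization from Proposition~\ref{prop:eta}, computing $\eta(Y_i,Y_j)$ directly and reading off the antisymmetry $Y_{i,2j-1}=-Y_{j,2i-1}$ and the equality $Y_{i,2i-1}=Y_{i,2i}$ from its vanishing. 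Your route is more elementary: it avoids the minor-sum bookkeeping borrowed from the proof of Lemma~\ref{lemma:from_minors_to_correlations} and makes the rigidity you mention (why $\OG$ cuts the parameters down to $\binom{n}{2}$) visibly a consequence of $n$ quadratic and $\binom{n}{2}$ linear constraints. The paper's route, on the other hand, stays closer to the combinatorics of $\OddEven{\cdot}$ that is used elsewhere, so it reinforces the paper's internal narrative.
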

\begin{proof}
  We are going to use Lemma~\ref{lemma:from_minors_to_correlations}. Choose some $n\times 2n$ matrix $A$ representing $X'$ in $\Gr(n,2n)$, and let $\double{I_n}$ be the $n\times 2n$ matrix given by $(\double{I_n})_{i,2i-1}=(\double{I_n})_{i,2i}=1$ and the remaining entries being zero. Remark~\ref{rmk:full_rank} says that for a matrix $M'\in\Matsym$, we have $\double{M'} \cdot (\double{I_n})^T=2I_n,$ where $I_n$ is the $n\times n$ identity matrix, and $ ^T$ denotes matrix transpose. Let $B:=A\cdot (\double{I_n})^T$. We claim that if $X'\in\OGtnn(n,2n)$ then $B$ is an invertible matrix. Indeed, by the multilinearity of the determinant, we have $\det B=\sum_{I\in \OddEven{\emptyset}}\Delta_I(A).$ This sum contains only nonnegative terms, and the term $\Delta_{\{1,3,\dots,2n-1\}}(A)$ is positive by Lemma~\ref{lemma:Skandera}. Thus $B\in\GL_n(\R)$ is invertible, and we can consider the matrix $C:=2\cdot B^{-1}\cdot A$, which represents the same element $X'$ in $\Gr(n,2n)$. The matrix $C$ satisfies $C\cdot (\double{I_n})^T=2 I_n$, in particular, $C_{i,2j-1}=-C_{i,2j}$ for $i\neq j\in[n]$. We define the $n\times n$ matrix $M'=(m'_{i,j})$ by $m'_{i,i}:=1$ and $m'_{i,j}:=(-1)^{i+j+\one(i<j)}C_{i,2j-1}$
  for $i\neq j\in[n]$, in agreement with~\eqref{eq:double_signs}. It turns out  that $M'$ is a symmetric matrix, since its entries can be recovered from the minors of $C$ as follows. As we have mentioned in the proof of Lemma~\ref{lemma:from_minors_to_correlations}, for each $I\in\OddEven{\{i,j\}}$, we have either $2i-1,2i\in I$ or  $2j-1,2j\in I$, but not both. Thus we can write $m'_{i,j}=2^{-n+2}\sum_{I} \Delta_I(C)$, where the sum is over all $I\in\OddEven{\{i,j\}}$ such that $2i-1,2i\in I$. Similarly, we have $m'_{j,i}=2^{-n+2}\sum_{I} \Delta_I(C)$, where the sum is over all $I\in\OddEven{\{i,j\}}$ such that $2j-1,2j\in I$. Since $\Delta_I(C)=\Delta_{[2n]\setminus I}(C)$ (because $C$ represents $X'\in\OGtnn(n,2n)$), we see that $m'_{i,j}=m'_{j,i}$, and thus $M'$ belongs to $\Matsym$. Similarly, using
\[2^{n-1}C_{i,2i-1}=\sum_{I\in\OddEven{\emptyset}:2i-1\in I} \Delta_I(C)=\sum_{I\in\OddEven{\emptyset}:2i\in I} \Delta_I(C)=2^{n-1}C_{i,2i},\] 
and $C_{i,2i-1}+C_{2i}=2$, we get $C_{i,2i-1}=C_{i,2i}=1$, and thus  $\doublemap(M')=X'$. We are done with the proof of Lemma~\ref{lemma:doulbemap_image}.
\end{proof}

In order to prove Theorem~\ref{thm:X=X'}, we need to show that $X:=\doublemap(M)$ equals to $X':=\Meast(\Gbip,\wt)$ as an element of $\Gr(n,2n)$. By Theorem~\ref{thm:Gmed_parametrization}, we know that $X'\in\OGtnn(n,2n)$.

By Lemma~\ref{lemma:doulbemap_image}, we get a matrix $M'\in\Matsym$ such that $\doublemap(M')=X'$. Since $X=\doublemap(M)$, we have $X=X'$ if and only if $M=M'$. By Lemma~\ref{lemma:from_minors_to_correlations}, the entries $m'_{i,j}$ of $M'$ can be written as ratios of sums of minors of $X'$. By Theorem~\ref{thm:Meas}, each such minor is a sum over almost perfect matchings of $\Gbip$ with prescribed boundary. Putting it all together, we get the following: for $i\neq j\in[n]$,
\begin{equation}\label{eq:m'_i,j:matchings}
m'_{i,j}=\frac{\sum_{\match:\partial(\match)\in\OddEven{\{i,j\}}}\wt(\match)}
  {\sum_{\match:\partial(\match)\in\OddEven{\emptyset}}\wt(\match)},
\end{equation}
where the sums are over almost perfect matchings in $\Gbip$. Our goal is to show that $m'_{i,j}$ equals to  $m_{i,j}:=\<\sigma_i\sigma_j\>$.

\def\PM{\operatorname{Match}}
\subsection{Dub\'edat's bosonization identity}\label{sec:Dubedat}
In this section, we show that $m'_{i,j}$ equals to  $m_{i,j}$ using the results of~\cite{Dubedat}. We thank the anonymous referee for explaining the following argument to us. 

Introduce a planar bipartite graph $\Gdub=(\Vdub,\Edub)$ which is obtained from $\Gbip$ by simply adding an extra edge of weight $1$ connecting $\bbip_{2i-1}$ to $\bbip_{2i}$ for all $i\in[n]$. We let $\PM(\Gdub)$ denote the set of \emph{perfect matchings} of $\Gdub$. Given such a perfect matching $\match\in\PM(\Gdub)$, its \emph{weight} is the product of weights of its edges. Given a subset $V'\subset \Vdub$ of vertices of $\Gdub$, denote by $\Gdub\setminus V'$ the graph obtained from $\Gdub$ by removing the vertices in $V'$.

\def\Zij{\Zpart_{i,j}}

\def\dg{\ast}
\def\dimer{{\operatorname{dimer}}}
\def\Gammad{{G^\dg}}
\def\Gdub{\widehat{G}^\square}
\def\match{\mathcal{A}}
\def\Gammaa{G}
\def\ndub{k}
\def\Zdub{\widehat{\Zpart}}
\def\Zdubij{\widehat{\Zpart}_{i,j}}
\def\wtdub{\widehat{\wt}}
\def\O{\mathcal{O}}

Recall from~\eqref{eq:dfn:Zpart} that the partition function $\Zpart$ of the Ising model is given by $\Zpart:=\sum_{\sigma\in\{-1,1\}^V} \wt(\sigma)$, where $\wt(\sigma):=\prod_{\{u,v\}\in E}\exp \left( J_{\{u,v\}} \sigma_u\sigma_v\right)$. For $i,j\in[n]$, let $\Zij:=\sum_{\sigma:\sigma_i=\sigma_j} \wt(\sigma)-\sum_{\sigma:\sigma_i\neq\sigma_j} \wt(\sigma)$, so that $\<\sigma_i\sigma_j\>=\frac{\Zij}{\Zpart}$.

\def\cnst{h}
\begin{proposition}[{\cite[Lemma~1]{Dubedat}}]
Let $N=(G,J)$ be a planar Ising network with $n$ boundary vertices. There exists a constant $\cnst$ (depending only on $N$) such that for all $i\neq j\in[n]$, we have
\begin{align}\label{eq:Dub1}
  \Zpart\cdot \cnst&=  \sum_{\match\in \PM(\Gdub)} \wt(\match);\\ 
\label{eq:Dub2}
\Zij\cdot \cnst&=\sum_{\match\in \PM(\Gdub\setminus \{\bbip_{2i-1},\bbip_{2j}\})} \wt(\match) + \sum_{\match\in \PM(\Gdub\setminus \{\bbip_{2i},\bbip_{2j-1}\})} \wt(\match).
\end{align}
\end{proposition}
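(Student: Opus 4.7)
The plan is to derive both identities from the high-temperature expansion of the Ising model together with a local analysis of perfect matchings of $\Gdub$ on the square gadgets comprising $\Gbip$.

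First, applying $\exp(J_e\sigma_u\sigma_v)=\cosh(J_e)(1+\tanh(J_e)\sigma_u\sigma_v)$ to each edge of $G$, expanding the product, and summing over $\sigma\in\{-1,1\}^V$ collapses the expression to a sum over edge subsets $\omega\subseteq E$ where every vertex has even degree:
\begin{equation*}
\Zpart=2^{|V|}\prod_{e\in E}\cosh(J_e)\sum_{\omega\in\Omega}\prod_{e\in\omega}\tanh(J_e),
\end{equation*}
with $\Omega$ the set of even subgraphs of $G$. Inserting $\sigma_{b_i}\sigma_{b_j}$ before summing flips the parity requirement at $b_i$ and $b_j$, so
\begin{equation*}
\Zij=2^{|V|}\prod_{e\in E}\cosh(J_e)\sum_{\omega\in\Omega_{i,j}}\prod_{e\in\omega}\tanh(J_e),
\end{equation*}
where $\Omega_{i,j}$ is the set of $\omega\subseteq E$ with odd degree exactly at $\{b_i,b_j\}$.

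Second, I would set up a weight-preserving correspondence between perfect matchings $\match$ of $\Gdub$ and pairs $(\omega,L)$ with $\omega\in\Omega$ and some bounded local data $L$. A matching $\match$ restricts to each square gadget for $e\in E$ in one of a few combinatorial patterns characterized by how many of the four square edges of weights $s_e,c_e,s_e,c_e$ it uses. Define $\omega(\match)\subseteq E$ locally from these patterns. A vertex-by-vertex check on the cyclic ``wheel'' of $2\deg(v)$ vertices of $\Gbip$ around each interior $v\in V$ shows $\omega(\match)\in\Omega$, while at each boundary vertex $b_i$ the added edge $\{\bbip_{2i-1},\bbip_{2i}\}$ of $\Gdub$ enforces even parity at $b_i$. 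For the modified graphs in~\eqref{eq:Dub2}, removing $\{\bbip_{2i-1},\bbip_{2j}\}$ or $\{\bbip_{2i},\bbip_{2j-1}\}$ flips the parity precisely at $b_i$ and $b_j$; the two terms on the right of~\eqref{eq:Dub2} correspond to the two possible local configurations near $b_i$ and $b_j$ required to achieve this odd parity on either ``side'' of the boundary gadgets.

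Third, for each fixed $\omega$ I would sum $\wt(\match)$ over its fiber by factoring across squares. Inside a square with $e\notin\omega$ the local sum equals $s_e^2+c_e^2=1$, while inside a square with $e\in\omega$ the local sum reduces via the identities $\tanh(J_e)=c_e/(1+s_e)$ and $2\cosh^2(J_e)=(1+s_e)/s_e$ to $\tanh(J_e)$ up to a universal per-edge factor. These per-edge factors multiply together with $\prod_e\cosh(J_e)$ and $2^{|V|}$ to give a single constant $\cnst$ independent of $\omega$, $i$, and $j$, and comparison with the high-temperature expansions above proves both identities. The main obstacle is the second step: choosing the local rule $\match\mapsto\omega(\match)$ so that parity is satisfied simultaneously at every interior vertex and at every boundary gadget, and in particular so that the asymmetric pairs $\{\bbip_{2i-1},\bbip_{2j}\}$ and $\{\bbip_{2i},\bbip_{2j-1}\}$ (each mixing one odd with one even index) account for exactly the two sides on which a ``domain wall'' joining $b_i$ to $b_j$ can exit each endpoint.
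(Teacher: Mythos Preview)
The paper does not prove this proposition independently; it quotes Dub\'edat's bosonization identity~\cite[Lemma~1]{Dubedat}, specializes it to the cases $k=0$ and $k=2$, and translates the notation term by term. Your proposal is therefore a genuinely different and more self-contained route.

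Your first step (the high-temperature expansion) is correct and standard. The plan in steps~2--3 follows the paradigm of Fisher-type dimer representations of the Ising model and can likely be pushed through, but as written there is a gap beyond the one you flag. The assertion that the fibre sum factorizes over squares, with local sum $s_e^2+c_e^2=1$ when $e\notin\omega$, is too coarse: the weight-$1$ connector edges joining adjacent squares (and the added boundary edges $\{\bbip_{2i-1},\bbip_{2i}\}$) couple the squares, so the contribution of a square depends on which of its four connectors are matched, not only on whether $e\in\omega$. Already in the single-edge example ($n=2$, $\omega=\emptyset$) one has $\sum_{\match\in\PM(\Gdub)}\wt(\match)=2(1+s_e)$, not $1$, so the per-edge factor you must extract is $1+s_e$ rather than $s_e^2+c_e^2$; this is indeed what makes your identity $\tanh(J_e)=c_e/(1+s_e)$ relevant. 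To carry the argument through you must analyse the alternating cycle of square-edges and connector-edges encircling each vertex $v$ of $G$: it is the matching restricted to that cycle, not the square alone, that encodes the parity of $\omega$ at $v$ and generates the vertex-dependent part of the constant $\cnst$. Only after fixing that cycle data does the residual sum over each square become genuinely local. The paper's second proof of Theorem~\ref{thm:X=X'} in Section~\ref{sec:alternating_flows} (via Lis's alternating flows and the map~$\theta$ of Figure~\ref{fig:theta}) carries out a closely related local analysis which could be adapted to complete your argument.
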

\begin{proof}
 Let us restate the first part of~\cite[Lemma~1]{Dubedat}:
\begin{equation}\label{eq:Dub_full}
\left\langle \prod_{s=1}^\ndub X(v_sf_s) \right\rangle_{G} \left\langle \prod_{s=1}^\ndub X^\dg(f_sv_s) \right\rangle_\Gammad=2c \left\langle \prod_{s=1}^\ndub Y(b_sw_s) \right\rangle_\dimer.
\end{equation}
Before explaining this notation, we state two specializations of~\eqref{eq:Dub_full}, one for $\ndub=0$ (where each product is empty) and one for $\ndub=2$ with $X,X^\ast$ given in line~5 of~\cite[Table~1]{Dubedat} (thus $X(vf)=\psi(vf)$ is the \emph{fermion} of~\cite{KC} and $X^\dg(fv)=1$):
\begin{equation}\label{eq:Dub}
\<1\>_G\cdot \<1\>_\Gammad=2c\cdot \<1\>_\dimer,\quad \<\psi(b_if)\psi(b_jf)\>_G\cdot \<1\>_\Gammad=2c\cdot \<Y(\bbip_{2i-1}\bbip_{2i})Y(\bbip_{2j-1}\bbip_{2j})\>_\dimer.
\end{equation}
Let us now briefly describe the ingredients that go into~\eqref{eq:Dub}. First, denote $w_e:=\exp(-2J_e)$ and $w'_e:=\exp(-2J^\ast_{e^\ast})$, thus the statement $\sinh(2J_e)\sinh(2J^\ast_{e^\ast})=1$ of~\eqref{eq:duality} is equivalent to the statement $w_e+w'_e+w_ew'_e=1$ of~\cite[Eq.~(2.1)]{Dubedat}. Let $\lambda:=\prod_{e\in E} \exp(-J_e)$ and  $\Zdub:=\sum_{\sigma\in\{-1,1\}^V}  \wtdub(\sigma)$, where $\wtdub(\sigma):=\prod_{\{u,v\}\in E: \sigma_u\neq \sigma_v}w_e$, thus $\Zdub=\lambda\Zpart$. Similarly, let $\Zdubij:=\sum_{\sigma:\sigma_i=\sigma_j} \wtdub(\sigma)-\sum_{\sigma:\sigma_i\neq\sigma_j} \wtdub(\sigma)=\lambda\Zij$. Finally, we choose the vertices $v_1:=b_i$ and $v_2:=b_j$ (where $b_1,\dots,b_n$ are the boundary vertices of $G$) while the faces $f_1$ and $f_2$ are both taken to be the outer face $f$ of $G$, where $G$ is viewed as embedded in the plane. We refer the reader to~\cite{Dubedat} for the following statements.
\begin{itemize}
\item $\<1\>_\Gammaa=\Zdub$ and $\<\psi(b_if)\psi(b_jf)\>_G=\<\sigma_i\mu_f\sigma_j\mu_f\>_G=\Zdubij$ (since $\mu_f^2=1$);
\item $\<1\>_\Gammad$ is the partition function of the Ising model on $(G^\ast,J^\ast)$ with $+$ boundary conditions;
\item $c=\prod_{e\in E} \frac{w_e+w'_e}2=\prod_{e\in E} \frac1{1+c_e+s_e}$;
\item $\<1\>_\dimer=\sum_{\match\in \PM(\Gdub)} \wt(\match)$.
\end{itemize}
Next, in the notation of~\cite{Dubedat}, we have
\begin{align*} \<Y(\bbip_{2i-1}\bbip_{2i})Y(\bbip_{2j-1}\bbip_{2j})\>_\dimer&=\<\O_1(\bbip_{2j})\O_{-1}(\bbip_{2i-1})\>_\dimer+\<\O_1(\bbip_{2i})\O_{-1}(\bbip_{2j-1})\>_\dimer,\quad\text{where}\\
\<\O_1(\bbip_{2j})\O_{-1}(\bbip_{2i-1})\>_\dimer&=\sum_{\match\in \PM(\Gdub\setminus \{\bbip_{2i-1},\bbip_{2j}\})} \wt(\match), \quad\text{and}\\
\<\O_1(\bbip_{2i})\O_{-1}(\bbip_{2j-1})\>_\dimer&=\sum_{\match\in \PM(\Gdub\setminus \{\bbip_{2i},\bbip_{2j-1}\})} \wt(\match).
\end{align*} 
Substituting $h:=\frac{\lambda\<1\>_\Gammad}{2c}$ finishes the proof.
\end{proof}

\begin{example}
If $G$ has two vertices and one edge as in Figure~\ref{fig:Gbip_small}, then $\Gdub$ has the following form: 
\begin{center}
\scalebox{1.1}{
  \begin{tikzpicture}[baseline=(zero.base),scale=\tikzscl]
    \drawgraph
    \edgepl{A}{AA}
    \edgepl{B}{BB}
    \edgepl{C}{CC}
    \edgepl{D}{DD}
    \edgepl{A}{B}
    \edgepl{C}{B}
    \edgepl{C}{D}
    \edgepl{A}{D}
    \draw[line width=1pt,blue] (AA) to[bend right=-60] (BB);
    \draw[line width=1pt,blue] (CC) to[bend right=-60] (DD);
  \end{tikzpicture}}
\end{center}
We find $\<1\>_G=\Zdub=2(1+w_e)$, $\<1\>_\Gammad=1$ (because of the $+$ boundary conditions), $c=\frac1{1+c_e+s_e}$, $\Zdubij=2(1-w_e)$. There are five perfect matchings of $\Gdub$ with weights $s_e^2$, $s_e$, $s_e$, $1$, and $c_e^2$, respectively. Each of the graphs $\Gdub\setminus \{\bbip_{1},\bbip_{4}\}$ and $\Gdub\setminus \{\bbip_{2},\bbip_{3}\}$ admits a single perfect matching of weight $c_e$. Thus~\eqref{eq:Dub} reads
\[2(1+w_e)\cdot 1=\frac2{1+c_e+s_e}\cdot (1+2s_e+s_e^2+c_e^2),\quad 2(1-w_e)\cdot 1=\frac2{1+c_e+s_e}\cdot 2c_e.\]
Both of these identities are easily checked using $s_e=\frac2{w_e^{-1}+w_e}$ and $c_e=\frac{w_e^{-1}-w_e}{w_e^{-1}+w_e}$. Dividing the second identity by the first one, we obtain
\[\<\sigma_i\sigma_j\>=\frac{1-w_e}{1+w_e}=\frac{c_e}{1+s_e},\]
in agreement with~\eqref{eq:sigma_12_s_c}.
\end{example}

\begin{proof}[First proof of Theorem~\ref{thm:X=X'}]
 Recall that our goal is to show that $m'_{i,j}=\<\sigma_i\sigma_j\>$, where $m'_{i,j}$ is given by~\eqref{eq:m'_i,j:matchings}. A trivial correspondence between perfect matchings of $\Gdub$ and almost perfect matchings of $\Gbip$ yields $\sum_{\match:\partial(\match)\in\OddEven{\emptyset}}\wt(\match)= \sum_{\match\in \PM(\Gdub)} \wt(\match)$ and 
 \[ \sum_{\match:\partial(\match)\in\OddEven{\{i,j\}}}\wt(\match)=\sum_{\match\in \PM(\Gdub\setminus \{\bbip_{2i-1},\bbip_{2j}\})} \wt(\match) + \sum_{\match\in \PM(\Gdub\setminus \{\bbip_{2i},\bbip_{2j-1}\})} \wt(\match).\]
Dividing~\eqref{eq:Dub2} by~\eqref{eq:Dub1} yields the desired result.
\end{proof}

\subsection{Random alternating flows of Lis}\label{sec:alternating_flows}
For our second proof, we use a formula due to Lis~\cite{Lis}, which he proved using the \emph{random currents model} of~\cite{GHS}, see also~\cite{DC,LW}. Let us say that a \emph{clockwise bidirected edge} (resp., a \emph{counterclockwise bidirected edge}) is a directed cycle of length two in the plane which is oriented clockwise (resp., counterclockwise).

Suppose we are given a planar Ising network $N=(G,J)$ with $n$ boundary vertices and two disjoint subsets $A,B\subset[n]$ of the same size. We define $\GAB AB=(\VAB AB,\EAB AB)$ to be the graph obtained from $G$ by adding a boundary spike at $b_i$ for all $i\in A\sqcup B$.

  An \emph{$(A,B)$-alternating flow} $F$ on $G$ is a graph obtained from $\GAB A B$ by replacing each edge $\{u,v\}\in\EAB AB$ of $\GAB AB$ by either
  \begin{enumerate}[(a)]
  \item an undirected edge, or
  \item a directed edge $u\to v$ or $v\to u$, or
  \item a clockwise or a counterclockwise bidirected edge,
  \end{enumerate}
  so that the vertex $b_i$ is incident to an outgoing (resp., incoming) edge if $i\in A$ (resp., if $i\in B$), and so that every other vertex $v\in V$ of $G$ is incident to an even number of directed edges of $F$, and their directions alternate around $v$. The set of all $(A,B)$-alternating  flows on $G$ is denoted $\Fcal_{A,B}(G)$.

  For $e\in E$, we put
  \[x_e:=\tanh(J_e)=\frac{\exp(J_e)-\exp(-J_e)}{\exp(J_e)+\exp(-J_e)},\quad y_e:=\sech(J_e)=\frac{2}{\exp(J_e)+\exp(-J_e)}.\]
  (Recall that $x_e$ and $y_e$ are not the same as $c_e=\tanh(2J_e)$ and $s_e=\sech(2J_e)$.)
  Given an edge $e\in \EAB AB$ and an  $(A,B)$-alternating flow $F\in\Fcal_{A,B}(G)$, we set
  \[\w(F,e)=
    \begin{cases}
      {2x_e}/{y_e^2}, &\text{if $e$ is a directed edge in $F$;}\\
      {2x_e^2}/{y_e^2}, &\text{if $e$ is a bidirected edge in $F$;}\\
      1, &\text{otherwise.}
    \end{cases}\]
  Following~\cite[Eq.~(4.2)]{Lis}, the \emph{weight} of an $(A,B)$-alternating flow $F$ is given by
  \begin{equation}\label{eq:Lis_weight}
\w(F):=2^{|A|-|V(F)|}\prod_{e\in\EAB AB} \w(F,e),
  \end{equation}
  where $V(F)$ denotes the set of vertices $v\in\VAB AB\setminus \{b_i\mid i\in A\sqcup B\}$ incident to a directed or a bidirected edge in $F$ (note that $b_i$ is always incident to a directed edge in $F$ when $i\in A\sqcup B$).
  \begin{remark}
The equivalence of~\eqref{eq:Lis_weight} and~\cite[Eq.~(4.2)]{Lis} is explained in the proof of~\cite[Lemma~4.2]{Lis}.
  \end{remark}

  We will be interested in the two special cases $A=B=\emptyset$ and $A=\{a\}$, $B=\{b\}$ for $a\neq b\in [n]$. We denote the corresponding graphs by $\Ge$ and $\Gab$, respectively. Denote also $\Fcal_{\emptyset}(G):=\Fcal_{\emptyset,\emptyset}(G)$ and $\Fcal_{a,b}(G):=\Fcal_{\{a\},\{b\}}(G)$.
\begin{lemma}[{\cite[Lemma~5.2]{Lis}}]
  Let $N=(G,J)$ be a planar Ising network with $n$ boundary vertices, and let $i\neq j\in[n]$. Then the boundary correlation $\<\sigma_i\sigma_j\>$ equals
  \begin{equation}\label{eq:Lis_formula}
\<\sigma_i\sigma_j\>=\frac{\sum_{F\in\Fcal_{a,b}(G)} \w(F)}{\sum_{F\in\Fcal_{\emptyset}(G)} \w(F)}.
  \end{equation}
\end{lemma}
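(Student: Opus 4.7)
Since the lemma is quoted verbatim from \cite[Lemma~5.2]{Lis}, my plan is to outline the route Lis takes. The starting point is the classical high-temperature expansion. Writing $\exp(J_e\sigma_u\sigma_v)=\cosh(J_e)+\sinh(J_e)\sigma_u\sigma_v$ and expanding the products in~\eqref{eq:dfn:Zpart} and~\eqref{eq:dfn:Corr}, the sum over $\sigma\in\{-1,1\}^V$ vanishes unless every vertex of $G$ has even degree in the selected edge subset $S\subseteq E$. Tracking the modified parity condition at $b_i$ and $b_j$ in the numerator, the common prefactor $2^{|V|}\prod_e\cosh(J_e)$ cancels in the ratio, yielding
\[
\<\sigma_i\sigma_j\>=\frac{\sum_{S:\,\partial S=\{b_i,b_j\}}\prod_{e\in S}\tanh(J_e)}{\sum_{S:\,\partial S=\emptyset}\prod_{e\in S}\tanh(J_e)},
\]
where $\partial S\subset V$ denotes the set of odd-degree vertices of $(V,S)$. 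This reduces the lemma to matching these two subgraph sums, up to a common multiplicative constant, with the alternating-flow sums on the right-hand side of~\eqref{eq:Lis_formula}.

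Next I would construct a weight-preserving correspondence between the pairs $(S,\text{resolution})$ above and the alternating flows $F\in\Fcal_{a,b}(G)$ (respectively $F\in\Fcal_\emptyset(G)$). At each interior vertex $v$, the $\deg_S(v)$ edges of $S$ incident to $v$ should be partitioned into non-crossing pairs using the cyclic order coming from the planar embedding; each pair is then either interpreted as a through-arc (whose edges become directed in $F$) or as an ``overlap'' that collapses a single edge of $S$ into a bidirected edge of $F$. The identities $\sinh(2J_e)=2x_e/y_e^2$ and $2\sinh^2(J_e)=2x_e^2/y_e^2$, combined with $x_e=\tanh(J_e)$, then show that regrouping $\prod_{e\in S}\tanh(J_e)$ by alternating-flow support exactly reproduces the alternating-flow weight~\eqref{eq:Lis_weight}. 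The scalar $2^{|A|-|V(F)|}$ absorbs the binary alternation-direction choice (clockwise vs.\ counterclockwise) at each vertex of $V(F)$, together with the prescribed orientations at the boundary spikes indexed by $A$.

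The hard part is the local combinatorial identity at each vertex: one has to verify that the planar non-crossing pairings together with the alternation-direction choices partition the high-temperature sum into alternating-flow contributions with the correct multiplicities, so that the weights agree on the nose. This is essentially the content of \cite[Lemma~4.2]{Lis}, which Lis proves using a switching-type argument in the spirit of the random-currents framework of \cite{GHS}. Once this local identity is in hand, substituting the alternating-flow sums into both the numerator and the denominator and cancelling the common global constant yields~\eqref{eq:Lis_formula}.
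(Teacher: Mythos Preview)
The paper does not contain its own proof of this lemma: it is quoted directly from \cite[Lemma~5.2]{Lis} and used as a black box in the second proof of Theorem~\ref{thm:X=X'}. So there is no in-paper argument to compare your proposal against.

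Your sketch is a reasonable outline of how one might derive~\eqref{eq:Lis_formula}, and you correctly defer the nontrivial local combinatorial identity to Lis. One point worth flagging: the paper itself notes (just before the lemma) that Lis proved this formula using the \emph{random currents model} of~\cite{GHS}, not directly via the high-temperature expansion you start from. Random currents are integer-valued edge functions $n:E\to\mathbb{Z}_{\ge0}$ weighted by $\prod_e J_e^{n_e}/n_e!$, and the switching lemma of~\cite{GHS} is what produces, after taking a \emph{pair} of currents, the factor $2^{|A|-|V(F)|}$ and the bidirected edges. The high-temperature expansion (edge subsets weighted by $\tanh(J_e)$) is the parity reduction of random currents, and one can pass between the two, but the clean appearance of the alternating-flow weights in~\eqref{eq:Lis_weight} comes most naturally from the doubled random-current picture rather than from resolving a single even subgraph at each vertex. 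In particular, your attribution of the key local identity to \cite[Lemma~4.2]{Lis} is slightly off: per the paper's Remark following~\eqref{eq:Lis_weight}, that lemma is about the equivalence of two weight formulas for alternating flows, not about the passage from subgraph sums to flows. The actual derivation of~\eqref{eq:Lis_formula} is the content of \cite[Lemma~5.2]{Lis} itself.
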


\begin{proof}[Second proof of Theorem~\ref{thm:X=X'}.]

\def\U{U}
\def\ww{\tilde{\w}}
\def\red{{\operatorname{min}}}
\def\Fcalmin{\Fcal^\red}
\def\www{\bar\w}
\def\spins{\alpha}
\def\UF{{\U(F)}}

For a flow $F\in\Fcal_{A,B}(G)$, let $\U(F)$ denote the set of vertices $v\in \VAB AB$ that are \emph{not} incident to a directed or a bidirected edge of $F$. Thus $|\U(F)|=|V|-|V(F)|$, and we set
\[\ww(F):=2^{|V|}\w(F)=2^{|A|+|\U(F)|}\prod_{e\in\EAB AB} \w(F,e).\]
Suppose that we are given a flow $F\in\Fcal_{A,B}(G)$ together with a map $\spins:\UF\to\{-1,1\}$. We say that the pair $(F,\spins)$ is a \emph{spinned flow}. The weight of a spinned flow is defined to be $\ww(F,\spins)=2^{|A|}\prod_{e\in\EAB AB} \w(F,e)$, so that $\ww(F)=\sum_{\spins\in\{-1,1\}^\UF} \ww(F,\spins)$. We then define an order relation $\leq$ on spinned flows by writing $(F,\spins)\leq (F',\spins')$ if all of the following conditions are satisfied:
\begin{itemize}
\item $F'$ is obtained from $F$ by making some undirected edges bidirected (thus $\U(F')\subset\U(F)$),
\item the restriction of $\spins$ to $\U(F')$ equals $\spins'$, and
\item for every vertex $v\in \U(F)\setminus \U(F')$ such that $\spins(v)=1$ (resp., $\spins(v)=-1$), all bidirected edges of $F'$ incident to $v$ are clockwise (resp., counterclockwise) bidirected edges.
\end{itemize}
Even though $\spins'$ can be obtained from $\spins$ by restricting it to $\U(F')\subset\U(F)$, we can also reconstruct $\spins$ from $(F',\spins')$, since every vertex $v\in \U(F)\setminus \U(F')$ is incident to at least one bidirected edge of $F'$, and either all such edges are clockwise (in which case we must have $\spins(v)=1$) or counterclockwise (in which case we must have $\spins(v)=-1$).

Given a spinned flow $(F,\spins)$, we say that an undirected edge $e$ of $F$ is \emph{active} if there exists a spinned flow $(F',\spins')>(F,\spins)$ such that $e$ is bidirected in $F'$. Thus any $(F',\spins')>(F,\spins)$ is obtained from $(F,\spins)$ by making some active edges bidirected. (An active edge can become either a clockwise or a counterclockwise bidirected edge but not both.) Equivalently, for every undirected edge $e$ of $F$ and a vertex $v$ incident to $e$, we set $\spins(v,e):=\spins(v)$ if $v\in\UF$, and $\spins(v,e):=1$ (resp., $\spins(v,e):=-1$) if $v\in V\setminus \UF$ and after replacing $e$ by a clockwise (resp., counterclockwise) bidirected edge, the directions of edges still alternate around $v$. Then an undirected edge $e=\{u,v\}$ of $F$ is active if and only if we have $\spins(v,e)=\spins(u,e)$.

We say that a spinned flow $(F,\spins)$ is \emph{minimal} if it is minimal with respect to our order relation $\leq$. Equivalently, $(F,\spins)$ is minimal if $F$ has no bidirected edges. We denote $\Fcalmin_{A,B}(G)$ the set of all minimal spinned flows $(F,\spins)$ where $F\in\Fcal_{A,B}(G)$. For $(F,\spins)\in\Fcalmin_{A,B}(G)$, we define its weight
\[\www(F,\spins):=\sum_{(F',\spins')\geq (F,\spins)}\ww(F',\spins')=2^{|A|}\prod_{e\in\EAB AB}\www(F,\spins,e),\]
where
\begin{equation}\label{eq:www}
\www(F,\spins,e)=
    \begin{cases}
      {2x_e}/{y_e^2}, &\text{if $e$ is a directed edge in $F$;}\\
      1+{2x_e^2}/{y_e^2}=(1+x_e^2)/{y_e^2}, &\text{if $e$ is an active edge of $(F,\spins)$;}\\
      1, &\text{otherwise.}
    \end{cases}
\end{equation}
  Here we used the fact that $x_e^2+y_e^2=1$.

  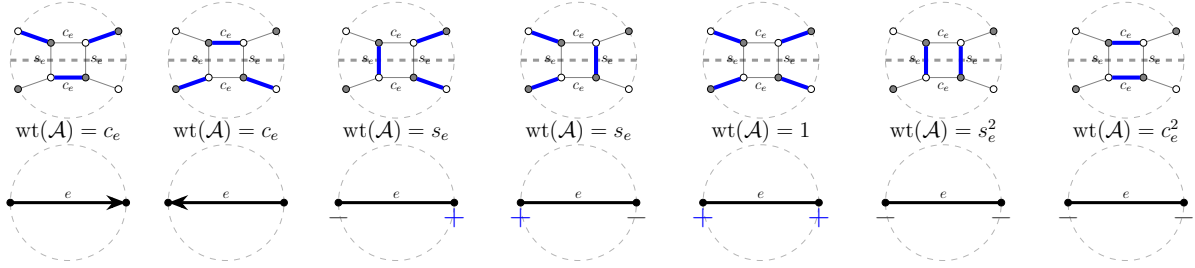
\begin{figure}
    \renewcommand{\drawbbip}{}
  \def\tikzscl{1}
  \def\sclbx{1.1}
  \def\sclbxx{0.7}
  \def\pmscl{2}
  \def\pluss{\scalebox{\pmscl}{\textcolor{blue}{\bf$+$}}}
  \def\minuss{\scalebox{\pmscl}{\textcolor{black!70}{\bf$-$}}}

\scalebox{\sclbxx}{
  \begin{tabular}{ccccccc}
 \scalebox{\sclbx}{
   \begin{tikzpicture}[baseline=(zero.base),scale=\tikzscl]
    \drawgraph
    \edgeplmatch{A}{AA}
    \edgeplmatch{D}{DD}
    \edgeplmatch{B}{C}
  \end{tikzpicture}
}    & 
 \scalebox{\sclbx}{
  \begin{tikzpicture}[baseline=(zero.base),scale=\tikzscl]
    \drawgraph
    \edgeplmatch{A}{D}
    \edgeplmatch{B}{BB}
    \edgeplmatch{C}{CC}
  \end{tikzpicture}
}    &  \scalebox{\sclbx}{
   \begin{tikzpicture}[baseline=(zero.base),scale=\tikzscl]
    \drawgraph
    \edgeplmatch{A}{AA}
    \edgeplmatch{B}{BB}
    \edgeplmatch{D}{C}
  \end{tikzpicture}
}    & 
 \scalebox{\sclbx}{
  \begin{tikzpicture}[baseline=(zero.base),scale=\tikzscl]
    \drawgraph
    \edgeplmatch{A}{B}
    \edgeplmatch{D}{DD}
    \edgeplmatch{C}{CC}
  \end{tikzpicture}
}   & 
 \scalebox{\sclbx}{
  \begin{tikzpicture}[baseline=(zero.base),scale=\tikzscl]
    \drawgraph
    \edgeplmatch{C}{CC}
    \edgeplmatch{D}{DD}
    \edgeplmatch{A}{AA}
    \edgeplmatch{B}{BB}
  \end{tikzpicture}
}    & 
 \scalebox{\sclbx}{
  \begin{tikzpicture}[baseline=(zero.base),scale=\tikzscl]
    \drawgraph
    \edgeplmatch{A}{B}
    \edgeplmatch{C}{D}
  \end{tikzpicture}
}     & 
 \scalebox{\sclbx}{
  \begin{tikzpicture}[baseline=(zero.base),scale=\tikzscl]
    \drawgraph
    \edgeplmatch{A}{D}
    \edgeplmatch{C}{B}
  \end{tikzpicture}
}      \\
    $\wt(\match)=c_e$ &  $\wt(\match)=c_e$ &    $\wt(\match)=s_e$ &  $\wt(\match)=s_e$ &   $\wt(\match)=1$ &   $\wt(\match)=s_e^2$ &      $\wt(\match)=c_e^2$\\

    \scalebox{\sclbx}{
  \begin{tikzpicture}[baseline=(zero.base),scale=\tikzscl]
    \coordinate (zero) at (0,0);
    \disk{0,0}
    \edgedir{-1,0}{1,0}
    \node[anchor=south,scale=\scsc] (etop) at (0,0) {$e$};
    \vertx{V1}{-1,0}
    \vertx{V2}{1,0}
  \end{tikzpicture}
  } & \scalebox{\sclbx}{
  \begin{tikzpicture}[baseline=(zero.base),scale=\tikzscl]
    \coordinate (zero) at (0,0);
    \disk{0,0}
    \edgedir{1,0}{-1,0}
    \node[anchor=south,scale=\scsc] (etop) at (0,0) {$e$};
    \vertx{V1}{-1,0}
    \vertx{V2}{1,0}
  \end{tikzpicture}
  } &  \scalebox{\sclbx}{
  \begin{tikzpicture}[baseline=(zero.base),scale=\tikzscl]
    \coordinate (zero) at (0,0);
    \disk{0,0}
    \edge{-1,0}{1,0}
    \node[anchor=south,scale=\scsc] (etop) at (0,0) {$e$};
    \vertx{V1}{-1,0}
    \vertx{V2}{1,0}
    \node[anchor=north,scale=\bscl] (b1) at (-1,0) {\minuss};
    \node[anchor=north,scale=\bscl] (b2) at (1,0) {\pluss};
  \end{tikzpicture}
  } &   \scalebox{\sclbx}{
  \begin{tikzpicture}[baseline=(zero.base),scale=\tikzscl]
    \coordinate (zero) at (0,0);
    \disk{0,0}
    \edge{-1,0}{1,0}
    \node[anchor=south,scale=\scsc] (etop) at (0,0) {$e$};
    \vertx{V1}{-1,0}
    \vertx{V2}{1,0}
    \node[anchor=north,scale=\bscl] (b1) at (-1,0) {\pluss};
    \node[anchor=north,scale=\bscl] (b2) at (1,0) {\minuss};
  \end{tikzpicture}
  } &   \scalebox{\sclbx}{
  \begin{tikzpicture}[baseline=(zero.base),scale=\tikzscl]
    \coordinate (zero) at (0,0);
    \disk{0,0}
    \edge{-1,0}{1,0}
    \node[anchor=south,scale=\scsc] (etop) at (0,0) {$e$};
    \vertx{V1}{-1,0}
    \vertx{V2}{1,0}
    \node[anchor=north,scale=\bscl] (b1) at (-1,0) {\pluss};
    \node[anchor=north,scale=\bscl] (b2) at (1,0) {\pluss};
  \end{tikzpicture}
  } &   \scalebox{\sclbx}{
  \begin{tikzpicture}[baseline=(zero.base),scale=\tikzscl]
    \coordinate (zero) at (0,0);
    \disk{0,0}
    \edge{-1,0}{1,0}
    \node[anchor=south,scale=\scsc] (etop) at (0,0) {$e$};
    \vertx{V1}{-1,0}
    \vertx{V2}{1,0}
    \node[anchor=north,scale=\bscl] (b1) at (-1,0) {\minuss};
    \node[anchor=north,scale=\bscl] (b2) at (1,0) {\minuss};
  \end{tikzpicture}
  } &   \scalebox{\sclbx}{
  \begin{tikzpicture}[baseline=(zero.base),scale=\tikzscl]
    \coordinate (zero) at (0,0);
    \disk{0,0}
    \edge{-1,0}{1,0}
    \node[anchor=south,scale=\scsc] (etop) at (0,0) {$e$};
    \vertx{V1}{-1,0}
    \vertx{V2}{1,0}
    \node[anchor=north,scale=\bscl] (b1) at (-1,0) {\minuss};
    \node[anchor=north,scale=\bscl] (b2) at (1,0) {\minuss};
  \end{tikzpicture}
  } 
  \end{tabular}
  }
  
  \def\pmscl{1}
  \caption{\label{fig:theta}The correspondence $\theta$ between almost perfect matchings of $\Gbip$ (top) and minimal spinned flows (bottom),
    where \textcolor{blue}{\bf$+$} (resp., \textcolor{black!70}{\bf$-$})
    next to a vertex $v$ denotes $\spins(v,e)=1$ (resp., $\spins(v,e)=-1$).}
\end{figure}
  
  It thus follows that
  \[\sum_{F\in\Fcal_{A,B}(G)}\ww(F)=\sum_{(F,\spins)\in\Fcalmin_{A,B}(G)}\www(F,\spins).\]
  Our goal is to give a map $\theta$ from almost perfect matchings of $\Gbip$ to minimal spinned flows, which locally is defined in Figure~\ref{fig:theta}. Namely, each edge $e=\{u,v\}$ of $G$  corresponds to  four interior vertices of $\Gbip$, as in Figure~\ref{fig:Gbip_small}. Every almost perfect matching $\Acal$ of $\Gbip$ assigns a single edge to each of those four vertices, and there are seven ways to do so, as in Figure~\ref{fig:theta} (top). The product of the weights of edges of $\Acal$ incident to one of the four vertices of $\Gbip$ equals, respectively, to $c_e,c_e,s_e,s_e,1,s_e^2,c_e^2$, see Figure~\ref{fig:theta} (top).

  Similarly, for every minimal spinned flow $(F,\spins)$, $e$ may be directed from $u$ to $v$, or directed from $v$ to $u$, or undirected, in which case the functions $\spins(u,e),\spins(v,e)\in\{-1,1\}$ are well defined. As shown in Figure~\ref{fig:theta}, the two matchings of weight $c_e$ correspond to the case where $e$ is directed in $F$, and the remaining five matchings correspond to $e$ being undirected in $F$. Specifically, the two matchings of weight $s_e$ correspond to the two cases where $\spins(u,e)\neq\spins(v,e)$, the matching of weight $1$ corresponds to the case $\spins(u,e)=\spins(v,e)=1$, and the two matchings of weights $s_e^2,c_e^2$ correspond to a single case $\spins(u,e)=\spins(v,e)=-1$.

  It is straightforward to check that these rules give a well defined map $\theta$ from the set of almost perfect matchings of $\Gbip$ to the set of minimal spinned flows on $G$.
  Moreover, it is easy to check that the set $J:=\partial(\Acal)\subset [2n]$ determines uniquely two disjoint sets $A,B\subset [n]$ such that $\theta(\Acal)\in\Fcalmin_{A,B}(G)$.
  Namely, we have $A=\{i\in[n]\mid 2i-1,2i\notin J\}$ and $B=\{i\in[n]\mid 2i-1,2i\in J\}$.
  Finally, let $(F,\spins)\in\Fcalmin_{A,B}(G)$ be a minimal spinned flow, then we claim that
  \begin{equation}\label{eq:www_leningrad}
\www(F,\spins)= \frac1{\prod_{e\in E}s_e}\sum_{\Acal:\theta(\Acal)=(F,\spins)}\wt(\Acal),
  \end{equation}
  where the sum is over almost perfect matchings $\Acal$ of $\Gbip$. To see why this is the case, note that the multiplicative contribution of an edge $e\in E$ to $\www(F,\spins)$ is given by~\eqref{eq:www}. On the other hand, it is clear from Figure~\ref{fig:theta} that for any two almost perfect matchings $\match,\match'$ such that $\theta(\match)=\theta(\match')$, we have $\proj(\match)=\proj(\match')$, where $\proj(\match)$ is defined in the proof of Proposition~\ref{prop:matchings_complements}. Thus the total weight of almost perfect matchings in the preimage of $(F,\spins)$ under $\theta$ equals to the product over all edges $e\in E$ of $\insideprod(e)$, defined in the proof of Proposition~\ref{prop:matchings_complements} as
  \begin{equation}\label{eq:leningrad}
\insideprod(e)=  \begin{cases}
      c_e, &\text{if $e$ is a directed edge in $F$;}\\
      1=s_e^2+c_e^2, &\text{if $e$ is an active edge of $(F,\spins)$;}\\
      s_e, &\text{otherwise.}
    \end{cases}
  \end{equation}
  Indeed, if $e=\{u,v\}$ is an active edge of $(F,\spins)$ then we either have $\spins(u,e)=\spins(v,e)=1$ in which case $e$ corresponds locally to a single matching of weight $1$, or we have $\spins(u,e)=\spins(v,e)=-1$ in which case $e$ corresponds locally to two matchings of weights $s_e^2$ and $c_e^2$, which can be interchanged in every almost perfect matching in the preimage of $(F,\spins)$ under $\theta$. It remains to note that the right hand side of~\eqref{eq:leningrad} can be obtained from the right hand side of~\eqref{eq:www} by multiplying by $s_e$:
\[s_e\www(F,\spins,e)=
  \begin{cases}
      {2x_es_e}/{y_e^2}=c_e, &\text{if $e$ is a directed edge in $F$;}\\
      (1+x_e^2)s_e/{y_e^2}=1, &\text{if $e$ is an active edge of $(F,\spins)$;}\\
      s_e, &\text{otherwise.}
    \end{cases}\]
Thus $s_e\www(F,\spins,e)=\insideprod(e)$, which proves~\eqref{eq:www_leningrad}. This implies that the right hand sides of~\eqref{eq:Lis_formula} and~\eqref{eq:m'_i,j:matchings} are equal, finishing the second proof of Theorem~\ref{thm:X=X'}.
\end{proof}

\section{Cyclic symmetry and a homeomorphism with a ball}\label{sec:ball}
By Theorems~\ref{thm:X=X'} and~\ref{thm:Gmed_parametrization}, the map $\doublemap$ is a stratification-preserving homeomorphism from $\Closure_n$ to $\OGtnn(n,2n)$, which is the first part of Theorem~\ref{thm:main}. In this section, we follow the strategy of~\cite{GKL} to prove the second part of Theorem~\ref{thm:main}, which states that $\Closure_n$ is homeomorphic to a closed ball of dimension $n\choose 2$.

\def\expttau{\exp(t(S+S^T))}
Recall that the cyclic shift $2n\times 2n$ matrix $S$ was defined in Section~\ref{sec:cyclic_intro}. We let $S^T$ denote the matrix transpose of $S$. We recall the following result from~\cite{GKL}.
\begin{lemma}[{\cite[Corollary~3.8]{GKL}}]\label{lemma:expts_Grtp}
For all $X\in\Grtnn(k,\n)$ and all $t>0$, we have $X\cdot \expttau\in \Grtp(k,\n)$.
\end{lemma}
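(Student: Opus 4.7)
By the Cauchy--Binet formula, for any $\n \times \n$ matrix $g$ and any $X \in \Gr(k,\n)$,
\[
\Delta_I(X \cdot g) = \sum_{J \in \binom{[\n]}{k}} \Delta_J(X)\,\Delta_{J,I}(g),
\]
where $\Delta_{J,I}(g)$ denotes the minor of $g$ with row set $J$ and column set $I$. Since $X \in \Grtnn(k,\n)$ has every $\Delta_J(X) \ge 0$ with at least one strict inequality, the lemma reduces to showing that for $g := \exp(t(S+S^T))$, every $k \times k$ minor $\Delta_{J,I}(g)$ is strictly positive whenever $t > 0$; in other words, one must verify that the matrix $\exp(t(S+S^T))$ is $k$-totally positive in the classical sense of Karlin and Gantmacher.

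The matrix $S + S^T$ is symmetric and, up to the sign $(-1)^{k-1}$ on the entries $(1,\n)$ and $(\n,1)$, coincides with the adjacency matrix of the $\n$-cycle. The natural plan is to diagonalize it via a discrete Fourier transform: its eigenvalues are $2\cos\theta_m$ for $\theta_m := \pi(2m + k - 1)/\n$, $m = 0, 1, \ldots, \n-1$, with eigenvectors supported on the $\n$-th roots of $(-1)^{k-1}$ (the shift by $k-1$ is precisely what absorbs the sign twist). Writing $g = F D F^{-1}$ with $D = \diag(e^{2t\cos\theta_m})$ and $F$ the corresponding scaled Fourier matrix, a second application of Cauchy--Binet yields
\[
\Delta_{J,I}(g) = \sum_{K \in \binom{[\n]}{k}} \Delta_{J,K}(F)\,\Delta_{K,I}(F^{-1})\,\prod_{m \in K} e^{2t\cos\theta_m}.
\]

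The main obstacle is proving strict positivity of this sum. In the principal case $I = J$ it is immediate, since $F$ is unitary and hence $\Delta_{K,I}(F^{-1}) = \overline{\Delta_{I,K}(F)}$, so each summand becomes $|\Delta_{J,K}(F)|^2 e^{2t\sum_{m \in K}\cos\theta_m} \ge 0$ with at least one strict inequality. The non-principal case is more delicate: one must take the real part of a sum of products of complex Vandermonde-type determinants, and after a reorganization using Cauchy--style identities for alternants the expression should collapse to a manifestly positive combination, but tracking the $(-1)^{k-1}$ twist through the change of basis is the delicate step. A cleaner alternative would be to expand $g_{ij} = \sum_{m \ge 0} \frac{t^m}{m!}\bigl((S+S^T)^m\bigr)_{ij}$ as a generating function for signed walks on the twisted $\n$-cycle, apply a cyclic version of the Lindström--Gessel--Viennot lemma to $\Delta_{J,I}(g)$, and verify that the sign $(-1)^{k-1}$ is exactly the correction needed for all non-crossing tuples of walks to contribute positively while all crossing tuples cancel in pairs, in the spirit of Postnikov's boundary measurement formula for $\Grtnn(k,\n)$.
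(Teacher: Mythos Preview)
The paper does not prove this lemma; it is cited directly from \cite[Corollary~3.8]{GKL}. Your Cauchy--Binet reduction to the claim that every $k\times k$ minor of $g=\exp(t(S+S^T))$ is strictly positive is correct, and it is in fact equivalent to the lemma (take $X$ to be a coordinate subspace to see the converse). The problem is that you never establish this positivity. In the Fourier-diagonalization approach you handle only the principal minors (which already follows from the positive-definiteness of $g$) and then say the non-principal case ``should collapse'' after a rearrangement you do not carry out. In the lattice-path approach you only say one ``would'' apply a cyclic Lindstr\"om--Gessel--Viennot argument and ``verify'' that the twist $(-1)^{k-1}$ fixes the signs. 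Neither is a proof, and you yourself flag both as incomplete. The second sketch hides the real difficulty: on a cycle the standard LGV involution does not terminate cleanly because path families can wind, and the assertion that the sign $(-1)^{k-1}$ exactly compensates for this winding is essentially the statement you are trying to prove.

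For comparison, the argument in \cite{GKL} avoids both routes. One writes $S+S^T=\sum_{i\in[\n]}(e_i+f_i)$ as a sum of cyclically extended Chevalley generators, so that right multiplication by each $x_i(s)$ or $y_i(s)$ with $s>0$ preserves $\Grtnn(k,\n)$ (cf.\ Lemma~\ref{lemma:bridge_removal}), while a finite product in which every index $i\in[\n]$ occurs sends $\Grtnn(k,\n)$ into $\Grtp(k,\n)$, since the associated plabic graph has the strand permutation of the top cell. One then realizes $\exp(t(S+S^T))$ as such a product (or a limit thereof) and concludes. This uses no spectral computation and no path-counting on the cycle, and would be the most direct way to close the gap in your write-up.
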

Recall that the totally positive Grassmannian $\Grtp(k,\n)$ is defined in~\eqref{eq:Grtp}. Let us define the \emph{totally positive orthogonal Grassmannian} to be the intersection
\begin{equation*}\label{eq:OGtp}
\OGtp(n,2n):=\Grtp(n,2n)\cap \OG(n,2n).
\end{equation*}

\begin{lemma}\label{lemma:expttau}
For all $X\in\OGtnn(n,2n)$ and all $t>0$, we have $X\cdot \expttau\in \OGtp(n,2n)$.
\end{lemma}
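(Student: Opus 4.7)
The plan is to split the claim into its two defining conditions. Membership in $\Grtp(n,2n)$ is immediate: since $\OGtnn(n,2n)\subset\Grtnn(n,2n)$, Lemma~\ref{lemma:expts_Grtp} (applied with $k=n$, $\n=2n$) gives $X\cdot\exp(t(S+S^T))\in\Grtp(n,2n)$ for every $t>0$. So the whole content of the statement is that the orthogonal condition is preserved under right multiplication by $g_t:=\exp(t(S+S^T))$.

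For this, I would use Proposition~\ref{prop:eta}. Let $D:=\operatorname{diag}(1,-1,1,-1,\dots,1,-1)$, so that $\eta(u,v)=uDv^T$. A subspace $Y$ lies in $\OG(n,2n)\sqcup \OG_-(n,2n)$ iff $\eta$ vanishes on $Y$, and if $A$ represents $Y$ then $Ag_t$ represents $Y\cdot g_t$; hence $\eta$ vanishes on $Y\cdot g_t$ whenever $g_tDg_t^T=D$. Writing $g_t=\exp(tA)$ with $A:=S+S^T$, this exponential identity is implied by the infinitesimal condition $AD+DA^T=0$. The matrix $A$ is symmetric with nonzero entries only at positions $(i,i+1),(i+1,i)$ for $i\in[2n-1]$, and at $(1,2n)$ and $(2n,1)$ (with value $(-1)^{n-1}$). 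In each such position $(i,j)$ the indices satisfy $i+j\equiv 1\pmod 2$, so $D_{ii}+D_{jj}=(-1)^{i-1}+(-1)^{j-1}=0$, and therefore
\[(AD+DA^T)_{ij}=A_{ij}\bigl((-1)^{j-1}+(-1)^{i-1}\bigr)=0.\]
Thus $g_t$ preserves $\eta$, and consequently $Y\cdot g_t\in \OG(n,2n)\sqcup \OG_-(n,2n)$ for every $Y\in \OG(n,2n)$.

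Finally, I would rule out $\OG_-(n,2n)$ by a connectedness argument: the path $t\mapsto X\cdot g_t$ is continuous, starts at $X\in\OG(n,2n)$ when $t=0$, and stays in the union $\OG(n,2n)\sqcup\OG_-(n,2n)$ by the previous paragraph; since these two subsets are disjoint and closed inside their union (they are cut out by the conditions $\Delta_I=\Delta_{[2n]\setminus I}$ and $\Delta_I=-\Delta_{[2n]\setminus I}$ respectively), the path cannot jump components, so $X\cdot g_t\in\OG(n,2n)$ for all $t\geq 0$. Combined with the first step, this gives $X\cdot g_t\in \Grtp(n,2n)\cap\OG(n,2n)=\OGtp(n,2n)$, as required.

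The only nontrivial ingredient is the infinitesimal check $AD+DA^T=0$, but even this is a short parity observation once one notes the support of $A$; no obstacle should arise beyond bookkeeping of signs and indices.
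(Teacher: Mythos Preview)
Your proof is correct and follows essentially the same route as the paper: reduce to showing $\exp(t(S+S^T))$ preserves $\eta$, then verify the Lie-algebra condition $AD+DA^T=0$ by the parity observation on the support of $S+S^T$. The only minor difference is that you rule out $\OG_-(n,2n)$ via a connectedness argument, whereas the paper leaves this implicit; in fact $\OG_-(n,2n)\cap\Grtp(n,2n)=\emptyset$ (since $\Delta_I=-\Delta_{[2n]\setminus I}$ with all $\Delta_I>0$ is impossible), which disposes of this point immediately.
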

\begin{proof}
  In view of Lemma~\ref{lemma:expts_Grtp}, it suffices to show that $X\cdot \expttau\in \OG(n,2n)$. By Proposition~\ref{prop:eta}, it is enough to prove that $\expttau$ belongs to the Lie group $O(n,n)$ consisting of all $2n\times 2n$ matrices $g$ preserving the bilinear form $\eta$, i.e., satisfying $\eta(gu,gv)=\eta(u,v)$ for all $u,v\in\R^{2n}$. It is a standard fact from Lie theory that $\expttau$ is such a matrix if and only if $S+S^T$ belongs to the Lie algebra of $O(n,n)$. Let $D:=\diag(1,-1,1,-1,\dots,1,-1)$ be a $2n\times 2n$ diagonal matrix with $D_{i,i}=(-1)^{i-1}$ for $1\leq i\leq 2n$. Then the Lie algebra of $O(n,n)$ consists of all $2n\times 2n$ matrices $B$ such that $B\cdot D=-D\cdot B^T$. It is easy to check that $S+S^T$ belongs to this Lie algebra. We are done with the proof.
\end{proof}

\begin{example}
For $n=2$, the computation we need to check that $S+S^T$ belongs to the Lie algebra of $O(n,n)$ goes as follows.
  \[(S+S^T)\cdot D=\begin{pmatrix}
      0 & 1 & 0 & -1\\
      1 & 0 & 1 & 0\\
      0 & 1 & 0 & 1\\
      -1 & 0 & 1 & 0
    \end{pmatrix}\cdot \begin{pmatrix}
      1 & 0 & 0 & 0\\
      0 & -1 & 0 & 0\\
      0 & 0 & 1 & 0\\
      0 & 0 & 0 & -1
    \end{pmatrix}=\begin{pmatrix}
      0 & -1 & 0 & 1\\
      1 & 0 & 1 & 0\\
      0 & -1 & 0 & -1\\
      -1 & 0 & 1 & 0
    \end{pmatrix};\]
  \[D\cdot (S+S^T)^T =\begin{pmatrix}
      1 & 0 & 0 & 0\\
      0 & -1 & 0 & 0\\
      0 & 0 & 1 & 0\\
      0 & 0 & 0 & -1
    \end{pmatrix}\cdot\begin{pmatrix}
      0 & 1 & 0 & -1\\
      1 & 0 & 1 & 0\\
      0 & 1 & 0 & 1\\
      -1 & 0 & 1 & 0
    \end{pmatrix}=\begin{pmatrix}
      0 & 1 & 0 & -1\\
      -1 & 0 & -1 & 0\\
      0 & 1 & 0 & 1\\
      1 & 0 & -1 & 0
    \end{pmatrix}.\]
  This shows $(S+S^T)\cdot D=-D\cdot (S+S^T)^T$ for $n=2$.
\end{example}

\begin{remark}\label{rmk:critical}
  For all $X\in\Grtnn(n,2n)$, it was shown in~\cite{GKL} that the limit of $X~\cdot~\exp(t(S+S^T))$ as $t\to\infty$ is the unique cyclically symmetric element $X_0\in\Grtnn(n,2n)$ from Proposition~\ref{prop:X_0}.  It follows from Lemma~\ref{lemma:expttau} that this point $X_0$ belongs to $\OGtnn(n,2n)$.
\end{remark}

\def\flow{f}
\newcommand\act[2]{\flow(#1,#2)}
\def\openball{Q}
\def\closedball{\overline{\openball}}

\begin{proof}[Proof of Theorem~\ref{thm:main}]
  As we have already discussed, the first part is a direct consequence of Theorems~\ref{thm:X=X'} and~\ref{thm:Gmed_parametrization}. The second part follows from Lemma~\ref{lemma:expttau} together with an argument completely identical to the one in~\cite{GKL}, which we briefly outline here. We refer the interested reader to~\cite{GKL} for more details.
\begin{definition}[{\cite[Definition~2.1]{GKL}}]
  A map $\flow:\R\times \R^N$ is called a \emph{contractive flow} if the following conditions hold for $f$:
  \begin{enumerate}
\item\label{item:continuous} the map $\flow$ is continuous;
\item\label{item:action} for all $p\in\RR^N$ and $t_1,t_2\in\RR$, we have $\act 0 p=p$ and $\act{t_1+t_2}{p}=\act{t_1}{\act{t_2}{p}}$;
\item\label{eq:contract} for all $p\neq 0$ and $t > 0$, we have $\|\act{t}{p}\| < \|p\|$.  \end{enumerate}
\end{definition}
Here $\|\cdot\|$ denotes the Euclidean norm on $\R^N$. A useful feature of having a contractive flow is the following result.
\begin{lemma}[{\cite[Lemma~2.3]{GKL}}]
Let $\openball \subset \RR^N$ be a smooth embedded submanifold of dimension $d \leq N$, and $\flow:\RR\times \RR^N\to\RR^N$ a contractive flow. Suppose that $\openball$ is bounded and satisfies the condition
\begin{align}\label{eq:invariant}
\act{t}{\closedball} \subset \openball \quad \text{ for $t > 0$}.
\end{align}
Then the closure $\closedball$ is homeomorphic to a closed ball of dimension $d$.
\end{lemma}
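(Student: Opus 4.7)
The plan is to build a homeomorphism $\closedball \to \bar{B}^d$ (with $\bar{B}^d$ the closed $d$-dimensional Euclidean ball) by exploiting that $\flow$ contracts everything in $\R^N$ toward the origin. First I would establish three structural consequences of the contractive flow axioms: (i) the origin is the unique fixed point of the flow, and $0 \in \closedball$; (ii) every forward orbit $t \mapsto \act{t}{p}$ for $p \in \closedball$ converges to $0$ as $t \to +\infty$, since $\|\act{t}{p}\|$ is strictly decreasing, bounded below, and any limit point is forced to be a fixed point by continuity and the semigroup law; and (iii) backward orbits from nonzero $p$ have strictly increasing norm, so they must exit the bounded set $\closedball$ in finite time.

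I would then fix $\epsilon > 0$ small enough that the ``core'' $C_\epsilon := \closedball \cap \{\|x\| \leq \epsilon\}$ is a closed topological $d$-disk. The smooth-submanifold assumption on $\openball$ is essential here: for any short positive time $t_0$, the hypothesis $\act{t_0}{\closedball} \subset \openball$ pushes $\closedball$ into $\openball$, and a small Euclidean ball in the smooth manifold $\openball$ near $0$ is a $d$-disk; the flow then transports this disk back to a disk neighborhood in $\closedball$. Outside $C_\epsilon$, strict monotonicity of the norm along orbits gives a unique ``hitting time'' $\tau(p) \geq 0$ with $\|\act{\tau(p)}{p}\| = \epsilon$, and the analogous backward construction terminates in finite time by (iii). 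These together should identify $\closedball \setminus C_\epsilon^{\circ}$ with $\partial C_\epsilon \times [0, 1]$, and gluing this collar onto the core yields a closed topological $d$-disk.

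The main technical obstacle, I expect, is upgrading all these constructions from continuous bijections to genuine homeomorphisms. Continuity of $\tau$ requires transversality of the orbits to the level sets $\{\|x\| = \epsilon\}$, which follows from strict monotonicity of the norm along orbits and continuity of $\flow$. The final identification is a continuous bijection between compact Hausdorff spaces, hence automatically a homeomorphism. Patching the collar topology smoothly onto the core topology is where smoothness of $\openball$ does its real work, either via Brown's collaring theorem or a direct construction. Once the lemma is proved, the second half of Theorem~\ref{thm:main} follows by applying it to $\openball = \OGtp(n,2n)$ with the flow induced by $X \mapsto X \cdot \exp(t(S+S^T))$, suitably embedded so that the limit point $X_0$ of Remark~\ref{rmk:critical} plays the role of the origin; Lemma~\ref{lemma:expttau} then supplies both the contractive property and the required invariance $\act{t}{\closedball} \subset \openball$.
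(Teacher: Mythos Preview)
The paper does not prove this lemma; it is quoted from \cite[Lemma~2.3]{GKL} and invoked as a black box in the proof of Theorem~\ref{thm:main}. There is therefore nothing in this paper to compare your attempt against.

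That said, your outline tracks the strategy of the original argument in \cite{GKL}: the contractive axioms force $0$ to be the unique fixed point and to lie in $\closedball$; forward orbits converge to $0$ and backward orbits from nonzero points exit the bounded set $\closedball$ in finite time; the smooth submanifold hypothesis supplies a small core $d$-disk near $0$; and strict monotonicity of $\|\act{t}{p}\|$ in $t$ makes the hitting time to the core boundary well defined and continuous, yielding the collar $\partial C_\epsilon \times [0,1]$. The points you flag as delicate (continuity of the hitting time, the collar-to-core gluing) are exactly where the work lies, and your use of compact--Hausdorff to upgrade a continuous bijection to a homeomorphism is the right closing move. Your final paragraph, applying the lemma with $\openball = \OGtp(n,2n)$ and the flow $X \mapsto X \cdot \exp(t(S+S^T))$, is precisely what the paper does immediately after stating the lemma.
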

It was shown in the proof of~\cite[Theorem~1.1]{GKL} that the space $\Grtnn(n,2n)$ can be explicitly realized as a subset of $\R^N$  so that the image of $\Grtp(n,2n)$ would be an embedded submanifold of $\RR^N$, and that the action of $\expttau$ on $\Grtnn(n,2n)$ extends to a contractive flow on $\R^N$. Since $\OG(n,2n)$ is an embedded submanifold of $\Gr(n,2n)$, we see that $\openball:=\OGtp(n,2n)$ becomes an embedded submanifold of $\R^N$ whose closure is $\closedball:=\OGtnn(n,2n)$ in $\R^N$. By Lemma~\ref{lemma:expttau}, the contractive flow $\expttau$ restricts to $\OGtnn(n,2n)$ and satisfies~\eqref{eq:invariant}. The result follows.
\end{proof}

Theorem~\ref{thm:main} establishes the correspondence between the planar Ising model and the totally nonnegative orthogonal Grassmannian. Having finished its proof, we are in a position to deduce several other results stated in Section~\ref{sec:applications}.

\def\wtdual{\wt^\ast}
\def\bbipdual{\bbip^\ast}
\def\Gbipdual{(G^\ast)^\square}

\begin{proof}[Proof of Theorem~\ref{thm:planar_dual}.]
  This follows easily from studying the relationship of the map
  \[(G,J)\mapsto (\Gbip,\wt)\mapsto \Meast(\Gbip,\wt)\in\OGtnn(n,2n)\]
  with the duality map $(G,J)\mapsto (\Gdual,\Jdual)$. Namely, a planar Ising network $N=(G,J)$ corresponds to a weighted plabic graph $(\Gbip,\wt)$ then the dual planar Ising network $\Ndual=(\Gdual,\Jdual)$ corresponds to a weighted plabic graph $(\Gbipdual,\wtdual)$ so that $\Gbipdual$ is obtained from $\Gbip$ by switching the colors of all vertices and cyclically relabeling  boundary vertices (i.e., $\bbipdual_{i}:=\bbip_{i+1}$), and $\wtdual$ is obtained from $\wt$ by swapping $s_e$ and $c_e$ for all $e\in E$. More precisely, for each $e\in E$ we have $\sinh(2J_e)\sinh(2\Jdual_{\edual})=1$ by~\eqref{eq:duality}. On the other hand, by~\eqref{eq:sinh_tanh}, we have $\sinh(2J_e)=\frac{c_e}{s_e}$ and $\sinh(2\Jdual_\edual)=\frac{c_\edual}{s_\edual}$, so $s_\edual=c_e$ and $c_\edual=s_e$.  It thus follows from the definition of $\Meast$ given in~\eqref{eq:Delta_I_matchings} that the minor $\Delta_I$ of $\Meast(\Gbip,\wt)$ equals the minor $\Delta_{I'}$ of $\Meast(\Gbipdual,\wtdual)$ where $I'=\{i+1\mid i\in I\}$ (modulo $2n$) for all $I\in{[2n]\choose n}$. This is equivalent to having $\Meast(\Gbip,\wt)\cdot S=\Meast(\Gbipdual,\wtdual)$, which finishes the proof.
\end{proof}

\begin{proof}[Proof of Proposition~\ref{prop:cyclic_fixed_pt}.]
We know by Proposition~\ref{prop:X_0} that there exists a unique cyclically symmetric element $X_0\in\Grtnn(n,2n)$, and by Remark~\ref{rmk:critical}, we have $X_0\in\OGtnn(n,2n)$. By Theorem~\ref{thm:main}, $X_0$ corresponds to a unique boundary correlation matrix $M_0\in\Closure$ of a planar Ising network (i.e., $\doublemap(M_0)=X_0$). Since the operation $N=(G,J)\mapsto \Ndual=(\Gdual,\Jdual)$ amounts to applying the cyclic shift on $\OGtnn(n,2n)$ by Theorem~\ref{thm:planar_dual}, we see that $M_0=M(G,J)$ if and only if $M(\Gdual,\Jdual)=M(G,J)$.
\end{proof}

\begin{proof}[Proof of Theorem~\ref{thm:bound_measurements}.]
This also follows easily from Theorem~\ref{thm:main} combined with~\eqref{eq:Delta_I_matchings}.
\end{proof}

\begin{proof}[Proof of Theorem~\ref{thm:reduced_injective}.]
Follows from Theorem~\ref{thm:main} and part~\eqref{item:Gmed_param:homeo} of Theorem~\ref{thm:Gmed_parametrization}.
\end{proof}

\begin{proof}[Proof of Theorem~\ref{thm:inverse}.]
Follows from Theorems~\ref{thm:bridge_removal} and~\ref{thm:main}.
\end{proof}

\def\Gbipprime{(G')^\square}
\begin{proof}[Proof of Theorem~\ref{thm:duality}.]
  Indeed, adjoining a boundary spike $e$ to $G'$ amounts to adding a pair of bridges to $\Gbipprime$. Adding bridges to $\Gbipprime$ translates into acting by $x_\kk(s_e)$ and $y_{\kk+1}(s_e)$ on $\Meast(\Gbipprime,\wt')$ by Lemma~\ref{lemma:bridge_removal}. However, we also rescale the edges incident to $\kk$ and $\kk+1$ by $c_e$ between adding the two bridges, which amounts to multiplying $\Meast(\Gbipprime,\wt)$ by a diagonal matrix $D_\kk(c_e)$ whose $(\kk,\kk)$-th and $(\kk+1,\kk+1)$-th entries are equal to $c_e$ and  $1/c_e$, respectively. Thus if $N=(G,J)$ is obtained from $N'=(G',J')$ by adjoining a boundary spike, then the matrices $M=M(G,J)$ and $M'=M(G',J')$ are related by
  \[\doublemap(M)=\Meast(\Gbip,\wt)=\Meast(\Gbipprime,\wt')\cdot x_\kk(s_e)\cdot D_\kk(c_e)\cdot y_{\kk+1}(s_e)=\Meast(\Gbipprime,\wt')\cdot g_{\kk}(t),\]
  which is equal to $\doublemap(M')\cdot g_{\kk}(t)$. Here $x_\kk(s_e)\cdot D_\kk(c_e)\cdot y_{\kk+1}(s_e)= g_{\kk}(t)$ reduces to the following $2\times 2$ matrix computation, which relies on $s_e^2+c_e^2=1$:
  \[\begin{pmatrix}
      1 & s_e\\
      0 & 1
    \end{pmatrix}\cdot \begin{pmatrix}
      c_e & 0\\
      0 & 1/c_e
    \end{pmatrix}\cdot \begin{pmatrix}
      1 & 0\\
      s_e & 1
    \end{pmatrix}=\begin{pmatrix}
  1/c_e & s_e/c_e\\
  s_e/c_e & 1/c_e
\end{pmatrix}.\]
We are done with the case of adjoining a boundary spike. The case of adjoining a boundary edge is completely similar, and also follows by applying the duality from Section~\ref{sec:cyclic_intro}, which switches between $s_e$ and $c_e$ due to~\eqref{eq:duality}. We are done with the proof.
\end{proof}

\begin{proof}[Proof of Proposition~\ref{prop:Closure_cells}.]
It follows from Theorem~\ref{thm:X=X'} that $\doublemap$ sends $\Space_\medpa$ homeomorphically onto the cell $\pc_\medpa\cap\OGtnn(n,2n)$, and thus the result follows from Theorem~\ref{thm:main} combined with Theorem~\ref{thm:Gmed_parametrization}~\eqref{item:Gmed_param:disjoint}.
\end{proof}

\section{Generalized Griffiths' inequalities}\label{sec:griffiths}
\def\sd{C}
\def\oesd{\OddEven{\sd}}
\def\oesdgood{\OddEven{\sd}\cap\Sumparity{B}{\epsilon}}
\def\oesdbad{\OddEven{\sd}\cap\Sumparity{B}{1-\epsilon}}
\def\odd{\operatorname{odd}}
\def\even{\operatorname{even}}
\def\codd{\beta}
\def\ceven{\alpha}
\newcommand\mon[1]{m_{\codd,#1}}
In this section, our goal is to prove Theorem~\ref{thm:generalized_Griffiths}. Note that~\eqref{eq:easy_Griffiths} follows from~\eqref{eq:Griffiths} by taking disjoint $A$ and $B$ such that $|B|=1$. We thus focus on proving~\eqref{eq:Griffiths}. Let us fix two subsets $A,B\subset [n]$, and let $C:=\symdiff{A}{B}$ be their symmetric difference. If $\sd$ has odd size then both sides of~\eqref{eq:Griffiths} become zero. Thus we assume that the size of $\sd$ is even. Recall that $\oesd\subset{[2n]\choose n}$ consists of all $n$-element subsets $I$ of $[2n]$ such that $I\cap\{2i-1,2i\}$ has even size if and only if $i\in \sd$. (In particular, $\oesd$ is empty when $\sd$ has odd size.)

Throughout, we also fix a matrix $M=(m_{i,j})\in\Matsym$, and we treat the entries $m_{i,j}=m_{j,i}$ as indeterminates for $i\neq j$.

Our first goal is to give a formula for the minors $\Delta_I(\double M)$ for $I\in \oesd$.
\begin{definition}
Denote $n':=n-|\sd|/2$. Let $\ceven:[2n]\to [2n']$ be the unique order-preserving map such that $\ceven(2i-1)=\ceven(2i)$ if and only if $i\in \sd$. Let $\codd:[2n']\to[n]$ be the unique order-preserving map such that the composition $\codd\circ\ceven:[2n]\to[n]$ sends both $2i-1$ and $2i$ to $i$ for all $i\in[n]$.
\end{definition}

\def\twolinescl{0.8}
\def\threelinescl{0.7}
\begin{example}\label{ex:Griffiths_codd_ceven}
  Suppose that $n=4$ and $\sd=\{1,3\}$. Then $n'=3$, and the map $\ceven:[8]\to[6]$ sends the top row entries of the $2$-line array \scalebox{\twolinescl}{$\begin{tabular}{|cc|c|c|cc|c|c|}
  1&2&3&4&5&6&7&8\\
  1&1&2&3&4&4&5&6
\end{tabular}$} to the corresponding bottom row entries (i.e., $\ceven(1)=\ceven(2)=1,\ceven(3)=2$, etc.). Similarly, $\codd:[6]\to[4]$ sends the top row entries of \scalebox{\twolinescl}{$\begin{tabular}{|c|cc|c|cc|}
  1&2&3&4&5&6\\
  1&2&2&3&4&4
\end{tabular}$} to its bottom row entries, giving rise to a composite map $\codd\circ\ceven$ represented by a $3$-line array \scalebox{\threelinescl}{$\begin{tabular}{|cc|cc|cc|cc|}
  1&2&3&4&5&6&7&8\\
  1&1&2&3&4&4&5&6\\  1&1&2&2&3&3&4&4\\                                                                                                   \end{tabular}$}.
\end{example}

\newcommand\MIJ[2]{\operatorname{Match}(#1,#2)}
\newcommand\MK[1]{\operatorname{Match}(#1)}
\newcommand\monn[1]{m_{#1}}

For disjoint subsets $I,J\subset[2\n]$ of the same size, we say that $\pi$ is a matching \emph{between $I$ and $J$} if $\pi$ contains $|I|=|J|$ pairs, and for each pair $\{i,j\}\in\pi$, we have either $i\in I, j\in J$ or $i\in J, j\in I$. The set of matchings between $I$ and $J$ is denoted by $\MIJ IJ$. For a subset $K\subset[2\n]$ of even size, a \emph{matching on $K$} is a partition of $K$ into $|K|/2$ disjoint subsets of size $2$, and we let $\MK K$ denote the set of matchings on $K$. Thus $\MIJ IJ\subset\MK{I\sqcup J}$, and $\MK{[2n]}$ is as a set equal to $\P_n$. The function $\xing$ naturally extends to $\MIJ IJ$ and $\MK K$.

 For each $I\in\oesd$, we denote $I':=\ceven(I)$, and it is easy to check that we have $I'\in{[2n']\choose n'}$ for $I\in\oesd$.  Given a matching $\pi$ on $[2n']$, we define a monomial $\mon\pi:=\prod_{\{i,j\}\in\pi}m_{\codd(i),\codd(j)}$. Similarly, given a subset $K\subset[n]$ of even size and a matching $\pi\in\MK K$, we set $\monn\pi:=\prod_{\{i,j\}\in\pi}m_{i,j}$.

\begin{proposition}\label{prop:minors_pf}
  For $I\in \oesd$, we have
  \[\Delta_I(\double M)=2^{|\sd|/2} \sum_{\pi\in\MIJ{I'}{[2n']\setminus I'}}(-1)^{\xing(\pi)} \mon\pi.\]
\end{proposition}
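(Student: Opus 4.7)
The plan is to compute $\Delta_I(\double M)$ by a combination of column operations, Laplace expansion, and multilinearity. The key structural observation is that the columns of $\double M$ admit the decomposition $c_{2j-1} = e_j + v_j$ and $c_{2j} = e_j - v_j$, where $e_j$ is the $j$-th standard basis vector of $\R^n$ and $v_j$ is the $j$-th column of the $n\times n$ matrix $V$ with $V_{ij} = (-1)^{i+j+\one(i<j)} m_{ij}$ for $i\neq j$ and $V_{jj}=0$; since $\one(i<j)+\one(j<i)=1$ for $i\neq j$, the matrix $V$ is skew-symmetric. Decompose $[n] = I^{\mathrm{pair}} \sqcup I^{\mathrm{none}} \sqcup I^{\mathrm{single}}$ according to whether $|I\cap\{2j-1,2j\}|$ equals $2$, $0$, or $1$; by the definition of $\oesd$ we have $\sd = I^{\mathrm{pair}}\sqcup I^{\mathrm{none}}$ with $|I^{\mathrm{pair}}|=|I^{\mathrm{none}}|=|\sd|/2$ and $I^{\mathrm{single}}=[n]\setminus\sd$.

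First, for each $j\in I^{\mathrm{pair}}$, I would apply determinant-preserving column operations that replace $c_{2j-1}, c_{2j}$ by $2e_j$ and $-v_j$. Each column of the form $2e_j$ has a single nonzero entry, so Laplace expansion along all $|\sd|/2$ such columns extracts the prefactor $2^{|\sd|/2}$ and reduces the computation to the determinant of a smaller matrix whose rows are indexed by $I^{\mathrm{single}}\sqcup I^{\mathrm{none}}$ and whose columns are $-v_j$ (for $j\in I^{\mathrm{pair}}$) together with $e_j+\epsilon_j v_j$ (for $j\in I^{\mathrm{single}}$), where $\epsilon_j=+1$ if $2j-1\in I$ and $-1$ otherwise. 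Expanding each column $e_j+\epsilon_j v_j$ by multilinearity and further Laplace-expanding along the $e_j$-contributions (each supported in a single row) yields a sum, indexed by subsets $S\subseteq I^{\mathrm{single}}$, of signed determinants of submatrices of $V$ with rows $S\sqcup I^{\mathrm{none}}$ and columns $I^{\mathrm{pair}}\sqcup S$.

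Applying the Leibniz formula to each such submatrix determinant and summing over $S$, the combined data ($S$ together with a bijection from $S\sqcup I^{\mathrm{none}}$ to $I^{\mathrm{pair}}\sqcup S$) is in natural bijection, via the maps $\codd$ and $\ceven$, with matchings $\pi\in\MIJ{I'}{[2n']\setminus I'}$: each $j\in I^{\mathrm{single}}$ contributes exactly one element to each of $I'$ and $[2n']\setminus I'$ (the latter appearing as a row of the submatrix when $j\in S$, and as an $e_j$-pivot when $j\notin S$), while each $j\in I^{\mathrm{pair}}$ (resp.\ $I^{\mathrm{none}}$) contributes a unique element to $I'$ (resp.\ $[2n']\setminus I'$). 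The product of the chosen entries of $V$ in each Leibniz term equals $\mon\pi$ once the entry signs $(-1)^{i+j+\one(i<j)}$ are absorbed into the global sign. The main obstacle is then a sign audit: showing that the accumulated sign, assembled from the two rounds of Laplace expansion, the $\epsilon_j$'s, the Leibniz permutation signs, and the entry signs of $V$, equals $(-1)^{\xing(\pi)}$. This is cleanest in the base case $\sd=\emptyset$, where the identity reduces to a comparison between the Leibniz expansion of $\Delta_I(\double M)$ over $\sigma\in S_n$ and the matching sum over bijections $[n]\to[n]$, and the required sign identity is the classical rule $(-1)^{\xing}$ appearing in Pfaffian expansions; the general case then follows by carefully tracking signs through the column-operation reduction.
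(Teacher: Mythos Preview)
The paper gives no independent argument for this proposition; it simply cites \cite[Proposition~5.2]{Pos} and \cite[Eq.~(2.2)]{Lis}. Your outline is precisely the direct computation underlying those references, and the structural pieces are all correct: the column decomposition $c_{2j-1}=e_j+v_j$, $c_{2j}=e_j-v_j$ with $V$ skew-symmetric holds; the column operations and Laplace expansion along the $2e_j$ columns legitimately extract the factor $2^{|\sd|/2}$; the remaining matrix is square because $|I^{\mathrm{pair}}|=|I^{\mathrm{none}}|$ (which is forced by $|I|=n$); and your bijection between the resulting Leibniz terms (summed over $S\subseteq I^{\mathrm{single}}$) and matchings $\pi\in\MIJ{I'}{[2n']\setminus I'}$ is correct, with the product of chosen matrix entries yielding $\mon\pi$.

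You honestly flag the sign audit as the remaining obstacle and leave it as a sketch. This is not a conceptual gap --- the method is sound and is the standard one --- but the argument is not a complete proof until that bookkeeping is carried through. Your proposed route (handle the base case $\sd=\emptyset$, which is literally Postnikov's lemma that the Leibniz sign combined with the entry signs $(-1)^{i+j+\one(i<j)}$ gives $(-1)^{\xing}$, and then track the additional Laplace signs for general $\sd$) works. An alternative that avoids layered sign tracking: since each monomial $\mon\pi$ appears exactly once on each side, it suffices to check that a single transposition of two pairs of $\pi$ flips the sign on both sides (on the right by the standard parity property of $\xing$, on the left because it corresponds to a transposition in the Leibniz bijection), reducing the audit to one explicit matching, e.g.\ the non-crossing one, where the sign is easy to read off.
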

\begin{proof}
This is essentially~\cite[Proposition~5.2]{Pos}, see also~\cite[Eq.~(2.2)]{Lis}.
\end{proof}

\begin{remark}
For any $I\in{[2n]\choose n}$, there exists a unique $\sd\subset [n]$ such that $I\in\oesd$. Thus Proposition~\ref{prop:minors_pf} actually  gives a formula for all maximal minors of $\double M$ in terms of the entries of $M$.
\end{remark}
  
\begin{example}
  Let $n=4$ and $C=\{1,3\}$ as in Example~\ref{ex:Griffiths_codd_ceven}, so $n'=3$. The matrices $M$ and $\double M$ are given in Figure~\ref{fig:double}. Let $I:=\{1,2,4,7\}$. We have $I\in\oesd$ since $|I\cap\{1,2\}|=2$ and $|I\cap \{5,6\}|=0$ are both even, while $|I\cap\{3,4\}|=|I\cap \{7,8\}|=1$ are both odd. Next, $I'=\ceven(I)=\{1,3,5\}\in{[2n']\choose n'}$. Computing the maximal minor $\Delta_I(\double M)$, we find
  \[\Delta_I(\double M)=
    2(m_{14} m_{23} m_{24} - m_{13} m_{24}^{2} + m_{12} m_{24} m_{34} + m_{12} m_{23} + m_{14} m_{34} + m_{13}).\]
These six terms correspond to the six elements of $\MIJ{I'}{[2n']\setminus I'}=\MIJ{\{1,3,5\}}{\{2,4,6\}}$. For instance, the term $- m_{13} m_{24}^{2}$ comes from the matching $\pi=\{\{1,4\},\{3,6\},\{5,2\}\}$ with $\xing(\pi)=3$, while the term $m_{13}$ comes from the matching $\pi=\{\{1,4\},\{3,2\},\{5,6\}\}$ with $\xing(\pi)=0$.
\end{example}

\begin{definition}
  We introduce two disjoint subsets $A',B'\subset[2n']$ by:
  \begin{equation*}
    \begin{split}
      A'&:=\{i\in[2n']\mid \codd(i)\in A\setminus B\}\cup \{i\in [2n']\mid \codd(i)=\codd(i+1)\in A\cap B\};\\
      B'&:=\{i\in[2n']\mid \codd(i)\in B\setminus A\}\cup \{i+1\in [2n']\mid \codd(i)=\codd(i+1)\in A\cap B\}.
    \end{split}
  \end{equation*}
Define the number $\epsilon\in\{0,1\}$ mentioned in Theorem~\ref{thm:generalized_Griffiths} by
\begin{equation}\label{eq:epsilon_Griffiths}
  \epsilon \equiv 1+\sum_{i\in B'} i\pmod 2.
\end{equation}
\end{definition}

\def\Pf{\operatorname{Pf}}
Next, we state a classical result expressing correlations of the Ising model in terms of \emph{Pfaffians}. Given a set $K\subset[n]$ of even size, we define 
\[\Pf_K(M):=\sum_{\pi\in\MK K} (-1)^{\xing(\pi)} m_\pi.\]
If the size of $K$ is odd, we set $\Pf_K(M):=0$. The following classical result expresses multi-point correlations in terms of two-point correlations.
\begin{proposition}[{\cite[Theorem~A]{GBK}}]\label{prop:Pf}
Given a planar Ising network $N=(G,J)$, let $M=M(G,J)$ be its boundary correlation matrix. Then  for every set $K\subset[n]$, we have 
\[\<\sigma_K\>=\Pf_K(M)=\sum_{\pi\in\MK K} (-1)^{\xing(\pi)} \prod_{\{i,j\}\in\pi} \<\sigma_i\sigma_j\>.\]
\end{proposition}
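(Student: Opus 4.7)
The statement is classical, originally due to Groeneveld, Boel, and Kasteleyn~\cite{GBK}, and can be recovered within the framework of the present paper by combining formula~\eqref{eq:easy_Griffiths} of Theorem~\ref{thm:generalized_Griffiths} with Proposition~\ref{prop:minors_pf}. First, when $|K|$ is odd both sides vanish: $\Pf_K(M)=0$ by the convention stated just above, while $\<\sigma_K\>=0$ because the Ising probability measure~\eqref{eq:dfn:Prob} is invariant under the global spin flip $\sigma\mapsto-\sigma$, which multiplies $\prod_{i\in K}\sigma_{b_i}$ by $(-1)^{|K|}=-1$ and hence forces the expectation to vanish. Henceforth assume $|K|=2m$ is even.

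The plan is to prove the purely algebraic identity
\[
\Pf_K(M) \;=\; 2^{-n}\sum_{I\in\OddEven K}\Delta_I(\double M),
\]
valid for every $M\in\Matsym$ regardless of whether $M$ is a boundary correlation matrix; together with~\eqref{eq:easy_Griffiths} this immediately yields Proposition~\ref{prop:Pf}. Substituting Proposition~\ref{prop:minors_pf} into every minor rewrites the right-hand side as
\[
2^{-n+m}\sum_{I\in\OddEven K}\sum_{\pi\in\MIJ{I'}{[2n']\setminus I'}}(-1)^{\xing(\pi)}\mon\pi,
\]
which I would regroup by the ``projected'' matching $\pi':=(\codd\circ\ceven)(\pi)\in\MK K$ obtained by folding each pair of $\pi$ through the map $\codd\circ\ceven:[2n]\to[n]$.

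The main technical step, and the expected obstacle, is a clean combinatorial lemma: for each fixed $\pi'\in\MK K$, the set of preimages $(I,\pi)$ projecting to $\pi'$ has cardinality exactly $2^{n-m}$, and each element contributes the same signed monomial $(-1)^{\xing(\pi')}\monn{\pi'}$. The cardinality count is elementary, since for each $k\in[n]\setminus K$ one has a free binary choice of whether $2k-1$ or $2k$ belongs to $I$, while the monomial agreement is immediate from the definitions of $\codd$ and $\ceven$. The delicate point is the sign, namely that $\xing(\pi)\equiv \xing(\pi')\pmod 2$ uniformly across these $2^{n-m}$ choices; this can be checked by a local analysis showing that each modification within a fiber of $\codd\circ\ceven$ alters $\xing(\pi)$ by an even number. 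Granting the lemma, the double sum collapses to $2^{-n+m}\cdot 2^{n-m}\Pf_K(M)=\Pf_K(M)$, establishing the algebraic identity. A cleaner alternative that sidesteps the matching combinatorics altogether is to invoke the identity $\double M\cdot J=[I_n\mid M']$ noted in Section~\ref{sec:tnn_OG_2}, where $M'$ is the skew-symmetric matrix with $m'_{i,j}=(-1)^{i+j+1}m_{i,j}$ for $i\neq j$; each $\Delta_I(\double M)$ then factors, up to an explicit sign, as a product of two Pfaffians of principal submatrices of $M'$, and summing over $I\in\OddEven K$ reduces to a classical Pfaffian summation identity from which Proposition~\ref{prop:Pf} follows after converting signs between $M'$ and $M$.
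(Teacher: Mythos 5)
Your proof is circular within the paper's own logical architecture, so it does not succeed. The paper does not prove Proposition~\ref{prop:Pf}: it cites it as Theorem~A of~\cite{GBK}, a classical fact about the planar Ising model, and then \emph{uses} it to derive Theorem~\ref{thm:generalized_Griffiths}. Specifically, the beginning of Section~\ref{sec:griffiths} explains that~\eqref{eq:easy_Griffiths} is obtained from~\eqref{eq:Griffiths} by specializing $A,B$ disjoint with $|B|=1$, and~\eqref{eq:Griffiths} is obtained by combining the purely algebraic identity~\eqref{eq:pfaffians_Griffiths} with Proposition~\ref{prop:Pf} to translate Pfaffians into correlations. Thus~\eqref{eq:easy_Griffiths} presupposes exactly the statement $\<\sigma_K\>=\Pf_K(M)$ that you are trying to prove, and invoking it as an input is a petitio principii.

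The algebraic half of your argument is fine: the identity $\Pf_K(M)=2^{-n}\sum_{I\in\OddEven K}\Delta_I(\double M)$ does hold for every $M\in\Matsym$, and one can prove it either by the matching--regrouping argument you sketch (the sign lemma you flag is the real work, but it is doable along the lines of Lemma~\ref{lemma:xing_parity}) or, more quickly, by taking $A,B$ disjoint with $|B|=1$ in Theorem~\ref{thm:pfaffians_Griffiths} and then using $\Delta_I(\double M)=\Delta_{[2n]\setminus I}(\double M)$ to absorb the constraint $\Sumparity{B}{\epsilon}$. Your alternative route through $\double M\cdot J=[I_n\mid M']$ and Pfaffians of $M'$ is also a reasonable way to reach the same algebraic identity. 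But none of this touches the probabilistic content. What you would still need is an independent proof that the \emph{Ising multi-point correlations} satisfy $\<\sigma_K\>=2^{-n}\sum_{I\in\OddEven K}\Delta_I(\double M)$ for $|K|>2$, and the paper only establishes the $|K|=2$ case (Lemma~\ref{lemma:from_minors_to_correlations} together with Theorem~\ref{thm:X=X'}). Extending to general $K$ is precisely the nontrivial Pfaffian (fermionic) structure of the planar Ising model that~\cite{GBK} proves via Kac--Ward/dimer methods, and that the paper deliberately treats as a black box rather than re-deriving.
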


Thus Theorem~\ref{thm:generalized_Griffiths} becomes a consequence of the following result.

\begin{theorem}\label{thm:pfaffians_Griffiths}
  We have
\begin{equation}\label{eq:pfaffians_Griffiths}
\Pf_{\sd}(M)-\Pf_A(M)\Pf_B(M)=\frac1{2^{n-1}}\sum_{I\in \OddEven{\symdiff{A}{B}}\cap \Sumparity{B}{\epsilon}}\Delta_I(\double M).
\end{equation}
\end{theorem}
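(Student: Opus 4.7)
The plan is to expand the right-hand side of~\eqref{eq:pfaffians_Griffiths} by means of Proposition~\ref{prop:minors_pf}, and to match the resulting polynomial in the $m_{i,j}$ term-by-term with the two Pfaffians on the left-hand side. I would proceed in four phases.

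\emph{Phase 1 (Pfaffian-style expansion of minors).} For each $I \in \OddEven{C}$, apply Proposition~\ref{prop:minors_pf} to write $\Delta_I(\double M) = 2^{|C|/2}\sum_{\pi \in \MIJ{\alpha(I)}{[2n']\setminus\alpha(I)}} (-1)^{\xing(\pi)}m_\pi$, where $m_\pi = \prod_{\{i,j\}\in\pi}m_{\codd(i),\codd(j)}$. The $n$-subset $I \in \OddEven{C}$ is uniquely determined by the $n'$-subset $I' = \alpha(I)\subset [2n']$ together with an additional choice for each $i \in A\cap B$ (which of the two elements of $\alpha^{-1}(\{b_i,b_i+1\})$ to include). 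This reindexing will let me rewrite the double sum over $I$ and $\pi$ as a sum over $(I',\pi)$ together with these extra choices.

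\emph{Phase 2 (Fourier decomposition of the parity).} Writing the indicator of $I \in \Sumparity{B}{\epsilon}$ as $\tfrac12(1+(-1)^{\epsilon}\chi(I))$ for an explicit $\{\pm 1\}$-valued character $\chi$ on $\OddEven{C}$ (depending only on $B$ and $A\cap B$), splits the RHS into an ``unsigned'' half and a ``signed'' half. The character $\chi$ pulls back along $\alpha$ to a product of signs over the elements of $I'$ intersected with an explicit subset $B^\star\subset [2n']$ that projects bijectively to $B$ under $\codd$; the ``$1+$'' in~\eqref{eq:epsilon_Griffiths} is precisely the discrepancy between $B^\star$ and the subset $B'$ appearing in the definition.

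\emph{Phase 3 (Two combinatorial collapses).} After swapping the order of summation and grouping by a matching $\pi \in \MK{[2n']}$, the inner sum over choices of $(I',\text{extra data})$ factors pair-by-pair over $\pi$. For the unsigned half, a sign-reversing involution (acting on pairs of $\pi$ that ``thread through'' the doubled structure of $[2n']$ in two parallel ways) cancels every matching whose induced multigraph on $[n]$, with edges $\{\codd(i),\codd(j)\}$, has a non-trivial component traversing $A\cap B$. The surviving matchings correspond to matchings of $C$, yielding $\Pf_C(M)$. For the signed half, the pair-by-pair factors vanish unless every pair of $\pi$ lies entirely inside $B^\star$ or entirely inside its complement, and the surviving $\pi$ split as $\pi_1 \sqcup \pi_2$ with $\pi_1 \in \MK{B^\star}$, $\pi_2 \in \MK{[2n']\setminus B^\star}$. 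Under the order-preserving bijections $B^\star \leftrightarrow B$ and $[2n']\setminus B^\star \leftrightarrow A$ induced by $\codd$, such pairs $(\pi_1,\pi_2)$ correspond to pairs $(\tau,\sigma)\in\MK{B}\times\MK{A}$ with matching weights; after accounting for the crossings between $\pi_1$ and $\pi_2$ in $[2n']$ versus between $\tau$ and $\sigma$ in $[n]$, this collapses to $-\Pf_A(M)\Pf_B(M)$.

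\emph{Phase 4 (Sign verification).} The final step is to check that the global sign produced by the analysis of Phase 3 equals the $(-1)^{\epsilon}$ prescribed by~\eqref{eq:epsilon_Griffiths}. This reduces to a direct parity computation involving the positions of the elements of $B'$ in $[2n']$, together with $|B^\star|/2$ and the parity of $|A\cap B|$.

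\emph{Main obstacle.} The crux is the sign-reversing involution in the unsigned half and, more delicately, the comparison of $\xing(\pi_1\sqcup\pi_2)$ with $\xing(\sigma)+\xing(\tau)$ in the signed half. These crossing numbers need not agree for a single $(\pi_1,\pi_2)$, so the matching with $\Pf_A\Pf_B$ is not a pointwise identity but a summed one, and carrying it out requires a careful pairing of matchings whose ``crossing defects'' cancel against each other. The precise form of~\eqref{eq:epsilon_Griffiths} is the net sign produced by this combinatorial bookkeeping.
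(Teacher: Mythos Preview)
Your Fourier split into an ``unsigned half'' (which should give $\Pf_C(M)$) and a ``signed half'' (which should give $-\Pf_A(M)\Pf_B(M)$) is a reasonable alternative to the paper's direct monomial-by-monomial comparison, and in the case $A\cap B=\emptyset$ it goes through essentially as you describe. The genuine gap is in Phase~3 when $A\cap B\neq\emptyset$.

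First, a small point: in Phase~1 there is no ``extra data''. The map $I\mapsto I'=\alpha(I)$ is already a bijection from $\OddEven{C}$ onto the set $\OddEvenPrime{C}$ of those $I'\in{[2n']\choose n'}$ satisfying $|I'\cap\{k,k+1\}|=1$ whenever $\beta(k)=\beta(k+1)$. So the constraint $I'\in\OddEvenPrime{C}$ is built into the reindexing, not carried as a side choice.

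The serious issue is your claim that, for fixed $\pi$, the inner sum over compatible $I'$ factors ``pair-by-pair over $\pi$''. Once $A\cap B\neq\emptyset$, the constraint $I'\in\OddEvenPrime{C}$ couples pairs of $\pi$: the compatible $I'$ are exactly the proper $2$-colorings of the graph $\Gamma_\pi$ whose edge set is $\pi\cup\{\beta^{-1}(j):j\in A\cap B\}$, and its connected components are paths and even cycles of arbitrary length, not single pairs. The signed-half sum therefore factors over components of $\Gamma_\pi$, and the factor for a component $S$ vanishes iff $|S\cap B^\star|$ is odd. This is \emph{not} the same as ``every pair of $\pi$ lies inside $B^\star$ or its complement''; for instance, a $4$-cycle in $\Gamma_\pi$ can meet $B^\star$ in two non-$\pi$-matched vertices and still contribute. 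So the surviving matchings do not split as $\pi_1\sqcup\pi_2$ with $\pi_1\in\MK{B^\star}$, and your reduction to $\Pf_A(M)\Pf_B(M)$ breaks down. The same coupling obstructs the sign-reversing involution you sketch for the unsigned half: matchings $\pi$ that are trivial on some $j\in A\cap B$ but not on others must be handled, and the monomial $m_{\beta,\pi}$ does not determine $\pi$ uniquely (flipping at $j\in A\cap B$ preserves it).

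This is exactly where the paper's machinery enters. After reducing to $A\cup B=[n]$, the paper groups matchings $\pi$ on $[2n']$ into equivalence classes $\Pi$ under flipping at elements of $A\cap B$, so that $m_{\beta,\pi}$ depends only on $\Pi$, and compares the coefficient of $m_{\beta,\Pi}$ on each side. The count of compatible $(\pi,I)$ with $\pi\in\Pi$ and $I\in\oesdgood$ is controlled through the cycle structure of $\Gamma_\pi$, and the crossing-number discrepancies are handled by the parity Lemma~\ref{lemma:xing_parity}. If you want to salvage the Fourier approach, you will end up rebuilding this $\Gamma_\pi$ analysis inside each half; the split does not sidestep it.
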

Both sides of~\eqref{eq:pfaffians_Griffiths} are polynomials in the entries of $M$ by Propositions~\ref{prop:minors_pf} and~\ref{prop:Pf}.
\begin{remark}
It may look like the right hand side of~\eqref{eq:pfaffians_Griffiths} is not symmetric with respect to $A$ and $B$, but in fact it is easy to see that 
\[\OddEven{\symdiff{A}{B}}\cap \Sumparity{B}{\epsilon}=\OddEven{\symdiff{A}{B}}\cap \Sumparity{A}{\epsilon'},\]
where $\epsilon'\equiv 1+n+\sum_{i\in A'} i\pmod 2$.
\end{remark}
Before we prove Theorem~\ref{thm:pfaffians_Griffiths}, we state a lemma which will be used repeatedly later.
\begin{lemma}\label{lemma:xing_parity}
Let $[2\n]=K_1\sqcup K_2$ for two sets $K_1,K_2$ of even size. Let $\pi_1\in\MK{K_1}$, $\pi_2\in\MK{K_2}$, and let $\pi_1\sqcup \pi_2\in\MK{[2\n]}$ be obtained by superimposing $\pi_1$ and $\pi_2$. Then
\begin{equation}\label{eq:xing_parity}
\xing(\pi_1\sqcup \pi_2)-\xing(\pi_1)-\xing(\pi_2)\equiv |K_1|/2+\sum_{i\in K_1}i\equiv |K_2|/2+\sum_{i\in K_2}i \pmod2.
\end{equation}
\end{lemma}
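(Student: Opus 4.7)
The key observation is that $\xing(\pi_1\sqcup\pi_2)$ decomposes as $\xing(\pi_1)+\xing(\pi_2)+N_{\text{mix}}$, where $N_{\text{mix}}$ counts ``mixed'' crossings, i.e., pairs $(p_1,p_2)$ with $p_1\in\pi_1$ and $p_2\in\pi_2$ such that $p_1$ and $p_2$ form a crossing in the sense of Definition~\ref{dfn:xing}. Thus the left-hand side of~\eqref{eq:xing_parity} equals $N_{\text{mix}}$, and since the second claimed equality in~\eqref{eq:xing_parity} follows immediately from $|K_1|+|K_2|=2\n$ and $\sum_{i\in K_1}i+\sum_{i\in K_2}i=\n(2\n+1)$, it suffices to show
\[
N_{\text{mix}}\equiv |K_1|/2+\sum_{i\in K_1}i\pmod 2.
\]

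The plan is to compute $N_{\text{mix}}\bmod 2$ by summing local contributions over $\pi_1$. Fix a pair $\{a,b\}\in\pi_1$ with $a<b$, and let $x_{a,b}:=|K_2\cap(a,b)|$. A pair $\{c,d\}\in\pi_2$ forms a crossing with $\{a,b\}$ iff exactly one of $c,d$ lies in the open interval $(a,b)$; pairs of $\pi_2$ with both endpoints inside $(a,b)$ contribute $2$ to $x_{a,b}$, pairs with both outside contribute $0$, and crossing pairs contribute $1$. Hence the number of $\pi_2$-pairs crossing $\{a,b\}$ has the same parity as $x_{a,b}$, and summing over $\{a,b\}\in\pi_1$ gives
\[
N_{\text{mix}}\equiv \sum_{\{a,b\}\in\pi_1,\,a<b} x_{a,b}\equiv \sum_{\{a,b\}\in\pi_1,\,a<b}\bigl((b-a-1)-|K_1\cap(a,b)|\bigr)\pmod 2,
\]
using $|K_1\cap(a,b)|+|K_2\cap(a,b)|=b-a-1$.

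The main technical step is to show that $\sum_{\{a,b\}\in\pi_1,\,a<b}|K_1\cap(a,b)|$ is always even. For this I would perform a double-counting over unordered pairs $\{P,P'\}$ of distinct arcs in $\pi_1$: if $P$ and $P'$ are nested, one pair contributes $2$ and the other $0$; if $P$ and $P'$ cross, each contributes $1$; if they are separated, both contribute $0$. In every case the combined contribution of $\{P,P'\}$ is even, so the total sum is divisible by $2$. This is the one spot where a careful case analysis is needed, but it is straightforward.

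Finally, reducing $b-a-1\equiv a+b+1\pmod 2$ and summing yields
\[
N_{\text{mix}}\equiv \sum_{\{a,b\}\in\pi_1}(a+b+1)=\sum_{i\in K_1}i+|\pi_1|=\sum_{i\in K_1}i+|K_1|/2\pmod 2,
\]
which is the desired congruence. The main obstacle is the parity cancellation in the $|K_1\cap(a,b)|$ sum; everything else is bookkeeping.
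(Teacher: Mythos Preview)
Your argument is correct. The decomposition $\xing(\pi_1\sqcup\pi_2)=\xing(\pi_1)+\xing(\pi_2)+N_{\text{mix}}$ is immediate from the definition, and your parity computation of $N_{\text{mix}}$ goes through cleanly; the double-counting step that shows $\sum_{\{a,b\}\in\pi_1}|K_1\cap(a,b)|$ is even is indeed the only nontrivial point, and your case analysis (nested/crossing/separated) handles it.

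The paper takes a different route: rather than computing $N_{\text{mix}}$ directly, it argues by reduction. Whenever $K_1$ is not an initial segment $[|K_1|]$ of $[2\n]$, there is some $i\in K_1$ with $i-1\in K_2$; swapping $i$ and $i-1$ between the two sets (and modifying $\pi_1,\pi_2$ accordingly) flips the parity of both sides of~\eqref{eq:xing_parity}. Iterating until $K_1=[|K_1|]$, one lands in a base case where both sides vanish. This is shorter to write but leaves the structure of $N_{\text{mix}}$ implicit. Your approach is more explicit and makes transparent why the formula depends only on $K_1$ (not on $\pi_1$ or $\pi_2$): the parity of $N_{\text{mix}}$ ends up being $\sum_{\{a,b\}\in\pi_1}(a+b+1)\bmod 2$, which manifestly forgets the pairing.
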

\begin{proof}
Suppose that there is $i\in K_1$ such that $i>1$ and $i-1\notin K_1$. Then replacing $K_1$ with $K_1\setminus \{i\}\cup\{i-1\}$ and modifying $\pi_1,\pi_2$ accordingly changes the parity of each side of~\eqref{eq:xing_parity}. Applying this operation repeatedly until $K_1=[|K_1|]$, the result follows.
\end{proof}

\begin{proof}[Proof of Theorem~\ref{thm:pfaffians_Griffiths}.]
First, it is straightforward to check that if $i\in [n]\setminus (A\cup B)$ then removing $i$ from $[n]$ does not affect the left and right hand sides of~\eqref{eq:pfaffians_Griffiths}. Thus \emph{from now on we assume that $A\cup B=[n]$.}

\def\cleft{c_{\operatorname{left}}}
\def\cright{c_{\operatorname{right}}}

{\bf {Assume first that $A \cap B = \emptyset$}}. This implies $\sd = A \sqcup B = [n]$, $n$ is even, $n'=n/2$, $A'=A$, and $B'=B$.
For a matching $\pi\in\MK{[n]}$, we are going to compare the coefficients of $\monn{\pi}$ on both sides of~\eqref{eq:pfaffians_Griffiths}, and show that in all cases they are equal. 

\newcommand\Mres[2]{\operatorname{Match}|^{\operatorname{res}}_{#1,#2}}
Here for two disjoint subsets $I$ and $J$, we say that a matching $\pi\in\MK{I\sqcup J}$ {\emph {restricts to $I$ and $J$}} and write $\pi\in\Mres IJ$ if for all $\{i,j\}\in\pi$ we have either $\{i,j\}\subset I$ or $\{i,j\}\subset J$. We denote by $\pi|_{I}\in\MK I$ and $\pi|_{J}\in\MK J$ the corresponding restricted matchings. Thus the set $\Mres IJ\subset \MK{I\sqcup J}$ is in bijection with $\MK I\times \MK J$.

For $\pi\in\MK{[n]}$, the coefficient of $\monn{\pi}$ in $\Pf_{\sd}(M)-\Pf_A(M)\Pf_B(M)$ is equal to 
\[
\cleft(\pi) = 
\begin{cases}
(-1)^{\xing(\pi)} - (-1)^{\xing(\pi|_{A})} (-1)^{\xing(\pi|_{B})} & \text{if $\pi\in\Mres AB$;}\\
(-1)^{\xing(\pi)} & \text{otherwise.}
\end{cases}
\]

For the right hand side of~\eqref{eq:pfaffians_Griffiths}, observe that by Definition~\ref{dfn:double_sumparity}, a given set $I\in \oesd$ belongs to $\Sumparity{B}{\epsilon}$ if and only if
\[\sum_{i \in I\cap \double B} i  \equiv 1 +\sum_{i \in B} i \pmod 2,\]
because $B=B'$. Since $\sd=[n]$, we have $I\in \oesd$ if and only if $I\cap \{2i-1,2i\}$ has even size for all $i\in[n]$.  Let us say that a set $I\in\oesd$ is \emph{compatible with $\pi$} if  $\pi\in\MIJ{I'}{[2n']\setminus I'}$. It is clear that the coefficient of $\monn{\pi}$ in the right hand side of~\eqref{eq:pfaffians_Griffiths} is equal to 
\[\cright(\pi):=\frac{2^{n/2}}{2^{n-1}}  (-1)^{\xing(\pi)} N(\pi),\]
where $N(\pi)$ is the number of $I\in \oesdgood$ compatible with $\pi$. We claim that $N(\pi)$ is given by
\begin{equation}\label{eq:c_right_easy}
N(\pi) = 
\begin{cases}
2^{n/2} & \text{if $\pi\in\Mres AB$ and $|B|/2 \equiv 1 +\sum_{i \in B} i \pmod 2$;}\\
0 & \text{if  $\pi\in\Mres AB$ and $|B|/2 \not \equiv 1 +\sum_{i \in B} i \pmod 2$;}\\
2^{n/2-1} & \text{if  $\pi\notin\Mres AB$.}
\end{cases}
\end{equation}
 
Indeed, assume first $\pi\notin\Mres AB$. Then there exists a pair  $\{i,j\} \in \pi$ such that $i \in A$ and $j \in B$. Note that there are a total of $2^{n/2}$ sets $I\in\oesd$ compatible with $\pi$. Each such set satisfies either  $2i-1,2i\in I, 2j-1,2j\notin I$ or  $2j-1,2j\in I, 2i-1,2i\notin I$, so they naturally split into pairs $\{I,\symdiff{I}{\{2i-1,2i,2j-1,2j\}}\}$. Exactly one set $I$ in each pair satisfies $\sum_{i \in I \cap \double B} i\equiv \epsilon \pmod 2$. Thus the total number $N(\pi)$ of sets $I\in\oesdgood$ compatible with $\pi$ equals $2^{n/2-1}$ in this case.
 
Assume now that $\pi\in\Mres AB$. Then for any $I\in\oesd$ compatible with $\pi$, we have $\sum_{i \in I\cap \double B} i \equiv |B|/2 \pmod 2$. Thus, either all $I$ compatible with $\pi$ belong to $\oesdgood$, in which case we get $2^{n/2}$ of them, or they all belong to $\oesdbad$, in which case we get $N(\pi)=0$. It is easy to check that the former case happens  exactly when $|B|/2 \equiv 1 +\sum_{i \in B} i \pmod 2$. This shows~\eqref{eq:c_right_easy}, which, combined with~\eqref{eq:xing_parity}, clearly implies $\cleft(\pi)=\cright(\pi)$. We are done with the case $A\cap B=\emptyset$.

{\bf {Assume now that $A \cap B \not = \emptyset$}}. Since we are assuming $A\cup B=[n]$, we have $2n'=n+|A\cap B|$, and $[2n']=A'\sqcup B'$.

For $k,k+1\in[2n']$ such that $\codd(k) = \codd(k+1) = j$, let the {\emph {flipping}} of a matching $\pi\in\MK{[2n']}$ {\emph {at $j$}} be a matching $\pi'$ obtained from $\pi'$ by ``swapping'' the elements $k, k+1$, i.e., $\pi'=\pi\setminus\{\{a,k\},\{b,k+1\}\}\cup\{\{a,k+1\},\{b,k\}\}$ for some $a,b\in[2n']$. (If $\{k,k+1\}\in\pi$ then we set $\pi':=\pi$.)

Note that two different matchings $\pi,\pi'\in\MK{[2n']}$ can yield the same monomial $\mon{\pi}$ if they differ by a flipping at some $j \in A \cap B$. We write in this case $\pi \sim \pi'$, and denote 
$\Pi = [\pi]$ the equivalence class of matchings $\pi\in\MK{[2n']}$ with respect to this equivalence relation. Thus we have $\mon{\pi}=\mon{\pi'}$ if and only if $\pi\sim \pi'$, and we denote $\mon{[\pi]}  = \mon{\pi}$.

We say that $\pi$ is {\emph {trivial on $j \in A \cap B$}}, denoted $\pi \perp j$, if the pair $\{k, k+1\}= \codd^{-1}(j)$ belongs to $\pi$. We say that $\pi$ is {\emph {trivial on $A \cap B$}}, denoted $\pi \perp A \cap B$, if $\pi$ is trivial on all elements of $A \cap B$. It is easy to see that triviality depends only on the equivalence class of $\pi$, justifying the notation $\Pi \perp j$ and $\Pi \perp A \cap B$. In the case $\Pi \perp A \cap B$, $\Pi$ consists of just a single element $\pi$, so we define $\xing(\Pi):=\xing(\pi)$ in this case.

\newcommand\Graph[1]{\Gamma_{#1}}
\def\Cyc{\operatorname{Cyc}}
\def\cyc{\operatorname{cyc}}
\def\Conn{\operatorname{Conn}}
\def\MresAB{\Mres{A'}{B'}}
Let $\pi\in\MK{[2n']}$ be a matching. Consider a graph $\Graph\pi=([2n'],E(\pi))$ with vertex set $[2n']$ and edge set 
\[E(\pi)=\pi\cup \{\{k,k+1\}\mid k\in[2n'] \text{ is such that } \codd(k)=\codd(k+1)\}=\pi\cup\{\codd^{-1}(j)\mid j\in A\cap B\}.\]
(Here if $\pi$ is trivial on $j$ then the corresponding pair $\{k,k+1\}=\codd^{-1}(j)$ belongs to both $\pi$ and $\{\codd^{-1}(j)\mid j\in A\cap B\}$, so $\Graph\pi$ contains \emph{two} edges connecting $k$ to $k+1$.)

Each connected component of $\Graph\pi$ contains an even number of vertices and is either a cycle or a path. We denote by $\Conn(\Graph\pi)$ the set of connected components of $\Graph\pi$ and by $\Cyc(\Graph\pi)\subset\Conn(\Graph\pi)$ the set of cycles of $\Graph\pi$. Clearly, flipping $\pi$ at $j\in A\cap B$ preserves the set of vertices of each connected component of $\Graph\pi$. In particular, we have $\cyc(\pi):=|\Cyc(\pi)|=|\Cyc(\pi')|$ for all $\pi\sim\pi'$, and thus we set $\cyc([\pi]):=\cyc(\pi)$.

For each equivalence class $\Pi$ of matchings we are going to compare the coefficients of $\mon{\Pi}$ on both sides of~\eqref{eq:pfaffians_Griffiths}, and show that they are equal. (Recall that we have $\mon{\pi}=\mon{\pi'}$ if and only if $\pi\sim\pi'$, and in particular we have $\mon{\Pi}\neq\mon{\Pi'}$ for $\Pi\neq\Pi'$.)

The coefficient of $\mon{\Pi}$ in the left hand side of~\eqref{eq:pfaffians_Griffiths} equals
\[
\cleft(\Pi)=\begin{cases}
(-1)^{\xing(\Pi)} & \text{if $\Pi\cap\MresAB=\emptyset$ and  $\Pi \perp A \cap B$;}\\
- (-1)^{\xing(\pi|_{A'})+ \xing(\pi|_{B'})} 2^{\cyc(\Pi)} & \text{if $\pi\in\Pi\cap\MresAB$ and $\Pi \not \perp A \cap B$;}\\
0 & \text{if  $\Pi\cap\MresAB=\emptyset$ and $\Pi \not \perp A \cap B$.}
\end{cases}
\]
Note that the case  $\Pi\cap\MresAB\neq \emptyset$, $\Pi \perp A \cap B$ is impossible because $A \cap B \not = \emptyset$. For the second case  $\pi\in\Pi\cap\MresAB$, $\Pi \not \perp A \cap B$, the parity of $\xing(\pi|_{A'}) + \xing(\pi|_{B'})$ is uniquely determined, even if $\pi$ itself may not be  uniquely determined. Indeed, any two $\pi,\pi'\in\Pi\cap\MresAB$ can be obtained from each other by flipping all $j\in S$ for some $S\subset A\cap B$ such that  $\codd^{-1}(S)$ is a union of cycles of $\Graph\pi$ (and thus a union of cycles of $\Graph{\pi'}$). Clearly in this case we have $\xing(\pi|_{A'}) + \xing(\pi|_{B'})=\xing(\pi'|_{A'}) + \xing(\pi'|_{B'})$.

Recall that $I\in\oesd$ is {\emph {compatible}} with $\pi$ if  $\pi\in\MIJ{I'}{[2n']\setminus I'}$. In this case we also say that $I'$ is {\emph {compatible}} with $\pi$. Note that the map $I\mapsto I'=\ceven(I)$ is injective on $\oesd$, and we denote by $\OddEvenPrime{\sd}:=\{I'\mid I\in\oesd\}\subset {[2n']\choose n'}$ the image of this map. Thus $I'\in\OddEvenPrime{\sd}$ if and only if $|I'|=n'$ and $|I'\cap\{k,k+1\}|=1$ for all $k\in[2n']$ such that $\codd(k)=\codd(k+1)$.
 
 It is clear that the coefficient of $\mon{\Pi}$ in the right hand side of~\eqref{eq:xing_parity} is equal to \[\cright(\Pi)=\frac{2^{n-n'}}{2^{n-1}}  \sum_{(\pi, J)}(-1)^{\xing(\pi)},\] 
where the sum is over all pairs $(\pi, J)$ such that $\pi \in \Pi$ and $J\in\oesd$ is compatible with $\pi$. 
We claim that this sum equals
\[\text{
\scalebox{0.95}{
$\displaystyle\sum_{(\pi, J)}(-1)^{\xing(\pi)} = 
\begin{cases}
(-1)^{\xing(\Pi)} 2^{n'-1} & \text{if $\Pi\cap\MresAB=\emptyset$ and  $\Pi \perp A \cap B$;}\\
- (-1)^{\xing(\pi|_{A'})+ \xing(\pi|_{B'})} 2^{n'-1 + \cyc(\Pi)} & \text{if  $\pi\in\Pi\cap\MresAB$ and $\Pi \not \perp A \cap B$;}\\
0 & \text{if  $\Pi\cap\MresAB=\emptyset$ and $\Pi \not \perp A \cap B$.}
\end{cases}$}}\]

Consider the first case $\Pi\cap\MresAB=\emptyset$, $\Pi \perp A \cap B$. Let $\pi$ be the unique element of $\Pi$. Pick some $j \in A \cap B$, and let $\{k, k+1\}:=\codd^{-1}(j)$. For each pair $\{i,i'\} \in \pi$ except for $\{k,k+1\}$, choose arbitrarily which of $i$ and $i'$ belongs to $J'$ and which does not. There are total $2^{n'-1}$ ways to do this. For each of the $2^{n'-1}$ ways, the condition $\sum_{i \in J\cap \double B} i\equiv \epsilon\pmod 2$ uniquely determines whether $k$ or $k+1$ must belong to $J'$ in order for $J$ to belong to $\oesdgood$. We are done with the first case.

\def\swp{\gamma}
Consider now the third case $\Pi\cap\MresAB=\emptyset$, $\Pi \not \perp A \cap B$. It follows that there is a pair $\{i,i'\}$ common to all $\pi \in \Pi$ such that $\codd(i) \in A \setminus B$ and $\codd(i') \in B \setminus A$. There is also a pair $\{k,k+1\}  = \codd^{-1}(j)$ for some $j \in A \cap B$ such that $k$ and $k+1$ are not connected to each other in any $\pi \in \Pi$. Consider a map $\swp:\oesdgood\to\oesdgood$ defined as follows. We put $\swp(I)=J$ for $I,J\in\oesdgood$ if $J'=\symdiff{I'}{\{i,i',k,k+1\}}$. Let $\pi'$ be obtained from $\pi$ by flipping at $j$. We claim that $I\in\oesdgood$ is compatible with $\pi$ if and only if $\swp(I)\in\oesdgood$ is compatible with $\pi'$. Moreover, $\xing(\pi')$ differs from $\xing(\pi)$ by $1$. 
Thus, we have a sign-reversing involution that cancels all the terms in $\sum_{(\pi, J)}(-1)^{\xing(\pi)}$, proving that it is equal to $0$ in the third case.

 Finally, consider the second case $\pi\in\Pi\cap\MresAB$, $\Pi \not \perp A \cap B$. We are going to show that 
\[\sum_{(\pi, J)}(-1)^{\xing(\pi)} = - (-1)^{\xing(\pi|_{A'})+ \xing(\pi|_{B'})} 2^{n'-1 +\cyc(\Pi)}.\]
Fix a matching $\pi\in\Pi\cap\MresAB$. We claim that for any $\pi'\in\Pi$, there exists $\epsilon_{\pi'}\in\{0,1\}$ such that for all $I\in\oesd$ compatible with $\pi'$, we have $I\in\Sumparity{B}{\epsilon_{\pi'}}$, that is,
\[\sum_{i\in I\cap \double B}i\equiv \epsilon_{\pi'} \pmod2.\]
Indeed, each component of $\Graph{\pi'}$ is a bipartite graph (a path or a cycle with an even number of vertices) so let us color its vertices black and white in a bipartite way. It is easy to check that $I\in\oesd$ is compatible with $\pi'$ if and only if for each connected component of $\Graph{\pi'}$, $I'$ contains either all white vertices or all black vertices of this component. Let $S\subset[2n']$ be the set of vertices of a connected component of $\Graph{\pi'}$, and let $J\in\oesd$ be such that $J'=\symdiff{I'}{S}$ (thus $J'$ is obtained from $I'$ by switching from white to black inside the component $S$). It is straightforward to check that because $\pi'$ is equivalent to $\pi\in\MresAB$, we have 
\[\sum_{i\in I\cap \double B}i\equiv \sum_{i\in J\cap \double B}i\pmod2.\]
We thus define $\epsilon_{\pi'}:=\sum_{i\in I\cap \double B}i$ for some $I\in\oesd$ compatible with $\pi'$, and we have shown that $\epsilon_{\pi'}$ does not depend on the choice of $I$.

Next, flipping $\pi'$ at some $j\in A\cap B$ changes $\epsilon_{\pi'}$ into $1-\epsilon_{\pi'}$. Thus we have $\epsilon_{\pi'}=\epsilon$ for precisely half of the matchings  $\pi'\in\Pi$, and for each such matching $\pi'$, there are $2^{|\Conn(\Graph{\pi'})|}=2^{n'-|A\cap B|}$ sets $J\in\oesd$ compatible with $\pi'$. Since $\Pi\cap\MresAB\neq\emptyset$, we have $\Pi\not\perp j$ for each $j\in A\cap B$, and thus  $|\Pi|=2^{|A\cap B|}$. Therefore the total number of pairs $(\pi',J)$ such that $\pi'\in \Pi$ and $J\in\oesdgood$ equals $2^{n'-1}$, and for each of them, the parity of $\xing(\pi')$ is the same, because it satisfies
\[\epsilon_\pi-\epsilon\equiv \xing(\pi)-\xing(\pi').\]
Thus in order to finish the proof, it suffices to show that
\begin{equation}\label{eq:Griffiths_need_1}
\xing(\pi)-\xing(\pi|_{A'})-\xing(\pi|_{B'})\not\equiv \epsilon_\pi-\epsilon\pmod2.
\end{equation}
Let $J\in\oesd$ be compatible with $\pi$. Then by the definition of $\epsilon_\pi$ and $\epsilon$, we have
\[\epsilon_\pi-\epsilon\equiv \sum_{i\in J\cap \double B}i + \sum_{i\in B'}i+1\pmod2.\]
Combining this with Lemma~\ref{lemma:xing_parity}, Equation~\eqref{eq:Griffiths_need_1} transforms into
\[|B'|/2+\sum_{i\in B'}i\equiv \sum_{i\in J\cap \double B}i + \sum_{i\in B'}i\pmod2,\]
equivalently, $|B'|/2\equiv \sum_{i\in J\cap \double B}i\pmod2$, which follows in a straightforward way since $\pi\in\MresAB$, $J$ contains either all white or all black vertices in each connected component of $\Graph\pi$, and hence the contribution of each connected component to the left and right hand side is the same. We are done with the proof of Theorem~\ref{thm:pfaffians_Griffiths}, which implies Theorem~\ref{thm:generalized_Griffiths} as discussed previously.
\end{proof}

\section{Open problems and future directions}\label{sec:conjectures}
In this section, we briefly list several questions that in our opinion would be worth exploring further.

According to~\eqref{eq:disjoint_OG}, $\OGtnn(n,2n)$ is a union of cells labeled by matchings $\medpa$ on $[2n]$, and each such cell $\pc_\medpa\cap \OGtnn(n,2n)$ is homeomorphic to $\R^{\xing(\medpa)}$. It would be nice to understand the topology closures of these cells. In fact, we have a conjecture, analogous to~\cite[Conjecture~3.6]{Pos}.

\begin{conjecture}\label{conj:regular}
The cell decomposition~\eqref{eq:disjoint_OG} gives a regular CW complex structure on $\OGtnn(n,2n)$. In other words, the closure of each cell $\pc_\medpa\cap \OGtnn(n,2n)$, given by~\eqref{eq:closures}, is homeomorphic to a closed $\xing(\medpa)$-dimensional ball.
\end{conjecture}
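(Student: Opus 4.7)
The plan is to adapt the strategy of~\cite{GKL2} (where Postnikov's analogous conjecture for $\Grtnn(k,n)$ is proved) to the orthogonal setting, using the machinery already developed in Sections~\ref{sec:tnn_OG_2} and~\ref{sec:ball}. A first preliminary step would be to verify the scheme-theoretic identity
\[\overline{\pc_\medpa \cap \OGtnn(n,2n)} = \overline{\pc_\medpa} \cap \OG(n,2n)\]
inside $\Gr(n,2n)$. The inclusion $\supseteq$ follows from~\eqref{eq:closures} together with the fact that $\overline{\pc_\medpa}\cap \OG(n,2n)$ is a closed subset of $\OGtnn(n,2n)$ containing $\pc_\medpa\cap\OGtnn(n,2n)$; the reverse inclusion follows because $\OG(n,2n)$ is Zariski closed in $\Gr(n,2n)$ and positroid cell closures form a stratification of $\Grtnn(n,2n)$.

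Once this is in place, the essential task is to show that each $\overline{\pc_\medpa\cap\OGtnn(n,2n)}$ is a closed ball of dimension $\xing(\medpa)$; the gluing conditions required for regularity of the CW structure will then follow from the existence of collared neighborhoods, exactly as in~\cite{GKL2}. The strategy from~\cite{GKL2} is to construct, for each positroid cell $\pc_\Mcal\subset\Grtnn(k,n)$, a \emph{cell-specific} contractive flow $\exp(t B_\Mcal)$ whose unique attracting fixed point lies in the interior of $\pc_\Mcal$ and which preserves $\overline{\pc_\Mcal}$; combined with the bounded embedded submanifold criterion (\cite[Lemma~2.3]{GKL}, recalled in Section~\ref{sec:ball}), this implies the cell closure is a closed ball. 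I would aim to show that the generators $B_\medpa$ can be chosen to lie in the Lie algebra $\mathfrak{o}(n,n)=\{B \mid BD=-DB^T\}$, where $D=\diag(1,-1,\dots,1,-1)$, so that by the argument of Lemma~\ref{lemma:expttau} the flow preserves $\OG(n,2n)$. By the first step above, such a flow would restrict to a contractive flow on $\overline{\pc_\medpa\cap\OGtnn(n,2n)}$, whose attracting fixed point — which lies in the interior of $\pc_\medpa$ — must then be forced into $\OG(n,2n)$ as well, since $\OG(n,2n)$ is closed and invariant.

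The hard part will be the construction of the generator $B_\medpa\in\mathfrak{o}(n,n)$ supported on (the interior of) $\pc_\medpa$. In~\cite{GKL2}, the generators $B_\Mcal$ are built from sums of simple root vectors weighted along a reduced word for the associated bounded affine permutation, and there is no reason these should lie in $\mathfrak{o}(n,n)$ on the nose. Two natural ways to attack this are: (i) \emph{symmetrization} — exploit the involutive symmetry of the decorated permutation (Proposition~\ref{prop:involution}) to pair up generators and average them into $\mathfrak{o}(n,n)$, so long as the resulting matrix retains the required ``anti-diagonal'' positivity controlling contraction; and (ii) the more conceptual route via Remark~\ref{rmk:lusztig}: if one can resolve the discrepancy noted there between the Lusztig stratification of the type $D_n$ partial flag variety $G/P$ and the matching stratification of $\OGtnn(n,2n)$ (e.g.\ by exhibiting the latter as a coarsening of the former and checking that each matching stratum is a union of Lusztig strata whose closure in $G/P$ is a ball), then the conjecture follows directly from the type $D$ case of the main theorem of~\cite{GKL2}.

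Option (ii) is, I suspect, the cleanest path, and the bulk of the work is entirely representation-theoretic: understanding which fixed-point free involutions on $[2n]$ correspond to which Weyl-group data for $(O(n,n),P)$ and matching the closure posets. Option (i) is more hands-on and stays entirely within the framework already set up in this paper, but would require a careful extension of~\cite{GKL2} to verify that the symmetrized flows still contract the relevant cell closures — a delicate quantitative estimate on the ``curve selection'' step of~\cite{GKL2} rather than just a formal symmetry argument. In either case, Theorem~\ref{thm:Gmed_parametrization}\eqref{item:Gmed_param:closures} already supplies the correct closure poset on the level of cells, so the combinatorial part of the regularity statement is settled; only the topological upgrade from ``cells with the right face poset'' to ``cells with ball closures glued regularly'' remains.
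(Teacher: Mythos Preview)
The statement you are addressing is \emph{Conjecture}~\ref{conj:regular}, appearing in Section~\ref{sec:conjectures} (``Open problems and future directions''). The paper does not prove it; it is explicitly listed as an open problem. There is therefore no proof in the paper to compare your attempt against, and your write-up is itself not a proof but a research outline --- as you acknowledge (``The plan is\ldots'', ``The hard part will be\ldots'', ``Two natural ways to attack this are\ldots'').

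As a research plan the outline is reasonable, but a few points deserve caution. First, in your preliminary step you have the two inclusions swapped: the argument ``$\overline{\pc_\medpa}\cap\OG(n,2n)$ is closed and contains $\pc_\medpa\cap\OGtnn(n,2n)$'' yields $\subseteq$, not $\supseteq$; the nontrivial direction is the opposite one, and it requires knowing that every $X\in\overline{\pc_\medpa}\cap\OG(n,2n)$ lies in some $\pc_\sigma$ with $\sigma$ a fixed-point-free involution below $\medpa$ in $\P_n$ (this does follow from Proposition~\ref{prop:involution} together with the closure description in the proof of Theorem~\ref{thm:Gmed_parametrization}\eqref{item:Gmed_param:closures}, but you should say so). Second, your option~(ii) runs straight into the obstacle flagged in Remark~\ref{rmk:lusztig}: the paper already notes that the Rietsch cell decomposition of $(G/P)_{\ge0}$ for $G=O(n,n)$ appears to have a \emph{different number of cells} from the matching stratification of $\OGtnn(n,2n)$, so identifying the two is not merely bookkeeping but a genuine open problem. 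Third, for option~(i), symmetrizing the generators of~\cite{GKL2} into $\mathfrak{o}(n,n)$ while preserving the contractivity estimates is the entire content of the conjecture; you have correctly located the difficulty but not resolved it.
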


As we have already mentioned, the poset $\P_n$ of cells in $\OGtnn(n,2n)$ has been studied in the context of electrical networks. In particular, it has been shown to be shellable and Eulerian by~\cite{LamEuler, HK}, which shows that $\P_n$ is the face poset of \emph{some} regular CW complex by a result of~\cite{Bjo}. This leads to our next question.

\begin{question}\label{question:electrical}
Does there exist a natural stratification-preserving homeomorphism between the compactification $E_n$ of the space of response matrices of planar electrical networks (as studied in~\cite{Lam}) and the space $\Closure_n$ of boundary correlation matrices of planar Ising networks?
\end{question}

Recall that both spaces have cell decompositions into cells indexed by matchings on $[2n]$, and both spaces are homeomorphic to a closed $n\choose 2$-dimensional ball by Theorem~\ref{thm:main} and~\cite[Theorem~1.3]{GKL}. Similarly to Conjecture~\ref{conj:regular}, the space $E_n$ is believed to be a regular CW complex with face poset $\P_n$. There are many more surprising analogies between the two spaces:
\begin{itemize}
\item In both cases, a planar graph yields a point in the cell corresponding to its medial pairing.
\item Two reduced planar graphs yield the same point if and only if they are connected by the corresponding $Y-\Delta$ (or \emph{star-triangle}) moves.\footnote{In fact, under our map $G\mapsto \Gbip$, applying a $Y-\Delta$ move to $G$ corresponds to applying the \emph{superurban renewal} of~\cite{KP} to $\Gbip$.}
\item In both cases, there is an embedding of the space of boundary measurements into the totally nonnegative Grassmannian, as in Theorem~\ref{thm:main} and~\cite[Theorem~5.8]{Lam}.\footnote{The corresponding decorated permutations differ by a ``shift by $1$'', i.e., if $\pidec:[n]\to[n]$ is a fixed-point free involution then Lam embeds the electrical response matrix into the cell $\pc_{\pidec'}$ of $\Grtnn(n-1,2n)$, where $\pidec'(i):=\pidec(i)-1$ modulo $n$ for all $i\in[n]$. An analogous construction in the context of the \emph{amplituhedron} of~\cite{AHT} is related to going from the \emph{momentum space} to the \emph{momentum-twistor space}, where one performs a ``shift by $2$''. It remains an open problem to define the amplituhedron and related objects in the context of ABJM amplitudes. We thank Thomas Lam for pointing this out to us.}
\item The cyclic shift inside the corresponding Grassmannian amounts to the duality operation for Ising networks as in Section~\ref{sec:cyclic_intro}, and for electrical networks it corresponds to taking the dual graph and replacing each conductance by its reciprocal, as easily follows from the results of~\cite[Section~5]{Lam}.
\item Adding boundary spikes and boundary edges translates into adding pairs of bridges to the corresponding plabic graph, see Theorem~\ref{thm:inverse} and~\cite[Proposition~5.12]{Lam}.
\end{itemize}

Our next question is related to Remark~\ref{rmk:very_close}.

\begin{problem}
Explain rigorously the relationship between the scaling limit of planar Ising networks at critical temperature  and the unique cyclically symmetric point $X_0\in\OGtnn(n,2n)$ from Proposition~\ref{prop:X_0}.
\end{problem}

Our main result establishes a correspondence between total positivity and planar Ising networks, and thus potentially allows to apply results and intuition from one area to another. For example, asymptotic properties of plabic graphs have not yet been studied, while asymptotic properties of planar Ising networks have rich and important well-studied structure. Similarly, the space $\Grtnn(k,n)$ is usually studied in the context of \emph{cluster algebras} and \emph{canonical bases} of Lusztig, see e.g.~\cite{FZ,Lus3}. For instance, Theorem~\ref{thm:generalized_Griffiths} expresses Griffiths' inequalities as positive linear sums of minors of $\doublemap(M)$. But the theory of cluster algebras gives a much larger family of rational functions of the minors that all take positive values on $\Grtnn(k,n)$. 
\begin{problem}
Give an interpretation of the values of other cluster variables in the cluster algebra of the Grassmannian in terms of the planar Ising model.
\end{problem}

Another direction is related to Question~\ref{question:tests} and the discussion after it: what is the minimal number of minors one needs to check in order to test whether a given element $X\in\OG(n,2n)$ belongs to $\OGtnn(n,2n)$? A similar question for electrical networks has been discussed in~\cite[Section~4.5.3]{KenyonCDM}. This question also makes sense when $X$ belongs to a lower-dimensional cell inside $\OGtnn(n,2n)$. Note also that in the case of the Grassmannian, collections of such minors have a very nice structure~\cite{OPS} as they form \emph{clusters} in the associated cluster algebra. It is not clear to us whether there exists a similar ``cluster structure'' on $\OGtnn(n,2n)$.

Finally, there has been a rich interplay between the areas of scattering amplitudes and total positivity, giving rise to \emph{canonical differential forms} on positroid cells inside $\Grtnn(k,n)$, see~\cite{AHT,abcgpt,ABL,GL}. A similar result for electrical networks can be found in~\cite[Theorem~4.13]{KenyonCDM}, which gives an explicit expression for the Jacobian of a certain natural map. In~\cite[Section~2.4.2]{HWX}, an expression for another Jacobian was given for $\OGtnn(n,2n)$ in the context of ABJM scattering amplitudes. It would thus be interesting to understand their Jacobian in the language of planar Ising networks, as well as develop an analog of the \emph{amplituhedron}~\cite{AHT} for which $\OGtnn(n,2n)$ plays the role of $\Grtnn(k,n)$.

\newcommand{\arxiv}[1]{\href{https://arxiv.org/abs/#1}{\textup{\texttt{arXiv:#1}}}}

\bibliographystyle{alpha}
\bibliography{ising}

\end{document}